\newcolumntype{Y}{>{\centering\arraybackslash}X}
\newcommand\notsotiny{\@setfontsize\notsotiny{6.31415}{7.1828}}
\DeclareMathOperator*{\argmin}{arg\,min}
\newtheorem{theorem}{Theorem}
\newtheorem{lemma}{Lemma}
\newtheorem{definition}{Definition}
\newtheorem{proposition}{Proposition}
\newtheorem{corollary}{Corollary}
\newtheorem{remark}{Remark}
\newtheorem{assumption}{Assumption}
\newcounter{term}[section]
\renewcommand\theterm{\alph{term}}
\newcommand{\vast}{\bBigg@{4}}
\newcommand{\Vast}{\bBigg@{5}}
\DeclareMathOperator{\atantwo}{atan2}
        \newcommand\sbullet[1][.5]{\mathbin{\vcenter{\hbox{\scalebox{#1}{$\bullet$}}}}}
\definecolor{GD}{RGB}{218, 224, 104} 
\definecolor{LT}{RGB}{95, 29, 122}
\definecolor{IT}{RGB}{5, 0, 236}
\definecolor{LA}{RGB}{38,14,17}
\definecolor{LD}{RGB}{222,48,40}
\definecolor{GA}{RGB}{104, 217, 224}
\definecolor{redMod}{RGB}{135, 60, 85}
\definecolor{mygreen}{HTML}{1e285e}
\newtcbox{\rshadeGD}{on line,
  colback=GD!50!white,   % background color
  colframe=GD!50!white,  % no visible border
  arc=3pt,           % <-- controls corner roundness
  boxrule=0pt,       % remove border line
  left=1.1pt, right=1.1pt, top=1.1pt, bottom=1.1pt}
  \newtcbox{\rshadeLT}{on line,
  colback=LT!25!white,   % background color
  colframe=LT!25!white,  % no visible border
  arc=3pt,           % <-- controls corner roundness
  boxrule=0pt,       % remove border line
   left=1.1pt, right=1.1pt, top=1.1pt, bottom=1.1pt}
  \newtcbox{\rshadeIT}{on line,
  colback=IT!25!white,   % background color
  colframe=IT!25!white,  % no visible border
  arc=3pt,           % <-- controls corner roundness
  boxrule=0pt,       % remove border line
   left=1.1pt, right=1.1pt, top=1.1pt, bottom=1.1pt}
  \newtcbox{\rshadeLA}{on line,
  colback=LA!32!white,   % background color
  colframe=LA!32!white,  % no visible border
  arc=3pt,           % <-- controls corner roundness
  boxrule=0pt,       % remove border line
   left=1.1pt, right=1.1pt, top=1.1pt, bottom=1.1pt}
    \newtcbox{\rshadeLD}{on line,
  colback=LD!31!white,   % background color
  colframe=LD!31!white,  % no visible border
  arc=3pt,           % <-- controls corner roundness
  boxrule=0pt,       % remove border line
  left=1.1pt, right=1.1pt, top=1.1pt, bottom=1.1pt}
    \newtcbox{\rshadeGA}{on line,
  colback=GA!35!white,   % background color
  colframe=GA!35!white,  % no visible border
  arc=3pt,           % <-- controls corner roundness
  boxrule=0pt,       % remove border line
  left=1.1pt, right=1.1pt, top=1.1pt, bottom=1.1pt}
\begin{document} 

\title{{Graph Theory Meets Federated Learning over  Satellite Constellations: Spanning Aggregations, Network Formation, and Performance Optimization}\vspace{-0.5mm}}

\author{\vspace{-1.5mm}Fardis \hspace{-0.5mm}Nadimi$^*$,~\IEEEmembership{Student~Member,~IEEE}, Payam \hspace{-0.5mm}Abdisarabshali$^*$,~\IEEEmembership{Student~Member,~IEEE},
Jacob \hspace{-0.1mm}Chakareski, \IEEEmembership{Senior Member,~IEEE}, Nicholas \hspace{-0.1mm}Mastronarde,~\IEEEmembership{Senior~Member,~IEEE}, and Seyyedali \hspace{-0.1mm}Hosseinalipour,~\IEEEmembership{Senior Member,~IEEE}\vspace{-8mm}
\thanks{* Authors contributed equally to the paper.\\
F. Nadimi, P. Abdisarabshali, N. Mastronarde, and S. Hosseinalipour are with the University at Buffalo---SUNY, NY, USA (Emails: \{ fardisna,payamabd,nmastron,alipour@buffalo.edu\}). J. Chakareski is with New Jersey Institute of Technology (Email: \{jacobcha@njit.edu\}).}
}
\maketitle
% \vspace{-1.5mm}
\setulcolor{red}
\setul{red}{2pt}
\setstcolor{red}
\setlength{\abovedisplayskip}{4pt}
\setlength{\belowdisplayskip}{4pt}
\setlength{\skip\footins}{6pt}
\setlength{\footnotesep}{0pc}
\begin{abstract} 
In this work, we introduce {Fed-Span}: \textit{\underline{fed}erated learning with  \underline{span}ning aggregation over low Earth orbit (LEO) satellite constellations}. {Fed-Span} aims to address critical challenges inherent to distributed learning in dynamic satellite networks, including intermittent satellite connectivity, heterogeneous computational capabilities of satellites, and time-varying satellites' datasets. At its core, {Fed-Span} leverages minimum spanning tree (MST) and minimum spanning forest (MSF) topologies to introduce spanning model aggregation and dispatching processes for distributed learning. To formalize {Fed-Span}, we offer a fresh perspective on MST/MSF topologies by formulating them through a set of continuous constraint representations (CCRs), thereby integrating these topologies into a distributed learning framework for satellite networks. Using these CCRs, we obtain the energy consumption and latency of operations in {Fed-Span}. Moreover, we derive novel convergence bounds for {Fed-Span}, accommodating its key system characteristics and degrees of freedom (i.e., tunable parameters). Finally, we propose a comprehensive optimization problem that jointly minimizes model prediction loss, energy consumption, and latency of {Fed-Span}. We unveil that this problem is NP-hard and develop a systematic approach to transform it into a geometric programming formulation, solved via successive convex optimization with performance guarantees. Through evaluations on real-world datasets, we demonstrate that {Fed-Span} outperforms existing methods, with faster model convergence, greater energy efficiency, and reduced latency. 
% These results highlight {Fed-Span} as a novel solution for efficient distributed learning in satellite networks.
\end{abstract}
\vspace{-0.75mm}
\begin{IEEEkeywords}
Low-earth orbit satellites, Federated learning, Graph theory, Minimum spanning tree, Non-convex optimization.
\end{IEEEkeywords}
\vspace{-5.85mm}
\section{Introduction}\label{sec:intro}
\vspace{-0.5mm}
\noindent  Satellite networks have experienced tremendous advancements over the past few years. As an integral component of these networks, low Earth orbit (LEO) satellites collect vast amounts of data through Earth observation \cite{gomes2020overview}. Using machine learning (ML) techniques, this data can be leveraged in intelligent decision-making for disaster management, climate change, and urban planning~\cite{salcedo2020machine}.  
Nevertheless, traditional ML methods rely on a centralized learning structure, requiring the distributedly collected satellite data to be transferred to a central training unit (e.g., a cloud server). Such data centralization can pose logistical challenges due to the large volume of collected data, privacy restrictions, and transmission limitations of satellites.

Federated learning (FedL)~\cite{mcmahan2017communication}, an emerging distributed ML approach, offers an alternative to centralized ML methods. In its traditional client-to-server setting, resembling a \textit{star topology} (Fig.~\ref{fig:simpleFL}(a)), data collection units (called \textit{clients}) perform local model training on their own data and only transmit their trained models to a central server. The server aggregates these local models (e.g., via averaging) into a global model, which is then dispatched to the clients for the next training round.

\vspace{-1mm}
\subsection{Motivation and Dimensions of Innovation}
Although recent studies have begun to explore FedL over LEO satellites (we collectively refer to them as {Fed-LS}) \cite{9749193,10021101, chen2022satellite ,nguyen2022federated, 10216376, 9674028, tang2022federated,10121575,10039157, elmahallawy2023optimizing, elmahallawy2023one}, there remain numerous unexplored research directions due to the emerging nature of this field. In this work, we propose \textit{\underline{fed}erated learning with  \underline{span}ning aggregation over LEO satellite constellations} ({Fed-Span}), which leverages graph-theoretic principles to empower hierarchical Fed-LS.
{Fed-Span} consists of three major aspects: \textit{(i) use of emerging communication links (i.e., optical communications), (ii) explicit consideration of model convergence in satellite network orchestration, and (iii) incorporation of graph-theoretic formulations into the design of distributed learning topologies for satellite networks}. These aspects are motivated by the innate characteristics of LEO satellite constellations and the limitations of the existing studies, as detailed below.

% In particular, a key limitation of LEO satellite constellations is their vast operational region around the Earth, making the presence of a \textit{single aggregation point} an unnatural fit. To this end, Fed-LS architectures typically employ hierarchical FedL, where satellites' local models are aggregated by \textit{terrestrial/non-terrestrial regional aggregators}, such as ground stations (GSs), unmanned aerial vehicles, or high-altitude platforms~\cite{9674028, tang2022federated,10121575,10039157}. These regional aggregators then contribute to a global aggregation via a central node/server.
% While such aerial- and ground-assisted hierarchical schemes hold promise, their execution faces logistical hurdles (as we will describe below in our following discussions), motivating our proposed \textit{\underline{fed}erated learning with two-stage \underline{span}ning aggregation over LEO constellations} ({Fed-Span}), which leverages graph-theoretic principles to empower hierarchical Fed-LS.
% {Fed-Span} focuses on three aspects: (i) \textit{leveraging emerging communication links} (ii) \textit{to design graph-theoretic learning topologies} (iii) \textit{with explicit consideration of model convergence}.\ali{HERE!}
% These aspects are motivated by the LEO satellites' characteristics and prior studies' limitations as detailed below.

\subsubsection{Motivation and Related Work} \label{sec:Hurdels}
A key feature of LEO satellite constellations is their vast operational region, making the presence of \textit{a single aggregation point}, as assumed in conventional FedL, an unnatural fit. To this end, Fed-LS architectures typically employ hierarchical FedL, where satellites' local models are aggregated by \textit{terrestrial/non-terrestrial regional aggregators}, such as ground stations (GSs), unmanned aerial vehicles, or high-altitude platforms~\cite{9674028, tang2022federated,10121575,10039157}. % These regional aggregators then contribute to a global aggregation via a central node/server. 
In these hierarchical Fed-LS architectures, radio-frequency (RF) links are often used for model transmissions between the satellites and regional aggregators \cite{kodheli2020satellite}. However, these links are often \textit{time-constrained} and \textit{intermittent} due to the satellites' orbital movements and limited windows of observation. These limitations make transmitting large ML models over long-distance RF links \textit{time-consuming} and \textit{energy-intensive}, particularly with the added challenges such as atmospheric attenuation and propagation delays.
Moreover, heterogeneous communication patterns between satellites and aerial/ground nodes complicate implementing \textit{synchronized} FedL~\cite{mcmahan2017communication} for Fed-LS, where global aggregations only occur after the reception of the local models of all satellites at an aggregation point. 
% This is because satellites with infrequent connections (i.e., \textit{stragglers}) can delay aggregation rounds and impose unnecessary waiting periods for other satellites.

Several solutions have been proposed to address the above challenges. AsyncFLEO~\cite{10021101}, an asynchronous Fed-LS methodology, performs global aggregations whenever a satellite model arrives at GSs. However, it struggles with the stale models of \textit{straggler} satellites (i.e., those with infrequent communications to the GSs), which can degrade the performance. As a solution, FedSat~\cite{9674028} aims to better align the satellites' model transmission frequencies/periods by assuming that GSs are located at the North Pole. Additionally, buffered asynchronous Fed-LS~\cite{nguyen2022federated} triggers global aggregations after receiving a predefined number of satellite models at GSs. Although these strategies facilitate model aggregations in Fed-LS, an important question remains open: \textit{Can Fed-LS reduce its dependence on long-distance RF links between satellites and aerial/ground nodes?}

\begin{figure}[t]
\centering
\includegraphics[width=0.48\textwidth, trim= 49 1 1 1, clip]{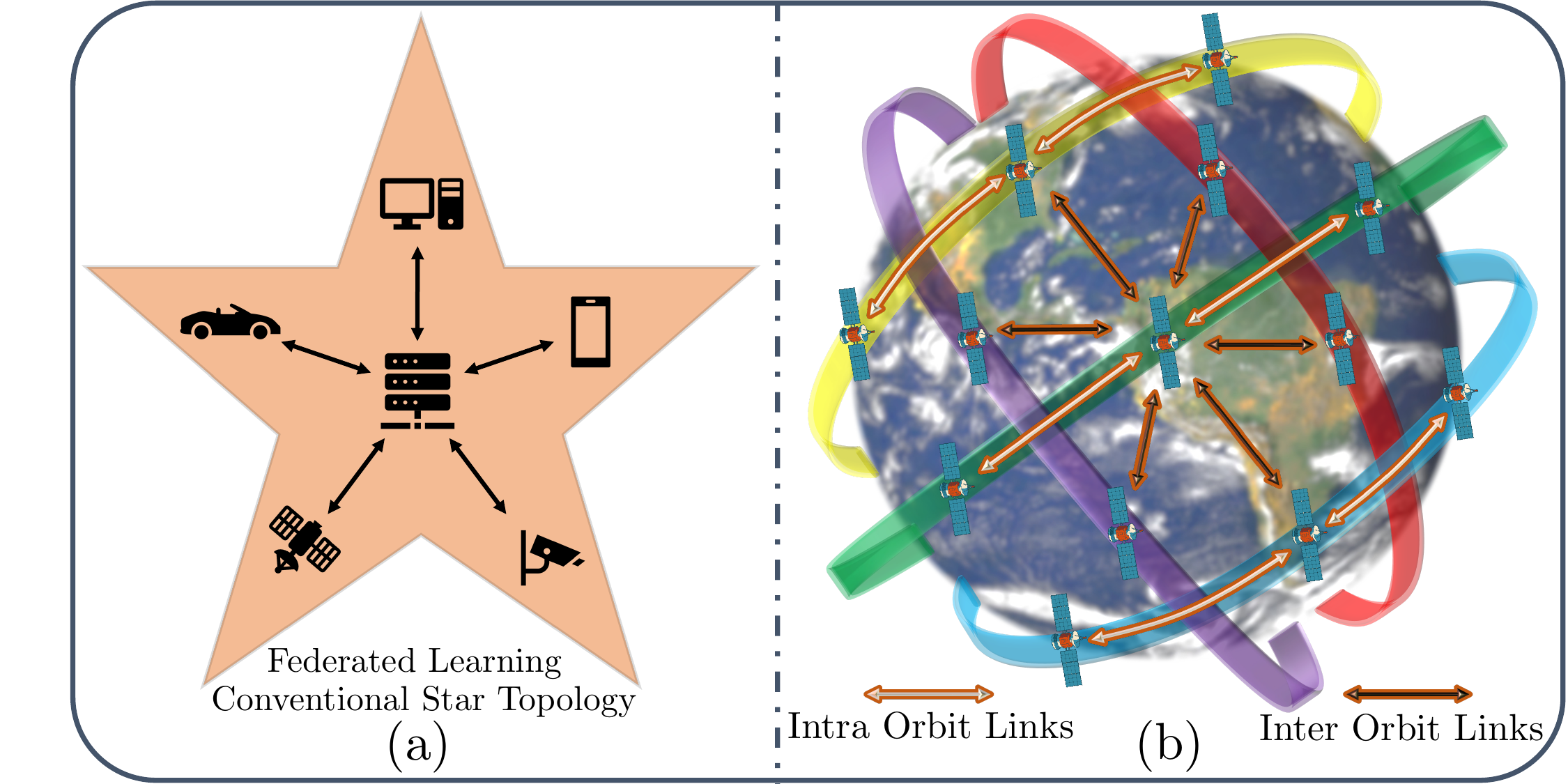}
\vspace{-1mm}
\caption{\textbf{(a)} Star topology of FedL. \textbf{(b)} Satellites with different orbits (denoted by various colors) around the Earth, using inter- and intra-orbit links.}
\label{fig:simpleFL}
\vspace{-0.1mm}
\end{figure}
Addressing this question starts with recognizing that the state-of-the-art LEO satellite constellations, such as Starlink, OneWeb, and Telesat~\cite{zong2021design}, are equipped with optical communication technology featuring multiple laser terminals for \textit{inter-satellite laser links (ISLLs)}\cite{9393372}. ISLLs offer high-data-rate, long-distance, and multi-directional communications, reducing latency and energy consumption while enabling more flexible network construction.
Despite these advantages, integrating ISLLs into Fed-LS architectures entails implementation challenges. In particular, in LEO constellations (Fig.~\ref{fig:simpleFL}(b)), satellites traverse multiple orbits: satellites within the same orbit share similar trajectories, while those in different orbits follow distinct paths. Consequently, \textit{intra-orbit links} (within the same orbit) are often considered to be stable, but \textit{inter-orbit links} (between different orbits) are considered to be available only during brief connection windows between the satellites.
Such a transient nature of inter-orbit links, coupled with the Doppler effect \cite{10043628}, result in unstable inter-satellite links that demand \textit{fine-grained, short time-scale orchestration and control} for reliable integration into Fed-LS.
Moreover, satellites employ multiple \textit{laser terminals} (e.g., four in Starlink \cite{9393372}), each dedicated to a distinct direction, which adds further complexity and necessitates the study of \textit{inter-terminal satellite network formation} (i.e.,  network topologies formed via the satellites' laser terminals). Without proper modeling and optimization, these factors hinder the integration of ISLLs into Fed-LS.

Despite these challenges, the integration of ISLLs into Fed-LS architectures has shown promising results. For instance, FedLEO \cite{elmahallawy2023optimizing} leverages intra-orbit ISLLs by designating a \textit{sink/aggregator} satellite in each orbit to collect and aggregate models. These sink satellites then transmit the aggregated orbital models to the GSs via RF links for global aggregations, reducing dependence on RF links. Similarly, FedShot \cite{elmahallawy2023one} employs a similar sink satellite framework to improve the convergence of ground-assisted Fed-LS through \textit{one-shot} FedL.
Although these works have taken the first steps towards utilizing ISLLs for intra-orbit model aggregations, they continue to depend on RF links for global model aggregations with GSs. As a result, a critical question remains: \textit{With the integration of ISLLs, is there a continued need to communicate with aerial/ground nodes for Fed-LS model aggregations?}

Addressing this question requires revisiting the traditional reliance on aerial or ground nodes for model aggregation in existing Fed-LS architectures. This reliance stems from the historical role of LEO satellites as data collectors, transmitting data to terrestrial nodes for processing. Recent advancements, however, have redefined satellites as edge computing units capable of training ML models as independent computational nodes equipped with ISLLs \cite{tang2021computation,10.1145/3528416.3530985}.
This technological advancement points to a novel approach: performing model aggregations for Fed-LS directly over satellite constellations, without the assistance of aerial or ground nodes. 
However, leveraging ISLLs for this purpose raises key questions, including \textit{when}, \textit{where}, and \textit{how} to employ these links to enable seamless distributed ML training and aggregation across satellite constellations without ground reliance.
Addressing these questions forms the central motivation behind {Fed-Span}.
\subsubsection{Dimensions of Innovations} In this work, we introduce {Fed-Span} based on four innovative design dimensions:

\textbf{(Dimension 1) Over-the-Space Aggregations:} 
{Fed-Span} introduces a spatially-aware system model that accounts for inter- and intra-orbit satellite trajectories and ISLLs. It features fine-grained temporal discretization and ISLL channel modeling, thereby unlocking the potential of ISLLs for Fed-LS.

\textbf{(Dimension 2)  Minimum Spanning Tree:}
{Fed-Span} models LEO satellite constellations as spherical networks, representing satellites as \textit{nodes} and ISLLs as \textit{edges}. It subsequently introduces a novel \textit{graph-theoretical framework} for connectivity, optimization, and network utilization in Fed-LS, marking a novel contribution to the field. This approach adopts \textit{minimum spanning trees} (MSTs) as an intuitive topology for low-overhead model aggregations in Fed-LS, with the conventional FedL star topology recoverable as a special case.
Crucially, {Fed-Span} optimizes the topology of MSTs based on the satellites' trajectories, link availabilities, ML computation capabilities, and data distributions.

\textbf{(Dimension 3)  Hierarchical Aggregations and Clustering:} 
{Fed-Span} reimagines the hierarchical FedL structure used in current Fed-LS implementations, replacing the star topology with stage-wise model aggregations over MSTs formed within satellite clusters. It further extends prior work on clustering satellite constellations~\cite{razmi2024board} by introducing a three-sided trade-off between ML performance, latency, and energy consumption to guide network formation.
Specifically, {Fed-Span} groups satellites into clusters, termed virtual constellations (VCs), and constructs \textit{multi-objective directed spanning trees} (MoDSTs) within each VC, forming a \textit{multi-objective directed spanning forest} (MoDSF) across all satellites. This approach contributes to both Fed-LS and graph theory \cite{barnes1983graph} by advancing the modeling and formulation of MoDSTs and MoDSFs.
Using these graph topologies, {Fed-Span} performs energy- and latency-efficient two-stage model aggregations. At the first stage, it executes \textit{intra-VC model aggregations} over a MoDSF to derive \textit{VC models}. Then, at the second stage, an MoDST that spans all satellites enables \textit{global model aggregations}. 
% This stage-wise approach reduces dependence on frequent resource-intensive global aggregation rounds, thereby aligning with conventional hierarchical FedL strategies by prioritizing resource-efficient local aggregations to mitigate bias in the satellites' local models.

\textbf{(Dimension 4)  Satellite-Specific Constraints and Dynamics:} 
{Fed-Span} addresses two design considerations overlooked in existing Fed-LS studies~\cite{10021101, 9674028, nguyen2022federated, elmahallawy2023optimizing, elmahallawy2023one}. First, 
% these studies often do not leverage the satellites' unique characteristics, such as \textit{battery and energy dynamics caused by solar energy harvesting}, which {Fed-Span} incorporates into its design. Second, 
compared to these studies that often assume static datasets at satellites, ignoring the \textit{temporal variations of data} caused by satellites' movements and human activities on the Earth, {Fed-Span} accounts for the satellites' data dynamics. Second, compared to these studies that presume continuous local model training and aggregations for satellites, an unrealistic assumption given the satellites' battery limitations, {Fed-Span} considers optimized \textit{inactive/idle times} for satellites, making it a more practical/sustainable solution for Fed-LS.
\vspace{-1mm}
\subsection{Overview and Summary of Contributions}
% To our knowledge, {Fed-Span} is among the first studies in the literature to establish a connection between Fed-LS and topological search methods in graph theory, leading to novel and unexplored problem formulations and solutions. 
Our major contributions can be summarized as follows:
\begin{itemize}[leftmargin=4.0mm]
    \item 
    We introduce {Fed-Span}, a novel framework that enables timely, over-the-space Fed-LS operations by explicitly modeling ISLLs and satellite laser terminal characteristics. Unlike conventional approaches, {Fed-Span} features a \textit{dynamic learning topology represented via MSTs that naturally adapts to satellite network configurations}.
    \item We offer a fresh perspective on the graph-theoretic concepts of \textit{trees} and \textit{forests} by modeling their topological structures through a set of continuous constraint representations (CCRs). Building on this, we formalize {Fed-Span} operations by establishing a connection between the model aggregation and dispatching phases of FedL and these CCRs. 
%Also, we model the energy consumption and latency of all the operations of {Fed-Span}.
    \item We analytically characterize the ML model convergence of {Fed-Span}, deriving new convergence bounds for non-convex ML loss functions. These bounds, for the first time in the literature, account for (i) varying numbers of SGD iterations across the satellites, (ii) flexible frequencies/periods of model aggregations, (iii) both  intra- and inter-VC data heterogeneity of satellites, (iv) time-varying datasets on satellites, and (v) idle periods for satellites.
    \item We orchestrate {Fed-Span} through an optimization problem that jointly optimizes ML- and network-related parameters, including \textit{the topology of MoDSTs/MoDSFs}, \textit{resource allocation} across the satellites, and \textit{ML training schedules}. This formulation is one of the first to link the construction of optimal graph topologies to the underlying performance of distributed learning, measured by the ML model loss, energy consumption, and latency. We further reveal that this problem is an NP-hard Signomial programming formulation.
    \item By leveraging the inherent structure of the problem, we present a systematic approach to transform it into \textit{continuous non-convex programming} using a series of inequalities. The problem is then reformulated as a tractable \textit{geometric programming} and solved using a successive convex approximation methodology with performance guarantees. 
    \item We evaluate {Fed-Span} on real-world datasets, demonstrating its superior performance in terms of model accuracy, energy consumption, and latency compared to other methods. 
\end{itemize}

\section{System Model and Design Consideration}\label{sec:system_model}
\noindent In this section, we lay the groundwork for {Fed-Span}: we outline the properties of ISLL terminals in Sec.~\ref{sec:terminals}, their data rates in Sec. \ref{sec:transmission_data_Rate}, and the clustering of satellites in Sec.~\ref{sec:ML-sysmod}. For reference, in Appendix~\ref{app:notaions}, we provide a complete list of all notations and abbreviations used throughout the paper.
% Then, we model the satellites' solar power harvesting and battery dynamics in Sec.~\ref{sec:satellite_heterogeneity}. 

% These models construct  the basis of {Fed-Span} operations in the subsequent sections.

% graph-theoretical representation of ISLL networks that we formulate in Sec.~\ref{sec:fundamentals}, which in turn serves as the foundation for modeling the {Fed-Span} operations in the subsequent sections.

% Sec.~\ref{sec:phasesofFEDSPAN}, and balancing the ML performance (characterized in  Sec.~\ref{sec:conv}) with network operation overheads (i.e., energy consumption and latency) in Sec.~\ref{sec:optimization_problem}.

\begin{remark}[Multi-Granularity Representation of Network Operations]
As will be discussed later, {Fed-Span} unfolds through a series of local and global model aggregation rounds across the satellites. In this setting, to address the rapidly evolving structure of satellite constellations, we disentangle the notions of time and model aggregation rounds. This is because the satellites' locations may significantly change during one global/local aggregation round, making the network formation for the beginning and end of an aggregation round notably different. In particular, we introduce multiple levels of time granularity: (i) the wall-clock time instances denoted by {\small$t$} (measured in seconds, mostly used in modeling the satellite constellation structure), (ii) global aggregation rounds denoted by {\small$k$}, and (iii) local aggregation rounds denoted by {\small$\ell$}.
\end{remark}
\vspace{-5mm}
\subsection{Modeling of ISLL Terminals}\label{sec:terminals}
We consider an LEO constellation with {\small$N$} satellites, represented by the set {\small$\mathcal{N} = \{1, \dots, N\}$}, where each satellite {\small$n \in \mathcal{N}$} is equipped with {\small$M_n$} laser terminals/ports, denoted by the set {\small$\mathcal{M}_n$}. For simplicity, we assume that all satellites have the same number of terminals, {\small$M = |\mathcal{M}_n|$,~$\forall n$}.\footnote{For instance, SpaceX satellites feature four laser terminals~\cite{9393372}.} Each terminal {\small$m \in \mathcal{M}_n$} can transmit/receive data to/from terminals of other satellites.
To model ISLL formation between two satellites {\small$n, n' \in \mathcal{N}$}, we define a binary decision variable {\small$\psi_{m,m'}(t)$}, where {\small$\psi_{m,m'}(t) = 1$} indicates that terminal {\small$m \in \mathcal{M}_n$} establishes an ISLL to terminal {\small$m' \in \mathcal{M}_{n'}$} at time {\small$t$}; otherwise, {\small$\psi_{m,m'}(t) = 0$}. Subsequently, the connectivity matrix {\small$\Psi_{n,n'}(t) = [\psi_{m,m'}(t)]_{m \in \mathcal{M}_n, m' \in \mathcal{M}_{n'}}$} represents ISLLs between two satellites {\small$n$} and {$n'$}. Across all satellites, we define the \textit{connectivity block matrix} (CBM) as {\small$\bm{\Psi}(\mathcal{N},t) = [\Psi_{n,n'}(t)]_{n, n' \in \mathcal{N}}$}, an {\small$N \times N$} block matrix, where each block is an {\small$M \times M$} matrix.
Since intra-satellite connections are infeasible, the diagonal blocks of the CBM are  {\small$M \times M$} zero matrices, i.e., {\small$\psi_{m,m'}(t) = 0,~\forall m, m' \in \mathcal{M}_n,~\forall n$}. To incorporate this constraint in our later formulations, we express it as a continuous constraint representation (CCR):
\begin{equation}\label{eq:hollow}
  \sum_{n \in \mathcal{N}} \sum_{m \in \mathcal{M}_n}\sum_{m' \in \mathcal{M}_n} \psi_{m,m'}(t) = 0,
\end{equation}
which characterizes CBM as a hollow block-diagonal matrix. \textit{Such CCR representations will also be used throughout the paper, mainly to make the formulations tractable for later integration into our main formulation in Sec.~\ref{sec:optimization_problem}.}

% \begin{remark}[Interpretation of CCRs]
% Such CCR representations will also be used throughout the paper, mainly to make the formulations tractable for later integration into our main formulation in Sec.~\ref{sec:optimization_problem}. While these constraints could alternatively be written as if/else conditions, the CCR form enables direct application within a Geometric Programming framework for optimization. The variables involved are initially continuous, but we employ certain techniques to transform them into binary form when required.
% \end{remark}

\begin{remark}[Interpretation of Continuous Constraint Representations (CCRs)]
The term ``continuous" in CCR stems from the fact that these constraints are expressed using continuous mathematical operations, such as summations and products, rather than discrete logic (e.g., \texttt{if}/\texttt{else} conditions), which in turn enables their integration into continuous optimization frameworks. More specifically, CCRs readily accommodate relaxed (i.e., continuous) versions of binary decision variables, which is essential for developing a tractable solution to our later-described optimization problem. The binary nature of the decision variables is later reinstated through the integration of a set of nonconvex constraints, whose formulation and enforcement are detailed in the Appendix~\ref{app:optTransform}  (the specific transformation is described in Appendix~\ref{subsec:RelaxBinary}).
\end{remark}

To ensure that (i) each terminal {\small$m$} can either send or receive (not both) information, and (ii) each satellite pair {\small$n$} and {\small$n'$} can only establish one ISLL at a time, we introduce two CCRs:
\begin{equation}\label{rate2}
    \hspace{-1.5mm}\sum_{n'\in\mathcal{N}} \hspace{-0.5mm}\sum_{m'\in\mathcal{M}_{n'}}\hspace{-2mm}\psi_{m,m'}(t) + \psi_{m',m}(t) {\le}1, ~~\forall m\in\mathcal{M}_n, n\in\mathcal{N},
\end{equation}
\vspace{-1mm}
\begin{equation}\label{rate3}
    \hspace{-4mm}\sum_{m\in\mathcal{M}_n} \hspace{-0.5mm} \sum_{m'\in\mathcal{M}_{n'}}\hspace{-2mm} \psi_{m,m'}(t) + \psi_{m',m}(t) {\le}1, ~~\forall n,n' \in \mathcal{N}.
\end{equation}
\vspace{-7mm}
\subsection{Transmission Data Rate Modeling for ISLLs}\label{sec:transmission_data_Rate}
\vspace{-.5mm}
 ISLLs utilize free-space optics (FSO), providing low-latency and interference-resistant communications. In this setting, upon transmitting a signal from terminal {\small$m\in \mathcal{M}_n$} of satellite {\small$n$} at time {\small$t$} with the transmit power of {\small$P^{\mathsf{Tx}}_{m}$}, the received signal power at terminal {\small$m'\in \mathcal{M}_n$} of satellite {\small$n'$} is given by:
\begin{equation}\label{eq:CG}
\hspace{-1.5mm}
\begin{aligned}
P^{\mathsf{Rx}}_{m,m'}(t) &= P^{\mathsf{Tx}}_{m} ~a^{\mathsf{Rx}}_{m'}~ a^{\mathsf{Tx}}_{m} ~G^{\mathsf{Rx}}_{m'} ~G^{\mathsf{Tx}}_{m}~ L^{\mathsf{Rx}}_{m'} ~L^{\mathsf{Tx}}_{m} ~L^{\mathsf{PL}}_{m,m'}(t),
\end{aligned}
\hspace{-2.5mm}
\end{equation}
where {\small$a^{\mathsf{Rx}}_{m'}$} and {\small$a^{\mathsf{Tx}}_{m}$} are the optics efficiency of the receiver and transmitter, {\small$G^{\mathsf{Rx}}_{m'}$} and {\small$G^{\mathsf{Tx}}_{m}$} are the receiver and transmitter gains, and {\small$ L^{\mathsf{Rx}}_{m'}$ and $L^{\mathsf{Tx}}_{m}$} are the receiver and transmitter pointing losses, respectively~\cite{9842823}. Also, {\small$L^{\mathsf{PL}}_{m, m'}(t) = \left(\frac{\lambda_{m, m'}}{4\pi \delta_{m, m'}(t)}\right)^{2}$} is the FSO path loss,  where $\lambda_{m, m'}$ is the signal wavelength at receiver $m'$, influenced by the Doppler effect (considered in Sec.~\ref{simulations}), and {\small$\delta_{m,m'}(t)$} is the distance between terminals {\small$m$} and {\small$m'$}. 

To obtain {\small$\delta_{m,m'}(t)$}, we define the spherical position of terminal {\small$m$} as {\small$\left[ \phi_m^{\mathsf{Lat}}(t), \lambda_m^{\mathsf{Long}}(t) \right]$}, where {\small$\phi_m^{\mathsf{Lat}}(t)$} and {\small$\lambda_m^{\mathsf{Long}}(t)$} are the latitude and longitude components, respectively.
Subsequently, using the Haversine formula \cite{1068eb6b-091c-3263-9522-c68d480117c4}, {\small$\delta_{m,m'}(t)$} is given by: 
\begin{equation}\label{eq:Haversine}
\delta_{m,m'}(t) = 2 (R^{\mathsf{Earth}} + h^{\mathsf{Cons}}) \atantwo \left(\sqrt{\widehat{a}}, \sqrt{1-\widehat{a}}\right),
\end{equation} 
where {\small$R^{\mathsf{Earth}}$} is the radius of the Earth, {\small$h^{\mathsf{Cons}}$} is the constellation altitude, and {\small$\widehat{a}$} is the haversine of the central angle, given by {\small$\widehat{a} = \sin^2(\frac{\Delta \phi^{\mathsf{Lat}}_{m,m'}}{2}) + \cos(\phi_m^{\mathsf{Lat}}(t)) \cos(\phi^{\mathsf{Lat}}_{m'}(t)) \sin^2(\frac{\Delta \lambda^{\mathsf{Long}}_{m,m'}}{2})$}, where {\small$\Delta \phi^{\mathsf{Lat}}_{m,m'} = \phi^{\mathsf{Lat}}_{m} - \phi^{\mathsf{Lat}}_{m'}$} and {\small$\Delta \lambda^{\mathsf{Long}}_{m,m'} = \lambda^{\mathsf{Long}}_{m} - \lambda^{\mathsf{Long}}_{m'}$}. 

Considering~\eqref{eq:CG} and~\eqref{eq:Haversine}, we obtain the data rate (in bits/sec) between terminals $m$ and $m'$ at time $t$ as follows:
\begin{equation}
    \overline{\mathfrak{R}}_{m,m'}(t) = B \log_{2}\left(1 + {P^{\mathsf{Rx}}_{m,m'}(t)}\big/{P^{\mathsf{N}}_{m'}(t)}\right),
\end{equation}
where {\small$B$} is the transmission bandwidth, and {\small$P^{\mathsf{N}}_{m'}(t)$} is the noise power at the receiving terminal {\small$m'$}. 
To ensure that all communication links established via a CBM {\small$\bm{\Psi}(\mathcal{N},t)$} meet a minimum data rate {\small$\mathfrak{R}^{\mathsf{min}}$}, we impose the following CCR:
\begin{equation}\label{eq:min_data_rate}
\hspace{-3mm}
 \big(\overline{\mathfrak{R}}_{m,m'}(t) - \mathfrak{R}^{\mathsf{min}}\big) \psi_{m,m'}(t) {\geq} 0, \forall m\hspace{-1mm}\in\hspace{-1mm}\mathcal{M}_n, \forall m'\hspace{-1.5mm}\in\hspace{-1mm}\mathcal{M}_{n'},
\hspace{-2mm}
\end{equation}
implying that {\small$\psi_{m,m'}(t)$} can only take the value of one (i.e., establishment of an ISLL) when {\small$\mathfrak{R}_{m,m'}(t) > \mathfrak{R}^{\mathsf{min}}$}; otherwise {\small$\psi_{m,m'}(t)=0$} to satisfy~\eqref{eq:min_data_rate}.
Subsequently, the data rate from satellite {\small$n$} to satellite {\small$n'$} at time {\small$t$} under CBM {\small$\bm{\Psi}(\mathcal{N},t){=}[\Psi_{n,n'}(t)]_{n,n'{\in}\mathcal{N}}$}, where {\small$\Psi_{n,n'}(t) = [\psi_{m,m'}(t)]_{m {\in}\mathcal{M}_n, m'{\in}\mathcal{M}_{n'}}$}, is given by:
\begin{equation}\label{eq:finalRate}
\hspace{-3mm}
   \mathfrak{R}_{n,n'}\big(\bm{\Psi}(\mathcal{N}, t)\big) {=} \sum_{m\in\mathcal{M}_{n}} \sum_{m'\in\mathcal{M}_{n'}}\psi_{m,m'}(t)\overline{\mathfrak{R}}_{m,m'}(t).
\hspace{-3mm}
\end{equation}
\begin{figure*}[t]
\vspace{-7mm}
\centering
\noindent\includegraphics[width=\textwidth]{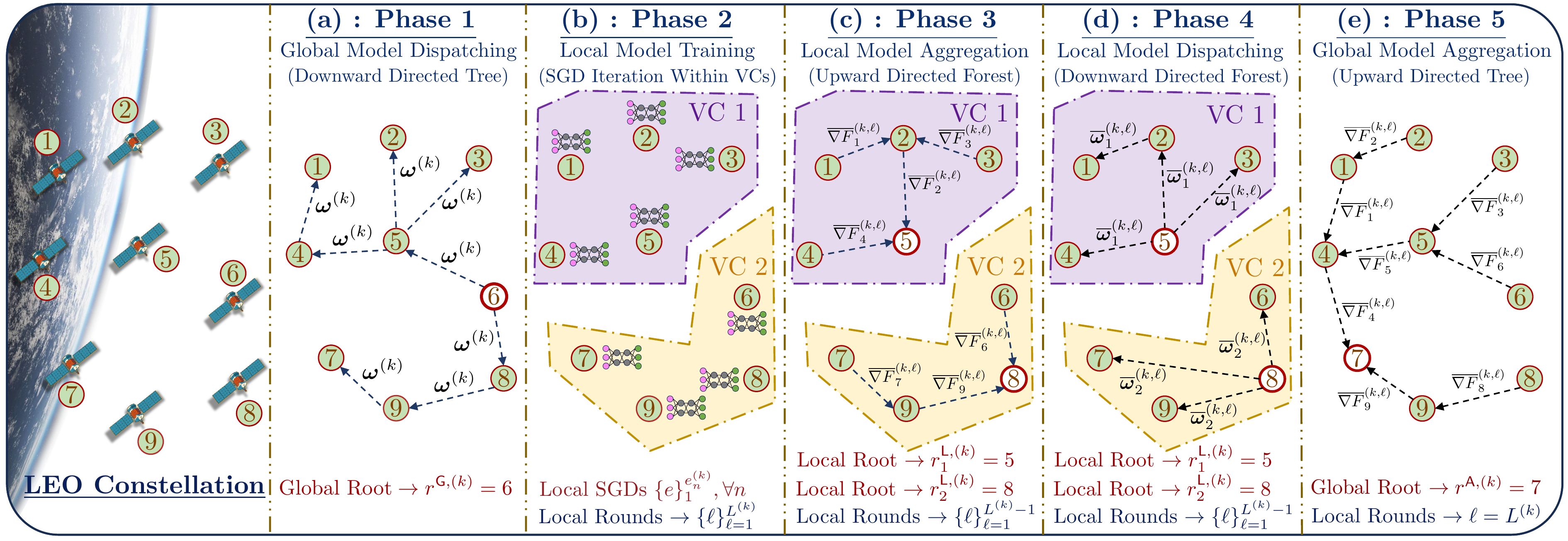} 
\vspace{-5.7mm}
\caption{A schematic of the operations that take place in each global round of {Fed-Span}. \textbf{{(a) Phase 1:}} The global model is dispatched through a root-to-leaf continuum using a downward-directed tree structure. \textbf{{(b) Phase 2:}} Upon receiving the global model, satellites conduct local model training rounds. \textbf{(c) Phase 3:} Once each local training round is completed, a local model aggregation occurs, where satellites are organized into Virtual Clusters (VCs) and transfer their models to the respective VC root node via upward-directed trees, resembling a forest topology. \textbf{(d) Phase 4:} The aggregated models of VCs are disseminated back to their satellites using downward-directed trees. \textbf{(e) Phase 5:} After the execution of the last local training round of each global round, 
satellites' models are aggregated at the global root node through model transfers along a leaf-to-root continuum via an upward-directed tree. The newly aggregated global model is then dispatched across the satellites through \textbf{Phase 1}, which marks the start of the next global round.}
% \vspace{-1.5em}
\label{fig:system_model}
\vspace{-6mm}
\end{figure*} 
\vspace{-6mm}
\subsection{Virtual Constellations (VCs)}\label{sec:ML-sysmod}
{Fed-Span} operates over {\small$K$} global rounds, collected by the set {\small$\mathcal{K} = \{1, 2, \dots, K\}$}. For each global round {\small$k \in \mathcal{K}$}, 
we define a set of VCs {\small$\mathcal{C}^{(k)}$} partitioning the satellite constellation, with {\small$C^{(k)} = |\mathcal{C}^{(k)}|$} denoting the number of VCs. 
% Notably, {\small$C^{(k)}$} is a degree of freedom optimized in our later formulation (Sec.~\ref{sec:optimization_problem}).
To model satellite-to-VC assignment/association, let {\small$\mathcal{N}^{(k)}_{c}=\{n|n\in\mathcal{N},\gamma^{(k)}_{c,n}=1\}$} comprise the satellites contained in a VC $c\in \mathcal{C}^{(k)}$, where {\small$\gamma^{(k)}_{c,n}\in\{0,1\}$} is a binary decision variable, capturing whether satellite $n$ is contained in VC $c$ ({\small$\gamma^{(k)}_{c,n}=1$}) or not ({\small$\gamma^{(k)}_{c,n}=0$}).
% we define {\small$\gamma^{(k)}_{c,n}\in\{0,1\}$} as a binary decision variable, which in turn results in defining VCs as {\small$\mathcal{N}^{(k)}_{c}=\{n|n\in\mathcal{N},\gamma^{(k)}_{c,n}=1\}$}, {\small$ \forall c{\in} \mathcal{C}^{(k)}$}, where $\mathcal{N}^{(k)}_c$ denotes the set of satellites contained in VC $c\in \mathcal{C}^{(k)}$.
% To model satellite-to-VC assignment and thus the number of VCs, we introduce a binary decision variable {\small$\gamma^{(k)}_{c,n} \in \{0, 1\}$}, defining the index of satellites contained in VC $c \in \mathcal{C}^{(k)}$ as
% {\small$\mathcal{N}^{(k)}_c = \{n \mid n \in \mathcal{N}, \gamma^{(k)}_{c,n} = 1\}$}. 
We ensure proper satellite-to-VC association via three CCRs:
\begin{align}
    &\sum_{c{\in}\mathcal{C}^{(k)}}\sum_{n\in\mathcal{N}} \gamma^{(k)}_{c,n}= N, & \forall k \in \mathcal{K},\label{cons:CSC_4}\\
    &\sum_{c{\in}\mathcal{C}^{(k)}} \gamma^{(k)}_{c,n}=1,&\forall n\in\mathcal{N},~\forall k \in \mathcal{K},\label{cons:CSC_2}\\
    &     \sum_{n\in\mathcal{N}} \gamma^{(k)}_{c,n}\ge 2,&\forall c\in\mathcal{C}^{(k)},\label{cons:CSC_1}
\end{align}
which imply that (i) VCs cover all the satellites, (ii) each satellite is contained in only one VC, and (iii) a VC exists if it is associated with at least two satellites.
% , which bounds the maximum number of VCs as {\small$C^{(k)} \leq N/2$}.
\vspace{-.3mm}
\vspace{-1.5mm}
\section{Fundamentals of Machine Learning (ML) and Integration of Tree Structures in  {\large{{Fed-Span}}}}\label{sec:fundamentals}
\noindent In this section, we first outline the ML objective of {Fed-Span} in Sec.~\ref{sec:MLform}. Then, we provide an overview of the learning methodology employed by {Fed-Span} in Sec.~\ref{sec:nutshell}. Finally, we present a sketch of the tree topologies used in {Fed-Span} in Sec.~\ref{subsec:sketch}, which serve as a basis for our later formulations.

\vspace{-3.5mm}
\subsection{ML Formulation and Overarching Goal of {Fed-Span}}\label{sec:MLform}
We extend the literature on Fed-LS, which focuses on static datasets at satellites \cite{9749193,10021101, chen2022satellite ,nguyen2022federated, 10216376, 9674028, tang2022federated,10121575,10039157, elmahallawy2023optimizing, elmahallawy2023one}, to a scenario that incorporates real-time data collection at satellites. Specifically, during global round {\small$k$}, each satellite {\small$n \in \mathcal{N}$} is assumed to hold a time-varying dataset {\small$\mathcal{D}_n^{(k)}$} with the size of {\small$D_n^{(k)}=|\mathcal{D}_n^{(k)}|$}. Additionally, the cumulative dataset across all satellites is represented as {\small$\mathcal{D}^{(k)}{=}\cup_{n{\in}\mathcal{N}}\mathcal{D}_{n}^{(k)}$}, with the size of {\small$D^{(k)}{=}\sum_{n{\in}\mathcal{N}}D_{n}^{(k)}$}.
We consider the ML \textit{loss function}  {\small${f}(\bm{\omega},d)$} (e.g., cross-entropy) to quantify the  model performance for each data point {\small$d$} under model parameter {\small$\bm{\omega}\in\mathbb{R}^{M^{\mathsf{Dim}}}$}, where $M^{\mathsf{Dim}}$ denotes the model's dimension. Subsequently, letting {\small$F_{n}^{(k)}(\bm{\omega}){\triangleq} F_{n}(\bm{\omega}, \mathcal{D}_n^{(k)})=\sum_{d \in \mathcal{D}_n^{(k)}} {{f}(\bm{\omega},d)}\big/{D_n^{(k)}}$} denote the local loss at satellite {\small$n$} for ML model {\small$\bm{\omega}$},  we define the \textit{global loss} of the system at global round $k$ as follows:
\begin{equation}\label{eq:globalLoss}
 F^{(k)}(\bm{\omega})=\sum_{n\in \mathcal{N}} \frac{D_n^{(k)}}{D^{(k)}} F^{(k)}_{n}(\bm{\omega}).
\end{equation}

Due to the temporal evolution of local datasets, the optimal global model is time-varying, and thus {Fed-Span} aims to track the sequence of optimal models {\small $\big\{\bm{\omega}^{{(k)}^\star}\big\}_{k=1}^{K}$}, where:
\begin{equation}\label{eq:genForm}
\bm{\omega}^{{(k)}^\star}= \underset{\bm{\omega}\in \mathbb{R}^M}{\argmin} \;  F^{(k)}(\bm{\omega}), ~\forall k\in\mathcal{K}.
\end{equation}
% The goal of {Fed-Span} is therefore to track the sequence of global optimal models {\small $\big\{\bm{\omega}^{{(k)}^\star}\big\}_{k=1}^{K}$} over time.

\vspace{-4mm}
\subsection{{Fed-Span} in a Nutshell and Significance of Tree Structures}\label{sec:nutshell}
{Fed-Span} achieves the aforementioned goal via a sequence of operations categorized into five phases, detailed later in Sec.~\ref{sec:phasesofFEDSPAN}. An overview of these phases is presented in Fig.~\ref{fig:system_model}. In the following, we present a bird’s-eye view of these five phases before delving into their mathematical modeling. 

The model-training cycle of {Fed-Span} begins with \textbf{\textit{Phase~1}}, where the global model is dispatched across the satellites to initiate a new global round. This is followed by \textbf{\textit{Phase 2}}, during which local model training takes place. Each local training round is then followed by a local model aggregation in \textbf{\textit{Phase 3}}, which in turn is followed by a local model dispatching in \textbf{\textit{Phase 4}}.
In practice, multiple local training,  aggregation, and dispatching rounds (i.e., transitions between \textbf{\textit{Phases 2\&3\&4}}) are executed in each global round to progressively debias the satellites' models during local training. Finally, after the last local aggregation round, a global model aggregation is performed in \textbf{\textit{Phase 5}} to produce the updated global model, which is then dispatched across the satellites, again via \textbf{\textit{Phase 1}}, to initiate the next global round.

To conduct model dispatching in global and local dispatching rounds (i.e., \textbf{\textit{Phases 1\&4}}), {Fed-Span} forms a set of \textit{minimum spanning trees} (MSTs) across the satellites. As these MSTs are used to dispatch the model from an aggregation point to the rest of the satellites, they are referred to as \textit{downward-directed trees}.
Similarly, to conduct model aggregation in global and local aggregation rounds (i.e., \textbf{\textit{Phases 3\&5}}), {Fed-Span} forms another set of MSTs across the satellites. As these MSTs collect model updates from all satellites toward an aggregation point, they are referred to as \textit{upward-directed trees}.

The selection and integration of MST structures in {Fed-Span} for model aggregation and dispatching are motivated by two factors:
(i) MST structures are inherently suited for information dissemination with minimal redundancy: they possess the minimum number of edges required to maintain connectivity while avoiding cycles, thus eliminating redundant information exchanges across the network.
(ii) These structures can be carefully designed/formed to construct multi-objective directed spanning trees (MoDSTs). In particularity, in this work, we provide the first systematic approach to create a set of MoDSTs that can jointly optimize: (a) the energy consumption of model transfers, (b) the latency of model transfers, and (c) the ML performance of the model, which are the three objectives of this work reflected in our overarching problem formulation in Sec.~\ref{sec:optimization_problem}. Notably, constructing MoDSTs is known to be an NP-hard problem, and a major contribution of this work is showing that MoDST formation can be achieved through a set of non-convex optimization techniques with convergence guarantees, enabled by our careful derivation of CCRs.

Next, we present a sketch of the upward- and downward-directed trees, representing them as CCRs to make them optimizable. This provides a fresh perspective on the modeling and optimization of these trees. Henceforth, the set of nodes that can be reached from a given node in a single hop is referred to as its \textit{successors} (e.g., nodes 2, 3, and 4 are successors of node 5 in Fig.~\ref{fig:system_model}(a)). Conversely, for these successor nodes, the node above them is referred to as their \textit{predecessor}.

\vspace{-3mm}
\subsection{Sketch of Upward- and Downward-Directed Trees}\label{subsec:sketch}
We first define {\small$\bm{\Gamma}\big(\bm{\Psi}(\mathcal{N},t)\big)$} as an \textit{adjacency matrix} of a directed graph, formed by CBM {\small$\bm{\Psi}(\mathcal{N},t)$}, which was introduced in Sec.~\ref{sec:terminals}. In particular, let {\small$\bm{\Gamma}_{n,n'}(\bm{\Psi}(\mathcal{N},t))$} denote the element in the row {\small$n$} and column {\small$n'$} of this matrix. We have:
\begin{equation}\label{edgetolink}
 \hspace{-2mm} \bm{\Gamma}_{n,n'}(\bm{\Psi}(\mathcal{N},t))= \sum_{m\in\mathcal{M}_{n}} \sum_{m'\in\mathcal{M}_{n'}}\hspace{-2mm}\psi_{m,m'}(t), ~\forall n,n' \in  \mathcal{N}.\hspace{-2mm}
\end{equation}
In words, {\small$\bm{\Gamma}_{n,n'}(\bm{\Psi}(\mathcal{N},t))=1$} if an ISLL is established from one of the laser terminals of satellite {\small$n$} to satellite {\small$n'$}; otherwise {\small$\bm{\Gamma}_{n,n'}(\bm{\Psi}(\mathcal{N},t))=0$}. 
The upward- and downward-directed trees formed during the various phases of {Fed-Span} correspond to link formations determined by different CBMs {\small$\bm{\Psi}(\mathcal{N},t)$} at different times {\small$t$}. These trees will be concretized as we detail the five phases of {Fed-Span} later in Sec.~\ref{sec:phasesofFEDSPAN}. Nevertheless, a common characteristic across all these trees is the presence of a root and the minimum number of edges to connect the nodes, which are formalized below as CCRs.

\subsubsection{Presence of a Root}
In a directed tree structure, the root node lacks a higher-level predecessor and plays a pivotal role in determining the tree's directionality. To explicitly capture the root node,  extending the above-described notion of an adjacency matrix, we represent the adjacency matrix of a directed tree as {\small$\bm{\Gamma}\big(\bm{\Psi}(\mathcal{N},t), r\big)$}, where {\small$r$} denotes the root node.

We first aim to model the root node for global model aggregation and dispatching, which corresponds to two phases of {Fed-Span} (Phases 1\&5 detailed later in Sec.~\ref{subsec:ph1} and~\ref{subsec:ph5}). 
% These phases are executed using upward- and downward-directed trees, respectively.
In these two phases, the models from all satellites are first gathered at the root node through an upward-directed tree. Subsequently, the same root node dispatches the aggregated model back to all satellites via a downward-directed tree (Fig.~\ref{fig:system_model}).
Let {\small$r^{\mathsf{G}, (k)} \in \mathcal{N}$} represent the root node for the global round {\small$k$}, which we formalize using the following CCR:
\begin{equation}\label{eq:rootDef}
r^{\mathsf{G}, (k)}=\sum_{n\in\mathcal{N}}  \pi_{n}^{\mathsf{G}, (k)}~ \cdot ~n,~\forall k \in\mathcal{K},
\end{equation}
where {\small$ \pi_{n}^{\mathsf{G}, (k)}\in\{0,1\}$} is a binary decision variable. In particular, {\small$\pi_{n}^{\mathsf{G}, (k)}=1$}, which results in {\small$r^{\mathsf{G}, (k)}{=}n$} according to \eqref{eq:rootDef}, implies that satellite $n$ is selected to be the root; otherwise {\small$\pi_{n}^{\mathsf{G}, (k)}=0$}.
Since only one satellite can serve as the root node during each global round, we enforce the following CCR:
\begin{equation}\label{cons:SGD_1}
    \sum_{n\in\mathcal{N}}  \pi_{n}^{\mathsf{G}, (k)} =1,~\forall k \in\mathcal{K}.
\end{equation}
~~We next focus on local model aggregation and dispatching during global round {\small $k$}, which corresponds to another two phases of {Fed-Span} (Phases 3\&4 detailed in Sec.~\ref{subsec:ph3} and~\ref{subsec:ph4}). 
% These phases are carried out via upward- and downward-directed trees within the VCs, respectively.
In these two phases, for each VC, the models from all satellites belonging to the VC are first gathered at a root node through an upward-directed tree. Subsequently, the same root node dispatches the aggregated model back to all satellites in the VC through a downward-directed tree (Fig. \ref{fig:system_model}(d)).
Let {\small$r_{c}^{\mathsf{L}, (k)}$} denote the index of the local root node for VC {\small$c$}, which we formalize using the following CCR:
\begin{equation}\label{eq:rootDef2}
r_{c}^{\mathsf{L}, (k)}=\sum_{n{\in}\mathcal{N}} \pi_{n}^{\mathsf{L}, (k)} \cdot \gamma^{(k)}_{c,n} \cdot n,~~~ \forall c \in \mathcal{C}^{(k)},
\end{equation}
where {\small$\pi_{n}^{\mathsf{L}, (k)} \in \{0,1\}$} is a binary decision variable and {\small$\gamma^{(k)}_{c,n}$} represents the satellite-to-VC association defined in Sec.~\ref{sec:ML-sysmod}.
Specifically, {\small$\pi_{n}^{\mathsf{L}, (k)} = 1$}, which results in {\small$r_{c}^{\mathsf{L}, (k)} = n$} as per \eqref{eq:rootDef2}, signifies that satellite $n$ is selected as the local root for VC {\small $c$}; otherwise {\small$\pi_{n}^{\mathsf{L}, (k)} = 0$}.
To ensure that, during each global round {\small $k$}, only one satellite in each VC is the root, we impose:
\begin{equation}\label{cons:SCA_1}
    \sum_{n{\in}\mathcal{N}} \pi_{n}^{\mathsf{L}, (k)} \cdot \gamma^{(k)}_{c,n} =1,~  \forall c\in\mathcal{C}^{(k)}.
\end{equation}

\subsubsection{Edge Configuration}
The edge configuration of a tree must satisfy three key coupled conditions:
(i) 
The tree must contain exactly one fewer edge than its number of nodes.
% The tree must contain $N-1$ edges, where $N$ is the number of nodes.
% The number of edges must be exactly one less than the number of nodes.
(ii) The edges must form a connected structure.
(iii) The tree must be cycle-free.
To present a sketch of these conditions, we consider {\small$\bm{\Gamma}\big(\bm{\Psi}(\mathcal{N},t), r\big)$} to represent an arbitrary tree over the entire constellation {\small$\mathcal{N}$}. We can then enforce the correct number of edges in this tree using the following CCR:
\begin{equation}\label{cons:tree_1}
  \sum_{n \in \mathcal{N}} \sum_{n' \in \mathcal{N}}\sum_{m\in\mathcal{M}_{n}} \sum_{m'\in\mathcal{M}_{n'}}\psi_{m,m'}(t) = N-1.
\end{equation}
Additionally, we enforce a structure that is both connected and cycle-free by ensuring that there is only a single connected path between each node in the tree and its root. For a downward-directed tree, this condition is equivalent to the following CCR:
\begin{equation}\label{cons:tree_2}
    \sum_{q=1}^{N-1}\Big(\bm{\Gamma}\big(\bm{\Psi}(\mathcal{N},t), r\big)\Big)^{q}_{r,n}=1,~\forall n\in\mathcal{N}\setminus \{r\}.
\end{equation}
For an upward-directed tree, we can express this condition as:
\begin{equation}\label{cons:tree_3}
    \sum_{q=1}^{N-1}\Big(\bm{\Gamma}\big(\bm{\Psi}(\mathcal{N},t), r\big)\Big)^{q}_{n,r}=1,~\forall n\in\mathcal{N}\setminus \{r\}.
\end{equation}
In the expressions above, we exploited the fact that {\small$\Big(\bm{\Gamma}\big(\bm{\Psi}(\mathcal{N},t), r\big)\Big)^{q}_{r,n}$}, which is the element in row {\small $r$} and column {\small $n$} of the q-th power of {\small$\bm{\Gamma}\big(\bm{\Psi}(\mathcal{N},t), r)$}, represents the number of paths of length  {\small $q$} between the root and node {\small $n$}. Similarly, {\small$\Big(\bm{\Gamma}\big(\bm{\Psi}(\mathcal{N},t), r\big)\Big)^{q}_{n,r}$} represents the number of paths of length {\small $q$} between node {\small $n$} and the root.
Note that~\eqref{cons:tree_1},~\eqref{cons:tree_2}, and~\eqref{cons:tree_1} provide only a sketch of the models used to form our MoDSTs for global and local model aggregation and dispatching in {Fed-Span}. These modelings are tailored to their specific deployment and are detailed in Sec.~\ref {sec:phasesofFEDSPAN}.

\vspace{-2mm}
\section{{Fed-Span} Operations}\label{sec:phasesofFEDSPAN}
\noindent Building on the above-discussed modeling of upward- and downward-directed trees, we next detail the five phases of {Fed-Span} in chronological order. To simplify their explanations, each phase is divided into three parts: \textit{(i) Time Specification}, detailing the schedule of the phase.
\textit{(ii) Topological Properties and Network Operation}, outlining network formation and the processes involved.
\textit{(iii) Energy and Latency Computations}, deriving closed-form expressions for energy and latency during the phase.
These five phases form a cycle, starting with an initial model broadcast in \textbf{\textit{Phase 1}} and concluding with a global model aggregation in \textbf{\textit{Phase 5}}, collectively constituting a global round in {Fed-Span}. 

A pseudo-code of these five phases is presented in Alg.~\ref{alg:Fed-Span_alg} and their timeline is visualized in Fig.~\ref{fig:training_process} (the same color-coding is used in both the algorithm and figure to enhance readability).

\begin{figure}[t]
\vspace{-6mm}
 \centering
\noindent\includegraphics[width=0.5\textwidth]{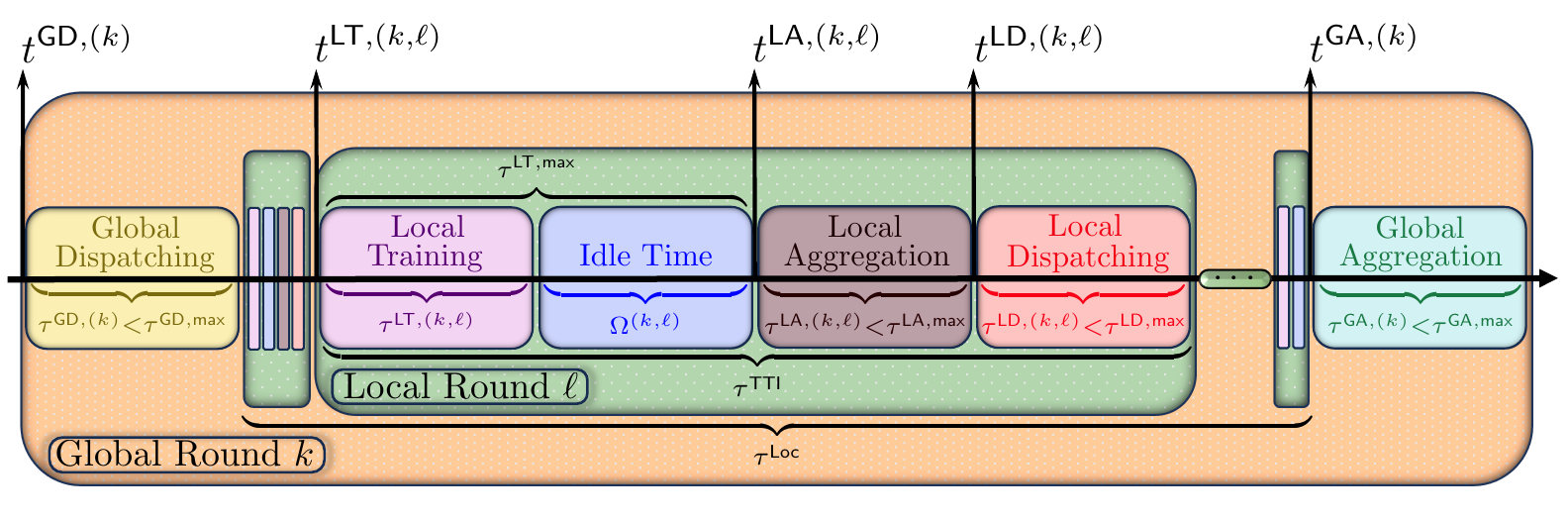} 
\vspace{-7mm}
\caption{Time instances and durations of different operations in {Fed-Span}.}
\label{fig:training_process}
\vspace{-.5mm}
\end{figure}

\vspace{-2mm}
\subsection{\underline{\textbf{Phase 1:}} Global Model Dispatching (lines \ref{Ph1:start}-\ref{Ph1:end} in Alg.~\ref{alg:Fed-Span_alg})}\label{subsec:ph1}
Each global round {\small $k$} of {Fed-Span}, starts with dispatching/broadcasting a global model {\small$\bm{\omega}^{(k)}$}.\footnote{At the initial stage of {Fed-Span}, {\footnotesize $\bm{\omega}^{(0)}$} can be randomly instantiated. In the later stages, this model corresponds to the global model obtained at the end of the previous round, as discussed later in Sec.~\ref{subsec:ph5}.}  This phase is visualized in Fig.~\ref{fig:system_model}(a) and further discussed below.

\subsubsection{Time Specification}
The start of \underline{\textbf{g}}lobal \underline{\textbf{d}}ispatching (GD) for round {\small $k$} occurs at time {\small$t^{\mathsf{GD},(k)} = T^{\mathsf{Init},(k)}$}, when a global model is instantiated at a root satellite and disseminated to all satellites via multi-hop transmissions over ISLLs. The link topology forms a downward-directed tree, as detailed next.

\subsubsection{Topological Properties and Network Operations}
We formalize the global downward-directed tree used for model dispatching at round {\small$k$}, which is formed based on a CBM {\small$\bm{\Psi}(\mathcal{N}, t^{\mathsf{GD},(k)})$}, through its adjacency matrix {\small$\bm{\Gamma}^{\mathsf{GD}, (k)}$} as follows:
\begin{equation}\label{eq:GM_dispatching_adjacney_brevity}
    \bm{\Gamma}^{\mathsf{GD}, (k)}=\bm{\Gamma}\big(\bm{\Psi}(\mathcal{N},t^{\mathsf{GD},(k)}), r^{\mathsf{G}, (k)}\big),
\end{equation}
where {\small$\bm{\Gamma}(.,.)$} was defined in Sec.~\ref{subsec:sketch}. Also, using {\small$ \pi_{n}^{\mathsf{G}, (k)}\in\{0,1\}$} to capture whether satellite {\small$n{\in}\mathcal{N}$} is the root, {\small$r^{\mathsf{G}, (k)}$} and {\small$ \pi_{n}^{\mathsf{G}, (k)}\in\{0,1\}$} hold in~\eqref{eq:rootDef} and~\eqref{cons:SGD_1}.  
Henceforth, we let {\small$\bm{\Gamma}^{\mathsf{GD}, (k)}_{n,n'}$} denote the element at row {\small $n$} and column {\small $n'$} of matrix {\small $\bm{\Gamma}^{\mathsf{GD}, (k)}$}. The compact representation in~\eqref{eq:GM_dispatching_adjacney_brevity} will later enable us to obtain closed-form expressions for energy and latency. 

To ensure that {\small$\bm{\Gamma}^{\mathsf{GD}, (k)}$} complies with the tree prerequisites outlined in Sec.~\ref{subsec:sketch}, we note that imposing \eqref{cons:tree_2} (the cycle-removal constraint) for the downward-directed tree requires matrix multiplications up to the {\small $(N-1)^{\text{th}}$} order, which is computationally intensive. Additionally, without~\eqref{cons:tree_2}, solely applying~\eqref{cons:tree_1} (the edge-limiting constraint) can result in disconnected (cycled) graphs.
To address these challenges, we take two steps:
(i) Tailoring~\eqref{cons:tree_1} specifically for downward-directed trees to reduce the solution space.
(ii) Integrating~\eqref{cons:tree_2} directly into our optimization solution.
Specifically, we modify~\eqref{cons:tree_1} for downward-directed trees, resulting in the following CCR:
\begin{equation}\label{cons:SGD_2}
\hspace{-2mm}~ \overbrace{\sum_{n\in\mathcal{N}}\bm{\Gamma}^{\mathsf{GD}, (k)}_{n,n'}}^{(a)} + \overbrace{\pi_{n'}^{\mathsf{G}, (k)}}^{(b)}{=}1,~\forall n'\in\mathcal{N}.
\end{equation}
This CCR ensures that each satellite {\small $n'$} in the tree either receives the dispatched model from exactly one satellite (i.e., it has one predecessor node, making term (a) equal to one) or serves as the root (making term (b) equal to one), but not both. This is enforced by setting the sum of (a) and (b) to one.
Further, integration of \eqref{cons:tree_2} into our optimization turns out to be a natural phenomenon detailed in the following remark.
\vspace{-2mm}
\begin{remark}[Natural Cycle Removal with Delay and Energy Consideration]\label{remark:Cycle}
    The constraint \eqref{cons:tree_2}, combined with \eqref{cons:tree_1}, serves two purposes: (i) ensuring the tree is connected, and (ii) making it cycle-free. Henceforth, we will derive the energy and latency expressions for model dispatching and aggregations over the downward- and upward-directed trees. These expressions are later integrated into the optimization problem in Sec.~\ref{sec:optimization_problem}, which jointly minimizes the overall energy, latency, and ML model performance metrics. As cycles in the directed graph can lead to redundant model exchanges that in turn cause excessive delay/energy, minimizing delay/energy metrics naturally discourages the formation of cycles, and combined with \eqref{cons:tree_1} or its equivalent \eqref{cons:SGD_2} ensures a proper tree structure.
\end{remark}
\vspace{-2mm}
\subsubsection{Energy and Latency Computations}
Let {\small$\alpha^{\mathsf{Bit}} M^{\mathsf{Dim}}$} denote the total size of the model vector {\small$\bm{\omega}^{(k)}$}, where {\small$M^{\mathsf{Dim}}$} is the model dimension and {\small$\alpha^{\mathsf{Bit}}$} is the number of bits required to represent one parameter of {\small$\bm{\omega}^{(k)}$}.
During the global model dispatching phase, we obtain the latency of transmitting the global model from satellite {\small$n$} to its successor satellite {\small$n'$} as:
\begin{equation}\label{eq:GM_dispatching_transmission_latency}
    \tau_{n,n'}^{\mathsf{GD}, (k)} = {\bm{\Gamma}^{\mathsf{GD}, (k)}_{n,n'}\alpha^{\mathsf{Bit}} M^{\mathsf{Dim}}}\big/{\mathfrak{R}^{\mathsf{GD}, (k)}_{n,n'}},
\end{equation}
where the binary value $\bm{\Gamma}^{\mathsf{GD}, (k)}_{n,n'}$ is added as a multiplicative term to avoid the computation of the delay of unused links and {\small$\mathfrak{R}^{\mathsf{GD}, (k)}_{n,n'} = \mathfrak{R}_{n,n'}\big(\bm{\Psi}(\mathcal{N},t^{\mathsf{GD},(k)})\big)$} is computed using \eqref{eq:finalRate}.
 Subsequently, we obtain the overall delay of model arrival to satellite $n'$ via the following recursive expression:
\begin{equation} \label{eq:GM_dispatching_latency}
    \tau_{n'}^{\mathsf{GD}, (k)}{=}\hspace{-2mm}\sum_{n\in\mathcal{N}\setminus\{n'\}}\hspace{-1mm}\Big(
    \overbrace{
    \bm{\Gamma}^{\mathsf{GD}, (k)}_{n, n'} \tau_{n}^{\mathsf{GD}, (k)}}^{(a)} + \overbrace{\tau_{n, n'}^{\mathsf{GD}, (k)}}^{(b)}\Big),
\end{equation}
where term $(a)$ represents the model arrival delay at the predecessor of satellite {\small$n'$}, while term $(b)$ accounts for the transmission delay from this predecessor node to {\small$n'$}. 
% Naturally, {\small$\tau_{n'}^{\mathsf{GD}, (k)} = 0$} when {\small$n' = r^{\mathsf{G}, (k)}$}.
The overall latency of global dispatching is then given by
{\small$\tau^{\mathsf{GD},(k)}=\max_{n'\in\mathcal{N}} \big\{ \tau_{n'}^{\mathsf{GD}, (k)}\big\}$}.
We assume that global dispatching must be completed within a specified time interval as follows:
\begin{equation}\label{cons:GM_dispatch_latency_1}
    \tau^{\mathsf{GD},(k)} \leq \tau^{\mathsf{GD},\mathsf{max}},~~~\forall k \in\mathcal{K}.
\end{equation}
\begin{algorithm}[t]
\caption{Fed-Span operations (to improve readability, the same color coating as in Fig.~\ref{fig:training_process} is used for different phases)}
\label{alg:Fed-Span_alg}
{\footnotesize
\begin{algorithmic}[1]
 \STATE \textbf{Input:} Solving problem $\bm{\mathcal{P}}$ in Sec.~\ref{sec:optimization_problem} and obtaining the link formations {\scriptsize $[\psi_{m,m'}(t)]_{m \in \mathcal{M}_n, m' \in \mathcal{M}_{n'}}$} across time $t$, which leads to the creation of network topology and CBMs {\scriptsize$\bm{\Psi}(\mathcal{N}, t)=[\Psi_{n,n'}(t)]_{n, n' \in \mathcal{N}}$}, where  {\scriptsize$\Psi_{n,n'}(t) = [\psi_{m,m'}(t)]_{m \in \mathcal{M}_n, m' \in \mathcal{M}_{n'}}$} across $t$.
 \\
\STATE Obtaining the adjacency matrices {\scriptsize$\bm{\Gamma}^{\mathsf{GD}, (k)}$}, {\scriptsize$\bm{\Gamma}^{\mathsf{LA}, (k,\ell)}$}, {\scriptsize$\bm{\Gamma}^{\mathsf{LD}, (k,\ell)}$}, {\scriptsize$\bm{\Gamma}^{\mathsf{GA}, (k)}$} based on the CBMs according to~\eqref{eq:GM_dispatching_adjacney_brevity},~\eqref{eq:CM_aggregation_adjacency_brevity},~\eqref{eq:CM_dispatching_adjacency_brevity},~\eqref{eq:GM_aggregation_adjacney_brevity}
\FOR{$k = 0$ to $K-1$}
% \textcolor{GD}
    \STATE {\rshadeGD{{\textbf{Phase 1}} \textbf{(Global Model Dispatching --- Sec.~\ref{subsec:ph1})}}} \label{Ph1:start}
    \\
    \STATE {Transmission of {\scriptsize$\bm{\omega}^{(k)}$} from  {\scriptsize$r^{\mathsf{G},(k)}$} to all satellites through  {\scriptsize$\bm{\Gamma}^{\mathsf{GD}, (k)}$}}\label{Ph1:end}
    \\
    \FOR{$\ell = 1$ to $L^{(k)}$}
        \FOR{$c = 1$ to $C^{(k)}$}
           \FOR{$n\in \mathcal{N}^{(k)}_c$}
            \STATE {\rshadeLT{{\textbf{Phase 2}} \textbf{(Local Model Training --- Sec.~\ref{subsec:ph2})}}} \label{Ph2:start}
            \\
            \STATE {Executing {\scriptsize$e^{(k)}_{n}$} SGDs to obtain  {\scriptsize$\bm{\omega}_{n}^{(k,\ell), e^{(k)}_{n}}$} as in \eqref{eq:SGD_iterations}}
            \STATE {Computing the cumulative gradient {\scriptsize$\widetilde{\nabla F}_{n}^{(k,\ell)}$} as in \eqref{eq:cumulateGrad}} \label{Ph2:end}
            \STATE {\rshadeIT{{\textbf{Idle Time}}}}\label{ID:start}
            
            \STATE {Remaining idle for the period of {\scriptsize$\Omega^{(k,\ell)}$} given by~\eqref{eq:idletime}}\label{ID:end}
            
            \IF{$\ell \neq L^{(k)}$}
            
            \STATE {\rshadeLA{{\textbf{Phase 3}} \textbf{(Local Model Aggregation --- Sec.~\ref{subsec:ph3})}}} \label{Ph3:start}
            \\ \STATE {Transmitting {\scriptsize$\widetilde{\nabla F}_{n}^{(k,\ell)}$} to the predecessor  using $\bm{\Gamma}^{\mathsf{LA}, (k,\ell)}$}
             \\
            \STATE {If serving as a predecessor, forming  {\scriptsize$\overline{\nabla F}_{n}^{(k,\ell)}$} as in \eqref{clusteraggregation} and transmitting it using $\bm{\Gamma}^{\mathsf{LA}, (k,\ell)}$}
            \\
              \STATE {If serving as the root {\scriptsize$r^{\mathsf{L},(k)}_{c}$}, aggregating the received models to obtain the VC model  {\scriptsize$\overline{\bm{\omega}}^{(k,\ell+1)}_{c}$} as in~\eqref{localagggregationformula}} \label{Ph3:end}
            
            \STATE {\rshadeLD{{\textbf{Phase 4}} \textbf{(Local Model Dispatching --- Sec.~\ref{subsec:ph4})}}}\label{Ph4:start}
            \\ \STATE {Engaging in the dispatching of the aggregated model {\scriptsize$\overline{\bm{\omega}}^{(k,\ell+1)}_{c}$} according to  $\bm{\Gamma}^{\mathsf{LD}, (k,\ell)}$}\label{Ph4:end}
            
            \ELSE
            
            \STATE {\rshadeGA{{\textbf{Phase 5}} \textbf{(Global Model Aggregation --- Sec.~\ref{subsec:ph5})}}}\label{Ph5:start}
            \\
            \STATE {Computing the cumulative gradient for the entire global round as {\scriptsize$\widehat{\nabla F}_{n}^{(k,{\ell})} {=} (\bm{\omega}^{(k)}{-}\bm{\omega}_{n}^{(k,{\ell}),e^{(k)}_{n}})\big/\eta_{_k}$} with {\scriptsize$\ell=L^{(k)}$} and transmitting it to the predecessor using $\bm{\Gamma}^{\mathsf{GA}, (k)}$}
            \\
             \STATE {If serving as a predecessor, forming  {\scriptsize$\overline{\nabla F}_{n}^{(k,\ell)}$}  with {\scriptsize$\ell=L^{(k)}$}  as in \eqref{globalaggregation} and transmitting it according to $\bm{\Gamma}^{\mathsf{GA}, (k)}$}
             \\\STATE {If serving as the global root $r^{\mathsf{A},(k)}$, obtaining the next global model {\scriptsize$\bm{\omega}^{(k{+}1)}$} as in \eqref{finalmodelaggregation}}\label{Ph5:end}
            \ENDIF
          \ENDFOR
        \ENDFOR
    \ENDFOR
\ENDFOR
\end{algorithmic}
}
\end{algorithm}
~~To ensure uninterrupted global model dispatching, the links of the downward-directed tree must maintain connectivity throughout the dispatching phase (i.e., during {\small$[t^{\mathsf{GD},(k)}, t^{\mathsf{GD},(k)} + \tau^{\mathsf{GD},(k)}]$}), which we impose through the following CCR:
\begin{equation}\label{cons:SGD_4}
  \sum_{t=1}^{\tau^{\mathsf{GD},(k)}}  \hspace{-0.5mm}\Big(\bm{\Gamma}^{\mathsf{GD}, (k)}{-}\bm{\Gamma}\big(\bm{\Psi}(\mathcal{N},t^{\mathsf{GD},(k)}+ t), r^{\mathsf{G}, (k)}\big)\Big)= \bf{0},
\end{equation}
where $\bf{0}$ denotes an $N\times N$ all-zero matrix. 
% Satisfaction of ~\eqref{cons:SGD_4} ensures that no links are lost and no additional links are formed during the global dispatching phase, enforcing link stability. 
Finally, the total energy consumption of performing global model dispatching can be calculated as:
\begin{equation}\label{GDenergy}
     E^{\mathsf{GD}, (k)} = \sum_{n\in\mathcal{N}}\sum_{n'\in\mathcal{N}} \tau_{n,n'}^{\mathsf{GD}, (k)} P_{n},
\end{equation}
where {\small$P_{n}{=}P^{\mathsf{Tx}}_{m}$}, $m\in\mathcal{M}_n$, is the transmit power of satellite {\small$n$}.

\vspace{-1mm}
\subsection{\underline{\textbf{Phase 2:}} Local Model Training (lines \ref{Ph2:start}-\ref{Ph2:end} in Alg.~\ref{alg:Fed-Span_alg})}\label{subsec:ph2}
After \textbf{\textit{Phase 1}} concludes, the global model received by satellites (i.e., {\small$\bm{\omega}^{(k)}$)} is used to synchronize their models. The satellites then train their models and perform local model aggregations within their VCs to debias their trained models before the final global aggregation. Below, we detail the processes involved in the local model training.

\subsubsection{Time Specification}
We denote the set of local rounds performed during global round {\small $k$} by {\small$\mathcal{L}^{(k)}$} with each round indexed by {\small$\ell \in \mathcal{L}^{(k)}$} comprising a local model training and a local model aggregation. The number of local rounds, represented by {\small$L^{(k)}=|\mathcal{L}^{(k)}|$}, is a decision variable.
The final local round, {\small${\ell} = L^{(k)}$}, involves only local training, as it directly precedes the global aggregation.
% \footnote{There are additional design opportunities for the global aggregation, such as implementing an inter-VC tree with minimal node participation, where the final result of aggregation will be the same as what we consider in this work.}
To capture the number and timing of local rounds, we divide the interval of local rounds during global round {\small$k$}, denoted by {\small$\tau^{\mathsf{Loc}}$}, into equal periods called \textit{transmission time intervals} (TTIs). Each TTI represents a time window to either (i) perform a local round or (ii) skip it for a later TTI. Introducing TTIs enables a tractable model for scheduling local model training and aggregations, which is crucial given the dynamic nature of satellite networks.

To model TTIs, we define {\small$t^{(k)}_x \in [T^{\mathsf{Init},(k)} + \tau^{\mathsf{GD},\mathsf{max}}, T^{\mathsf{Init},(k)} + \tau^{\mathsf{GD},\mathsf{max}} + \tau^{\mathsf{Loc}}]$} as the starting time of TTI {\small$x$}, where {\small$x \in \mathcal{X}$} and {\small$\mathcal{X} = \{1, \dots, X\}$} represents the set of TTIs.
Equal intervals across TTIs are enforced via {\small$t^{(k)}_{x+1} - t^{(k)}_x = \tau^{\mathsf{TTI}}$} for all {\small$x \in \mathcal{X} \setminus \{X\}$}, where {\small$\tau^{\mathsf{TTI}}$} is the constant duration of each TTI. Consequently, the number of TTIs is given by {\small$X = \lfloor \tau^{\mathsf{Loc}} / \tau^{\mathsf{TTI}} \rfloor$}.

Let {\small$\lambda^{(k,\ell)}_{x}$} be a binary decision variable, where {\small$\lambda^{(k,\ell)}_{x} = 1$} represents that the {\small$\ell^{\text{th}}$} local round  takes place at the start of TTI {\small$x$} (i.e., at {\small$t^{(k)}_x$}); otherwise {\small$\lambda^{(k,\ell)}_{x} = 0$}.
To ensure the proper execution of local rounds, we impose the following CCRs:
\begin{align}
    &\sum_{x\in\mathcal{X}}\lambda^{(k,\ell)}_{x} = 1,~~~~\forall \ell{\in}\mathcal{L}^{(k)}\setminus\{L^{(k)}\},~\forall k\in\mathcal{K}, \label{l-ch1}
   \\
   &\sum_{\ell \in\mathcal{L}^{(k)}\setminus\{L^{(k)}\}} \sum_{x\in\mathcal{X}}\lambda^{(k,\ell)}_{x} = L^{(k)}-1, ~~\forall k\in\mathcal{K}, \label{l-ch2}
\end{align}
implying that (i) each local round {\small$\ell$}, except the final one corresponding to the global aggregation, is executed at a TTI, and (ii) the total number of TTIs used for local aggregations is {\small$L^{(k)} - 1$}, as the final local round is followed by a global aggregation. We subsequently obtain the \underline{\textbf{l}}ocal \underline{\textbf{t}}raining (LT) initialization time {\small$t^{\mathsf{LT},(k,\ell)}$} as follows:
\begin{equation}\label{eq:timeselection}
t^{\mathsf{LT},(k,\ell)}=\sum_{x\in\mathcal{X}^{(k)}}\lambda^{(k,\ell)}_{x} t_{x}^{(k)},~\forall \ell{\in}\mathcal{L}^{(k)}\setminus\{L^{(k)}\},~\forall k\in\mathcal{K}.
\end{equation}

\subsubsection{Network Operations}
To detail the network operations during local model training, we first need to identify the model on which local training begins across the satellites. To this end, we note that following each global dispatching step (i.e., after \textbf{\textit{Phase 1}} and when $\ell=0$), when all satellite models are unified, we have
{\small$\overline{\bm{\omega}}^{(k,0)}_{c} = \bm{\omega}^{(k)}$},
where {\small$\overline{\bm{\omega}}^{(k,0)}_{c}$} denotes the \textit{initial VC model} for VC {\small$c$}, a unified model across the satellites in the VC on which model training is performed.
For subsequent training rounds (i.e., {\small$\ell \geq 1$}), the model
{\small$\overline{\bm{\omega}}^{(k,\ell)}_{c}$} corresponds to the latest dispatched model across all satellites within the VC (detailed in \textbf{\textit{Phase 4}}), again yielding a unified model across the VC on which the local training round is carried out.

% Following each global dispatching (i.e., after \textbf{\textit{Phase 1}}), where all satellite models are unified, we have {\small$\overline{\bm{\omega}}^{(k,0)}_{c} = \bm{\omega}^{(k)}$}, where {\small$\overline{\bm{\omega}}^{(k,0)}_{c}$} denotes the initial \textit{VC model} for VC {\small$c$}: a unifed model across the satellties in the VC on which the model training is conducted. For the next model training rounds (i.e., $\ell\geq 1$) , this model (i.e., {\small$\overline{\bm{\omega}}^{(k,\ell)}_{c}$} for $\ell\geq 1$) is 
% formed at its root node after conducting a local model aggregation {\small$\ell$} (detailed in \textbf{\textit{Phase 3}}) following the training round. In particular, after the  local model aggregation  {\small$\overline{\bm{\omega}}^{(k,\ell)}_{c}$} for is dispatched to all satellites within the VC to synchronize their local models (detailed in \textbf{\textit{Phase 4}}): a unifed model across the satellties in the VC on which the model training is conducted

Let {\small$e^{(k)}_{n}$} denote the number of SGD iterations performed by satellite $n$  during each local round of global round {\small$k$}.
At each iteration {\small$e$} of local training round {\small$\ell$} in global round {\small$k$}, satellite {\small$n$} randomly selects a mini-batch of data points {\small$\mathcal{B}^{(k,\ell),e}_{n} \subseteq \mathcal{D}^{(k)}_{n}$}, where {\small$e \in \{1, \dots, e^{(k)}_{n}\}$}. The mini-batch size is fixed across iterations and denoted as {\small$B^{(k,\ell)}_{n} \triangleq |\mathcal{B}^{(k,\ell),e}_{n}| = \varsigma^{(k,\ell)}_{n} D^{(k)}_{n}$}, where {\small$\varsigma^{(k,\ell)}_{n} \in (0,1]$} is a decision variable representing the fraction of the local dataset used in each mini-batch.
The evolution of the local model of satellite {\small$n \in \mathcal{N}^{(k)}_{c}$} for $\ell\in\mathcal{L}^{(k)}$ is given by:
\begin{equation}\label{eq:SGD_iterations}
  \hspace{-4mm} \resizebox{0.46\textwidth}{!}{
    $ \bm{\omega}_{n}^{(k,\ell), e} = \bm{\omega}_{n}^{(k,\ell),e-1} - \eta_{_k} \widetilde{\nabla} F_{n}(\bm{\omega}^{(k,\ell), e-1}),~ 1\leq e \leq e^{(k)}_{n}\hspace{-0.5mm} ,$} \hspace{-3mm} 
\end{equation}
\noindent where {\small$\eta_{k}$} denotes the SGD step-size during global round {\small$k \in \mathcal{K}$}.  The iterations in~\eqref{eq:SGD_iterations} start with
{\small$\bm{\omega}^{(k,\ell),0}_{n}{=}\overline{\bm{\omega}}^{(k,\ell)}_{c}$}, which represents the VC model dispatched to satellites in VC {\small$\mathcal{N}^{(k)}_{c}$} at local round {\small$\ell$}. Also, {\small$\widetilde{\nabla} F_{n}(\bm{\omega}^{(k,\ell), e-1})$} is the stochastic gradient of the ML loss, defined as:
{\small $
    \widetilde{\nabla} F_{n}(\bm{\omega}^{(k,\ell), e-1}) =  \sum_{d\in \mathcal{B}^{(k,\ell),e}_{n}}\nabla f(\bm{\omega}^{(k,\ell),e-1}_{n},d)\big/B^{(k,\ell)}_{n}
$}.

At the end of local round {\small$\ell$}, each satellite {\small$n$} computes its cumulative gradient, which is given by:
\begin{equation}\label{eq:cumulateGrad}
    \widetilde{\nabla F}_{n}^{(k,\ell)} {=} (\overline{\bm{\omega}}^{(k,\ell)}_{c}{-}\bm{\omega}_{n}^{(k,\ell),e^{(k)}_{n}})\big/\eta_{_k}
\end{equation}
and transmits it for model aggregations as discussed in \textbf{\textit{Phase 3\&5}}.
To unify operations within each VC {\small$c$}, we assume that all satellites in the VC perform the same number of SGD iterations, defined as {\small$e_c^{(k)} = e_n^{(k)}$}, $\forall n \in \mathcal{N}_c^{(k)}$. Notably, {\small$e_c^{(k)}$} can vary across VCs, allowing for different levels of model refinements and satellite computation capabilities.

\subsubsection{Energy and Latency Computations}
Let {\small $a_{n}$} denote the number of CPU cycles required to process one data point by satellite {\small $n \in \mathcal{N}$}. During global round {\small$k \in \mathcal{K}$} and local round {\small$\ell \in \mathcal{L}^{(k)}$}, the \textit{training latency} for satellite {\small$n$} to perform {\small$e^{(k)}_{n}$} local SGD iterations as per~\eqref{eq:SGD_iterations} is given by:
\begin{equation}\label{eq:training_latency}
    \begin{aligned}
     \tau_{n}^{\mathsf{LT},{(k,\ell)}}= e^{(k)}_{n} a_{n} {B}^{(k,\ell)}_{n} {/}f^{(k,\ell)}_{n},
    \end{aligned}
\end{equation}
where {\small $f^{(k,\ell)}_{n}{\le} f^{\mathsf{max}}_{n}$}
is a decision variable representing the CPU frequency of satellite {\small $n$}. Subsequently, {\small$\tau^{\mathsf{LT},{(k,\ell)}} = \max_{n \in \mathcal{N}} \{\tau_{n}^{\mathsf{LT},{(k,\ell)}}\}$} denotes the total training latency of satellites during local round {\small$\ell$} of global round {\small$k$}, which we ensure its timely conclusion by imposing the following bound: 
\begin{equation}\label{eq:T^f}
    \tau^{\mathsf{LT},{(k,\ell)}} \leq \tau^{\mathsf{LT}, \mathsf{max}}.
\end{equation}

Subsequently, the total energy consumed by satellite {\small$n$} for local training during global round {\small$k$} and local round {\small$\ell$} is:
\begin{equation}\label{eq:EN_LC}
    E_{n}^{\mathsf{LT},{(k,\ell)}}= \frac{\alpha^{\mathsf{chip}}_{n}}{2}(f^{(k)}_{n})^{3}\tau_{n}^{\mathsf{LT},{(k,\ell)}},
\end{equation}
where {\small${\alpha^{\mathsf{chip}}_{n}}$} denotes the effective chipset capacitance~\cite{8737464}. Considering this, the total computation energy consumption  during each local round {\small$\ell$} is given by:
\begin{equation}\label{eq:EN_LC_sum}
    E^{\mathsf{LT},(k,\ell)}= \sum_{n \in \mathcal{N}} E_{n}^{\mathsf{LT},{(k,\ell)}}.
\end{equation}

\vspace{-4mm}
\subsection{\hspace{-0.5mm}\underline{\textbf{Phase\hspace{-0.5mm} 3:}} Spanning \hspace{-0.15mm}Local Aggregations\hspace{-0.5mm} (lines\hspace{-0.3mm} \ref{Ph3:start}-\ref{Ph3:end}\hspace{-0.5mm} in\hspace{-0.1mm} Alg.~\ref{alg:Fed-Span_alg})}\label{subsec:ph3}
After completing local model training at each local round {\small$\ell \in \mathcal{L}^{(k)} \setminus \{L^{(k)}\}$}, {Fed-Span} conducts a local model aggregation phase, transferring local models through the \textit{leaves-to-roots continuum} as visualized in Fig.~\ref{fig:system_model}(c) and discussed below.

\subsubsection{Time Specification}
% \ali{To be done! The LA!}
During \underline{\textbf{l}}ocal \underline{\textbf{a}}ggregation (LA) {\small$\ell \in \mathcal{L}^{(k)} \setminus \{L^{(k)}\}$}, occurring at time {\small$t^{\mathsf{LA},(k,\ell)}$}, satellites within each VC transfer their models via an upward-directed tree to their respective root node for local model aggregation. We can obtain the time {\small$t^{\mathsf{LA},(k,\ell)}$} as follows:
\begin{equation}\label{eq:localaggtime}
t^{\mathsf{LA},(k,\ell)}=t^{\mathsf{LT},(k,\ell)} + \tau^{\mathsf{LT}, \mathsf{max}}.
\end{equation}
\subsubsection{Topological Properties and Network Operations}
As shown in Fig.~\ref{fig:system_model}(c), the upward-directed trees used for model aggregations within VCs collectively form a \textit{forest}, a graph topology comprising disjoint trees that span all satellites. 
% This forest structure offers a more compact representation of the network.
Let {\small$\bm{\Psi}(\mathcal{N}^{(k)}_{c},t^{\mathsf{LA},(k,\ell)})$} denote the CBM of the local upward-directed aggregation tree in VC {\small$c$}. The forest used for model aggregation at local round {\small$\ell$} is then formalized by combining the CBMs across VCs as {\small$\bm{\Psi}(\mathcal{N},t^{\mathsf{LA},(k,\ell)}) = \sum_{c \in \mathcal{C}^{(k)}} \bm{\Psi}(\mathcal{N}^{(k)}_{c},t^{\mathsf{LA},(k,\ell)})$}. The adjacency matrix of this forest is further represented in the following compact form:
\begin{equation}\label{eq:CM_aggregation_adjacency_brevity}
 \bm{\Gamma}^{\mathsf{LA},(k,\ell)}=\sum_{c \in \mathcal{C}^{(k)}}\bm{\Gamma}\big(\bm{\Psi}(\mathcal{N}^{(k)}_{c},t^{\mathsf{LA},(k,\ell)}),r_{c}^{\mathsf{L}, (k)}\big),
\end{equation}
where {\small$r_{c}^{\mathsf{L}, (k)}$} represents the root of the upward-directed tree of VC {\small$c$}. Using the methodology introduced in \textbf{\textit{Phase 1}} (Sec.~\ref{subsec:ph1}) and adapting \eqref{cons:SGD_2} to this forest, we derive the following CCR:
\begin{equation}\label{cons:SCA_2}
\hspace{-2mm}~\sum_{n'\in\mathcal{N}}\bm{\Gamma}^{\mathsf{LA}, (k,\ell)}_{n,n'} + \pi_{n}^{\mathsf{L}, (k)}{=}1,~\forall n\in\mathcal{N},
\end{equation}
where {\small$\pi_{n}^{\mathsf{L}, (k)} \in \{0,1\}$} indicates whether satellite {\small $n \in \mathcal{N}$} serves as the root of one of the trees in the forest. The cycle-free property of this forest, which could be enforced by a condition similar to~\eqref{cons:tree_3}, is inherently satisfied in Sec.~\ref{sec:optimization_problem} as per Remark~\ref{remark:Cycle}.

% \ali{Harsh Transition!}
To conduct a local model aggregation at each local round {\small$\ell \in \mathcal{L}^{(k)} \setminus \{L^{(k)}\}$}, after computing it as in~\eqref{eq:cumulateGrad}, each satellite sends its cumulative gradient to its immediate successor in the forest. Subsequently, each satellite {\small $n \in \mathcal{N}^{(k)}_{c}$} in VC {\small $c$} that serves as a successor receives cumulative gradients from its predecessor satellites and aggregates them into a single gradient vector.
This vector is expressed recursively based on the satellite’s cumulative gradient and the aggregated gradients from its predecessors as follows:
\begin{equation}\label{clusteraggregation}
     \overline{\nabla F}_{n}^{(k,\ell)}= {{D}^{(k)}_n}\widetilde{\nabla F}_{n}^{(k,\ell)}+\hspace{-3mm}\sum_{n'\in \mathcal{N}\setminus\{n\}}\hspace{-3mm} \bm{\Gamma}^{\mathsf{LA}, (k,\ell)}_{n,n'} \overline{\nabla F}_{n'}^{(k,\ell)}.
\end{equation}
Each satellite {\small $n$} then forwards its aggregated gradient vector to its successor, and this process continues until reaching the root node of each VC {\small $c$}, which obtains the VC model as follows:
\begin{equation}\label{localagggregationformula}
\overline{\bm{\omega}}^{(k,\ell+1)}_{c}=\overline{\bm{\omega}}^{(k,\ell)}_{c}-\eta_{_k}{\overline{\nabla F}_{n}^{(k,\ell)}}\big/{D^{(k)}_c},~n=r_{c}^{\mathsf{L}, (k)},
\end{equation}
where {\small $D_c^{(k)}= \sum_{n \in \mathcal{N}_{c}^{(k)}} D^{(k)}_n$}. VC model 
$\overline{\bm{\omega}}^{(k,\ell+1)}_{c}$ is then dispatched to all satellites in VC {\small $c$} and used to synchronize their local models, as detailed in \textbf{\textit{Phase 4}}.
\subsubsection{Energy and Latency Computations}
Using~\eqref{eq:CM_aggregation_adjacency_brevity}, the latency for transmitting a local model from satellite {\small$n$} to its predecessor satellite {\small$n'$} can be derived similarly to~\eqref{eq:GM_dispatching_transmission_latency} as:
\begin{equation}\label{eq:CM_aggregation_transmission_latency}
    \tau_{n,n'}^{\mathsf{LA},(k,\ell)} = {\bm{\Gamma}^{\mathsf{LA}, (k,\ell)}_{n,n'} \alpha^{\mathsf{Bit}} M^{\mathsf{Dim}}}\big/{\mathfrak{R}^{\mathsf{LA}, (k,\ell)}_{n,n'}},
\end{equation}
where {\small$\mathfrak{R}^{\mathsf{LA}, (k,\ell)}_{n,n'} = \mathfrak{R}_{n,n'}\big(\bm{\Psi}(\mathcal{N}^{(k)},t^{\mathsf{LA},(k,\ell)})\big)$} is the data rate between the respective satellites.
The total latency of model arrival at satellite {\small$n'$}, denoted by {\small$\tau_{n'}^{\mathsf{LA}, (k,\ell)}$}, can  then be obtained similar to~\eqref{eq:GM_dispatching_latency} through the following recursive relation:
\begin{equation}\label{eq:CM_aggregation_latency}
    \tau_{n'}^{\mathsf{LA}, (k,\ell)}{=}\hspace{-2mm}\max_{n\in\mathcal{N}\setminus\{n'\}}\hspace{-1mm}\Big\{ \bm{\Gamma}^{\mathsf{LA}, (k,\ell)}_{n, n'}\tau_{n}^{\mathsf{LA}, (k,\ell)} + \tau_{n,n'}^{\mathsf{LA}, (k,\ell)} \Big\}.
\end{equation}
In~\eqref{eq:CM_aggregation_latency}, the {\small$\max\{\cdot\}$} function is used because satellite {\small $n'$} may need to wait for local models from multiple successors.
The overall latency of the model aggregation over the forest is then determined by the most time-consuming aggregation across the roots of its trees: {\small $\tau^{\mathsf{LA}, (k,\ell)}=\max_{n'\in\{r_{c}^{\mathsf{L}, (k)}\}_{c\in\mathcal{C}^{(k)}}} \{\tau_{n'}^{\mathsf{LA}, (k,\ell)}\}$}, the timeliness of which is ensured through  
the following bound:
\begin{equation}\label{cons:CM_aggregation_latency_1}
    \tau^{\mathsf{LA}, (k,\ell)} \leq  \tau^{\mathsf{LA},\mathsf{max}}, ~\forall \ell{\in}\mathcal{L}^{(k)}\setminus\{L^{(k)}\},~k\in\mathcal{K}.
\end{equation}

Similar to~\eqref{cons:SGD_4}, we ensure proper local aggregation without link interruptions across the forest via the following CCR:
\begin{equation}\label{cons:SCA_4}
  \hspace{-3mm}  \sum_{t=1}^{\tau^{\mathsf{LA}, (k,\ell)}}\hspace{-3mm}  \Big(\bm{\Gamma}^{\mathsf{LA}, (k,\ell)}{-}\hspace{-2.5mm} \sum_{c \in \mathcal{C}^{(k)}}\hspace{-2mm} \bm{\Gamma}\big(\bm{\Psi}(\mathcal{N}^{(k)}_{c}\hspace{-1mm},t^{\mathsf{LA},(k,\ell)}{+}t), r_{c}^{\mathsf{L}, (k)}\big)\Big) \hspace{-.5mm}=\hspace{-.5mm} \bf{0}.\hspace{-3mm} 
\end{equation}
Subsequently, the total energy consumption for performing model aggregations across the forest can be calculated as:
\begin{equation}\label{forestenergyAG} 
    E^{\mathsf{LA},(k,\ell)} = \sum_{n\in\mathcal{N}}\sum_{n'\in\mathcal{N}} \tau_{n,n'}^{\mathsf{LA},(k,\ell)} P_{n}.
\end{equation}
% where {\small$P_{n}=P^{\mathsf{Tx}}_{m},~\forall m\in\mathcal{M}_n$,} is the transmit power of satellite $n$.

\subsection{\hspace{-0.5mm}\underline{\textbf{Phase\hspace{-0.2mm} 4:}}\hspace{-0.2mm} Spanning\hspace{-0.2mm} Local\hspace{-0.2mm} Dispatching\hspace{-0.2mm} (lines\hspace{-0.2mm} \ref{Ph4:start}-\ref{Ph4:end} in Alg.~\ref{alg:Fed-Span_alg})}\label{subsec:ph4}

Upon completing \textbf{\textit{Phase 3}} for each local round {\small$\ell \in \mathcal{L}^{(k)} \setminus \{L^{(k)}\}$}, {Fed-Span} proceeds with a local model dispatching phase. This involves forming a downward-directed tree within each VC, collectively creating a \textit{forest}, where aggregated models are dispatched through the \textit{roots-to-leaves continuum} as illustrated in Fig.~\ref{fig:system_model}(d) (this forest is different from the one constructed in \textbf{\textit{Phase 3}}, where models were transmitted along the \textit{leaves-to-roots continuum} for local aggregations). We detail the characteristics of this phase below.

% and discussed further below.

\subsubsection{Time Specification}
During the \underline{\textbf{l}}ocal \underline{\textbf{d}}ispatching (LD) phase of local round {\small$\ell \in \mathcal{L}^{(k)} \setminus \{L^{(k)}\}$}, the VC models obtained at the roots of all VCs are dispatched to the satellites within each VC at time {\small$t^{\mathsf{LD},(k,\ell)}$}, which is given by:
\begin{equation}
    t^{\mathsf{LD},(k,\ell)}{=}t^{\mathsf{LA},(k,\ell)}{+}\tau^{\mathsf{LA},\mathsf{max}}.
\end{equation}

\subsubsection{Topological Properties and Network Operations}
The formulation of the downward-directed trees and the resulting forest follows a similar logic to the upward-directed trees and their forest in \textbf{\textit{Phase 3}} (Sec.~\ref{subsec:ph3}). Specifically, letting {\small$\bm{\Psi}(\mathcal{N}^{(k)}_{c},t^{\mathsf{LD},(k,\ell)})$} represent the CBM of the downward-directed tree for each VC {\small$c$}, we formalize the forest used for model dispatching by combining the CBMs across VCs as follows:
\begin{equation}\label{eq:CM_dispatching_adjacency_brevity}
 \bm{\Gamma}^{\mathsf{LD},(k,\ell)}=\sum_{c \in \mathcal{C}^{(k)}}\bm{\Gamma}\big(\bm{\Psi}(\mathcal{N}^{(k)}_{c},t^{\mathsf{LD},(k,\ell)}),r_{c}^{\mathsf{L}, (k)}\big),
\end{equation}
where the root {\small$r_{c}^{\mathsf{L}, (k)}$} is the same as in the upward-directed tree from \textbf{\textit{Phase 3}}. Similar to~\eqref{cons:SCA_2}, the forest formed in this phase must satisfy the following CCR:
\begin{equation}\label{cons:SCD_1}
\hspace{-2mm}~\sum_{n\in\mathcal{N}}\bm{\Gamma}^{\mathsf{LD}, (k,\ell)}_{n,n'} + \pi_{n'}^{\mathsf{L}, (k)}{=}1,~\forall n'\in\mathcal{N},
\end{equation}
where {\small$\pi_{n'}^{\mathsf{L}, (k)} \in \{0,1\}$} indicates whether satellite {\small$n' \in \mathcal{N}$} serves as a root of one of the trees in this forest.

\subsubsection{Energy and Latency Computations}
Similar to~\eqref{eq:CM_aggregation_transmission_latency}, using the adjacency matrix of~\eqref{eq:CM_dispatching_adjacency_brevity}, the latency for transmitting the model from satellite {\small$n$} to its predecessor satellite {\small$n'$} is:
\begin{equation}\label{eq:CM_dispatching_transmission_latency}
    \tau_{n,n'}^{\mathsf{LD}, (k,\ell)} = {\bm{\Gamma}^{\mathsf{LD}, (k,\ell)}_{n,n'}\alpha^{\mathsf{Bit}} M^{\mathsf{Dim}}}\big/{\mathfrak{R}^{\mathsf{LD}, (k,\ell)}_{n,n'}},
\end{equation}
where {\small$\mathfrak{R}^{\mathsf{LD}, (k,\ell)}_{n,n'} = \mathfrak{R}_{n,n'}\big(\bm{\Psi}(\mathcal{N},t^{\mathsf{LD},(k,\ell)})\big)$}. Following the similar logic as in~\eqref{eq:GM_dispatching_latency}, the total latency of model arrival at satellite {\small $n'$} can be obtained using the following recursive expression:
\begin{equation}\label{eq:CM_dispatching_latency}
    \tau_{n'}^{\mathsf{LD}, (k,\ell)}{=}\hspace{-2mm}\sum_{n\in\mathcal{N}\setminus\{n'\}}\hspace{-1mm} \left(\bm{\Gamma}^{\mathsf{LD}, (k,\ell)}_{n, n'} \tau_{n}^{\mathsf{LD}, (k,\ell)} + \tau_{n, n'}^{\mathsf{LD}, (k,\ell)} \right).
\end{equation}
Using~\eqref{eq:CM_dispatching_latency}, the overall latency of the model dispatching forest is determined by the most time-consuming transmission path, given by {\small$\tau^{\mathsf{LD},(k,\ell)} = \max_{n'\in\mathcal{N}} \{\tau_{n'}^{\mathsf{LD}, (k,\ell)}\}$},
 the timeliness of which is imposed by the following bound:
\begin{equation}\label{cons:CM_dispatching_latency_1}
    \tau^{\mathsf{LD}, (k,\ell)} \leq \tau^{\mathsf{LD},\mathsf{max}},~\forall \ell{\in}\mathcal{L}^{(k)}\setminus\{L^{(k)}\},~ k \in\mathcal{K}.
\end{equation}

Similar to~\eqref{cons:SCA_4}, to ensure proper local model dispatching without link interruptions, we impose the following CCR:
\begin{equation}\label{cons:SCD_3}
     \hspace{-3mm}  \sum_{t=1}^{\tau^{\mathsf{LD}, (k,\ell)}}\hspace{-3mm} \Big(\hspace{-0.2mm}\bm{\Gamma}^{\mathsf{LD}, (k,\ell)}{-}\hspace{-2.5mm}\sum_{c \in \mathcal{C}^{(k)}}\hspace{-2mm} \bm{\Gamma}\big(\bm{\Psi}(\mathcal{N}^{(k)}_{c}\hspace{-1mm},t^{\mathsf{LD},(k,\ell)}{+}t), \hspace{-0.2mm} r_{c}^{\mathsf{L}, (k)}\big)\hspace{-0.8mm}\Big) = \bf{0}.
\hspace{-3mm}
\end{equation}
Finally, we obtain the total energy consumption for performing local dispatching as follows:
\begin{equation}\label{LDenergy}
     E^{\mathsf{LD},(k,\ell)} = \sum_{n\in\mathcal{N}}\sum_{n'\in\mathcal{N}} \tau_{n,n'}^{\mathsf{LD}, (k,\ell)} P_{n}.
\end{equation}
\subsection{\underline{\textbf{Phase 5:}} Global Model Aggregation (lines \ref{Ph5:start}-\ref{Ph5:end} in Alg.~\ref{alg:Fed-Span_alg})}\label{subsec:ph5}
Upon completing \textbf{\textit{Phase 3}} for the last local round {\small$\ell = L^{(k)}$}, {Fed-Span} concludes with a global aggregation conducted via an upward-directed tree to form the new global model {\small$\bm{\omega}^{(k+1)}$}. This phase is illustrated in Fig.~\ref{fig:system_model}(e) and detailed below.
\subsubsection{Time Specification}
After completing the final round of local training {\small$\ell=L^{(k)}$}, the resulting local models are transmitted to the global root {\small$r^{\mathsf{G}, (k+1)}$} for \underline{\textbf{g}}lobal \underline{\textbf{a}}ggregation (GA) in round {\small$k$}, occurring at time {\small$t^{\mathsf{GA},(k)}$}, which is given by:
\begin{equation}
    t^{\mathsf{GA},(k)} = T^{\mathsf{Init},(k)}{+}\tau^{\mathsf{GD},\mathsf{max}}+\tau^{\mathsf{Loc}}.
\end{equation}
\subsubsection{Topological Properties and Network Operations}
We next formalize the global upward-directed tree rooted at {\small$r^{\mathsf{A}, (k)}$}, used for model aggregation in round {\small$k$}. This tree is derived from the CBM {\small$\bm{\Psi}(\mathcal{N},t^{\mathsf{GA},(k)})$} and compactly represented by the adjacency matrix {\small$\bm{\Gamma}^{\mathsf{GD}, (k)}$}, defined as:
\begin{equation}\label{eq:GM_aggregation_adjacney_brevity}
    \bm{\Gamma}^{\mathsf{GA}, (k)}=\bm{\Gamma}\big(\bm{\Psi}(\mathcal{N},t^{\mathsf{GA},(k)}),r^{\mathsf{A}, (k)}\big).
\end{equation}
Let {\small$\pi_{n}^{\mathsf{A}, (k)} \in \{0,1\}$} indicate whether satellite {\small$n$} serves as the root of this upward-directed tree that is identical to the root of the downward-directed tree formed in \textbf{\textit{Phase 1}} of the next global round {\small$k+1$} (i.e., {\small$r^{\mathsf{A}, (k)} = r^{\mathsf{G}, (k+1)}$} and {\small$\pi_{n}^{\mathsf{A}, (k)} = \pi_{n}^{\mathsf{G}, (k+1)}$}).\footnote{This common root is responsible for forming the global model and subsequently dispatching it across the satellites for the next global round.} Subsequently, considering~\eqref{cons:SGD_2}, the tree in this phase must satisfy the following CCR:
\begin{equation}\label{cons:SGA_1}
\hspace{-2mm}~\sum_{n'\in\mathcal{N}}\bm{\Gamma}^{\mathsf{GA}, (k)}_{n,n'} + \pi_{n}^{\mathsf{A}, (k)}{=}1,~\forall n\in\mathcal{N}.
\end{equation}

% The processes in global aggregation are similar to those described in \textbf{\textit{Phase 3}} (Sec.~\ref{subsec:ph3}), with the key differences being the aggregation at a global root node and accounting for variability in SGD iterations across VCs. In particular, 

To conduct a global aggregation after the last local aggregation {\small$\ell=L^{(k)}$}, each satellite {\small$n$} in VC {\small$c$}, computes its cumulative gradient for the entire global round as {\small$\widehat{\nabla F}_{n}^{(k,{\ell})} {=} (\bm{\omega}^{(k)}{-}\bm{\omega}_{n}^{(k,{\ell}),e^{(k)}_{n}})\big/\eta_{_k}$} with {\small$\ell=L^{(k)}$}, similar to~\eqref{eq:cumulateGrad}, and transmits it to its successor satellite.
Subsequently, each successor satellite {\small $n \in \mathcal{N}$} receives cumulative gradients from its predecessors and aggregates them into a single gradient vector. This vector, at satellite {\small $n$}, can be expressed recursively in terms of its own cumulative gradient and the aggregated gradients of its predecessors as:
\begin{equation}\label{globalaggregation}
\hspace{-2mm}
\resizebox{0.94\linewidth}{!}{$
     \overline{\nabla F}_{n}^{(k,{\ell})}{=} \frac{{D}^{(k)}_n}{e^{(k)}_{n} L^{(k)} }\widehat{\nabla F}_{n}^{(k,{\ell})}\hspace{-1.5mm}+\hspace{-1mm}\sum_{n'\in \mathcal{N}\setminus\{n\}}\hspace{-1mm} \bm{\Gamma}^{\mathsf{GA}, (k)}_{n,n'} \overline{\nabla F}_{n'}^{(k,{\ell})}\hspace{-0.5mm},\hspace{-0.79mm}~\ell {=} L^{(k)}\hspace{-0.75mm}.
     $}
\hspace{-1.8mm}
\end{equation}
The normalization by the number of SGD iterations {\small$e^{(k)}_n$} in~\eqref{globalaggregation} reduces the risk of bias toward satellites performing more SGD iterations during local rounds~\cite{DBLP:journals/corr/abs-2202-02947}.

Based on~\eqref{globalaggregation}, the next global model is computed at the root satellite {\small$r^{\mathsf{A}, (k)}$} as follows:
\begin{equation}\label{finalmodelaggregation}
\hspace{-2mm}
\resizebox{0.94\linewidth}{!}{$
    \bm{\omega}^{(k+1)}{=}\bm{\omega}^{(k)}{-}\eta_{_k}\Xi^{(k)}{\overline{\nabla F}_{n}^{(k,{\ell})}}\big/{D^{(k)}},~\ell = L^{(k)},~n{=}r^{\mathsf{A}, (k)},
      $}
\hspace{-1.8mm}
\end{equation}
where {\small $D^{(k)}= \sum_{n \in \mathcal{N}} D^{(k)}_n$} and $\Xi^{(k)}=\sum_{n{\in}\mathcal{N}}\frac{{D}^{(k)}_n L^{(k)}e^{(k)}_{n}}{{D}^{(k)}}$ is a boosting coefficient, controlling the convergence speed. 
Following this, {\small$\bm{\omega}^{(k+1)}$} is dispatched as in \textbf{\textit{Phase 1}} (Sec.~\ref{subsec:ph1}), initiating the next sequence of phases for global round {\small$k+1$}.

\subsubsection{Energy and Latency Computations}
Similar to~\eqref{eq:GM_dispatching_transmission_latency}, using the adjacency matrix from~\eqref{eq:GM_aggregation_adjacney_brevity}, the latency of model transmission from satellite {\small$n$} to its predecessor satellite {\small$n'$} is:
\begin{equation}\label{eq:GM_aggregation_transmission_latency}
    \tau_{n,n'}^{\mathsf{GA}, (k)} = {\bm{\Gamma}^{\mathsf{GA}, (k)}_{n,n'}\alpha^{\mathsf{Bit}} M^{\mathsf{Dim}}}\big/{\mathfrak{R}_{n,n'}^{\mathsf{GA}, (k)}},
\end{equation}
where {\small$\mathfrak{R}_{n,n'}^{\mathsf{GA}, (k)} = \mathfrak{R}_{n,n'}\big(\bm{\Psi}(\mathcal{N},t^{\mathsf{GA},(k)})\big)$}.
Similar to~\eqref{eq:CM_aggregation_latency}, the latency of model arrival at satellite {\small$n'$} during the global aggregation can be obtained as follows:
\begin{equation}\label{eq:GM_aggregation_latency}
    \tau_{n'}^{\mathsf{GA}, (k)}{=}\hspace{-2mm}\max_{n\in\mathcal{N}\setminus\{n'\}}\hspace{-1mm}\Big\{\bm{\Gamma}^{\mathsf{GA}, (k)}_{n, n'} \left(\tau_{n,n'}^{\mathsf{GA}, (k)} + \tau_{n}^{\mathsf{GA}, (k)}\right) \Big\}.
\end{equation}
Consequently, the overall latency of global aggregation is determined by the latency of model arrival at the root satellite, given by {\small$\tau^{\mathsf{GA},(k)} = \tau_{n'}^{\mathsf{GA}, (k)}$}, where {\small$n' = r^{\mathsf{A}, (k)}$}. We assume a bounded global aggregation window as follows:
\begin{equation}\label{cons:GM_aggregation_latency_1}
    \tau^{\mathsf{GA},(k)} \leq \tau^{\mathsf{GA},\mathsf{max}},~~~\forall k \in\mathcal{K}.
\end{equation}

Similar to~\eqref{cons:SGD_4}, proper model aggregation without link interruptions is ensured through the following CCR:
\begin{equation}\label{cons:SGA_3}
    \hspace{-3mm}  \sum_{t=1}^{\tau^{\mathsf{GA}, (k)}}\hspace{-2mm} \Big(\bm{\Gamma}^{\mathsf{GA}, (k)}{-}\bm{\Gamma}\big(\bm{\Psi}(\mathcal{N},t^{\mathsf{GA},(k)}{+}t), r^{\mathsf{A}, (k)}\big)\Big) = \bf{0}.
\end{equation}
 Finally, the total energy consumption of performing global aggregation {\small $k$} is given by:
\begin{equation}\label{GAenergy}
    E^{\mathsf{GA}, (k)} = \sum_{n\in\mathcal{N}}\sum_{n'\in\mathcal{N}} \tau_{n,n'}^{\mathsf{GA}, (k)} P_{n}.
\end{equation}
\vspace{-1mm}

\subsection{Notion of Idle Times in {Fed-Span} (lines \ref{ID:start}-\ref{ID:end} in Alg.~\ref{alg:Fed-Span_alg})}
Existing works on Fed-LS \cite{9749193,10021101, chen2022satellite ,nguyen2022federated, 10216376, 9674028, tang2022federated,10121575,10039157, elmahallawy2023optimizing, elmahallawy2023one} often assume continuous back-to-back model training on satellites, which is unrealistic due to satellites' resource limitations, such as their limited battery capacities.
% modeled in Sec.~\ref{sec:satellite_heterogeneity}. 
To address this, 
% {Fed-Span} introduces the concept of satellite \textit{idle time} to the Fed-LS literature. Specifically, 
in {Fed-Span}, a grace period, called \textit{idle time}, follows each local training round {\small$\ell{\in}\mathcal{L}^{(k)}$}, during which satellites pause operations to conserve resources.
Let {\small$\Omega^{(k,\ell)}$} denote the duration of this idle time (visualized in Fig.~\ref{fig:training_process}), which is considered as a degree of freedom and later optimized in Sec.~\ref{sec:optimization_problem}; mathematically:
\begin{equation}\label{eq:idletime}
\hspace{-3mm}
    \Omega^{(k,\ell)} {=} \tau^{\mathsf{LT}, \mathsf{max}} - \tau^{\mathsf{LT},{(k,\ell)}},~\ell{\in}\mathcal{L}^{(k)}, k \in\mathcal{K}.
    \hspace{-3mm}
\end{equation}
We further define the total idle time for global round {\small$k$} as {\small$\Omega^{(k)} = \sum_{\ell \in \mathcal{L}^{(k)}} \Omega^{(k,\ell)}$}. 
% Increasing {\small$\Omega^{(k)}$} helps conserve satellite battery during global aggregation {\small$k$}.

\begin{table*}[t]
\vspace{-9mm}
\begin{minipage}{1\textwidth}
{\footnotesize
\begin{equation}\label{eq:gen_conv}
\hspace{-3mm}
\resizebox{0.97\linewidth}{!}{$
 {\small
\begin{aligned} 
    &\frac{1}{K} \sum_{k=0}^{K-1}\mathbb E\left \Vert\nabla F^{(k)}(\bm{\omega}^{(k)})\right\Vert^2 \leq \underbrace{\frac{F^{(0)}(\bm{\omega}^{(0)}) - F^{(K)^{\ast}}}{K \Phi_{\mathsf{min}}\left(1-\Lambda_{\mathsf{max}}\right)}}_{(a)} + \frac{1}{K \left(1-\Lambda_{\mathsf{max}}\right)} \sum_{k=0}^{K-1} \vast[ \underbrace{\frac{\Omega^{(k)} \Delta^{(k)}}{\Phi_{\mathsf{min}}}}_{(b)} + \underbrace{8 \beta \Theta^2 \Phi_{\mathsf{max}} \hspace{-2mm}\sum_{c\in \mathcal{C}^{(k)}} \hspace{-2mm} \left(\frac{1}{D^{(k)}L^{(k)}} \right)^2 \hspace{-1mm} \frac{1}{e^{(k)}_{c}} \sum_{n\in \mathcal{N}} {D}^{(k)}_{n} \sum_{\ell=1}^{L^{(k)}}\left(1- \varsigma^{(k,\ell)}_n\right)\frac{\gamma^{(k)}_{c,n}\left(\sigma^{(k)}_n\right)^2}{ \varsigma^{(k,\ell)}_n}}_{(c)}
    \\&  + \underbrace{16 \eta_k^2 \beta^2 \left(L^{(k)}  \left(L^{(k)}-1\right) \left(e^{(k)}_{\mathsf{max}}\right)^{2} +  e^{(k)}_{\mathsf{max}}\left(e^{(k)}_{\mathsf{max}}-1\right)\right) \zeta^{\mathsf{Glob}}_{2}\hat{\zeta}^{\mathsf{Loc}}_{1}}_{(d)}
     + \underbrace{16 \eta_k^2 \beta^2 \Theta^2 \hspace{-2mm} \sum_{c\in \mathcal{C}^{(k)}} \hspace{-1mm} \frac{1}{D^{(k)}L^{(k)}} \left(\left(L^{(k)}-1\right) e_{c}^{(k)} + \left(e_{c}^{(k)}-1\right)\right)  \sum_{n\in \mathcal{N}}  \sum_{\ell=1}^{L^{(k)}} \left(1- \varsigma^{(k,\ell)}_n\right)\frac{\gamma^{(k)}_{c,n}\left(\sigma^{(k)}_n\right)^2}{\varsigma^{(k,\ell)}_n}}_{(e)}\vast] 
\end{aligned}
 }
$}
\hspace{-2.5mm}
\end{equation}
}
\hrule
\end{minipage}
\vspace{-4mm}
\end{table*}
\section{ML Convergence Analysis of {\large{Fed-Span}}}\label{sec:conv}
\noindent We next analyze the ML convergence of {Fed-Span}.  To do the analysis, we first introduce various definitions and assumptions. Henceforth, the notation {\small$\Vert.\Vert$} refers to the Euclidean norm-2.
\begin{definition}[Local Data Variability]\label{Assump:DataVariabilit}
    The local data variability at each satellite $n$ is measured via $\Theta_n{\geq} 0$, which $\forall \bm{\omega} , k$ satisfies
    \vspace{-4mm}
    \begin{equation}
   \Vert \nabla f(\bm{\omega},d) - \nabla f\hspace{-.4mm}(\bm{\omega},d')\Vert   \leq  \Theta_n \Vert d\hspace{-.4mm}-d' \Vert,~\forall d,d'\in\mathcal{D}^{(k)}_n.
    \end{equation}
    We also define $\Theta = \max_{n \in \mathcal{N}} \{\Theta_n\}$. We further let $\sigma^{(k)}_{n}$ denote the variance of the features of datapoints in dataset $\mathcal{D}_n^{(k)}$.
\end{definition}
\begin{assumption}[Smoothness of the Loss Functions]\label{Assup:lossFun}
    The local loss function $F^{(k)}_{n}$ of each satellite $n$ is  $\beta$-smooth,~$\forall n,k$:
     \vspace{-.5mm}
    \begin{equation*}
       \hspace{-1.5mm} \Vert F^{(k)}_{n}(\bm{\omega})\hspace{-.4mm}-\hspace{-.4mm} F^{(k)}_{n}(\bm{\omega}') \Vert \hspace{-.6mm}\leq \hspace{-.6mm} \beta \Vert \bm{\omega}-\bm{\omega}' \Vert,~ \hspace{-.8mm}\forall \bm{\omega},\bm{\omega}'\hspace{-.8mm}\in\hspace{-.4mm}\mathbb{R}^M\hspace{-.7mm}, \hspace{-2.3mm} 
        \vspace{-.1mm}
    \end{equation*}
     \noindent which implies the {\small $\beta$}-smoothness of {\small$F^{(k)}$} defined in~\eqref{eq:globalLoss}.
\end{assumption}

Next, inspired by~\cite{10106478}, we define \textit{model drift} to capture the dynamics of satellites' datasets. 
\begin{definition} [Model Drift]\label{def:cons} Let {\small $\mathcal{D}_n(t)$}, with size {\small${D}_n(t)=|\mathcal{D}_n(t)|$}, capture the instantaneous dataset of satellite $n$ at wall-clock time $t$ and {\small$D(t)=\sum_{n\in \mathcal{N}} D_n(t)$}. Also, let 
{\small$ F_n  \left(\bm{\omega}\big|\mathcal{D}_n(t) \right)=\sum_{d \in \mathcal{D}_n(t)} {{f}(\bm{\omega},d)}\big/{D_n(t)}$} refer to the instantaneous local loss of satellite $n$ at wall-clock time $t$.
For each satellite {\small$n\in\mathcal{N}$}, we measure the online model drift for two consecutive wall-clock time instances $t-1$ and $t$ during which the satellite does not conduct ML model training by {\small$\Delta_n(t)\in\mathbb{R}$}, which captures the variation of the local loss for any model parameter, {\small$ \forall \bm{\omega}\in\mathbb{R}^M$}:
\begin{equation}\label{eq:conceptDrift}
\hspace{-3mm}
\resizebox{.90\linewidth}{!}{$
 \frac{D_n(t)}{D(t)}  F_n  \left(\bm{\omega}\big|\mathcal{D}_n(t) \right) -  \frac{D_n(t-1)}{D (t-1)} F_n   \left( \bm{\omega}\big|\mathcal{D}_n(t-1)\right) \leq  \Delta_n (t).
 $}
 \hspace{-3mm}
\end{equation}
\end{definition}
Greater values of the model drift $\Delta_n(t)\gg 0$ imply larger local loss variations under real-time data collection at satellite $n$, which in turn makes tracking the optimal ML model parameters in \eqref{eq:genForm} (discussed in Sec.~\ref{sec:MLform}) more challenging.

 % We next introduce multiple assumptions and definitions required to carry out the convergence analysis. 

The next two assumptions, built upon \cite{DBLP:journals/corr/abs-2202-02947}, quantify data heterogeneity across the satellites within a VC and across different VCs in the constellation.
 \begin{assumption}[Dissimilarity of Loss Functions inside each VC]\label{Assup:IntraClusterDissimilarity}
 For each VC {\small$c$}, there exist two finite constants {\small$\zeta^{\mathsf{Loc}}_{c,1} \geq 1$} and {\small$\zeta^{\mathsf{Loc}}_{c,2} \geq 0$} such that {\small$\forall k,\bm{\omega}$} hold in:
    \vspace{-1mm}
   \begin{equation*}
   \hspace{-.5mm}
   \resizebox{.99\linewidth}{!}{$
        \sum_{n\in \mathcal{N}^{(k)}_{c}} {a_n} \left\Vert \nabla F^{({k})}_n (\bm{\omega}) \right\Vert^2 \leq \zeta^{\mathsf{Loc}}_{c,1}  \left\Vert  \sum_{n\in \mathcal{N}^{(k)}_{c}}a_n \nabla F^{({k})}_n (\bm{\omega}) \right\Vert^2 {+}\zeta^{\mathsf{Loc}}_{c,2}
        $}
   \end{equation*}
    for any set of coefficients {\small$\{a_n\geq 0\}$}, where {\small$\sum_{n\in \mathcal{N}^{(k)}_{c}} a_n=1$}.
\end{assumption}
\begin{assumption}[Dissimilarity of Loss Functions across VCs]\label{Assup:InterClusterDissimilarity}
There exist two finite constants {\small$\zeta^{\mathsf{Glob}}_1 \geq 1$} and {\small$\zeta^{\mathsf{Glob}}_2 \geq 0$} such that  {\small$\forall k,\bm{\omega}$} hold in:
\begin{equation*}
       \hspace{-1mm}
       \resizebox{.99\linewidth}{!}{$\sum_{c\in \mathcal{C}^{(k)}} b_c  \left\Vert  \nabla F^{({k})}_c (\bm{\omega}) \right\Vert^2 \hspace{-1.5mm}{\leq}  \zeta^{\mathsf{Glob}}_1  \left\Vert  \sum_{c\in \mathcal{C}^{(k)}}\hspace{-0.2mm}b_c \nabla F^{({k})}_c (\bm{\omega})    \right\Vert^2\hspace{-1mm} {+}\zeta^{\mathsf{Glob}}_2$}
\end{equation*}
 for any set of coefficients {\small$\{b_c\geq 0\}$}, where {\small$\sum_{c\in \mathcal{C}^{(k)}} b_c=1$}.
\end{assumption}
The four parameters {\small$\zeta^{\mathsf{Loc}}_{c,1}$}, {\small$\zeta^{\mathsf{Loc}}_{c,2}$}, {\small$\zeta^{\mathsf{Glob}}_1$}, {\small$\zeta^{\mathsf{Glob}}_2$} described above capture the heterogeneity of data (i.e., non-iid-ness of data) inside each VC and across the VCs.
% As the satellite datasets become more heterogeneous, these parameters increase. 
% These parameters will play a key role in our convergence analysis. 
To guide satellite clustering (i.e., forming VCs), we need to obtain a relationship between the clustering and the data heterogeneity of satellites across VCs. This requires obtaining a nuanced result in the literature of both hierarchical FedL and Fed-LS, which we present next.

\begin{proposition}[Bounding the Dissimilarity of Loss Functions across VCs]\label{th:clus}
Using Assumptions \ref{Assup:IntraClusterDissimilarity} and \ref{Assup:InterClusterDissimilarity} while presuming the feasible values of {\small$
\zeta^{\mathsf{Loc}}_{c,2} = 0, \forall c \in \mathcal{C}^{(k)}$}, and {\small$
\zeta^{\mathsf{Glob}}_1 = 1$}, let {\small$\zeta^{\mathsf{Loc},\min}_{c,1}$} denote the minimum value of {\small$\zeta^{\mathsf{Loc}}_{c,1}$} under which Assumption \ref{Assup:IntraClusterDissimilarity} holds. By defining  {\small$\hat{\zeta}^{\mathsf{Loc}}_{1} \triangleq \max_{c \in \mathcal{C}^{(k)}} \{\zeta^{\mathsf{Loc},\min}_{c,1}\}$}, the global dissimilarity of loss functions {\small$\zeta^{\mathsf{Glob}}_{2}$} in Assumption~\ref{Assup:InterClusterDissimilarity} satisfies:
% the following bound:
\begin{equation*}
\resizebox{.99\linewidth}{!}{$
\frac{ \displaystyle \sum_{c \in \mathcal{C}^{(k)}} \hspace{-1mm} {b_c} \sum_{n \in \mathcal{N}^{(k)}_c}\hspace{-1mm} a_n \left\Vert \nabla F^{(k)}_n (\bm{\omega}) \right\Vert^2 
    - \hat{\zeta}^{\mathsf{Loc}}_{1} \left\Vert \sum_{c \in \mathcal{C}^{(k)}} \hspace{-1mm}b_c \sum_{n \in \mathcal{N}^{(k)}_c} \hspace{-1mm}a_n \nabla F^{(k)}_n (\bm{\omega}) \right\Vert^2}{\displaystyle\hat{\zeta}^{\mathsf{Loc}}_{1}} 
    {\leq} 
       \zeta^{\mathsf{Glob}}_2.
       $}
\end{equation*}

\begin{proof}
Refer to Appendix~\ref{app:th:clus} for the detailed proof.
\end{proof}
\end{proposition}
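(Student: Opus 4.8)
The plan is to glue Assumptions~\ref{Assup:IntraClusterDissimilarity} and~\ref{Assup:InterClusterDissimilarity} together using the fact that each VC gradient is a convex combination of its members' gradients, $\nabla F^{(k)}_c(\bm{\omega})=\sum_{n\in\mathcal{N}^{(k)}_c}a_n\nabla F^{(k)}_n(\bm{\omega})$ with $\sum_{n\in\mathcal{N}^{(k)}_c}a_n=1$ (the $a_n$ being the within-VC weights, e.g.\ the data fractions $D^{(k)}_n/D^{(k)}_c$, and the $b_c$ the across-VC weights with $\sum_{c\in\mathcal{C}^{(k)}}b_c=1$). First I would instantiate Assumption~\ref{Assup:InterClusterDissimilarity} with $\zeta^{\mathsf{Glob}}_1=1$ and rearrange it into $\zeta^{\mathsf{Glob}}_2\geq\sum_{c\in\mathcal{C}^{(k)}}b_c\Vert\nabla F^{(k)}_c(\bm{\omega})\Vert^2-\Vert\sum_{c\in\mathcal{C}^{(k)}}b_c\nabla F^{(k)}_c(\bm{\omega})\Vert^2$, valid for every admissible $\{b_c\}$, every $k$, and every $\bm{\omega}$; this isolates precisely the two quantities that must be pushed down to the satellite level.

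Next I would lower bound the first term via Assumption~\ref{Assup:IntraClusterDissimilarity}. With $\zeta^{\mathsf{Loc}}_{c,2}=0$ and the minimal admissible constant $\zeta^{\mathsf{Loc},\min}_{c,1}$, and using the aggregation identity $\Vert\sum_{n}a_n\nabla F^{(k)}_n\Vert^2=\Vert\nabla F^{(k)}_c\Vert^2$, the assumption reads $\sum_{n\in\mathcal{N}^{(k)}_c}a_n\Vert\nabla F^{(k)}_n(\bm{\omega})\Vert^2\leq\zeta^{\mathsf{Loc},\min}_{c,1}\Vert\nabla F^{(k)}_c(\bm{\omega})\Vert^2$. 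Since all norm-squared terms are nonnegative and $1\leq\zeta^{\mathsf{Loc},\min}_{c,1}\leq\hat{\zeta}^{\mathsf{Loc}}_{1}$, this yields $\Vert\nabla F^{(k)}_c(\bm{\omega})\Vert^2\geq\tfrac{1}{\hat{\zeta}^{\mathsf{Loc}}_{1}}\sum_{n\in\mathcal{N}^{(k)}_c}a_n\Vert\nabla F^{(k)}_n(\bm{\omega})\Vert^2$; weighting by $b_c\geq0$ and summing over $c$ then bounds $\sum_c b_c\Vert\nabla F^{(k)}_c\Vert^2$ from below by $\tfrac{1}{\hat{\zeta}^{\mathsf{Loc}}_{1}}\sum_c b_c\sum_{n\in\mathcal{N}^{(k)}_c}a_n\Vert\nabla F^{(k)}_n\Vert^2$. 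For the second term I would simply re-apply the aggregation identity to write $\Vert\sum_c b_c\nabla F^{(k)}_c(\bm{\omega})\Vert^2=\Vert\sum_c b_c\sum_{n\in\mathcal{N}^{(k)}_c}a_n\nabla F^{(k)}_n(\bm{\omega})\Vert^2$, incurring no inequality. Substituting both into the rearranged inequality from the first step and factoring $1/\hat{\zeta}^{\mathsf{Loc}}_{1}$ out of the numerator gives exactly the claimed lower bound on $\zeta^{\mathsf{Glob}}_2$; since it holds for arbitrary admissible $\{a_n\},\{b_c\},k,\bm{\omega}$, it holds in particular in the worst case, which is what the statement asserts.

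The algebra above is routine; the step deserving genuine care is the well-definedness of $\zeta^{\mathsf{Loc},\min}_{c,1}$ under the presumed feasibility of $\zeta^{\mathsf{Loc}}_{c,2}=0$. One must argue that $\sup_{k,\bm{\omega},\{a_n\}}\big(\sum_{n}a_n\Vert\nabla F^{(k)}_n\Vert^2\big)\big/\Vert\sum_{n}a_n\nabla F^{(k)}_n\Vert^2$ is finite, so that the minimal constant is attained (or approached) and the division in the second step is legitimate even where $\nabla F^{(k)}_c$ is small; this is the usual bounded-gradient-diversity condition and is precisely what ``$\zeta^{\mathsf{Loc}}_{c,2}=0$ feasible'' encodes. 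I would also flag the consistency requirement that the $\{a_n\}$ appearing in the final bound are the very within-VC weights that define $\nabla F^{(k)}_c$---otherwise the gluing identity $\Vert\sum_n a_n\nabla F^{(k)}_n\Vert^2=\Vert\nabla F^{(k)}_c\Vert^2$ fails and the chain from Assumption~\ref{Assup:InterClusterDissimilarity} to Assumption~\ref{Assup:IntraClusterDissimilarity} breaks. Everything else is bookkeeping on nonnegative quantities.
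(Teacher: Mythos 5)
Your proposal is correct and follows essentially the same route as the paper's proof in Appendix~\ref{app:th:clus}: instantiate Assumption~\ref{Assup:InterClusterDissimilarity} with $\zeta^{\mathsf{Glob}}_1=1$, expand each $\nabla F^{(k)}_c$ as the $a_n$-weighted combination of its members' gradients, lower-bound the left-hand side cluster-by-cluster via Assumption~\ref{Assup:IntraClusterDissimilarity} with $\zeta^{\mathsf{Loc}}_{c,2}=0$ and $\zeta^{\mathsf{Loc},\min}_{c,1}\leq\hat{\zeta}^{\mathsf{Loc}}_{1}$, and rearrange. Your added remarks on the finiteness of $\zeta^{\mathsf{Loc},\min}_{c,1}$ and on the consistency of the $\{a_n\}$ weights with the cluster-gradient definition are sensible points of care that the paper leaves implicit, but they do not change the argument.
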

\begin{remark}[Interpretation of Proposition~\ref{th:clus}]\label{remark4}
Proposition~\ref{th:clus} unveils that the dissimilarity of loss functions across VCs is ultimately constrained by the VC with the highest internal dissimilarity (reflected by {\small$\hat{\zeta}^{\mathsf{Loc}}_{1}$}). Importantly,  across all combinatorial configurations of VCs, increasing the number of VCs tends to result in sparser and more heterogeneous data distributions within each VC, which raises {\small$\hat{\zeta}^{\mathsf{Loc}}_{1}$} and consequently increases {\small$\zeta^{\mathsf{Glob}}_{2}$}. We empirically validate this phenomenon in Fig.~\ref{Fig:lowerBound}, which visualizes the distribution of {\small$\zeta^{\mathsf{Glob}}_{2}$} values across multiple (20,000 samples) VC configurations.
\end{remark}

We next present the general convergence behavior of {Fed-Span} as one of our main results.

\begin{theorem}[General Convergence Behavior of {Fed-Span}]\label{th:main}
Assume that the step-size of SGD used in~\eqref{eq:SGD_iterations}, satisfies:
\vspace{-0.05mm}
\begin{equation*}
\resizebox{.99\linewidth}{!}{$
    \eta_k \hspace{-0.25mm}\leq\hspace{-0.25mm} \min \Bigg\{\hspace{-0.7mm}\frac{1}{\beta}\hspace{-0.5mm} \sqrt{ \frac{\Lambda^{(k)}}{8\left(\hspace{-0.25mm}\zeta^{\mathsf{Glob}}_1\hat{\zeta}^{\mathsf{Loc}}_{1}+\Lambda^{(k)}\hspace{-0.45mm}\right)\left(\hspace{-0.65mm}L^{(k)}  \left(L^{(k)}-1\right) \left(\hspace{-0.25mm}e^{(k)}_{\mathsf{max}}\hspace{-0.25mm}\right)^{2} \hspace{-0.45mm}{+}  e^{(k)}_{\mathsf{max}}\left(\hspace{-0.25mm}e^{(k)}_{\mathsf{max}}-1\right)\hspace{-0.25mm}\hspace{-0.55mm}\right)}}
     , \Big(\hspace{-0.65mm}2\beta\hspace{-1.85mm} \displaystyle\sum_{c'\in \mathcal{C}^{(k)}}\hspace{-1.45mm}\frac{{D}^{(k)}_{c'}L^{(k)}e^{(k)}_{c'}}{D^{(k)}}\hspace{-0.65mm}\Big)^{-1}\hspace{-0.75mm}\Bigg\},
    $}
    \vspace{-0.2mm}
\end{equation*}
where {\small$\Lambda^{(k)}<1$} is a constant bounded as {\small$\max_{(k)} \left\{\Lambda^{(k)}\right\} \leq \Lambda_{\mathsf{max}}< 1$} and {\small$e^{(k)}_{\mathsf{max}}$} holds in {\small$\max_{c \in \mathcal{C}^{(k)}} \{e^{(k)}_{c}\} \leq e^{(k)}_{\mathsf{max}}$}.
% and {\small$\hat{\zeta}^{\mathsf{Loc}}_{1}$} defines as {\small $\hat{\zeta}^{\mathsf{Loc}}_{1} \triangleq \max_{c \in \mathcal{C}^{(k)}} \big\{\zeta^{\mathsf{Loc},\min}_{c,1}\big\}$}. 
We define {\small$\Phi^{(k)} = \frac{\eta_k}{2} \sum_{c\in \mathcal{C}^{(k)}}{{D}^{(k)}_{c} L^{(k)} e^{(k)}_{c}}/{D^{(k)}}$} bounded as {\small$\Phi_{\mathsf{min}} \leq \Phi_{(k)}\leq \Phi_{\mathsf{max}}$} for finite constants {\small$\Phi_{\mathsf{min}}$} and {\small$\Phi_{\mathsf{max}}$}, and also consider {\small$\Delta^{(k)} = \sum_{\ell \in\mathcal{L}^{(k)}} \sum_{n\in \mathcal{N}} \Delta^{(k,\ell)}_{n}$} where {\small$\Delta^{(k,\ell)}_{n} = \max_{t \in (t^{\mathsf{LT},{(k,\ell)}}+\tau^{\mathsf{LT},{(k,\ell)}},t^{\mathsf{LT},{(k,\ell)}}+\tau^{\mathsf{LT},{\mathsf{max}}}]}\{\Delta_{n}(t)\}$}. Then, the cumulative average of the gradient of the global loss over the training period of {Fed-Span} satisfies the upper bound in~\eqref{eq:gen_conv}. 
\end{theorem}
\begin{proof}
Refer to Appendix~\ref{app:th:main} for the detailed proof.
\end{proof}

\begin{table*}[t!]
\vspace{-6mm}
\begin{minipage}{0.99\textwidth}
{\footnotesize
\begin{equation}\label{eq:corr2_conv}
\hspace{-7mm}
\resizebox{.75\linewidth}{!}{$
{\small
\begin{aligned}
    &\frac{1}{K} \sum_{k=0}^{K-1}\mathbb E\left \Vert\nabla F^{(k)}(\bm{\omega}^{(k)})\right\Vert^2 \leq \frac{1}{\sqrt{K} \left(1-\Lambda_{\mathsf{max}}\right)} \Bigg[2 \sqrt{\ell_{\mathsf{max}} \widehat{e}_{\mathsf{max}}}\frac{F^{(0)}(\bm{\omega}^{(0)}) - F^{(K)^{\ast}}}{\ell_{\mathsf{min}}\overline{e}_{\mathsf{min}} \alpha \sqrt{N}} + \frac{2 \sqrt{\ell_{\mathsf{max}} \widehat{e}_{\mathsf{max}}} \chi}{\ell_{\mathsf{min}}\overline{e}_{\mathsf{min}} \alpha \sqrt{N}} + \frac{4 \ell_{\mathsf{max}}\overline{e}_{\mathsf{max}} \Theta^2 \alpha \beta \sqrt{N}}{\sqrt{ K \ell_{\mathsf{min}} \widehat{e}_{\mathsf{min}}}} \sigma_{\mathsf{max}}
    \\& + \frac{16 \alpha^2 \beta^2 N}{\sqrt{K} \ell_{\mathsf{min}} \widehat{e}_{\mathsf{min}}} \left(\ell_{\mathsf{max}} \left(\ell_{\mathsf{max}}-1\right) \left(e_{\mathsf{max}}\right)^{2} +  e_{\mathsf{max}}\left(e_{\mathsf{max}}-1\right)\right) \zeta^{\mathsf{Glob}}_{2}\hat{\zeta}^{\mathsf{Loc}}_{1} +\frac{16 \alpha^2 \beta^2 \Theta^2 N}{\sqrt{K} \ell_{\mathsf{min}} \widehat{e}_{\mathsf{min}}} \left(\left(\ell_{\mathsf{max}}-1\right) e_{\mathsf{max}} + \left(e_{\mathsf{max}}-1\right)\right) \sigma_{\mathsf{max}}\Bigg]
\end{aligned}
}
$}
\hspace{-5mm}
\end{equation}
}
\hrule
\end{minipage}
\vspace{-5mm}
\end{table*}
\begin{figure}[t]
\centering
% \vspace{-2mm}
\includegraphics[width=0.49\textwidth, trim= 1 1 1 1, clip]{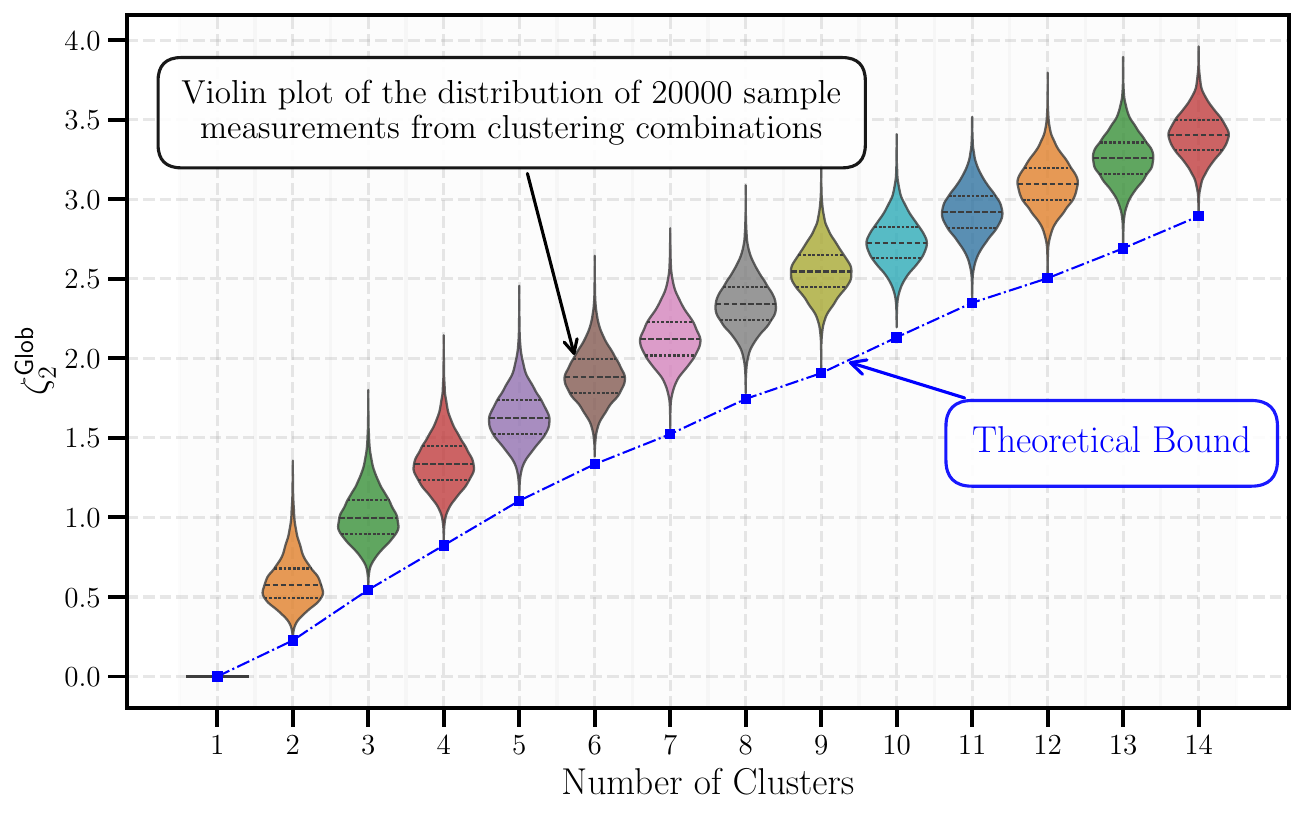}
\vspace{-6mm}
\caption{The lower-bound obtained in Proposition~\ref{th:clus} (i.e., the left hand side of the bound) vs the actual value of the right hand side (i.e., {\small$\zeta^{\mathsf{Glob}}_2$}).\label{Fig:lowerBound}}
% \vspace{-0.1mm}
\end{figure}

\begin{remark}[Interpretation of Theorem~\ref{th:main}]
The bound in \eqref{eq:gen_conv} captures the influence of both ML- and network-related parameters on the model convergence of {Fed-Span}. 
Term \textit{(a)} highlights the impact of the initial loss value on the convergence.
Term \textit{(b)} reflects that the multiplication of model drift {\small$\Delta^{(k)}$} and idle times {\small$\Omega^{(k)}$} has a linear impact on the convergence. Term \textit{(c)} shows the impact of mini-batch sizes used by satellites {\small$\varsigma^{(k,\ell)}_n$}.
Term \textit{(d)} accounts for the effect of loss dissimilarity: by invoking Proposition~\ref{th:clus}, it shows that the bound deteriorates linearly with the data heterogeneity of the VC exhibiting the highest intra-cluster heterogeneity among all VCs, as quantified by {\small$\hat{\zeta}^{\mathsf{Loc}}_{1}$}.
Also, the impacts of {\small$\hat{\zeta}^{\mathsf{Loc}}_{1}$} and {\small$\zeta^{\mathsf{Glob}}_1$} are reflected in the choice of the step size {\small$\eta_k$} in the statement of the theorem, where larger values of these quantities induce smaller step sizes to avoid the bias of satellites' local models. 
Term \textit{(e)} captures the influence of SGD iterations {\small$e_c^{(k)}$} and the number of local training rounds {\small$L^{(k)}$} on the convergence. Further, terms \textit{(c)} and \textit{(e)} capture the impact of various configurations of the satellite-to-VC association on the convergence through {\small$\gamma_{c,n}^{(k)}$}.
The bound in \eqref{eq:gen_conv} further signals the specific conditions under which a stronger convergence of {\small$\mathcal{O}(1/\sqrt{K})$} can be achieved. These conditions are obtained in the following corollary.
\end{remark}

 \begin{corollary}[Convergence under Proper Choice of ML- and Network-Related Parameters]\label{cor:1}
In addition to the conditions mentioned in Theorem~\ref{th:main}, further assume that (i) {\small$\Phi_{\mathsf{min}} = \eta \ell_{\mathsf{min}} \overline{e}_{\mathsf{min}} / 2$} and {\small$\Phi_{\mathsf{max}} = \eta \ell_{\mathsf{max}} \overline{e}_{\mathsf{max}} / 2$}, (ii) bounded number of local aggregation rounds {\small$\ell_{\mathsf{min}} \leq L^{(k)} \leq \ell_{\mathsf{max}}$} for finite positive constants {\small$\ell_{\mathsf{min}}$ and $\ell_{\mathsf{max}}$}, (iii) bounded average SGD {\small$\overline{e}_{\mathsf{min}}\leq e_{\mathsf{avg}}^{(k)} \triangleq \sum_{c\in \mathcal{C}^{(k)}}\frac{{D}^{(k)}_{c} e^{(k)}_{c}}{D^{(k)}} \leq \overline{e}_{\mathsf{max}}$} for finite positive constants {\small$\overline{e}_{\mathsf{min}}$} and {\small$\overline{e}_{\mathsf{max}}$}, (iv) step size choice of {\small$\eta = \alpha \big /{\sqrt{ \ell_{\mathsf{max}} \widehat{e}_{\mathsf{max}} K \big /N}}$} with a finite positive constant {\small$\alpha$} to satisfy the condition on {\small$\eta_k$} in Theorem~\ref{th:main} {\small$\forall k$}, (v) bounded total SGD for each local round {\small$\widehat{e}_{\mathsf{min}} \leq e^{(k)}_{\mathsf{sum}}\leq  \widehat{e}_{\mathsf{max}}$} for finite positive constants {\small$\widehat{e}_{\mathsf{min}}$} and {\small$\widehat{e}_{\mathsf{max}}$}, where  {\small$e^{(k)}_{\mathsf{sum}}=\sum_{c\in \mathcal{C}^{(k)}} e^{(k)}_{c} N^{(k)}_{c}$}.
We further define the bounds on (i) the SGD sampling noise as {\small$\max\limits_{k, \ell, n} \left\{\left(1- \varsigma^{(k,\ell)}_n\right)\frac{\left(\sigma^{(k)}_n\right)^2}{D^{(k)}_n \varsigma^{(k,\ell)}_n} \right\} \leq \sigma_{\mathsf{max}}$}, (ii) the local iterations as {\small$\max_{k} \{e_{\mathsf{max}}^{(k)}\} \leq e_{\mathsf{max}}$}, (iii) the idle period as {\small$\Delta^{(k)} \leq \left[\frac{\chi}{K\Omega^{(k)}}\right]^{+}$} for a finite non-negative constant {\small$\chi$}. Then, the cumulative average of the global loss function gradient for {Fed-Span} satisfies~\eqref{eq:corr2_conv}, which in turn guarantees that {\small$\frac{1}{K} \sum_{k=0}^{K-1}\mathbb E\left \Vert\nabla F^{(k)}(\bm{\omega}^{(k)})\right\Vert^2 \leq \mathcal{O}(1/\sqrt{K})$}. 
 \end{corollary}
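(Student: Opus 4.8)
The plan is to start from the general bound~\eqref{eq:gen_conv} of Theorem~\ref{th:main} and substitute the additional structural assumptions (i)--(v) together with the auxiliary definitions on SGD sampling noise, local iterations, and the idle period, so that every $k$-dependent quantity on the right-hand side is replaced by a $K$-free constant times an explicit power of $K$. First I would handle the prefactor $\frac{1}{1-\Lambda_{\mathsf{max}}}$, which is already a constant by hypothesis, and pull it outside. Then I would treat term $(a)$: with $\Phi_{\mathsf{min}} = \eta\ell_{\mathsf{min}}\overline{e}_{\mathsf{min}}/2$ and $\eta = \alpha/\sqrt{\ell_{\mathsf{max}}\widehat{e}_{\mathsf{max}}K/N}$, the denominator $K\Phi_{\mathsf{min}}(1-\Lambda_{\mathsf{max}})$ becomes $\frac{K\alpha\ell_{\mathsf{min}}\overline{e}_{\mathsf{min}}}{2\sqrt{\ell_{\mathsf{max}}\widehat{e}_{\mathsf{max}}K/N}}(1-\Lambda_{\mathsf{max}})$, i.e. it scales as $\sqrt{K}$; this produces the first summand inside the bracket of~\eqref{eq:corr2_conv}. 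For term $(b)$, the sum $\frac{1}{K}\sum_k \frac{\Omega^{(k)}\Delta^{(k)}}{\Phi_{\mathsf{min}}}$ is bounded using $\Delta^{(k)}\leq[\chi/(K\Omega^{(k)})]^+$, which collapses $\Omega^{(k)}\Delta^{(k)}\leq \chi/K$, so the whole term is at most $\chi/(K\Phi_{\mathsf{min}})$, again giving an $\mathcal{O}(1/\sqrt{K})$ contribution after substituting $\Phi_{\mathsf{min}}$ and $\eta$; this yields the second summand with $\chi$.

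Next I would bound the remaining three terms $(c)$, $(d)$, $(e)$ uniformly in $k$. For $(c)$, I would use $\Phi_{\mathsf{max}}=\eta\ell_{\mathsf{max}}\overline{e}_{\mathsf{max}}/2$, the cluster-count bound $C^{(k)}\le N/2$ (or more simply $\sum_{c}\sum_{n}\gamma_{c,n}^{(k)}=N$ to absorb the double sum over $c$ and $n$), the bound $L^{(k)}\le\ell_{\mathsf{max}}$, $e_c^{(k)}\ge\widehat{e}_{\mathsf{min}}$-type lower bounds implicit in the $\widehat e$ and $\overline e$ hypotheses, and the SGD sampling-noise bound $\max_{k,\ell,n}(1-\varsigma_n^{(k,\ell)})(\sigma_n^{(k)})^2/(D_n^{(k)}\varsigma_n^{(k,\ell)})\le\sigma_{\mathsf{max}}$ to replace the ratio $(1-\varsigma)(\sigma)^2/\varsigma$ by $D_n^{(k)}\sigma_{\mathsf{max}}$, after which $\sum_n D_n^{(k)}/D^{(k)}=1$ cleans up the normalizations; the resulting expression is $K$-free except for the explicit $\eta^2\propto 1/K$ hidden in $\Phi_{\mathsf{max}}^2$-type scaling — actually $\Phi_{\mathsf{max}}$ is linear in $\eta$, so term $(c)$ carries a single factor $\eta\propto 1/\sqrt K$, and after the $\frac{1}{K}\sum_k$ average and the $1/(K(1-\Lambda_{\mathsf{max}}))$ prefactor one gets the stated $\mathcal{O}(1/\sqrt K)$ scaling — giving the $\sigma_{\mathsf{max}}$-term with the $\Theta^2$ coefficient. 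Terms $(d)$ and $(e)$ each carry $\eta_k^2$, which is $\alpha^2 N/(\ell_{\mathsf{max}}\widehat e_{\mathsf{max}}K)$; combined with the outer $1/(K(1-\Lambda_{\mathsf{max}}))$ and the $\frac1K\sum_k$, using $L^{(k)}\le\ell_{\mathsf{max}}$, $e_c^{(k)}\le e_{\mathsf{max}}$, $e^{(k)}_{\mathsf{max}}\le e_{\mathsf{max}}$, the lower bounds $\ell_{\mathsf{min}},\widehat e_{\mathsf{min}}$ in the denominators, Proposition~\ref{th:clus} to control $\zeta^{\mathsf{Glob}}_2\hat\zeta^{\mathsf{Loc}}_1$, and again the $\sigma_{\mathsf{max}}$ bound for term $(e)$, one obtains the last two summands of~\eqref{eq:corr2_conv}. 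Summing the five bracketed contributions and factoring out $\frac{1}{\sqrt K(1-\Lambda_{\mathsf{max}})}$ gives exactly~\eqref{eq:corr2_conv}, and since every bracketed term is either $K$-free or decaying, the right-hand side is $\mathcal{O}(1/\sqrt K)$.

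The main obstacle I anticipate is bookkeeping rather than any deep inequality: one must carefully track how many powers of $\eta_k$ each term carries (terms $(b)$,$(c)$ effectively one, terms $(d)$,$(e)$ two), verify that substituting $\eta=\alpha/\sqrt{\ell_{\mathsf{max}}\widehat e_{\mathsf{max}}K/N}$ into $\Phi_{\mathsf{min}},\Phi_{\mathsf{max}}$ genuinely produces the advertised $\sqrt{N}$, $\ell$, and $\widehat e$ factors in the right places, and confirm that this $\eta$ actually satisfies the step-size ceiling of Theorem~\ref{th:main} for all $k$ — the latter requires checking that the first branch of the $\min$ (the one with $\sqrt{\Lambda^{(k)}/(\cdots)}$) and the second branch $(2\beta\sum_{c'}D^{(k)}_{c'}L^{(k)}e^{(k)}_{c'}/D^{(k)})^{-1}$ are both $\Omega(1/\sqrt K)$ under the boundedness hypotheses, which is where assumption (iv)'s constant $\alpha$ must be taken small enough; I would state this as the condition "$\alpha$ finite and small enough" and note it is consistent with all the other hypotheses. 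A secondary subtlety is the $[\,\cdot\,]^+$ in the idle-period bound $\Delta^{(k)}\le[\chi/(K\Omega^{(k)})]^+$: if $\Omega^{(k)}=0$ the bound is vacuous, so one should note that term $(b)$ is simply zero in that degenerate case (no idle time, no drift penalty), which is consistent.
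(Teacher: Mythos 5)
Your proposal is correct and follows essentially the same route as the paper's own proof: starting from the bound of Theorem~\ref{th:main}, substituting $\Phi_{\mathsf{min}}=\eta\ell_{\mathsf{min}}\overline{e}_{\mathsf{min}}/2$, $\Phi_{\mathsf{max}}=\eta\ell_{\mathsf{max}}\overline{e}_{\mathsf{max}}/2$ and $\eta=\alpha\big/\sqrt{\ell_{\mathsf{max}}\widehat{e}_{\mathsf{max}}K/N}$, collapsing the drift term via $\Omega^{(k)}\Delta^{(k)}\leq\chi/K$, and uniformly bounding the remaining terms with $\sigma_{\mathsf{max}}$, $e_{\mathsf{max}}$, $\ell_{\mathsf{max}}/\ell_{\mathsf{min}}$ and $\widehat{e}_{\mathsf{min}}$ to arrive at \eqref{eq:corr2_conv}. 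Your minor wobble about whether the $\Phi_{\mathsf{max}}$-term ends up $\mathcal{O}(1/\sqrt{K})$ or $\mathcal{O}(1/K)$ does not affect the stated $\mathcal{O}(1/\sqrt{K})$ conclusion, and your remark that the step-size ceiling of Theorem~\ref{th:main} must be checked for the chosen $\alpha$ matches the assumption (iv) as stated in the corollary.
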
 
 \vspace{-3mm}
 \begin{proof}
 Refer to Appendix~\ref{app:cor:1} for the detailed proof.
 \end{proof}
 % Our next step is integrating the convergence characteristics derived above 
 % into optimizing the network formation and operations of {Fed-Span}. In particular, we solve our optimization problem using the more generic bound obtained in Theorem~\ref{th:main}, while inducing the step size condition obtained in Corollary~\ref{cor:1}. 

\section{Optimized  Network Formation and Operation for {\large {Fed-Span}}}\label{sec:optimization_problem}
\noindent 
We next integrate the convergence analysis from Sec.~\ref{sec:conv} with the modeling of {Fed-Span} operations from Sec.~\ref{sec:phasesofFEDSPAN} to provide a unified framework for optimizing {Fed-Span}. This is achieved through an overarching optimization problem {\small$\bm{\mathcal{P}}$} presented below that jointly optimizes: (i) ML-related control decisions,
(ii) network topology for model aggregation and dispatching, and (iii) resource allocation across satellites. To our knowledge, this represents one of the first formulations in the literature to synergize graph theory and Fed-LS. 

\begin{align}
    \nonumber &\footnotesize\hspace{-0mm} \resizebox{.67\linewidth}{!}{$(\bm{\mathcal{P}}):~~~\hspace{-1mm}\min~ \Bigg[\underbrace{\alpha_1 \hspace{-.1mm}\frac{1}{K}\sum_{k=0}^{K-1}\mathbb E\left[ \big\Vert \nabla F^{({k})}(\bm{\omega}^{({k})})\big\Vert^2\right]}_{(a) \text{~$\rightarrow$ Obtained via \eqref{eq:gen_conv}}}
    $}
    \\ \nonumber
    &\resizebox{.98\linewidth}{!}{$\small\hspace{-2mm} + \alpha_{2} \underbrace{\bigg( \sum_{k\in\mathcal{K}}\Big[ \tau^{\mathsf{GD},(k)} +  \tau^{\mathsf{GA},(k)} + \sum_{\ell \in \mathcal{L}^{(k)}} \tau^{\mathsf{LD},(k,\ell)} +  \tau^{\mathsf{LA},(k,\ell)} + \tau^{\mathsf{LT},(k,\ell)} \Big]\bigg)}_{(b)} $}\\ \nonumber
    &\resizebox{.98\linewidth}{!}{$\small\hspace{-2mm} + \alpha_{3}\underbrace{\bigg(  \sum_{k\in\mathcal{K}}\bigg[  E^{\mathsf{GD}, (k)} + E^{\mathsf{GA}, (k)} +  \sum_{\ell \in \mathcal{L}^{(k)}}  E^{\mathsf{LD},(k,\ell)} + E^{\mathsf{LA},(k,\ell)} + E^{\mathsf{LT},(k,\ell)} \bigg]\bigg)}_{(c)} \Bigg]
    $}
    \\
    &\hspace{-1mm}\textrm{{\textbf{s.t.}}}\nonumber\\[-0.2em]
    &\hspace{-1mm}\textrm{\underline{{\textbf{Constraints}}}:}\nonumber\\[-0.2em]
    &\nonumber \sbullet[0.7]\text{\small {CBM Feasibility and Usage:} } \eqref{eq:hollow}, \eqref{rate2}, \eqref{rate3}, \eqref{edgetolink}\\
    &\nonumber \sbullet[0.7]\text{\small {Proper link and VC  Formation:} } \eqref{eq:min_data_rate}, \eqref{eq:finalRate}, \eqref{cons:CSC_4}, \eqref{cons:CSC_2}, \eqref{cons:CSC_1}\\
        % &\nonumber \sbullet[0.7]\text{\small \textbf{Satellite Battery Dynamics and Idle Times:} } \eqref{eq:baterrylevel}, \eqref{cons:battery}, \eqref{eq:idletime}\\
    &\nonumber \sbullet[0.7]\text{\small {Root Selection of Trees and Forests:} } \eqref{eq:rootDef}, \eqref{cons:SGD_1}, \eqref{eq:rootDef2}, \eqref{cons:SCA_1} \\
    &\nonumber \sbullet[0.7]\text{\small {Global Dispatching Properties:} } \eqref{cons:SGD_2}, \eqref{eq:GM_dispatching_transmission_latency} ,\eqref{eq:GM_dispatching_latency}, \eqref{cons:GM_dispatch_latency_1}, \eqref{cons:SGD_4}, \eqref{GDenergy}\\
    &\nonumber \sbullet[0.7]\text{\small {Local Training Properties:} } \eqref{l-ch1} ,\eqref{l-ch2}, \eqref{eq:timeselection}, \eqref{eq:training_latency}, \eqref{eq:T^f}, \eqref{eq:EN_LC}, \eqref{eq:EN_LC_sum} \\ 
    &\nonumber \sbullet[0.7]\text{\small {Local Aggregation Properties:} } \eqref{cons:SCA_2}, \eqref{eq:CM_aggregation_transmission_latency}, \eqref{eq:CM_aggregation_latency} ,\eqref{cons:CM_aggregation_latency_1}, \eqref{cons:SCA_4},\eqref{forestenergyAG} \\
    &\nonumber \sbullet[0.7]\text{\small {Local Dispatching Properties:} } \eqref{cons:SCD_1}, \eqref{eq:CM_dispatching_transmission_latency}, \eqref{eq:CM_dispatching_latency}, \eqref{cons:CM_dispatching_latency_1}, \eqref{cons:SCD_3}, \eqref{LDenergy}\\
    &\nonumber \sbullet[0.7]\text{\small {Global Aggregation Properties:} } \eqref{cons:SGA_1}, \eqref{eq:GM_aggregation_transmission_latency} ,\eqref{eq:GM_aggregation_latency}, \eqref{cons:GM_aggregation_latency_1},\eqref{cons:SGA_3}, \eqref{GAenergy} \\
    &\nonumber \sbullet[0.7]\text{\small {Idle Times:} } \eqref{eq:idletime}\\
    &\nonumber \textrm{\underline{\textbf{{Variables}:} }} \underbrace{\{ \psi_{m,m'}(t)\}_{m \in\mathcal{M}_n, m' \in\mathcal{M}_{n'}}}_{\text{Link Establishment and Network Topology}}, \underbrace{\{\gamma^{(k)}_{c,n}\}_{k\in\mathcal{K}, c \in \mathcal{C}, n \in \mathcal{N}}}_{\text{Satellite-to-VC Association}}\\
    & \nonumber, \underbrace{\{\pi_{n}^{\mathsf{G}, (k)}, \pi_{n}^{\mathsf{L}, (k)}\}_{k\in\mathcal{K}, n \in \mathcal{N}}}_{\text{Global/Local Root Selection}}, \underbrace{\{\lambda^{(k,\ell)}_{x}\}_{k\in\mathcal{K}, \ell \in \mathcal{L}, x \in \mathcal{X}}}_{\text{Local Round Tuning}}, \underbrace{\{e^{(k)}_{n}\}_{k\in\mathcal{K}, n \in \mathcal{N}}}_{\text{Satellite SGD Count}}\\
    &\nonumber \underbrace{\{\varsigma^{(k,\ell)}_n\}_{k\in\mathcal{K}, \ell \in \mathcal{L}, n \in \mathcal{N}}}_{\text{Mini-Batch Fraction}}, \underbrace{\{f^{(k,\ell)}_{n}\}_{k\in\mathcal{K}, \ell \in \mathcal{L}, n \in \mathcal{N}}}_{\text{CPU Frequency}}, \underbrace{\{\Omega^{(k,\ell)}\}_{k\in\mathcal{K}, \ell \in \mathcal{L}}}_{\text{Idle Time}}
\end{align}
\renewcommand{\theequation}{\arabic{equation}}

\vspace{-0.0mm}
\begin{figure*}[t]
\vspace{-3.5mm}
    \centering
        \begin{minipage}[t]{1\linewidth}
    \begin{tcolorbox}[colback=white,
        colbacktitle=white, coltitle=black,
        width=1\textwidth, left=0pt, right=0pt, top=0pt, bottom=0pt]
        \centering
        \includegraphics[width=1\linewidth, trim= 135 50 140 50, clip]{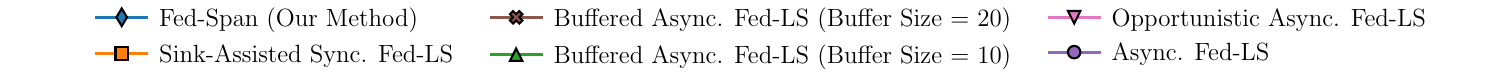}
    \end{tcolorbox}
     \end{minipage}
    \vspace{-3.55mm}
 
   \begin{minipage}[t]{0.32\linewidth}
    \begin{tcolorbox}[colback=redMod,
    colframe=mygreen,
    colbacktitle=mygreen,
    coltitle=white,
    left=0pt, right=0pt, top=2pt, bottom=-1pt,
    title=Fashion-MNIST (Latency/Time),
    halign title=flush center,
    boxrule=1.5pt]
    \hspace{-2.7mm}
    \begin{tabularx}{1.065\textwidth}{X}
            \centering
            \includegraphics[width=\linewidth, trim= 8 9 8 8, clip]{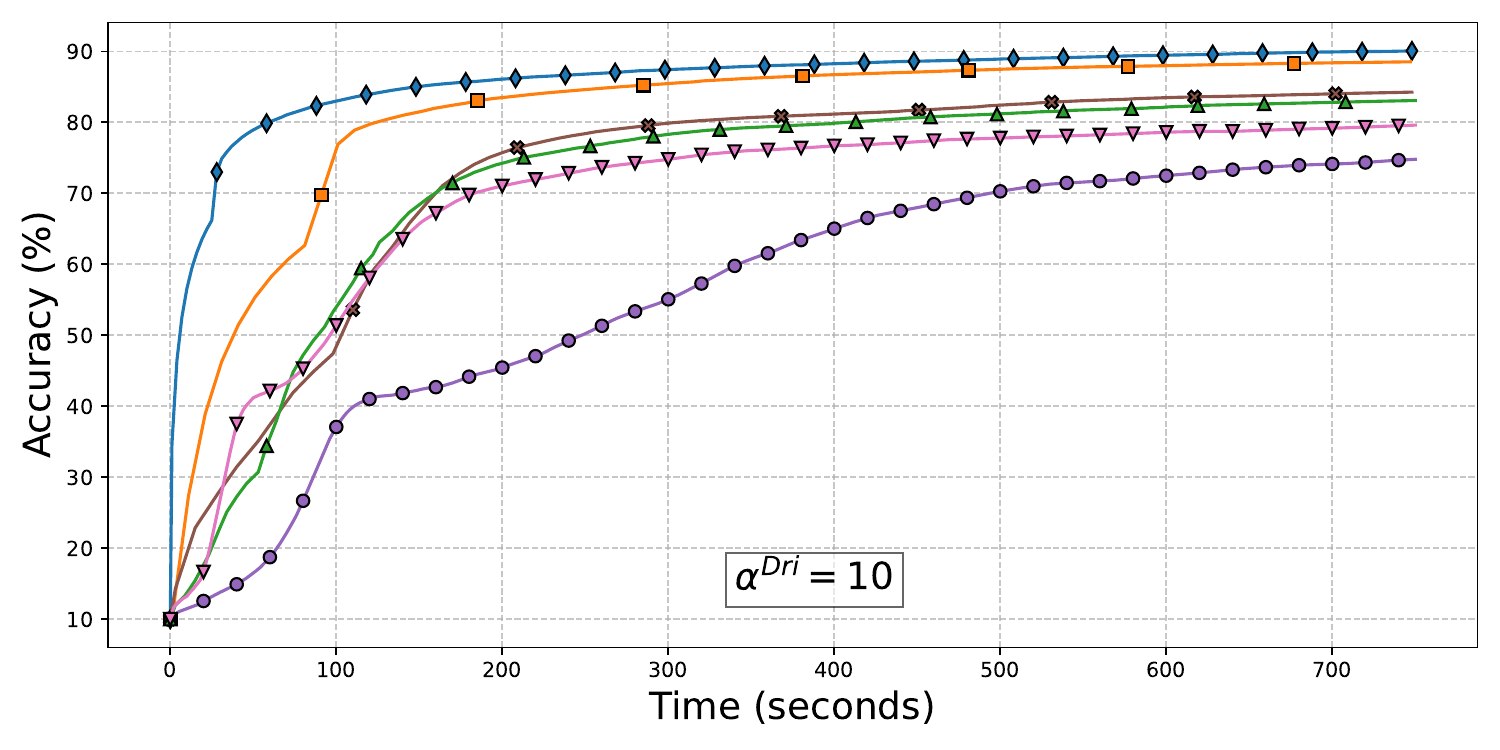}
        \\[0.17em]
            \centering
            \includegraphics[width=\linewidth, trim= 8 9 8 8, clip]{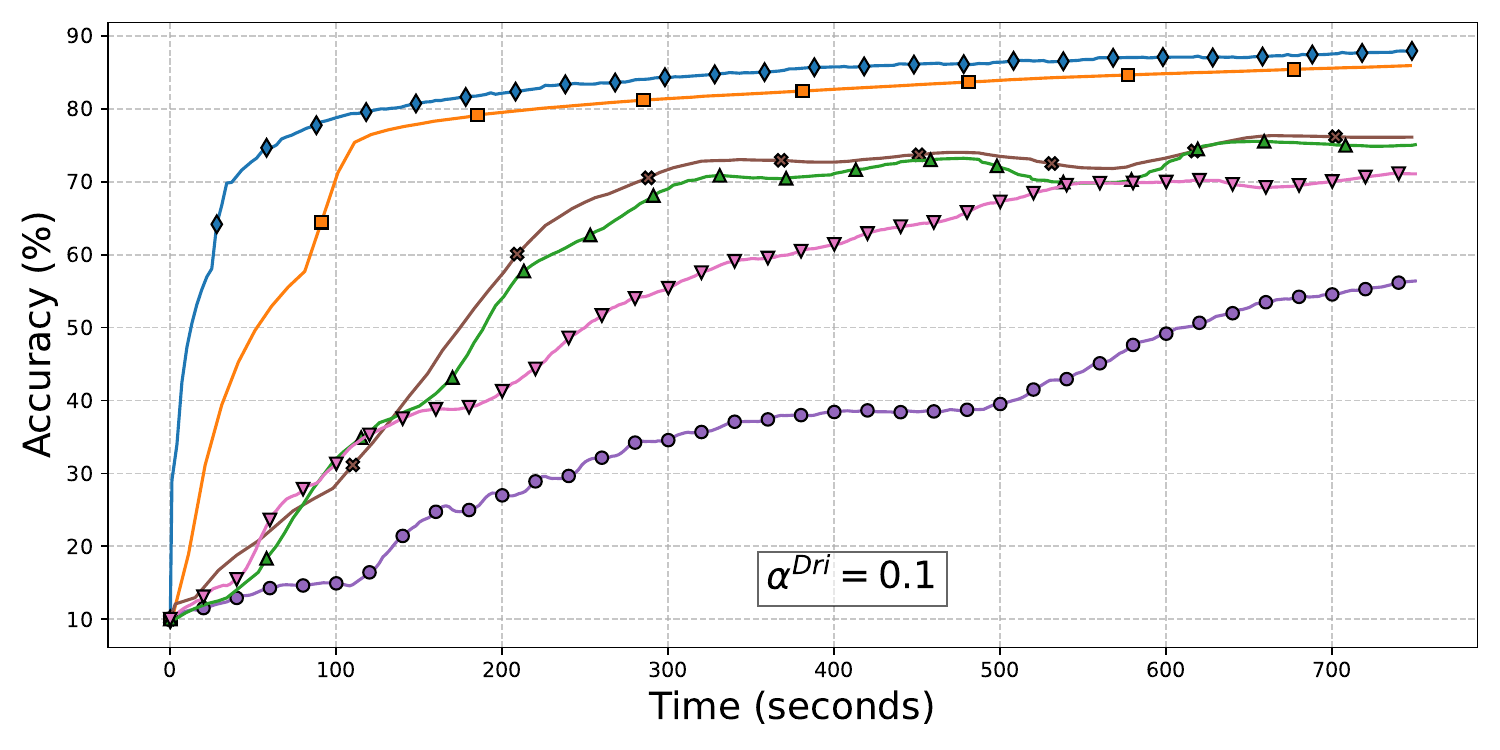}
    \end{tabularx}
\end{tcolorbox}
\vspace{-4mm}
 \end{minipage}
    \hfill
        \begin{minipage}[t]{0.32\linewidth}
        \begin{tcolorbox}[colback=redMod,
    colframe=mygreen,
    colbacktitle=mygreen,
    coltitle=white,
    left=0pt, right=0pt, top=2pt, bottom=-1pt,
    title=CIFAR-10 (Latency/Time), halign title=flush center, boxrule=1.5pt]
    \hspace{-2.7mm}
    \begin{tabularx}{1.065\textwidth}{X}
            \centering
            \includegraphics[width=\linewidth, trim= 8 9 8 8, clip]{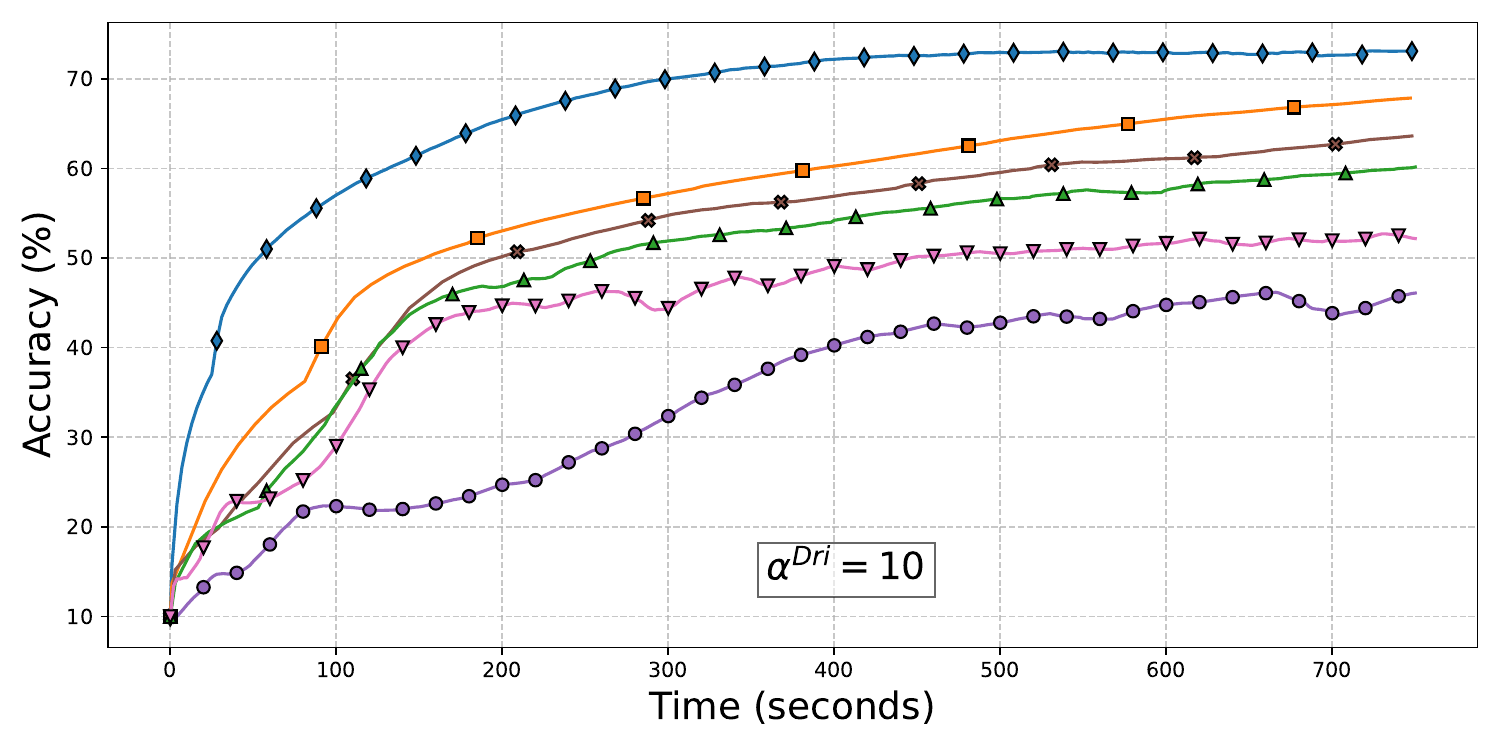}
        \\[0.17em]
            \centering
            \includegraphics[width=\linewidth, trim= 8 9 8 8, clip]{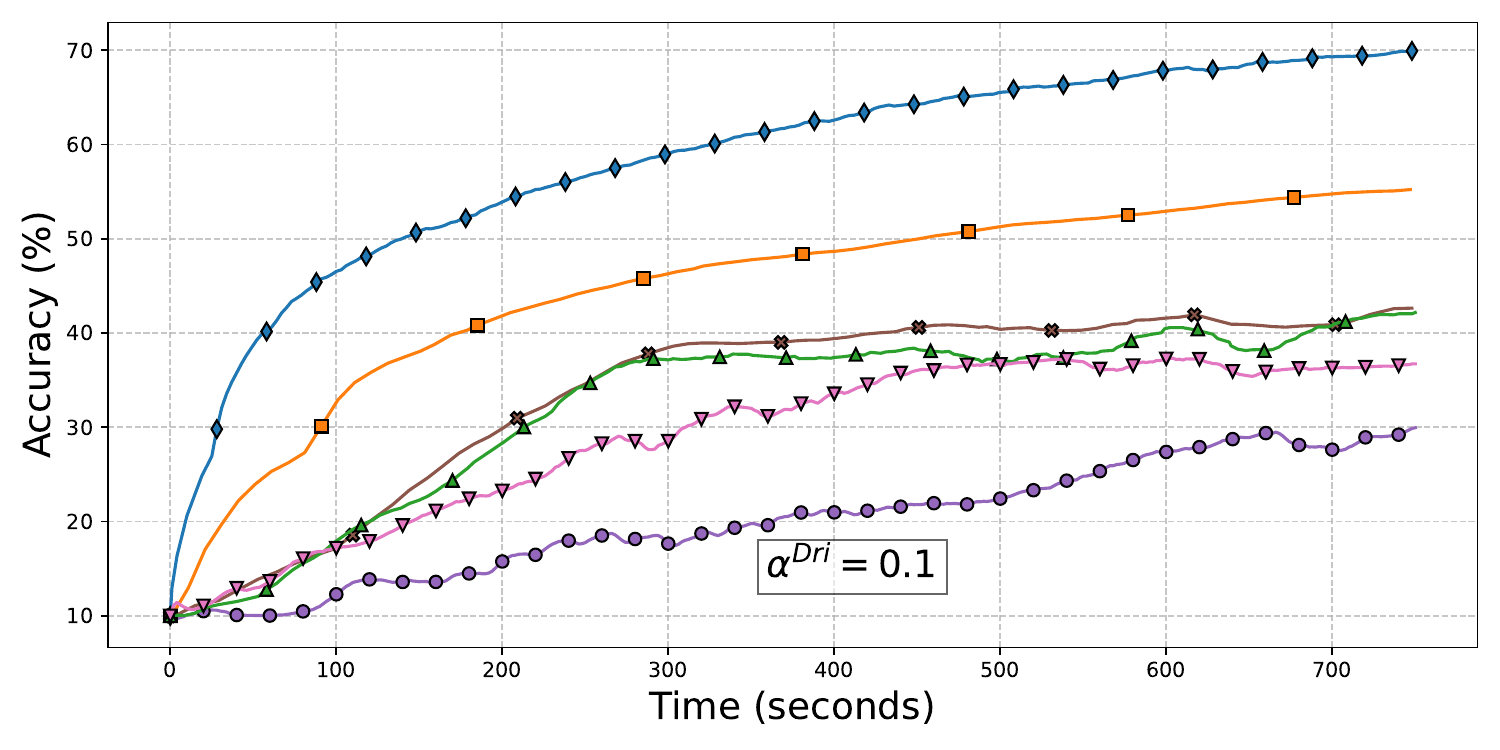}
    \end{tabularx}
    \end{tcolorbox}
    \vspace{-4mm}
    \end{minipage}
    \hfill
  \begin{minipage}[t]{0.32\linewidth}
    \begin{tcolorbox}[colback=redMod,
    colframe=mygreen,
    colbacktitle=mygreen,
    coltitle=white,
    left=0pt, right=0pt, top=2pt, bottom=-1pt,
    title=FMoW (Latency/Time), halign title=flush center, boxrule=1.5pt]
    \hspace{-2.7mm}
    \begin{tabularx}{1.065\textwidth}{X}
            \centering
            \includegraphics[width=\linewidth, trim= 8 9 8 8, clip]{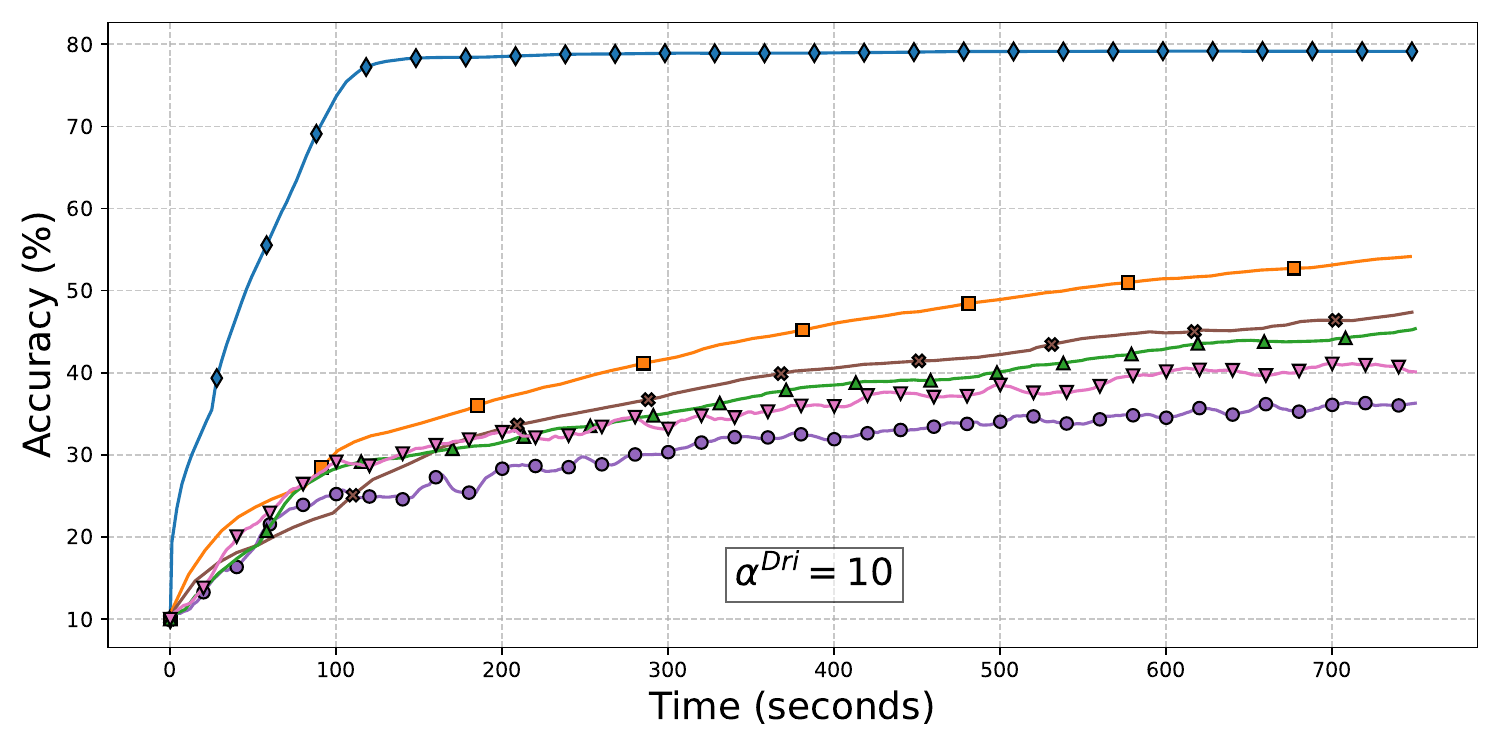}
        \\[0.17em]
            \centering
            \includegraphics[width=\linewidth, trim= 8 8 9 8, clip]{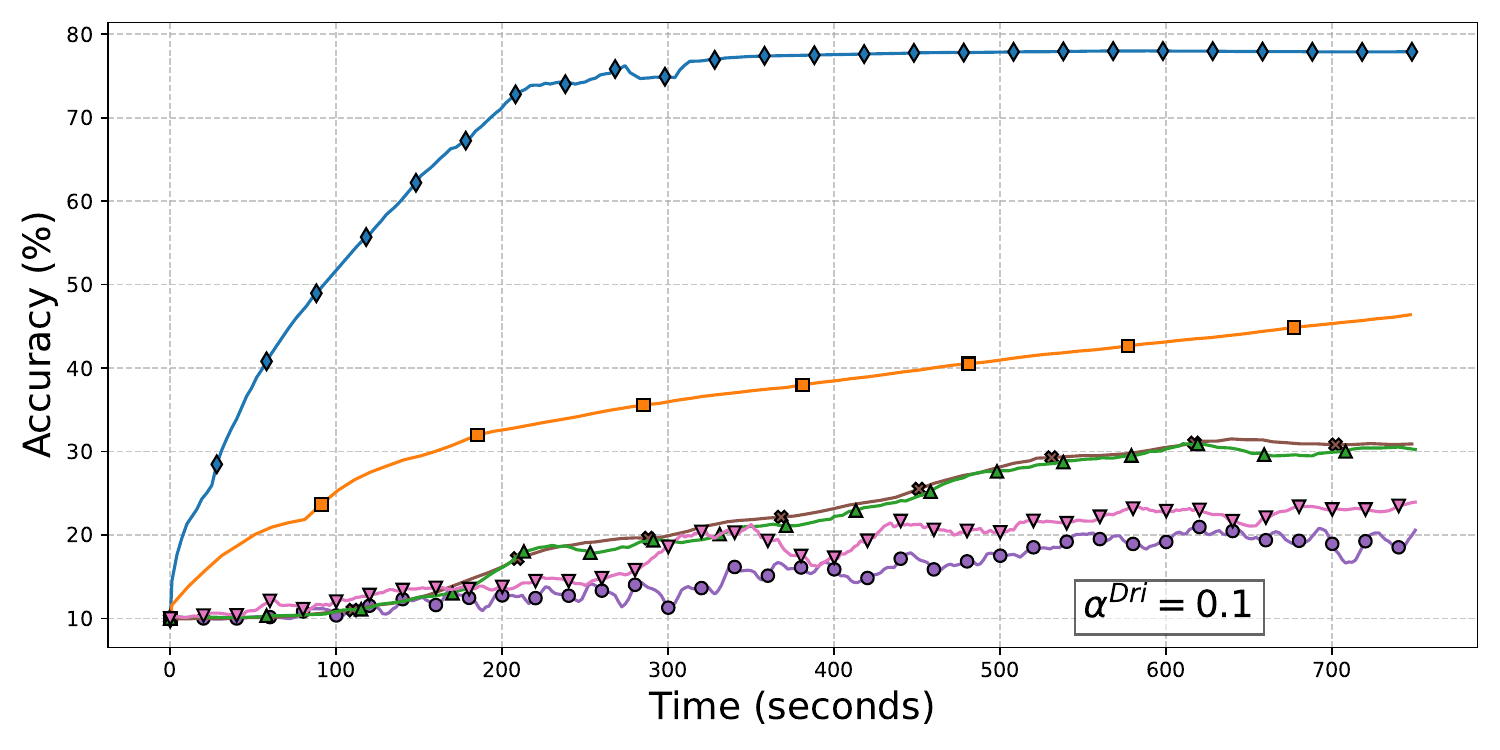}
    \end{tabularx}
\end{tcolorbox}
\vspace{-1.3mm}
 \end{minipage}

  \begin{minipage}[t]{0.32\linewidth}
    \begin{tcolorbox}[colback=redMod,
    colframe=mygreen,
    colbacktitle=mygreen,
    coltitle=white,
    left=0pt, right=0pt, top=2pt, bottom=-1pt,
    title=Fashion-MNIST (Energy Usage), halign title=flush center, boxrule=1.5pt]
    \hspace{-2.7mm}
    \begin{tabularx}{1.065\textwidth}{X}
            \centering
            \includegraphics[width=\linewidth, trim= 8 9 8 8, clip]{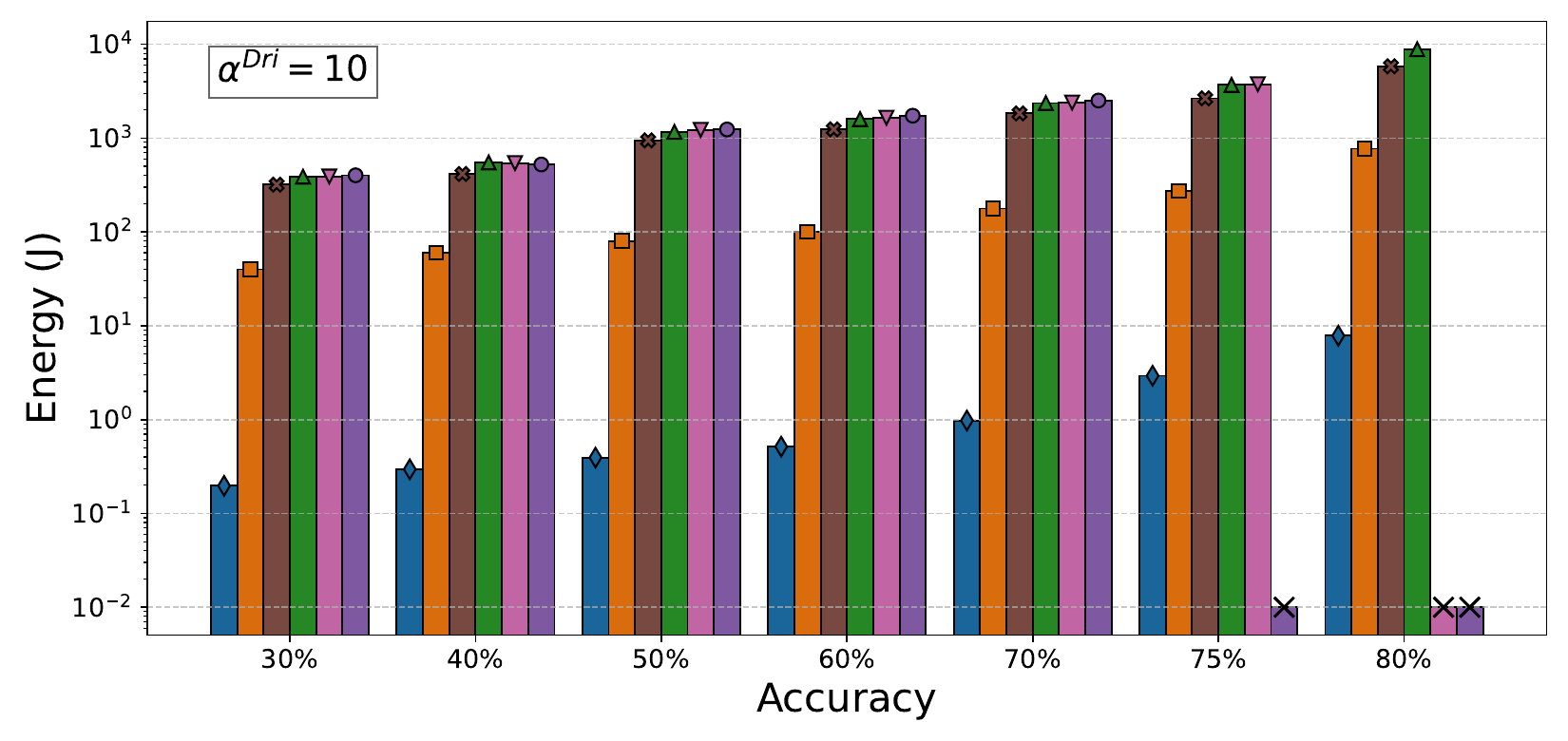}
        \\[0.17em]
            \centering
            \includegraphics[width=\linewidth, trim= 8 9 8 8, clip]{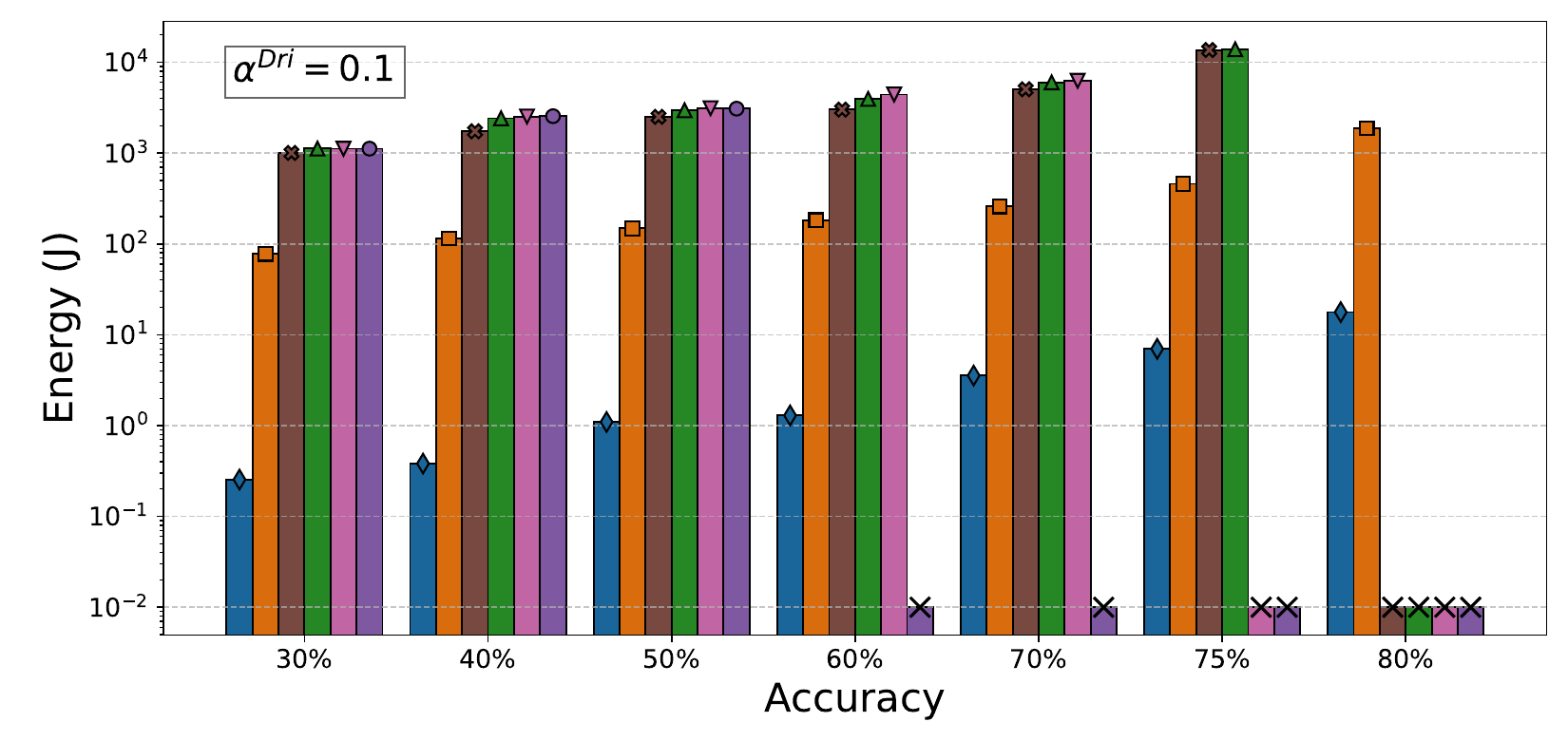}
    \end{tabularx}
\end{tcolorbox}
\vspace{-4mm}
 \end{minipage}
    \hfill 
    \hspace{0.9mm}
        \begin{minipage}[t]{0.32\linewidth}
        \begin{tcolorbox}[colback=redMod,
    colframe=mygreen,
    colbacktitle=mygreen,
    coltitle=white,
    left=0pt, right=0pt, top=2pt, bottom=-1pt,
    title=CIFAR-10 (Energy Usage), halign title=flush center, boxrule=1.5pt]
    \hspace{-2.7mm}
    \begin{tabularx}{1.065\textwidth}{X}
            \centering
            \includegraphics[width=\linewidth, trim= 8 9 8 8, clip]{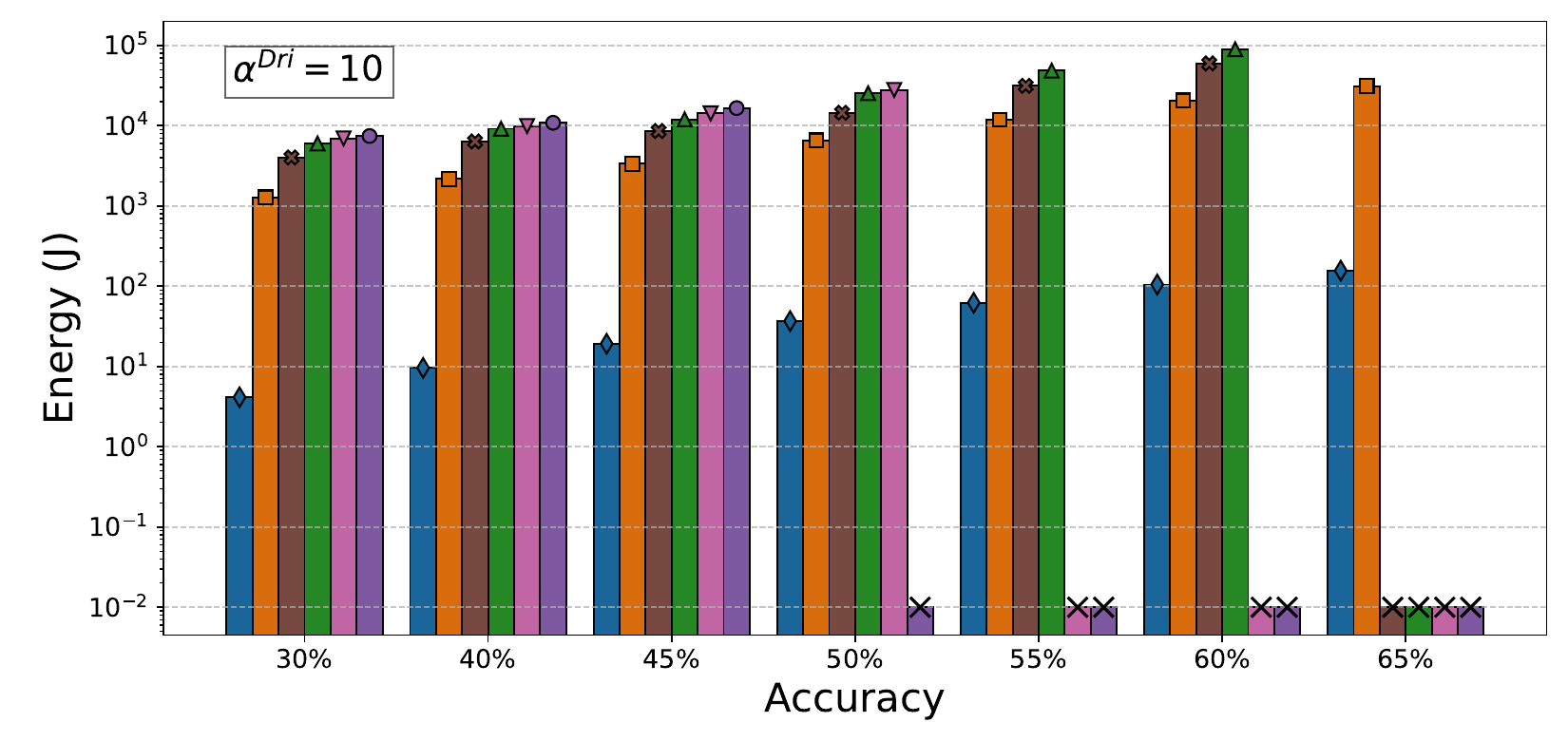}
        \\[0.17em]
            \centering
            \includegraphics[width=\linewidth, trim= 8 9 8 8, clip]{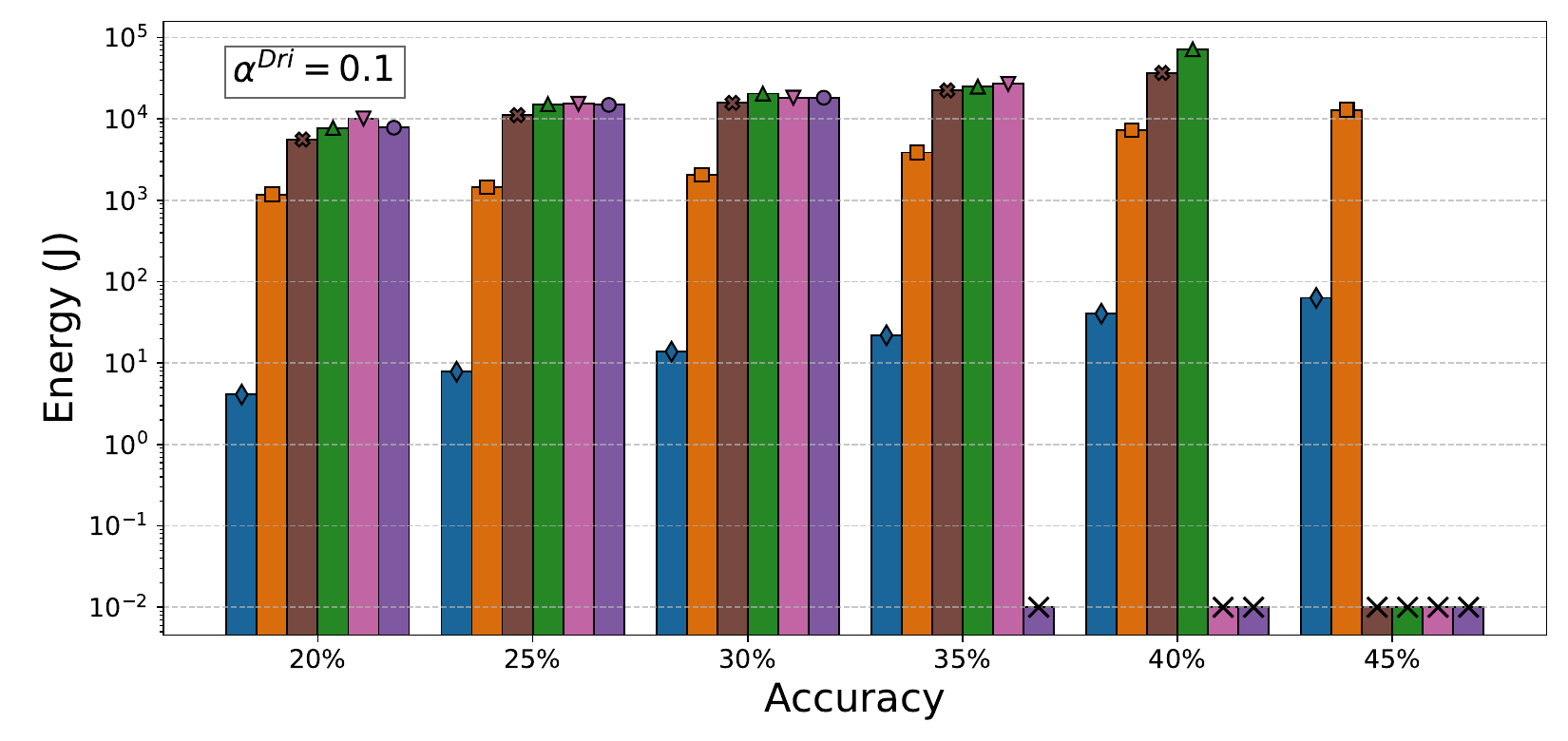}
    \end{tabularx}
    \end{tcolorbox}
    \vspace{-4mm}
    \end{minipage}
    \hfill
     \hspace{1.5mm}
  \begin{minipage}[t]{0.32\linewidth}
    \begin{tcolorbox}[colback=redMod,
    colframe=mygreen,
    colbacktitle=mygreen,
    coltitle=white,
    left=0pt, right=0pt, top=2pt, bottom=-1pt,
    title=FMoW (Energy Usage), halign title=flush center, boxrule=1.5pt]
    \hspace{-2.7mm}
    \begin{tabularx}{1.065\textwidth}{X}
            \centering
            \includegraphics[width=\linewidth, trim= 8 9 8 8, clip]{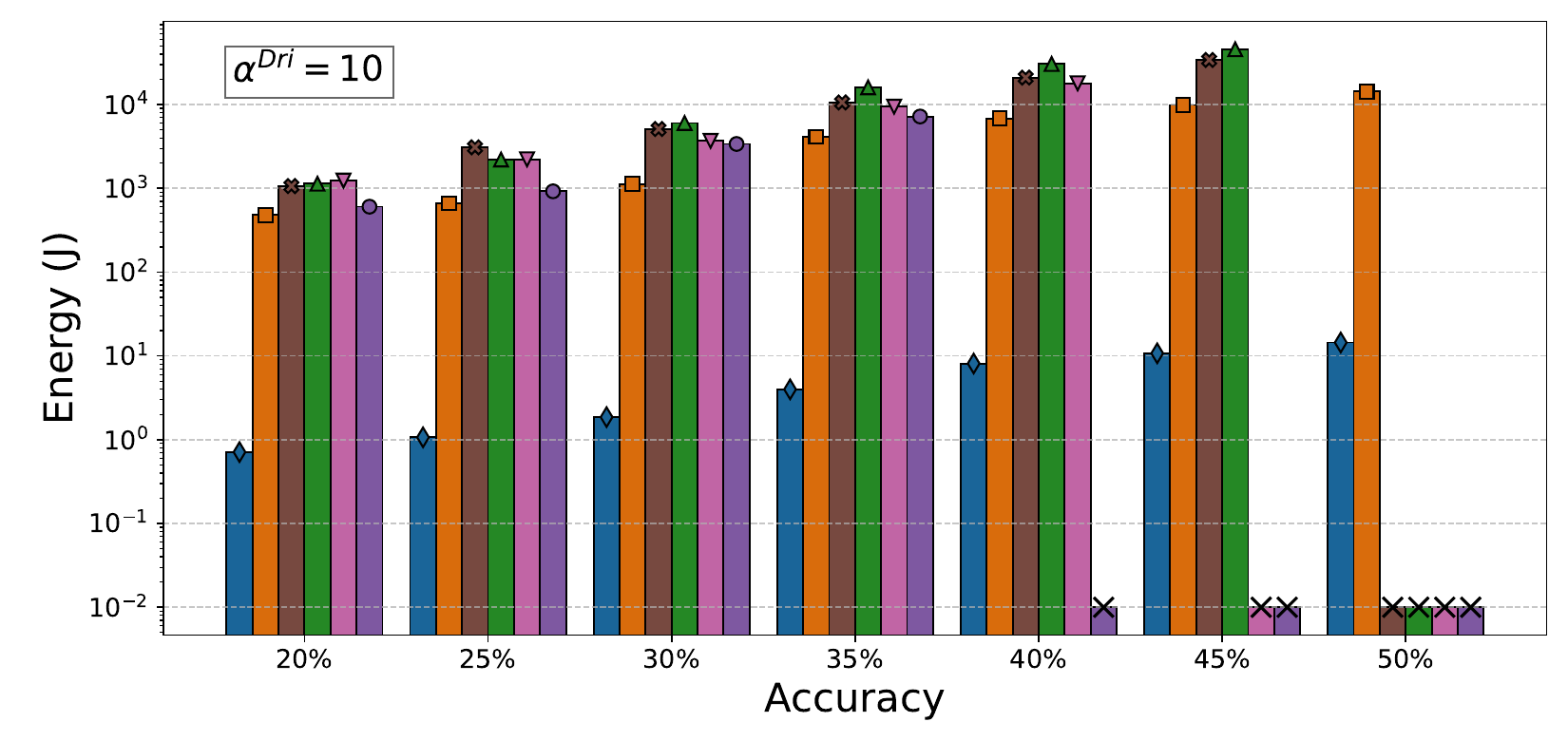}
        \\[0.17em]
            \centering
            \includegraphics[width=\linewidth, trim= 8 8 9 8, clip]{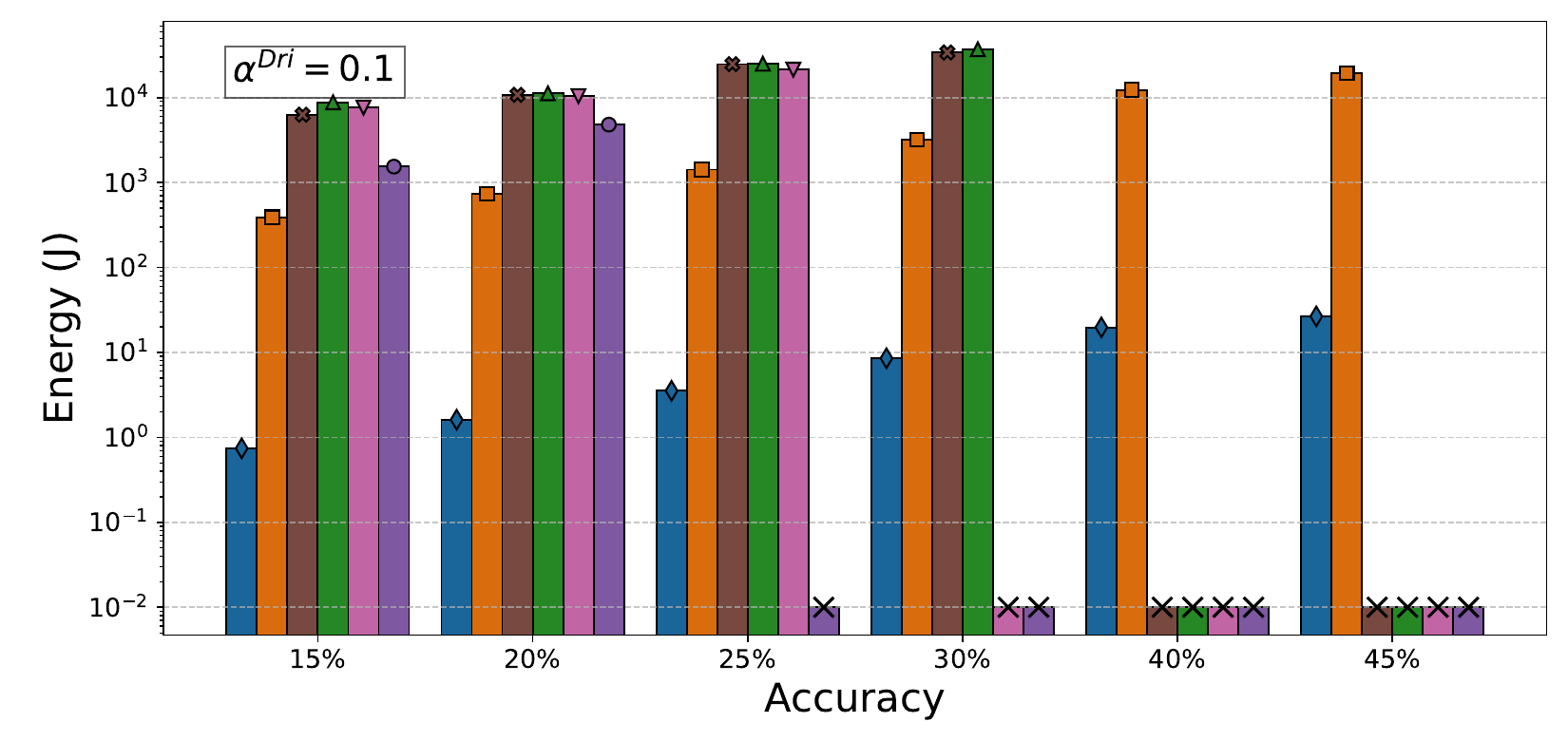}
    \end{tabularx}
\end{tcolorbox}
\vspace{-4mm}
 \end{minipage}
 \vspace{-5.5mm}
    \caption{Performance comparisons between our method and baselines in terms of (i) model test accuracy vs latency/time (top row) and (ii) energy usage required to reach various model test accuracy (bottom row) across three datasets (Fashion-MNIST, CIFAR-10, and FMoW) under various non-iid data configurations determined by Dirichlet parameter {\small$\alpha^{\mathsf{Dri}}$}. The same color-coding is used in all the plots (both line and bar plots) as described in the top legend. Lower {\small$\alpha^{\mathsf{Dri}}$} means more heterogeneous data. In the bar plots, sign {\small$\times$} on top of a bar implies that the respective algorithm could not reach the desired accuracy. Also, the y-axis of the bar plots are presented in the logarithmic format due to the large gap between our method and the rest of the baselines, caused by the use of fast optical links between the satellites, whose data rate can reach 30 Gbps.    
    Results show that our method consistently achieves superior performance compared to the baselines, with the performance gap widening as the datasets become more complex (i.e., CIFAR-10 and FMoW) and the data distribution becomes more non-iid. This is because for challenging datasets, the choice of training strategy and resource allocations plays a more prominent role in model training accuracy.}
    \label{fig:Latency_Energy_all_plots}
    \vspace{-5mm}
\end{figure*}
\textbf{Objective Function of $\bm{\mathcal{P}}$:} 
The objective function of {\small$\bm{\mathcal{P}}$} captures a three-sided trade-off between 
ML model performance (term (a)),
total latency of {Fed-Span} operations (term (b)), and 
total energy consumption of satellites (term (c)).
% , and
% total latency of {Fed-Span} operations (term (c)).
Specifically, for term (a), we utilize the generic bound from Theorem~\ref{th:main}, incorporating the step size condition from Corollary~\ref{cor:1}, while terms (b) and (c) use the derived results in Sec.~\ref{sec:phasesofFEDSPAN}. These three (potentially competing) objectives are weighted by non-negative coefficients {\small$\alpha_{1}$}, {\small$\alpha_{2}$}, and {\small$\alpha_{3}$}.
Additionally, {\small$\bm{\mathcal{P}}$} includes a set of constraints, organized into multiple categories using bullet points for improved readability.

\begin{figure*}[t]
\vspace{-6mm}
\centering
\includegraphics[width=0.9\textwidth, trim= 3 3 6.5 5, clip]{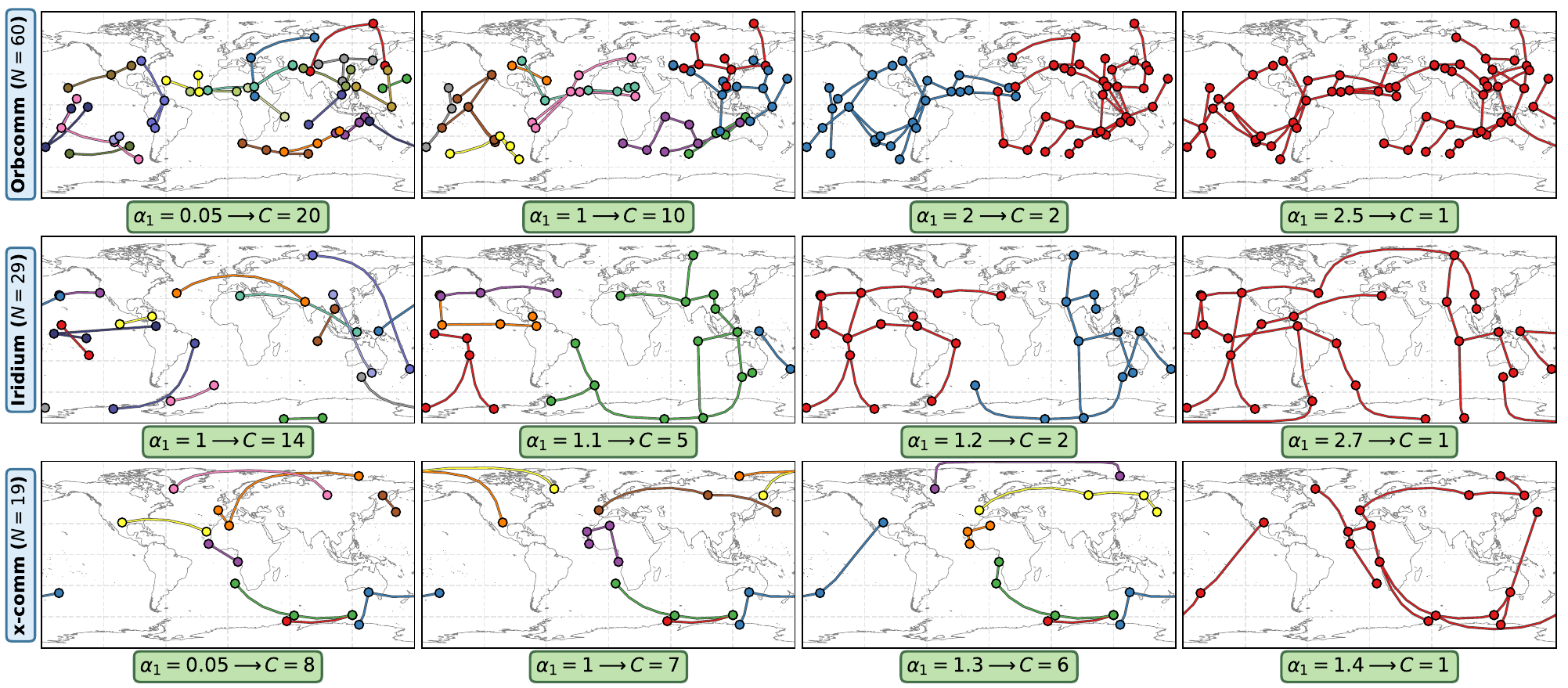}
\vspace{-1mm}
\caption{Visualization of the clustering of satellites and the number of clusters {\small$C=C^{(k)}$} for {\small$k=1$} over the Equirectangular projection of the Earth for various satellite constellations (i.e., Orbcomm, Iridium, X-comm) under different choices of the importance of ML model convergence on the optimization objective function (i.e. {\small$\alpha_1$} in {\small$\bm{\mathcal{P}}$}).\label{fig:WorldViewCondensed}}
\vspace{-3.0mm}
\end{figure*}

\textbf{Challenges Faced in Solving $\bm{\mathcal{P}}$:} The objective function and constraints of {\small$\bm{\mathcal{P}}$} are highly non-convex and complex to address due to several factors:
(i) terms involving the multiplications of the optimization variables in the objective function and constraints (e.g., in the convergence bound \eqref{eq:gen_conv} and energy consumption calculations such as \eqref{eq:EN_LC}),
(ii) terms with negative signs in the convergence bound \eqref{eq:gen_conv},
(iii) recursive functions for node delay calculations (e.g., \eqref{eq:GM_dispatching_latency}, \eqref{eq:CM_dispatching_latency}), (iv) non-convex recursive {\small$\max$} functions for node delay calculations (e.g., \eqref{eq:CM_aggregation_latency}, \eqref{eq:GM_aggregation_latency}), and (v) the binary/integer nature of some of the optimization variables (e.g., {\small $\pi_{n}^{\mathsf{G}, (k)}$}, {\small$\pi_{n}^{\mathsf{L}}$}, and {\small$\lambda^{(k,\ell)}_{x}$}). These factors collectively make solving the problem {\small$\bm{\mathcal{P}}$} highly non-trivial.

\textbf{Inherent Characteristics and Solution Design:}
We devise a systematic approach to solve the problem {\small$\bm{\mathcal{P}}$}. 
 Our method begins by relaxing the binary variables to continuous ones using a set of non-convex constraints, which ensure that these variables ultimately retain binary values in the final solution. Furthermore, we demonstrate that, through a sequence of algebraic manipulations and operations, {\small$\bm{\mathcal{P}}$} can be reformulated as a \textit{signomial programming} (SP) problem.
\textit{Given the generality of {\small$\bm{\mathcal{P}}$}, the steps in our methodology are adaptable and can be extended to a broader range of optimization problems, particularly within the domains of FedL and graph-theoretic network formations.}
To maintain clarity and readability, we omit the detailed mathematical derivations in this section. Interested readers are referred to Appendix~\ref{app:optTransform} for an in-depth explanation, and an overview of the steps taken in that appendix is discussed in the following. In Appendix~\ref{app:optTransform}, we first discuss SP problems, which are inherently non-convex and generally NP-hard, and highlight their similarities to {\small$\bm{\mathcal{P}}$}. We then outline the steps required to transform {\small$\bm{\mathcal{P}}$} into an SP problem and explore the unique characteristics of this transformed SP. Building on these insights, we propose a method to further reformulate the resulting SP into a \textit{geometric programming} (GP) problem~\cite{boyd2007tutorial,chiang2005geometric}. This is achieved using algebraic techniques applied to \textit{monomials} and \textit{posynomials}, such as leveraging the arithmetic-geometric mean inequality to bound \textit{posynomials} in terms of \textit{monomials}~\cite{duffin1972reversed}.
Finally, we use the fact that GP problems can be converted into convex optimization problems via a logarithmic change of variables, enabling efficient solutions using  CVXPY~\cite{diamond2016cvxpy}.
Additionally, we demonstrate that our GP solver (i.e. Algorithm~\ref{alg:cent} in Appendix~\ref{app:cons:sudo}) generates a series of solutions that converge to Karush–Kuhn–Tucker (KKT) conditions of {\small$\bm{\mathcal{P}}$} (see Proposition~\ref{propo:KKT} in Appendix~\ref{app:cons:sudo}). Complexity analysis and feasibility of solving our optimization problem {\small$\bm{\mathcal{P}}$} using CVXPY are also presented in Appendix~\ref{subsec:Complexity}.

\section{Simulation Results}\label{simulations}
 \noindent \textbf{Simulation Setup:} We next numerically evaluate the performance of {Fed-Span}. Our experiments utilize three datasets: two common datasets (CIFAR-10 and Fashion-MNIST) and one satellite-specific dataset, Functional Map of the World (FMoW), which is designed for land-use classification based on satellite images.\footnote{CIFAR-10: \url{https://www.cs.toronto.edu/~kriz/cifar.html}, ~~
Fashion-MNIST: \url{https://github.com/zalandoresearch/fashion-mnist}, ~~
FMoW: \url{https://github.com/fMoW/dataset}
} To emulate non-i.i.d. data across satellites, the datasets are partitioned using a Dirichlet distribution \cite{nguyen2022federated} with concentration parameter {\small$\alpha^{\mathsf{Dir}}\in\{0.1,10\}$} (default value: {\small$\alpha^{\mathsf{Dir}} = 0.1$}). We incorporate real-world orbital trajectory of satellites obtained from CelesTrak\footnote{CelesTrak: \url{https://celestrak.org/}} to enhance the realism of our simulations. Specifically, we model the movement of LEO satellites for three real-world constellations: X‑Comm ({\small$N=19$}), Iridium ({\small$N=29$}), and Orbcomm ({\small$N=60$}). 
% This integration of real-world trajectories allows for a more accurate evaluation of Fed-LS performance under realistic satellite connectivity configurations. 
As discussed in~Sec.~\ref{sec:transmission_data_Rate}, we also incorporate the Doppler effect in data rate calculations, following the approach in \cite{khalid2024characterization}.  The parameter values used in our simulations are summarized in Table~\ref{tab:sim_optical_params} in Appendix~\ref{parapp}, and an illustrative example of the {Fed-Span} network formation is provided in Appendix~\ref{App:Illust}.

% For our implementation, we use PyTorch to evaluate ML convergence and CVXPY to solve the convexified optimization problem presented in Appendix~\ref{app:optTransform}.

 \textbf{Baselines:}
We evaluate the performance of {Fed-Span} in comparison to the following baseline methods: 

\begin{enumerate}[leftmargin=4mm]
    \item \textit{Asynchronous Fed-LS \cite{10021101}}: Global aggregations are performed whenever a ground gateway receives a local model from a satellite. This occurs when the satellite completes its local training and enters the observable range of a  gateway.    
    \item \textit{Buffered Asynchronous Fed-LS \cite{nguyen2022federated}}: Global aggregations are triggered when a predefined number of satellite models, reflecting the \textit{buffer size}, are received by the ground gateway. 
    % Compared to Asynchronous Fed-LS, this approach stabilizes model training.     
    \item \textit{Opportunistic Asynchronous Fed-LS}: We propose this method as a flexible variant of the above baseline, where an aggregation occurs with any number of satellite models available at the ground gateway, eliminating the need to wait for a fixed number of satellites. While this approach reduces waiting time, it may result in less stable training due to the varying number of contributing satellites per aggregation.    
    \item \textit{Sink-assisted Synchronous Fed-LS \cite{elmahallawy2024secure, elmahallawy2023optimizing}}: Satellites within the same orbit transmit their models sequentially to a designated sink satellite via ISLLs, following the shortest path. The sink satellite then relays all collected models to the ground gateway. Aggregation is performed only after all models have been gathered across the ground gateways.
\end{enumerate}

\begin{figure*}[t]
\vspace{-8mm}
    \centering
    \includegraphics[width=0.32\textwidth,trim=0.5cm 0cm 0cm 1.6cm,clip]{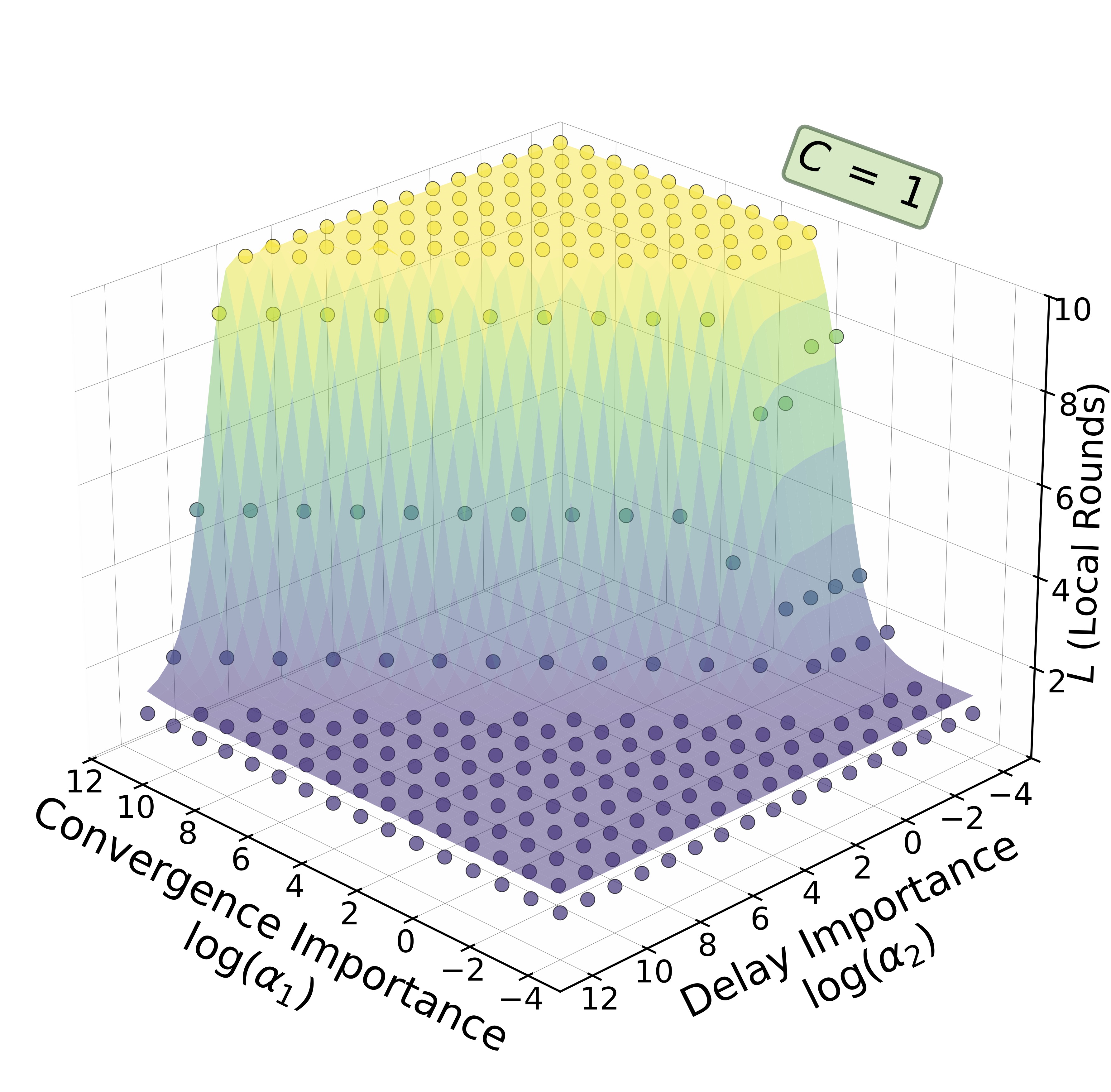}\hspace{0.01\textwidth}%
    \includegraphics[width=0.32\textwidth,trim=0.5cm 0cm 0cm 1.6cm,clip]{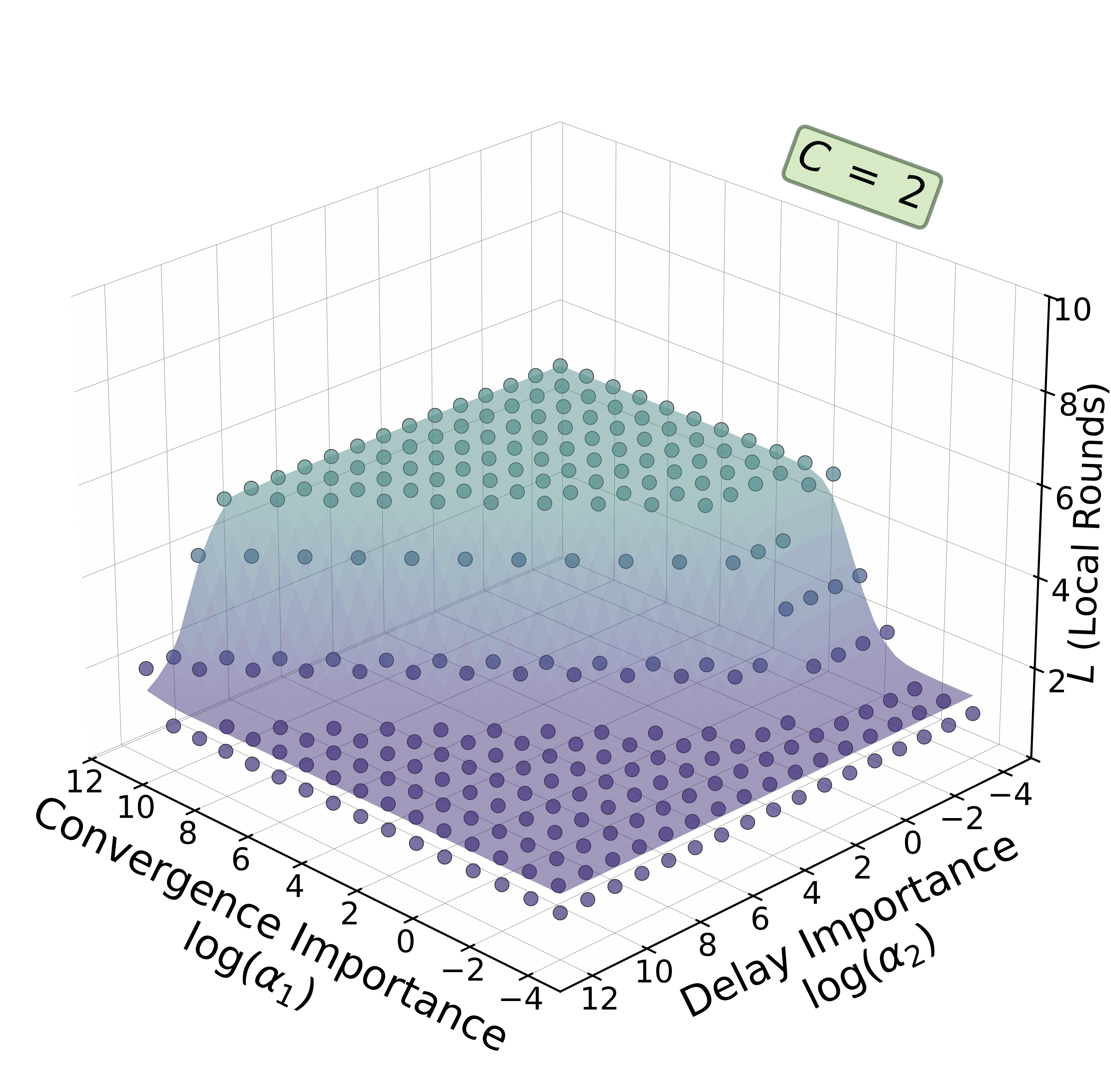}\hspace{0.01\textwidth}%
    \includegraphics[width=0.32\textwidth,trim=0.5cm 0cm 0cm 1.6cm,clip]{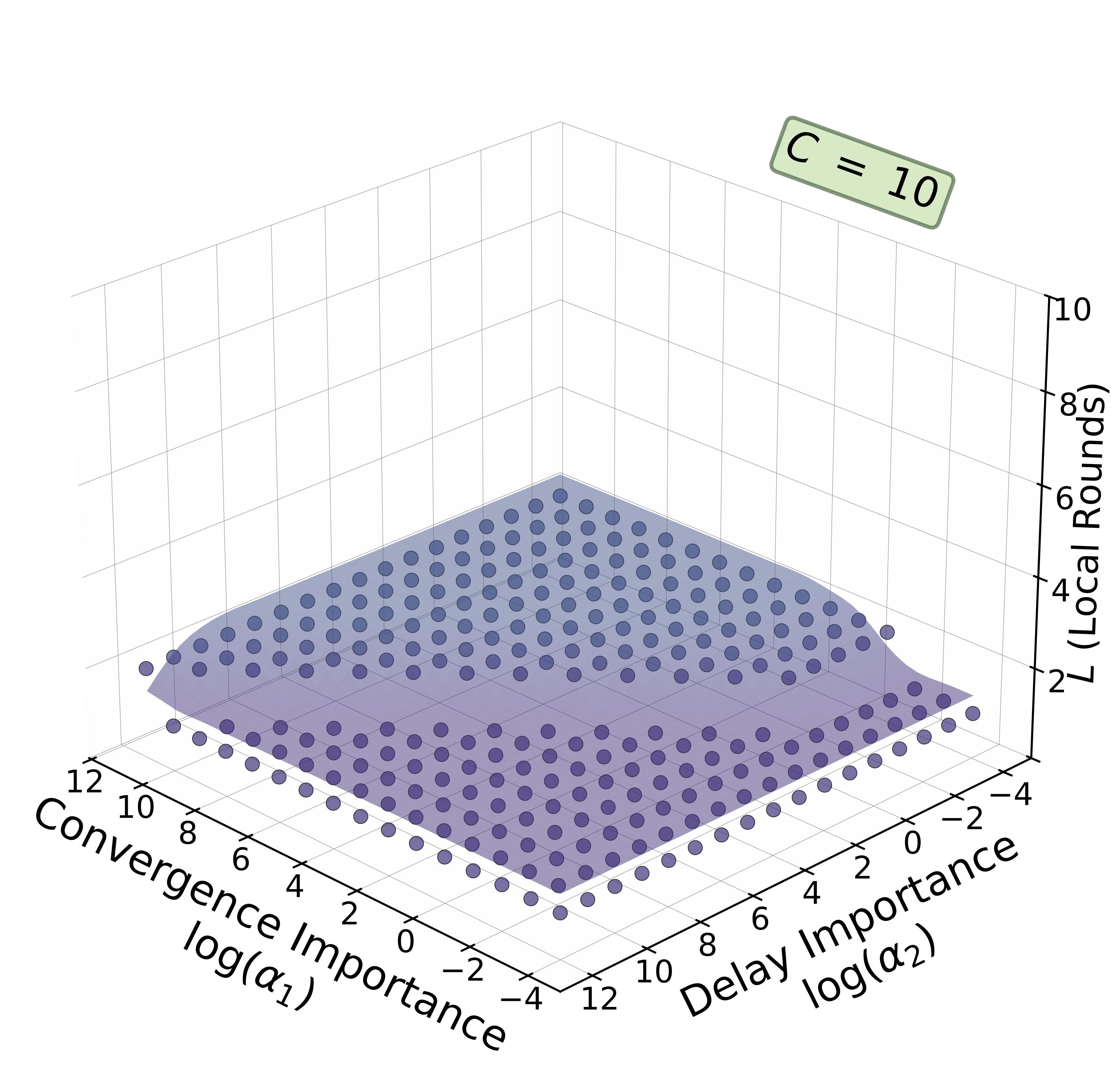}
    
    \caption{Impact of the importance of ML convergence and latency (i.e. {\small$\alpha_1$} and {\small$\alpha_2$} in {\small$\bm{\mathcal{P}}$}) on the number of local model training rounds {\small$L= L^{(k)}$, $k=1$}, under various number of satellite VCs/clusters {\small$C= C^{(k)}$}, {\small$k=1$}.  \label{fig:3DVisual}}
    \vspace{-4mm}
\end{figure*}

 For baselines that require satellite-to-ground communications, we consider 15 terrestrial gateway nodes uniformly distributed across the Earth's continents. The maximum effective communication range {\small$\mathfrak{R}^{\mathsf{min}}$} is set to 5000 km for ISLLs and 2300 km for satellite-to-ground RF links.
 % \footnote{To avoid repeated aggregations between the same satellite and ground node within the observation window, a 10-second hold-back time is introduced between consecutive model aggregations from the same satellite.} 

\textbf{Main Results:}  We present our main results in Fig.~\ref{fig:Latency_Energy_all_plots}, where we focus on the Orbcomm constellation (i.e., the largest constellation with {\small$N=60$})\footnote{The results for x‑comm ({\small$N=19$}) and Iridium ({\small$N=29$}) are qualitatively similar and thus omitted for brevity.} and present the global ML model performance vs. time/latency and energy consumption. The results demonstrate that our method achieves superior convergence speed (i.e., it attains higher model accuracy under lower latency and energy usage) compared to baseline methods across various levels of data heterogeneity (i.e.,  {\small$\alpha^{\mathsf{Dir}} \in \{0.1, 10\}$}). In particular, the results show the performance gap between our method and baselines widens as the datasets increase in complexity (i.e., CIFAR-10 and FMoW) and as the data distribution becomes more non-iid (i.e., {\small$\alpha^{\mathsf{Dir}} =0.1$}). This trend highlights that, for more challenging datasets, the choice of training strategy and resource allocation becomes increasingly critical to achieving high model accuracy. We note that the improvements obtained by our method are primarily attributed to: (i) the efficient use of local model aggregations, which help debias the local models of the satellites, and (ii) the use of high-speed ISLLs, which eliminate the reliance on satellite-to-ground communications and the need for satellite visibility from ground gateways during model aggregations.

\begin{figure}[b]
\centering
\includegraphics[width=0.49\textwidth, trim= 3 3 6.5 5, clip]{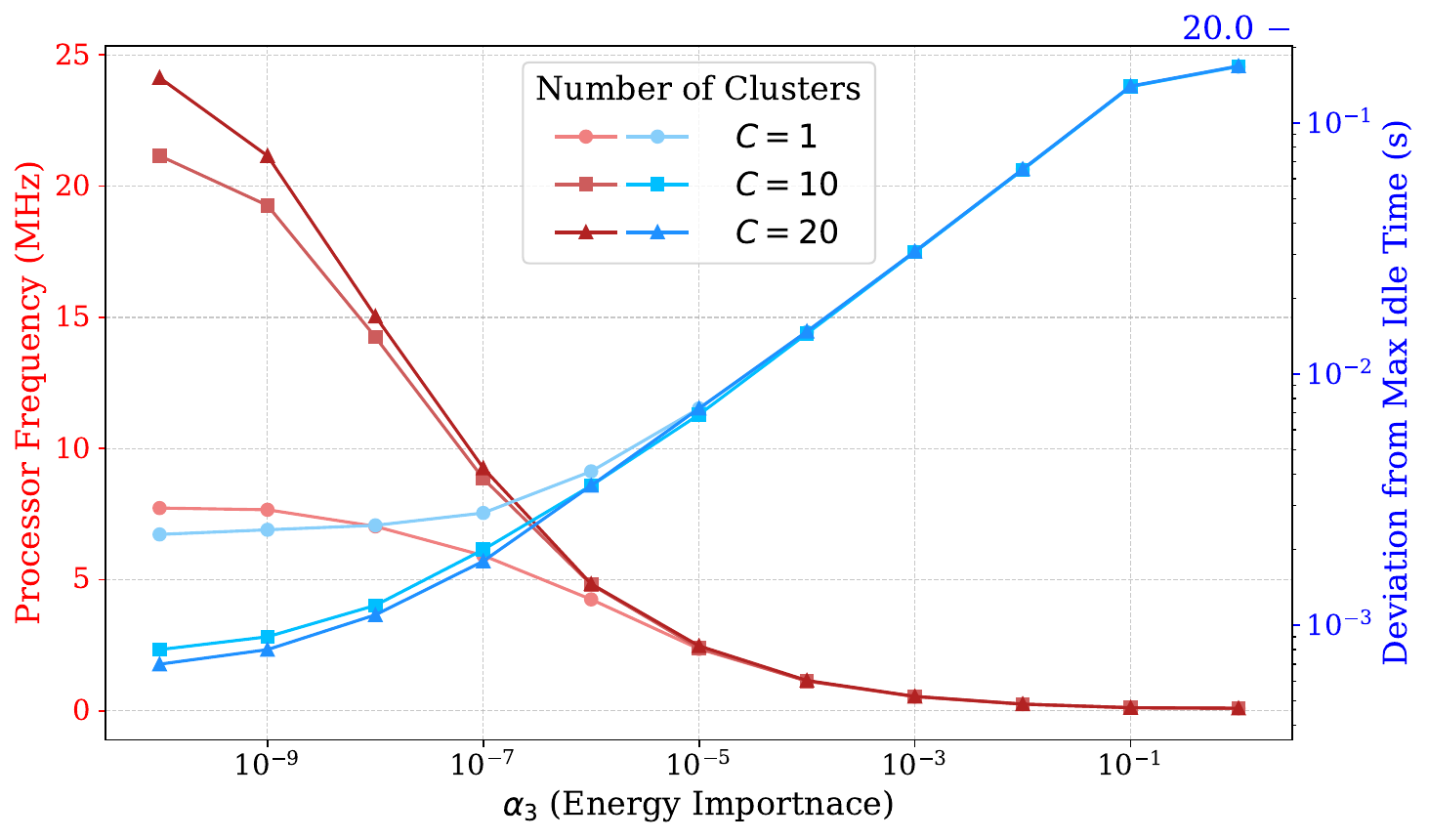}
\vspace{-5mm}
\caption{Resource allocation and satellite orchestration in terms of the CPU frequency speed (left axis) and idle times (right axis) vs various choices of the importance of the energy usage on the optimization objective (i.e., {\small$\alpha_3$} in {\small$\bm{\mathcal{P}}$}) and different numbers of satellite VCs/clusters (i.e., {\small$C=C^{(k)},~k=1$}). \label{fig:doubleAxis}}
\vspace{-0.1mm}
\end{figure}

\textbf{Satellite Clustering Topology:}
As this work is the first in the Fed-LS literature to obtain optimal clustering of satellites, it is insightful to examine the actual topologies of the formed clusters overlaid on a global map. Subsequently, in Fig.~\ref{fig:WorldViewCondensed}, we visualize the resulting satellite clusters/VCs for all three constellations.
In particular, the results reveal, for the first time in the Fed-LS literature, a direct connection between satellite cluster/VC topologies and the optimization objective's emphasis on ML model performance (i.e., the weight {\small$\alpha_1$} in {\small$\bm{\mathcal{P}}$}). As {\small$\alpha_1$} increases, {Fed-Span} prioritizes model convergence by reducing the number of VCs (as observed when moving from the left to the right subplots), resulting in deeper tree networks with higher aggregation delays and more energy-intensive communication paths (this leads to improved model convergence due to the higher collective data quality within each VC). Conversely, when {\small$\alpha_1$} decreases, the algorithm prioritizes resource efficiency, minimizing energy consumption and latency by forming more (smaller) clusters/VCs. 
% This results in shorter-range inter-satellite communications and lower aggregation delays, but at the expense of model performance degradation due to the reduced quality and diversity of data in each cluster. 
% Note that this result is enabled by our novel bound in Theorem~\ref{th:clus}, which is integrated into the model convergence term (i.e., {\small$(a)$} in {\small$\bm{\mathcal{P}}$}), that unveils the connection between the satellite clustering and data heterogeneity measure {\small$\zeta^{\mathsf{Glob}}_2$}.

\textbf{Resource Allocation and Satellite Orchestration:} We next investigate how  formulation~{\small$\bm{\mathcal{P}}$} adjusts resource allocation and satellite orchestration. Specifically, in Fig.~\ref{fig:doubleAxis}, we illustrate the impact of the energy usage weight (i.e., {\small$\alpha_3$} in {\small$\bm{\mathcal{P}}$}) on two key system behaviors: (i) CPU frequency, representing resource allocation decisions, and (ii) satellite idle times, representing orchestration decisions. The results are shown for varying numbers of clusters in the Orbcomm constellation ({\small$C\in \{1, 10, 20\}$} sampled from the first row of Fig.~\ref{fig:WorldViewCondensed}).
The results indicate that increasing {\small$\alpha_3$} leads to a reduction in satellite idle times (right y-axis and blue curves), reflected by a higher deviation from the maximum idle time (20 seconds). This suggests that satellites increasingly refrain from performing model training to conserve energy. Additionally, a higher {\small$\alpha_3$} results in a sharp decrease in CPU frequency (left y-axis and red curves), implying that satellites process data slower, thereby sacrificing latency in favor of reduced energy consumption.

\textbf{Local Model Training Configuration:} 
To supplement the above results, we further examine the joint impact of (i) clustering (with cluster/VC topologies for {\small$C \in \{1, 2, 10\}$} are sampled from the first row of Fig.~\ref{fig:WorldViewCondensed}) and (ii) the relative importance of ML convergence and latency (i.e., {\small$\alpha_1$} and {\small$\alpha_2$} in {\small$\bm{\mathcal{P}}$}) on the local model training, specifically, the number of local training rounds {\small$L = L^{(k)}$}. The results are visualized in Fig.~\ref{fig:3DVisual}.
The results indicate that when the weight on latency (i.e., {\small$\alpha_2$}) is reduced and the emphasis on convergence (i.e., {\small$\alpha_1$}) is increased, the number of local aggregations rises. This enables more frequent debiasing of satellites' models, which in turn accelerates convergence. This phenomenon is amplified as the number of clusters/VCs decreases (e.g., for $C=1$), since a higher number of local aggregations is only effective when each satellite cluster/VC contains a sufficiently diverse dataset. 
% To our knowledge, these findings represent one of the first quantitative characterizations of this trade-off in the Fed-LS literature.

\vspace{-1mm}
\section{Conclusion and Future Work}
\noindent In this work, we presented a graph-theoretic formulations for Fed-LS with ISLLs. Our approach involved a systematic modeling framework that integrates graph theory, non-convex approximations, signomial programming, and geometric programming. Through simulations, we showed the superior performance of our method compared to Fed-LS baselines and uncovered a set of trade-offs that had previously received limited attention in the Fed-LS literature. Promising future works include: (i) extending our framework to quantum Fed-LS (incorporating either quantum computation or communication), and (ii) exploring multi-modal/multi-task Fed-LS with ISLLs.

\bibliographystyle{IEEEtran}
\bibliography{ref}

\vspace{-10mm}
\begin{IEEEbiography}[\vspace{-14mm}{\includegraphics[width=1in,height=0.9in,clip,keepaspectratio]{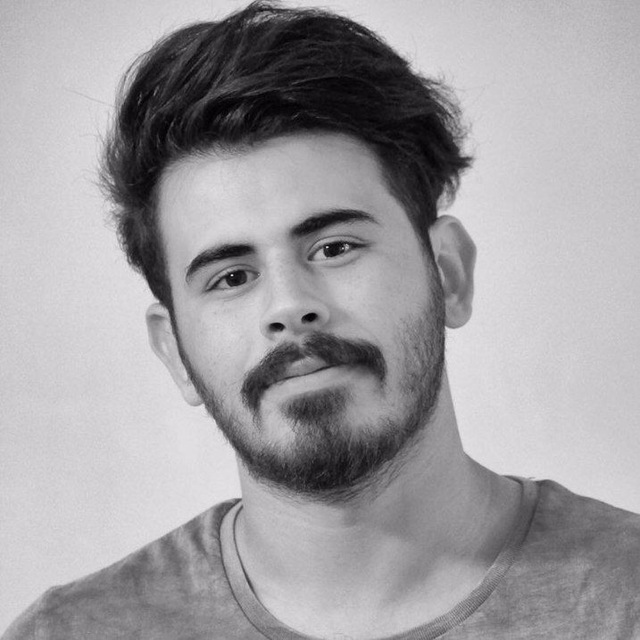}}]{Fardis Nadimi}
[S] is currently a PhD student in the Department of Electrical Engineering at the University at Buffalo–SUNY, NY, USA. 
He received his M.Sc. degree from the Iran University of Science and Technology, Tehran, Iran.
\end{IEEEbiography}
\vspace{-18.15mm}
\begin{IEEEbiography}[\vspace{-14mm}
{\includegraphics[width=1in,height=0.9in, clip,keepaspectratio]{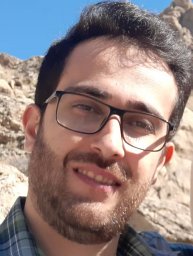}}]{Payam Abdisarabshali}[S] is currently a PhD student at the Electrical Engineering Department at the University at Buffalo--SUNY, NY, USA. He received his M.Sc. degree from Razi University, Kermanshah, Iran, with top-rank recognition.
\end{IEEEbiography}
\vspace{-18.15mm}
\begin{IEEEbiography}[
\vspace{-14mm}
{\includegraphics[width=1in,height=0.9in, clip,keepaspectratio]{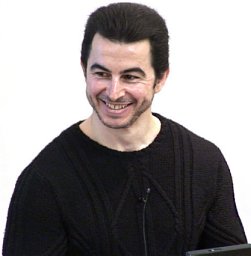}}]{Jacob Chakareski}[SM] is currently an Associate Professor with the Informatics Department at the New Jersey Institute of Technology, NJ, USA. He received his Ph.D. degree from Stanford, CA, USA, and Rice University, TX, USA.
\end{IEEEbiography}
\vspace{-18.15mm}
\begin{IEEEbiography}[\vspace{-14mm}
{\includegraphics[width=1in,height=0.9in, trim= 0 35 0 30, clip,keepaspectratio]{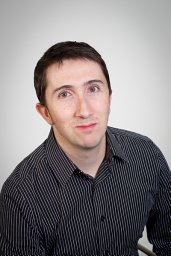}}]{Nicholas Mastronarde}[SM] is currently a Professor with the Electrical Engineering Department at the University at Buffalo--SUNY, NY, USA. He received his Ph.D. degree from the University of California--Los Angeles (UCLA), CA, USA.
\end{IEEEbiography}
\vspace{-18.15mm}
\begin{IEEEbiography}[\vspace{-14mm}{\includegraphics[width=1in,height=0.9in, trim= 300 0 300 50, clip,keepaspectratio]{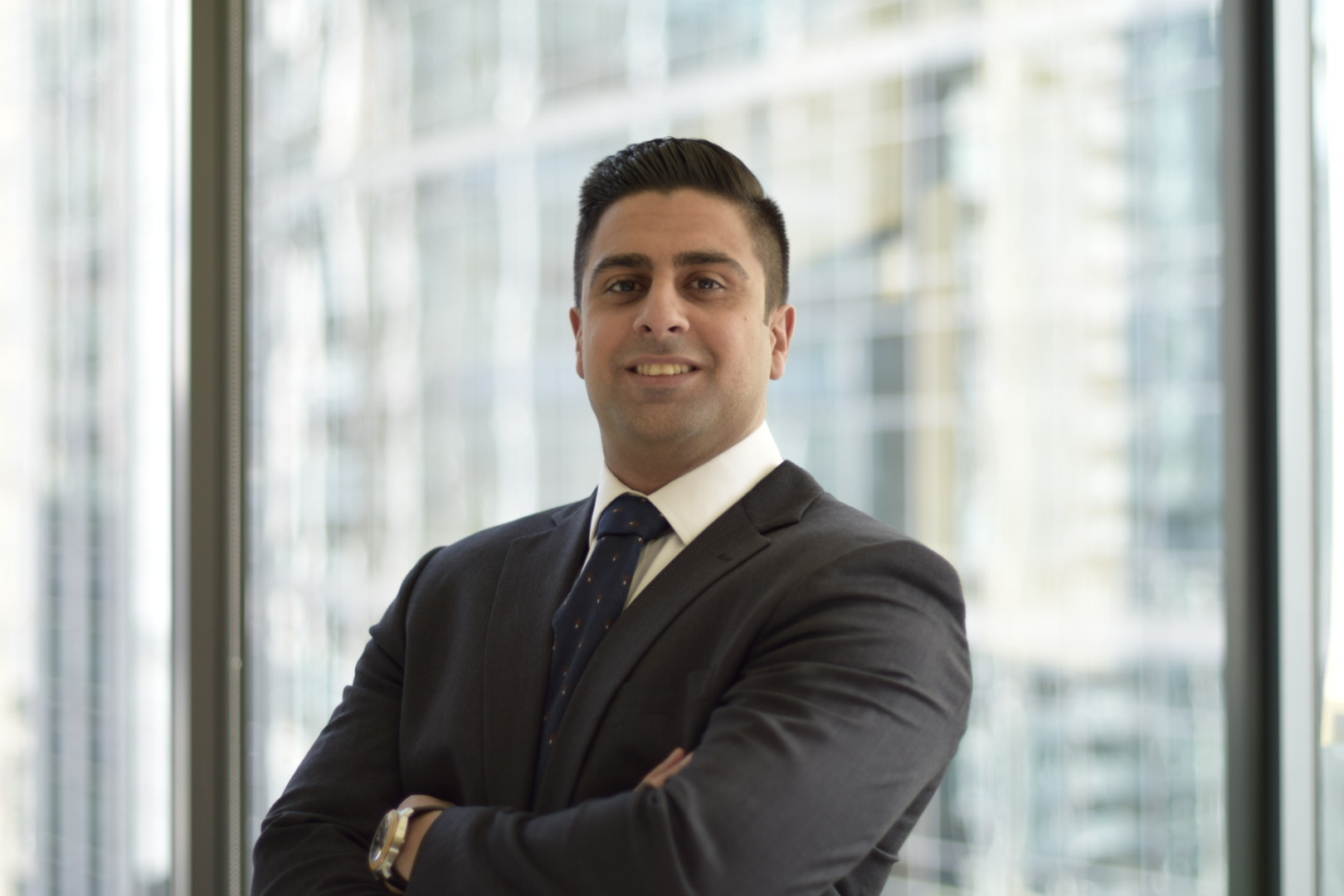}}]{Seyyedali Hosseinalipour}[SM] is currently an Assistant Professor with the Electrical Engineering Department at the University at Buffalo--SUNY, NY, USA. He received his Ph.D. degree from NC State University, NC, USA.
\end{IEEEbiography}

\vfill
\begingroup
\onecolumn

\appendices
\section{Notations and Acronyms Table}\label{app:notaions}

\setlength{\tabcolsep}{2pt}
{\footnotesize
\vspace{-3mm}
\begin{longtable}{|c|p{4.1cm}||c|p{4.1cm}||c|p{4.1cm}|}
\caption{List of notations used in the paper.}
\vspace{-2mm}
\label{table:notations_comprehensive_3col} \\
% --- Section 1 ---
\hline
\rowcolor[HTML]{e3e2a5}\multicolumn{6}{|c|}{\textbf{System Model, Physical Layer, ML Setup, Tree Structures (Sec. \ref{sec:system_model} and Sec.\ref{sec:fundamentals})}} \\
\hline
\rowcolor[HTML]{e8cbc2}
\textbf{Notation} & \centering{\textbf{Description}} & \textbf{Notation} & \centering{\textbf{Description}} & \textbf{Notation} & {\hspace{12mm}\centering{\textbf{Description}}} \\
\hline
$\mathcal{N}$ & Set of satellites & $\mathcal{M}_n$ & Set of laser terminals on satellite $n$ & $T^{\mathsf{Init},(k)}$ & Wall-clock start of a global round \\ \hline
\rowcolor[HTML]{ededed}$\psi_{m,m'}$ & Binary variable for ISLLs & $\bm{\Psi}$ & Connectivity block matrix & $\Psi_{n,n'}$ & Connectivity matrix ($n, n'$) \\ \hline
$\overline{\mathfrak{R}}_{m,m'}$ & Data rate of link between $m$ and $m'$ & $P^{\mathsf{Tx}}_{m}$ & Transmit power of terminal $m$ & $P_{n}$ & Total tx power of satellite $n$ \\ \hline
\rowcolor[HTML]{ededed}$P^{\mathsf{Rx}}_{m,m'}$ & Received power at $m'$ & $B$ & Transmission bandwidth & $P^{\mathsf{N}}_{m'}$ & Noise power at receiver $m'$ \\ \hline
$a^{\mathsf{Tx}}/a^{\mathsf{Rx}}$ & Tx/Rx optics efficiency & $G^{\mathsf{Tx}} / G^{\mathsf{Rx}}$ & Tx/Rx antenna gains & $L^{\mathsf{Tx}} / L^{\mathsf{Rx}}$ & Tx/Rx Pointing Losses \\ \hline
\rowcolor[HTML]{ededed}$L^{\mathsf{PL}}_{m,m'}$ & Free-space optical path loss & $\lambda_{m, m'}$ & Signal wavelength at rx $m'$ & $\delta_{m,m'}$ & Distance between $m$ and $m'$ \\ \hline
$R^{\mathsf{Earth}}$ & Radius of the Earth & $h^{\mathsf{Cons}}$ & Altitude of constellation & $\phi_m^{\mathsf{Lat}}, \lambda_m^{\mathsf{Long}}$ & Latitude/Longitude Coordinates \\ \hline
\rowcolor[HTML]{ededed}$\Delta \phi_m^{\mathsf{Lat}}, \Delta \lambda_m^{\mathsf{Long}}$ & Difference in Latitude/Longitude & $\widehat{a}$ & Haversine of central angle & $\mathfrak{R}^{\mathsf{min}}$ & Min ISLL data rate threshold \\ \hline
$\mathfrak{R}_{n,n'}$ & Data rate of link between $n, n'$ & $\mathcal{K}$ & Set of global aggregation rounds & $\mathcal{D}_n^{(k)}$ & Dataset of satellite $n$ (size $D_n^{(k)}$) \\ \hline
\rowcolor[HTML]{ededed}$D^{(k)}$ & Total dataset size & $F^{(k)}_{n}$ & Local loss function of satellite $n$ & $F^{(k)}$ & Global loss function \\ \hline
$\bm{\omega}^{(k)}$ & Global ML model parameters & $\bm{\omega}^{{(k)}^\star}$ & Optimal global model parameters & $M^{\mathsf{Dim}}$ & Dimension of of ML model vector \\ \hline
\rowcolor[HTML]{ededed}$\mathcal{C}^{(k)}$ & Set of VCs (size $C^{(k)}$) & $\gamma^{(k)}_{c,n}$ & Binary satellite-to-VC association & $\mathcal{N}^{(k)}_c$ & Set of satellites in VC $c$ \\ \hline
$D_c^{(k)}$ & Dataset size of VC $c$ & $\pi_{n}^{\mathsf{L}, (k)}$ & Binary variable for local roots & $r^{\mathsf{G}, (k)}$ & Global dispatching root node \\ \hline
\rowcolor[HTML]{ededed}$\pi_{n}^{\mathsf{G}, (k)} / \pi_{n}^{\mathsf{A}, (k)}$ & Binary variables for global root & $r_{c}^{\mathsf{L}, (k)}$ & Root node for VC $c$ & & \\ \hline
\addlinespace[3pt]
% --- Section 3 ---
\hline
\rowcolor[HTML]{e3e2a5}\multicolumn{6}{|c|}{\textbf{Phase 1: Global Model Dispatching (Sec. \ref{subsec:ph1})}} \\ \hline
\rowcolor[HTML]{e8cbc2}
\textbf{Notation} &  \centering{\textbf{Description}} & \textbf{Notation} &  \centering{\textbf{Description}} & \textbf{Notation} & {\hspace{12mm}\centering{\textbf{Description}}} \\
\hline
\rowcolor[HTML]{ededed}$\bm{\Gamma}^{\mathsf{GD},(k)}$ & Adjacency matrix global dispatching & $t^{\mathsf{GD},(k)}$ & Global dispatching start time & $\tau^{\mathsf{GD}, (k)}$ & Global dispatching latency\\ \hline
$\tau_{n,n'}^{\mathsf{GD},(k)}$ & Latency of global dispatching link & $\tau_{n'}^{\mathsf{GD},(k)}$ & Model arrival latency at $n'$ & $\tau^{\mathsf{GD},\mathsf{max}}$ & Max global dispatching time \\ \hline
\rowcolor[HTML]{ededed}$E^{\mathsf{GD}, (k)}$ & Global dispatching energy cons. & $\alpha^{\mathsf{Bit}}$ & Bits per model parameter & $\mathfrak{R}^{\mathsf{GD},(k)}_{n,n'}$ & Global dispatching link data rate \\ \hline
\addlinespace[3pt]
% --- Section 4 ---
\hline
\rowcolor[HTML]{e3e2a5}\multicolumn{6}{|c|}{\textbf{Phase 2: Local Model Training (Sec. \ref{subsec:ph2})}} \\ \hline
\rowcolor[HTML]{e8cbc2}
\textbf{Notation} &  \centering{\textbf{Description}} & \textbf{Notation} &  \centering{\textbf{Description}} & \textbf{Notation} & {\hspace{12mm}\centering{\textbf{Description}}} \\
\hline
$\mathcal{L}^{(k)}$ & Set of local rounds & $L^{(k)}$ & Number of local rounds & $\tau^{\mathsf{Loc}}$ & Total time for local rounds \\ \hline
\rowcolor[HTML]{ededed} $\tau^{\mathsf{TTI}}$ & Duration of a single TTI & $t^{(k)}_x$ & Start time of TTI $x$ & $\mathcal{X}$ & Subset of TTIs $X = |\mathcal{X}|$ \\ \hline
$\lambda^{(k,\ell)}_{x}$ & Binary variable of local training time & $t^{\mathsf{LT}}$ & Initial time for local training & $\overline{\bm{\omega}}^{(k,\ell)}_{c}$ & Unified VC model \\ \hline
\rowcolor[HTML]{ededed} $\overline{\bm{\omega}}^{(k,0)}_{c} $& Initial VC model & $\eta_{k}$ & SGD step size & $e^{(k)}_{n}$ & SGD iterations at satellite $n$ \\ \hline
$e^{(k)}_{c}$ & Unified SGD iterations of VC $c$ & $\bm{\omega}_{n}^{(k,\ell),e}$ & Iteration $e$ on satellite $n$ local model & $\widetilde{\nabla F}^{(k)}_{n}$ & Cumulative gradient of satellite $n$ \\ \hline
\rowcolor[HTML]{ededed} $\mathcal{B}^{(k,\ell)}_{n}$ & Mini-batch samples & $\varsigma^{(k,\ell)}_{n}$ & Decision variable of mini-batch & $a_n$ & CPU cycles per data \\ \hline
$f^{(k,\ell)}_{n}$ & CPU frequency of satellite $n$ & $f^{\mathsf{max}}_{n}$ & Max available CPU frequency & $\alpha^{\mathsf{Chip}}_n$ & Chipset capacitance \\ \hline
\rowcolor[HTML]{ededed} $\tau^{\mathsf{LT},{(k,\ell)}}$ & Local training latency & $\tau^{\mathsf{LT}, \mathsf{max}}$ & Max local training time & $E^{\mathsf{LT},{(k,\ell)}}$ & Local training energy cons. \\ \hline
\addlinespace[3pt]
% --- Section 5 ---
\hline
\rowcolor[HTML]{e3e2a5}\multicolumn{6}{|c|}{\textbf{Phase 3: Local Model Aggregation (Sec. \ref{subsec:ph3})}} \\ \hline
\rowcolor[HTML]{e8cbc2}
\textbf{Notation} &  \centering{\textbf{Description}} & \textbf{Notation} &  \centering{\textbf{Description}} & \textbf{Notation} & {\hspace{12mm}\centering{\textbf{Description}}} \\
\hline
\rowcolor[HTML]{ededed}$\Omega^{(k,\ell)}$ & Idle time duration & $t^{\mathsf{LA},(k,\ell)}$ & Local aggregation start time & $\bm{\Gamma}^{\mathsf{LA},{(k,\ell)}}$ & Adjacency matrix local aggregation \\ \hline
$\overline{\nabla F}_{n}$ & Aggregated gradient & $\tau^{\mathsf{LA},{(k,\ell)}}$ & Local aggregation latency & $\tau_{n,n'}^{\mathsf{LA},{(k,\ell)}}$ & Latency of local aggregation link \\ \hline
\rowcolor[HTML]{ededed}$\tau_{n'}^{\mathsf{LA},{(k,\ell)}}$ & Model arrival latency at $n'$ & $\tau^{\mathsf{LA},\mathsf{max}}$ & Max local aggregation time & $E^{\mathsf{LA},{(k,\ell)}}$ & Local aggregation energy cons. \\ \hline
$\Omega^{(k)}$ & Total idle time & $\mathfrak{R}^{\mathsf{LA},{(k,\ell)}}_{n,n'}$ & Local aggregation link data rate & & \\ \hline
\addlinespace[3pt]
% --- Section 6 ---
\hline
\rowcolor[HTML]{e3e2a5}\multicolumn{6}{|c|}{\textbf{Phase 4: Local Model Dispatching (Sec. \ref{subsec:ph4})}} \\ \hline
\rowcolor[HTML]{e8cbc2}
\textbf{Notation} &  \centering{\textbf{Description}} & \textbf{Notation} &  \centering{\textbf{Description}} & \textbf{Notation} & {\hspace{12mm}\centering{\textbf{Description}}}\\
\hline
\rowcolor[HTML]{ededed}$\bm{\Gamma}^{\mathsf{LD},{(k,\ell)}}$ & Adjacency matrix local dispatching & $t^{\mathsf{LD},{(k,\ell)}}$ & Local dispatching start time & $\tau^{\mathsf{LD},{(k,\ell)}}$ & Local dispatching latency \\ \hline
$\tau_{n,n'}^{\mathsf{LD},{(k,\ell)}}$ & Latency of local dispatching link & $\tau_{n'}^{\mathsf{LD},{(k,\ell)}}$ & Model arrival latency at $n'$ & $\tau^{\mathsf{LD},\mathsf{max}}$ & Max local dispatching time \\ \hline
\rowcolor[HTML]{ededed}$E^{\mathsf{LD},{(k,\ell)}}$ & Local dispatching energy cons. & $\mathfrak{R}^{\mathsf{LD},{(k,\ell)}}_{n,n'}$ & Local dispatching link data rate & & \\ \hline
\addlinespace[3pt]
% --- Section 7 ---
\hline
\rowcolor[HTML]{e3e2a5}\multicolumn{6}{|c|}{\textbf{Phase 5: Global Model Aggregation (Sec. \ref{subsec:ph5})}} \\ \hline
\rowcolor[HTML]{e8cbc2}
\textbf{Notation} & \centering{\textbf{Description}} & \textbf{Notation} &  \centering{\textbf{Description}} & \textbf{Notation} & {\hspace{12mm}\centering{\textbf{Description}}} \\
\hline
$\bm{\Gamma}^{\mathsf{GA},(k)}$ & Adjacency matrix global aggregation & $t^{\mathsf{GA},(k)}$ & Global aggregation start time & $r^{\mathsf{A}, (k)}$ & Global aggregation root node\\ \hline
\rowcolor[HTML]{ededed}$\widehat{\nabla F}^{(k)}_{n}$ & Cumulative global gradient & $\Xi^{(k)}$ & Boosting coefficient & $\tau^{\mathsf{GA},(k)}$ & Global aggregation latency\\ \hline
$\tau_{n,n'}^{\mathsf{GA},(k)}$ & Latency of global aggregation link & $\tau_{n'}^{\mathsf{GA},(k)}$ & Model arrival latency at $n'$ & $\tau^{\mathsf{GA},\mathsf{max}}$ & Max global aggregation time\\ \hline
\rowcolor[HTML]{ededed}$E^{\mathsf{GA}, (k)}$ & Global aggregation energy cons. & $\mathfrak{R}^{\mathsf{GA},(k)}_{n,n'}$ & Global aggregation link data rate & & \\ \hline
\addlinespace[3pt]
% --- Section 8 ---
\hline
\rowcolor[HTML]{e3e2a5} \multicolumn{6}{|c|}{\textbf{Convergence Analysis \& Optimization (Sec. \ref{sec:conv} \& \ref{sec:optimization_problem})}} \\ \hline
\rowcolor[HTML]{e8cbc2}
\textbf{Notation} & \centering{\textbf{Description}} & \textbf{Notation} & \centering{\textbf{Description}} & \textbf{Notation} & {\hspace{12mm}\centering{\textbf{Description}}} \\ \hline
% Row 1
$\beta$ & Smoothness constant & $\zeta^{\mathsf{Loc}}_{c,1}/ \zeta^{\mathsf{Loc}}_{c,2}$ & Intra-VC heterogeneity constants & $\Delta^{(k)}$ & Total model drift \\ \hline
% Row 2
\rowcolor[HTML]{ededed} 
$\Delta^{(k,\ell)}_{n}$ & Model drift & $\Theta_n/ \Theta$ & Local data variability constants & $\zeta^{\mathsf{Loc},\min}_{c,1} /\hat{\zeta}^{\mathsf{Loc}}_{1}$ & Min/Max intra-VC heterogeneity \\ \hline
% Row 3
$\zeta^{\mathsf{Glob}}_{1}/ \zeta^{\mathsf{Glob}}_{2}$ & Inter-VC heterogeneity constants & $\alpha^{\mathsf{Dri}}$ & Dirichlet heterogeneity parameter & $\sigma^{(k)}_{n}$ & Variance feature distribution \\ \hline
% Row 4
\rowcolor[HTML]{ededed} 
$\sigma_{\mathsf{max}}$ & Max SGD sampling noise & $\Lambda^{(k)}$ & Convergence constant & $\Phi^{(k)}$ & Scaled step size factor \\ \hline
% Row 5
$e^{(k)}_{\mathsf{max}}$ & Max SGD iterations across VCs & $\ell_{\min} / \ell_{\max}$ & Min/Max local aggregation rounds & $\overline{e}_{\min} / \overline{e}_{\max}$ & Min/Max average SGD iterations \\ \hline
% Row 6
\rowcolor[HTML]{ededed} 
$\chi$ & Bound for idle period & $e^{(k)}_{\mathsf{avg}} / e^{(k)}_{\mathsf{sum}}$ & Average/Total SGD iterations & $\widehat{e}_{\min} / \widehat{e}_{\max}$ & Min/Max total SGD iterations \\ \hline
\end{longtable}
}

\begin{table*}[h]
\caption{List of acronyms used in the paper.}
\label{table:acronyms}
\centering
{\begin{tabular}{|c|l||c|l|}
\hline
\rowcolor[HTML]{e8cbc2} 
\textbf{Acronym} & {\hspace{16mm}\centering{\textbf{Description}}}  & \textbf{Acronym} & {\hspace{16mm}\centering{\textbf{Description}}}  \\ 
\hline
\hline
\rowcolor[HTML]{ededed} CBM & Connectivity block matrix & CCRs & Continuous constraint representations \\ \hline
CIFAR & Canadian Inst. for Advanced Research & CPU & Central processing unit \\ \hline
\rowcolor[HTML]{ededed} Fed-LS & FedL over LEO satellites & FedL & Federated learning \\ \hline
Fed-Span & FedL with spanning aggregation over LEO constellations & FMoW & Functional Map of the World \\ \hline
\rowcolor[HTML]{ededed} FSO & Free-space optics & GA & Global aggregation \\ \hline
GD & Global dispatching & GSs & Ground stations \\ \hline
\rowcolor[HTML]{ededed} IID & Independent and identically distributed & ISLLs & Inter-satellite laser links \\ \hline
LA & Local aggregation & LD & Local dispatchinging \\ \hline
\rowcolor[HTML]{ededed} LEO & Low Earth orbit & LT & Local training \\ \hline
ML & Machine learning & MNIST & Modified NIST database \\ \hline
\rowcolor[HTML]{ededed} MoDSF & Multi-objective directed spanning forest & MoDSTs & Multi-objective directed spanning trees \\ \hline
MSF & Minimum spanning forest & MST & Minimum spanning tree \\ \hline
\rowcolor[HTML]{ededed} RF & Radio-frequency & SGD & Stochastic gradient descent \\ \hline
 TTI & Transmission time intervals & VCs & Virtual constellations \\ \hline
\end{tabular}%
}
\end{table*}
\pagebreak

\section{Parameters for the {Fed-Span} Framework}\label{parapp}
\begin{table*}[h]
\vspace{1.5mm}
\caption{System parameters used in the {Fed-Span} simulations. Abbreviations: FAL $\rightarrow$ Full Angle Divergence of the optical beam; AD $\rightarrow$ Receiver aperture diameter. The use of $\pm$ denotes the range of the random parameter around its center/mean.}

\vspace{-1.5mm}
\centering
\label{tab:sim_optical_params} 
\begin{tabular}{|c c|c c|c c|c c|} 
\hline
\rowcolor[HTML]{e8cbc2} \textbf{Parameter} & \textbf{Value} & \textbf{Parameter} & \textbf{Value} & \textbf{Parameter} & \textbf{Value} & \textbf{Parameter} & \textbf{Value} \\
\hline
$L^{\mathsf{Tx}}_{m}$ & $e^{(-G^{\mathsf{Tx}}_{m} \cdot 10^{-12})}$ & $L^{\mathsf{Rx}}_{m'}$ & $e^{(-G^{\mathsf{Rx}}_{m'} \cdot 10^{-12})}$ & $a^{\mathsf{Rx}}_{m'} = a^{\mathsf{Tx}}_{m}$ & $0.8$ & $M$ & $4$ \\

$G^{\mathsf{Tx}}_{m}$ & $\tfrac{16}{(\text{FAL})^2}$ & $G^{\mathsf{Rx}}_{m'}$ & $\left(\tfrac{\pi \times AD}{\lambda}\right)^2$ & $\lambda_{m,m'}$ & $1550$ nm & $P^{\mathsf{Tx}}_{m}$ & $1$ W  \\

FAL & $15 \times 10^{-6}$ & AD & $80 \times 10^{-3}$ & Satellite Speed & $7800$ km/s & $h^{\mathsf{Cons}}$ & $500$ km \\

$R^{\mathsf{Earth}}$ & $6371$ km & $B$ & $2.2 \times 10^{9}$ Hz & $P_{\mathsf{RF}}$ & $10$ W & $\mathfrak{R}^{\mathsf{min}}$ & $7$ Gbps \\

$\mathfrak{B}_{n}^{\mathsf{cap}}$ & 500 \text{W} & $D_n^{(k)}$ & $1000 \pm 125$ & $\sigma^{(k)}_{n}$ & $3 \pm 2$ & $\alpha^{\mathsf{Bit}} M^{\mathsf{Dim}}$ & $6.5,\,18,\,35$ MB \\

$a_{n}$ & $1360$ & $\alpha^{\mathsf{chip}}_{n}$ & $2 \times 10^{-27}$ & $f^{\mathsf{max}}_{n}$ & $2.3 \times 10^9~\text{Hz}$ & $\beta$ & $10$ \\

$\Lambda$ & $0.9$ & $\Delta$ & $0.1$ & $\Theta$ & $3$ & $\tau^{\mathsf{GD},\mathsf{max}}$ & $1$ s \\
$\tau^{\mathsf{LT},\mathsf{max}}$ & $1$ s & $\tau^{\mathsf{LD},\mathsf{max}}$ & $1$ s & $\tau^{\mathsf{LA},\mathsf{max}}$ & $1$ s & $\tau^{\mathsf{GA},\mathsf{max}}$ & $1$ s \\
\hline
\end{tabular}
\vspace{-2mm}
\end{table*}

\newpage
\section{Illustrative Explanation of {Fed-Span} Framework}\label{App:Illust}
The idea behind {Fed-Span} is rather intuitive: instead of dealing with the constant connectivity complications with ground stations, the goal is to establish efficient satellite constellation networks that enable Fed-LS to operate seamlessly.
% This is motivated by the fact that FedL relies on the centralization of ML models in an aggregator, which implies that the underlying network topology must provide reliable routes for each satellite to reach this aggregator. However, given the global dispersion of satellites, directly connecting all of them in a pure star topology is both infeasible and energy-inefficient. 
To address this, {Fed-Span} introduces tree-based network topologies that ensure connectivity and localized model aggregations while selecting links that lead to overall energy and delay efficiency as well as ML convergence improvements. 

To help with the comprehension, we illustrate the concept behind {Fed-Span} in Fig. \ref{threefig} using the Orbcomm constellation with $N=60$ satellites. To determine a viable network based on the satellites' locations (the left plot in Fig. \ref{threefig}), it is necessary to identify which ISSLs are usable. For this purpose, we impose a minimum ISSL data rate threshold of $\mathfrak{R}^{\mathsf{min}} = 7\text{Gbps}$. Furthermore, conforming to ISSL terminal constraints, such as the limited number of terminals per satellite, shapes the feasible connectivity across the satellites.
Incorporating these constraints yields the connectivity block matrix (CBM) $\bm{\Psi}(N,t)$, specifying the subset of ISSL terminals available for network formation at time $t$ (the middle plot in Fig. \ref{threefig}). The {Fed-Span} framework then leverages this CBM and solves problem {\small$\bm{\mathcal{P}}$} to construct tree-shaped networks for local and global aggregation rounds, represented as an adjacency matrix $\bm{\Gamma}(\bm{\Psi}(N,t))$ (the right plot in Fig. \ref{threefig}). Specifically, the right plot of Fig.~\ref{threefig} illustrates an adjacency matrix for a two-cluster/VC case ($C^{(k)}=2$), where nodes and links of each cluster are distinguished by different colors, and the root nodes of each cluster are highlighted in green.

% It is evident from Fig. \ref{threefig} that {Fed-Span} is not merely a minimum spanning tree construction framework, as the selected links are determined not solely by the lowest cost. Instead, they emerge from a nuanced trade-off that accounts for a variety of factors, including topological considerations such as tree root selection and maximum leaf depth (i.e., dispatching and aggregation delay), the number of trees formed for local rounds (i.e., the number of roots, which directly influences ML convergence through $\zeta_{2}^{\mathsf{Glob}}$, as elaborated in Remark \ref{remark4}), ISSL terminal constraints, satellites’ relative trajectories (i.e., Doppler effects), ISSL energy costs and transmission delays, as well as ML-related parameters (e.g., SGD iteration count and mini-batch size). These elements collectively contribute to both ML energy consumption and delay, which are further shaped by computational factors such as CPU frequency, number of local rounds, and idle time. 

\begin{figure}[H]
\centering
\includegraphics[width=1\textwidth, trim= 22 7 22 22, clip]{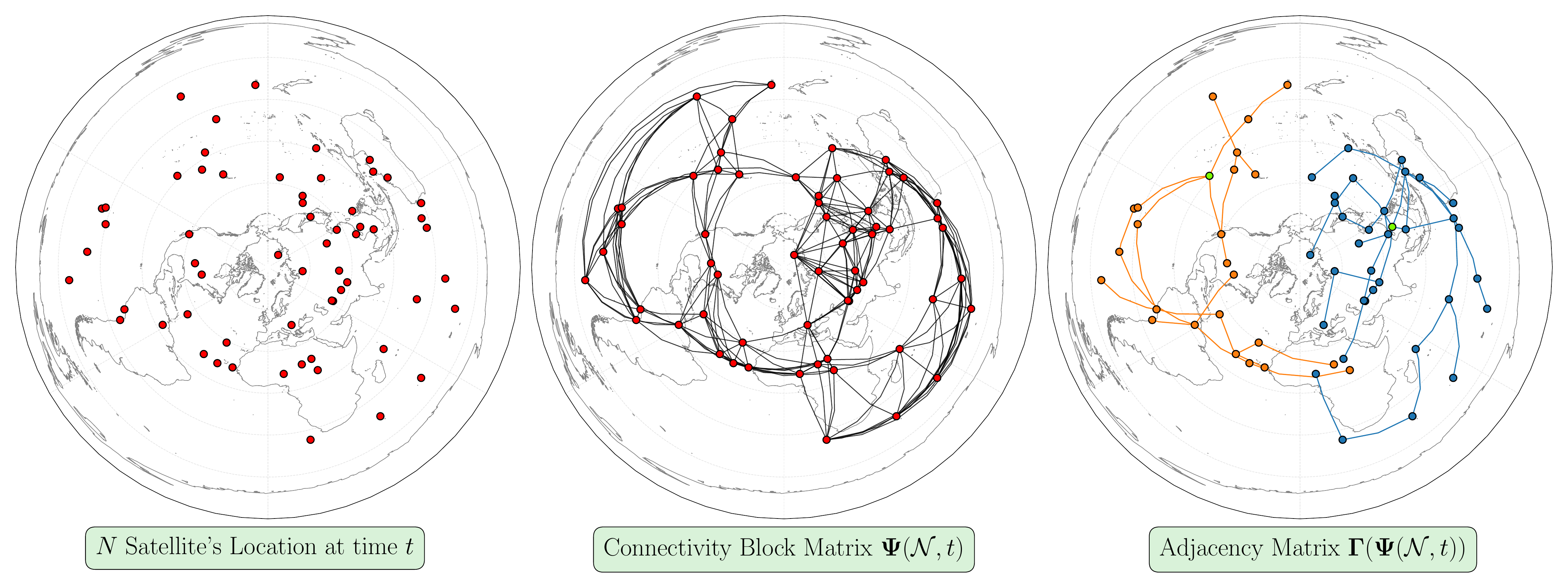}
\vspace{-1mm}
\caption{Network formation in {Fed-Span}: starting with acquiring satellite locations at any given time (the left subplot), then constructing the CBM of available links (the middle subplot), and finally determining the network topology (i.e., the established links) via solving {\small$\bm{\mathcal{P}}$}, balancing a trade-off among ML performance, training/transmission energy consumption, and training/transmission delay (the right subplot).}\label{threefig}

\vspace{-0.1mm}
\end{figure}

\newpage

\section{Proof of Theorem~\ref{th:main}}\label{app:th:main}

\begin{lemma}[\textbf{Mini-batch SGD noise characterization}]\label{Lemma:SGD}
        The variance of stochastic gradient during mini-batch gradient descent iterations can be bounded based on the result of \cite{lohr2021sampling} (Chapter 3, Eq. (3.5)) as follows:
         \begin{equation}
            \mathbb{E}_k\left[\left\Vert \sum_{d\in \mathcal{B}^{(k,\ell),e}_{n}} \frac{\nabla f(\bm{\omega}^{(k,\ell),e-1}_{n},d)}{B^{(k, \ell)}_{n}} - \nabla F_{n}^{(k)}(\bm{\omega}_{n}^{(k,\ell),e-1})\right\Vert^2\right] = \left(1- \varsigma^{(k,\ell)}_n\right)\frac{\left(\sigma^{(k)}_n\right)^2}{\varsigma^{(k,\ell)}_n D^{(k)}_n} 2 \Theta^2, \forall e,
         \end{equation}
         where $(\sigma^{(k)}_n)^2$ denotes the sampled data variance at the respective satellite.
\end{lemma}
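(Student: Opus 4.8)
The plan is to recognize the left-hand side as the mean-squared error of a finite-population sample mean and invoke the classical simple-random-sampling-without-replacement variance formula cited from \cite{lohr2021sampling}. Fix $k,\ell,n,e$ and condition on the iterate $\bm{\omega}^{(k,\ell),e-1}_n$. Write $t_d \triangleq \nabla f(\bm{\omega}^{(k,\ell),e-1}_n,d)$ for $d\in\mathcal{D}^{(k)}_n$, so the ``population'' is the collection $\{t_d\}_{d\in\mathcal{D}^{(k)}_n}$ of $N\triangleq D^{(k)}_n$ vectors with mean $\bar t = \frac{1}{N}\sum_d t_d = \nabla F^{(k)}_n(\bm{\omega}^{(k,\ell),e-1}_n)$, and the mini-batch $\mathcal{B}^{(k,\ell),e}_n$ is a uniform subset of size $n'\triangleq B^{(k,\ell)}_n=\varsigma^{(k,\ell)}_n D^{(k)}_n$ drawn without replacement; the stochastic gradient is then exactly the sample mean $\frac{1}{n'}\sum_{d\in\mathcal{B}^{(k,\ell),e}_n} t_d$. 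Applying Eq.~(3.5) of \cite{lohr2021sampling} coordinate-by-coordinate to the vector $t_d$ and summing over coordinates gives
\begin{equation*}
\mathbb{E}_k\!\left[\left\Vert \frac{1}{n'}\sum_{d\in\mathcal{B}^{(k,\ell),e}_n} t_d - \bar t\right\Vert^2\right] = \frac{N-n'}{N}\,\frac{S_t^2}{n'},\qquad S_t^2 \triangleq \frac{1}{N-1}\sum_{d\in\mathcal{D}^{(k)}_n}\left\Vert t_d-\bar t\right\Vert^2,
\end{equation*}
and substituting $n'/N=\varsigma^{(k,\ell)}_n$ and $1/n' = 1/(\varsigma^{(k,\ell)}_n D^{(k)}_n)$ already produces the prefactor $(1-\varsigma^{(k,\ell)}_n)/(\varsigma^{(k,\ell)}_n D^{(k)}_n)$. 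It then remains to bound the gradient variance $S_t^2$ by $2\Theta^2(\sigma^{(k)}_n)^2$.

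For that I would relate the variance of the gradient vectors to the variance of the raw features. By Jensen's inequality, $\left\Vert t_d-\bar t\right\Vert^2 = \left\Vert \frac{1}{N}\sum_{d'}(t_d-t_{d'})\right\Vert^2 \le \frac{1}{N}\sum_{d'}\left\Vert t_d-t_{d'}\right\Vert^2$, hence
\begin{equation*}
\sum_{d}\left\Vert t_d-\bar t\right\Vert^2 \le \frac{1}{N}\sum_{d,d'}\left\Vert t_d-t_{d'}\right\Vert^2 \le \frac{\Theta_n^2}{N}\sum_{d,d'}\left\Vert d-d'\right\Vert^2 = 2\,\Theta_n^2\sum_{d}\left\Vert d-\bar d\right\Vert^2,
\end{equation*}
where the middle step uses Definition~\ref{Assump:DataVariabilit} ($\Theta_n$-Lipschitzness of $\nabla f(\bm{\omega},\cdot)$ in the data argument) and the last equality is the elementary identity $\sum_{d,d'}\left\Vert d-d'\right\Vert^2 = 2N\sum_d\left\Vert d-\bar d\right\Vert^2$. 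Dividing by $N-1$ and identifying $\frac{1}{N-1}\sum_d\left\Vert d-\bar d\right\Vert^2$ with the sampled feature variance $(\sigma^{(k)}_n)^2$ yields $S_t^2\le 2\Theta_n^2(\sigma^{(k)}_n)^2\le 2\Theta^2(\sigma^{(k)}_n)^2$, since $\Theta=\max_n\Theta_n$. Combining the two displays gives the asserted bound, and since the derivation nowhere used the particular value of $e$, it holds for all $e$.

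The argument is essentially routine, so the only ``hard part'' is careful bookkeeping. The cited result in \cite{lohr2021sampling} is stated for a scalar study variable, so its coordinate-wise lift must be justified (linearity of expectation across coordinates and the fact that the finite-population correction $\frac{N-n'}{N}$ is coordinate-independent). The constant $2$ must also be tracked with care: it is produced precisely by replacing the exact identity $\sum_d\left\Vert t_d-\bar t\right\Vert^2 = \frac{1}{2N}\sum_{d,d'}\left\Vert t_d-t_{d'}\right\Vert^2$ with the looser Jensen bound above, and it requires the $N-1$ normalization used in $S_t^2$ to be matched by the same normalization convention in $(\sigma^{(k)}_n)^2$. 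With these conventions fixed, no genuine difficulty remains, and the statement is to be read as the upper bound this chain of inequalities delivers.
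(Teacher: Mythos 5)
Your proposal is correct and follows the route the paper itself intends: the paper gives no standalone proof of this lemma beyond the citation to the finite-population (simple random sampling without replacement) variance formula in \cite{lohr2021sampling}, and your derivation simply spells out that formula coordinate-wise and then converts the gradient dispersion $S_t^2$ into $2\Theta^2(\sigma^{(k)}_n)^2$ via Definition~\ref{Assump:DataVariabilit}, exactly as the surrounding analysis requires. Your caveat is also apt: the displayed ``$=$'' can only be justified as a ``$\leq$'' (the factor $2\Theta^2(\sigma^{(k)}_n)^2$ arises from the Lipschitz/Jensen step, not an identity, and depends on matching the $D^{(k)}_n-1$ normalization convention for $(\sigma^{(k)}_n)^2$), which is harmless since the lemma only ever enters the convergence proof inside upper-bound chains.
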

\subsection{Unfolding of {Fed-Span} Aggregations}
Recall that, based on \eqref{finalmodelaggregation}, the global aggregation of {Fed-Span}  follows the following rule at the global root node:
\begin{equation}\label{eq:proofAgg1}
    \bm{\omega}^{(k+1)}{=}\bm{\omega}^{(k)}{-}\eta_{_k}\Xi^{(k)}\overline{\nabla F}_{n}^{(k,{\ell})}\big/D^{(k)},~\ell = L^{(k)},~n{=}r^{\mathsf{A}, (k)},
\end{equation}
where
\begin{equation}\label{eq:proofAgg2}
     \overline{\nabla F}_{n}^{(k,{\ell})}{=} \frac{{D}^{(k)}_n}{e^{(k)}_{n} L^{(k)} }\widehat{\nabla F}_{n}^{(k,{\ell})}\hspace{-1.5mm}+\hspace{-1mm}\sum_{n'\in \mathcal{N}\setminus\{n\}}\hspace{-1mm} \bm{\Gamma}^{\mathsf{GA}, (k)}_{n,n'} \overline{\nabla F}_{n'}^{(k,{\ell})}\hspace{-0.5mm},\hspace{-0.79mm}~~~~~~\ell {=} L^{(k)}\hspace{-0.75mm},~n{=}r^{\mathsf{A}, (k)}.
\end{equation}
Inspecting~\eqref{eq:proofAgg1} and focusing on the term $\Xi^{(k)}= \sum_{n{\in}\mathcal{N}}\frac{{D}^{(k)}_n L^{(k)}e^{(k)}_{n}}{{D}^{(k)}}$, we have:
\begin{equation}
\Xi^{(k)}=\sum_{c\in \mathcal{C}^{(k)}}\sum_{n\in \mathcal{N}_c^{(k)}}\frac{{D}^{(k)}_n L^{(k)}e^{(k)}_{c}}{{D}^{(k)}}=\sum_{c\in\mathcal{C}^{(k)}}e^{(k)}_{c}\sum_{n\in \mathcal{N}_c^{(k)}}\frac{{D}^{(k)}_n L^{(k)}}{{D}^{(k)}}=\sum_{c\in\mathcal{C}^{(k)}}\frac{{D}^{(k)}_c L^{(k)} e^{(k)}_{c}}{{D}^{(k)}},
\end{equation}
where $e^{(k)}_c$ is the number of SGD iterations uniformly performed at the satellites in VC $c$. Thus,~\eqref{eq:proofAgg1} can be written as follows:
\begin{equation}\label{eq:proofAgg3}
    \bm{\omega}^{(k+1)}{=}\bm{\omega}^{(k)}{-}\eta_{_k} \sum_{c\in\mathcal{C}^{(k)}}\frac{{D}^{(k)}_c L^{(k)} e^{(k)}_{c}}{{D}^{(k)}}\overline{\nabla F}_{n}^{(k,{\ell})}\big/D^{(k)},~\ell = L^{(k)},~n{=}r^{\mathsf{A}, (k)}.
\end{equation}
Next, we focus on the term $\overline{\nabla F}_{n}^{(k,{\ell})}$ in~\eqref{eq:proofAgg3}, using~\eqref{eq:proofAgg2}, and the fact that the tree formed via $\bm{\Gamma}^{\mathsf{GA}, (k)}$ covers all the satellites and satellites have a unique cycle-free path to the root node (i.e., the model of each satellite is only forwarded through one path to the root node), we have:
\begin{align}
     &\overline{\nabla F}_{n}^{(k,{\ell})}{=} \frac{{D}^{(k)}_n}{e^{(k)}_{n} L^{(k)} }\widehat{\nabla F}_{n}^{(k,{\ell})}\hspace{-1.5mm}+\hspace{-1mm}\sum_{n'\in \mathcal{N}\setminus\{n\}}\hspace{-1mm} \bm{\Gamma}^{\mathsf{GA}, (k)}_{n,n'} \overline{\nabla F}_{n'}^{(k,{\ell})}\hspace{-0.5mm},\hspace{-0.79mm}~\ell {=} L^{(k)},~n{=}r^{\mathsf{A}, (k)}
     \\
    & \Rightarrow  
 \overline{\nabla F}_{n}^{(k,{\ell})}{=} \frac{{D}^{(k)}_n}{ e^{(k)}_{n} L^{(k)} }\widehat{\nabla F}_{n}^{(k,{\ell})}\hspace{-1.5mm}+\hspace{-1mm}\sum_{n'\in \mathcal{N}\setminus\{n\}}\hspace{-1mm} \frac{{D}^{(k)}_{n'}}{ e^{(k)}_{n'} L^{(k)} }\widehat{\nabla F}_{n'}^{(k,{\ell})}\hspace{-0.5mm},\hspace{-0.79mm}~\ell {=} L^{(k)}, ~n{=}r^{\mathsf{A}, (k)}\\
 &\Rightarrow \overline{\nabla F}_{n}^{(k,{\ell})}{=}\sum_{n'\in \mathcal{N}} \frac{{D}^{(k)}_{n'}}{e^{(k)}_{n'} L^{(k)} }\widehat{\nabla F}_{n'}^{(k,{\ell})}\hspace{-0.5mm},\hspace{-0.79mm}~\ell {=} L^{(k)}, ~n{=}r^{\mathsf{A}, (k)}.
\end{align}
Replacing the above result in~\eqref{eq:proofAgg3}, we have:
\begin{equation*}\label{eq:proofAgg51}
    \bm{\omega}^{(k+1)}{=}\bm{\omega}^{(k)}{-}\eta_{_k} \sum_{c'\in\mathcal{C}^{(k)}}\frac{{D}^{(k)}_{c'} L^{(k)} e^{(k)}_{c'}}{{D}^{(k)}}\sum_{n\in \mathcal{N}}\hspace{-1mm} \frac{{D}^{(k)}_{n}}{{D}^{(k)} e^{(k)}_{n} L^{(k)} } \widehat{\nabla F}_{n}^{(k,{\ell})},~~\ell = L^{(k)}
\end{equation*}
\begin{equation*}\label{eq:proofAgg52}
   \Rightarrow \bm{\omega}^{(k+1)}{=}\bm{\omega}^{(k)}{-}\eta_{_k} \sum_{c'\in\mathcal{C}^{(k)}}\frac{{D}^{(k)}_{c'} L^{(k)} e^{(k)}_{c'}}{{D}^{(k)}}\sum_{n\in \mathcal{N}}\hspace{-1mm} \frac{{D}^{(k)}_{n}}{{D}^{(k)} e^{(k)}_{n} L^{(k)} } \frac{{D}^{(k)}_{c}}{{D}^{(k)}_{c}}\widehat{\nabla F}_{n}^{(k,{\ell})} ,~~\ell = L^{(k)}
\end{equation*}
\begin{equation*}\label{eq:proofAgg53}
   \Rightarrow \bm{\omega}^{(k+1)}{=}\bm{\omega}^{(k)}{-}\eta_{_k} \sum_{c'\in\mathcal{C}^{(k)}}\frac{{D}^{(k)}_{c'} L^{(k)} e^{(k)}_{c'}}{{D}^{(k)}}\sum_{n\in \mathcal{N}}\hspace{-1mm} \frac{{D}^{(k)}_{c}}{{D}^{(k)} e^{(k)}_{n} L^{(k)} } \frac{{D}^{(k)}_{n}}{{D}^{(k)}_{c}}\widehat{\nabla F}_{n}^{(k,{\ell})} ,~~\ell = L^{(k)}
\end{equation*}
\begin{equation*}\label{eq:proofAgg54}
  \Rightarrow  \bm{\omega}^{(k+1)}{=}\bm{\omega}^{(k)}{-}\eta_{_k} \sum_{c'\in\mathcal{C}^{(k)}}\frac{{D}^{(k)}_{c'} L^{(k)} e^{(k)}_{c'}}{{D}^{(k)}}\sum_{c\in \mathcal{C}^{(k)}}\sum_{n\in \mathcal{N}^{(k)}_c}\hspace{-1mm} \frac{{D}^{(k)}_{c}}{{D}^{(k)} e^{(k)}_{c} L^{(k)} } \frac{{D}^{(k)}_{n}}{{D}^{(k)}_{c}}\widehat{\nabla F}_{n}^{(k,{\ell})} ,~~\ell = L^{(k)}
\end{equation*}
\begin{equation*}\label{eq:proofAgg555}
  \Rightarrow  \bm{\omega}^{(k+1)}{=}\bm{\omega}^{(k)}{-}\eta_{_k} \sum_{c'\in\mathcal{C}^{(k)}}\frac{{D}^{(k)}_{c'} L^{(k)} e^{(k)}_{c'}}{{D}^{(k)}}\sum_{c\in \mathcal{C}^{(k)}} \frac{{D}^{(k)}_{c}}{{D}^{(k)} e^{(k)}_{c} L^{(k)} }\sum_{n\in \mathcal{N}^{(k)}_c} \frac{{D}^{(k)}_{n}}{{D}^{(k)}_{c}}\widehat{\nabla F}_{n}^{(k,{\ell})} ,~~\ell = L^{(k)}
\end{equation*}
\begin{equation*}\label{eq:proofAgg55}
  \xRightarrow{(a)}  \bm{\omega}^{(k+1)}{=}\bm{\omega}^{(k)}{-}\eta_{_k} \sum_{c'\in\mathcal{C}^{(k)}}\frac{{D}^{(k)}_{c'} L^{(k)} e^{(k)}_{c'}}{{D}^{(k)}}\sum_{c\in \mathcal{C}^{(k)}} \frac{{D}^{(k)}_{c}}{{D}^{(k)} e^{(k)}_{c} L^{(k)} }\sum_{n\in \mathcal{N}^{(k)}_c} \frac{{D}^{(k)}_{n}}{{D}^{(k)}_{c}}\left(\left[\sum_{\ell=1}^{L^{(k)}-1} \sum_{n' \in \mathcal{N}^{(k)}_c}\frac{D_{n'}^{(k)}}{D_c^{(k)}}\widetilde{\nabla F}_{n'}^{(k,{\ell})} \right] + \widetilde{\nabla F}_{n}^{(k,{L^{(k)}})}  \right)
\end{equation*}
\begin{equation*}\label{eq:proofAgg566}
  \Rightarrow  \bm{\omega}^{(k+1)}{=}\bm{\omega}^{(k)}{-}\eta_{_k} \sum_{c'\in\mathcal{C}^{(k)}}\frac{{D}^{(k)}_{c'} L^{(k)} e^{(k)}_{c'}}{{D}^{(k)}}\sum_{c\in \mathcal{C}^{(k)}} \frac{{D}^{(k)}_{c}}{{D}^{(k)} e^{(k)}_{c} L^{(k)} }\left(\left[\sum_{\ell=1}^{L^{(k)}-1} \sum_{n' \in \mathcal{N}^{(k)}_c}\frac{D_{n'}^{(k)}}{D_c^{(k)}}\widetilde{\nabla F}_{n'}^{(k,{\ell})} \right]+\sum_{n\in \mathcal{N}^{(k)}_c} \frac{{D}^{(k)}_{n}}{{D}^{(k)}_{c}} \widetilde{\nabla F}_{n}^{(k,{L^{(k)}})}  \right)
\end{equation*}
\begin{equation}\label{eq:proofAgg56}
 \Rightarrow   \bm{\omega}^{(k+1)}{=}\bm{\omega}^{(k)}{-}\eta_{_k} \sum_{c'\in\mathcal{C}^{(k)}}\frac{{D}^{(k)}_{c'} L^{(k)} e^{(k)}_{c'}}{{D}^{(k)}}\sum_{c\in \mathcal{C}^{(k)}} \frac{{D}^{(k)}_{c}}{{D}^{(k)} e^{(k)}_{c} L^{(k)} }\sum_{\ell=1}^{L^{(k)}} \sum_{n \in \mathcal{N}^{(k)}_c}\frac{D_{n}^{(k)}}{D_c^{(k)}}\widetilde{\nabla F}_{n}^{(k,{\ell})},
\end{equation}
where in $\xRightarrow{(a)}$ in the above derivations, we have used the fact that:
\begin{align}
    & \widehat{\nabla F}_{n}^{(k,{\ell})},~~\ell = L^{(k)}~~~ \equiv~~~  \widehat{\nabla F}_{n}^{(k,L^{(k)})}
   = ({\bm{\omega}}^{(k)}{-}\bm{\omega}_{n}^{(k,L^{(k)}),e^{(k)}_{n}})\big/\eta_{_k}
\\ & = ({\bm{\omega}}^{(k)}{-} \overline{\bm{\omega}}^{(k,1)}_c + \overline{\bm{\omega}}^{(k,1)}_c - \overline{\bm{\omega}}^{(k,2)}_c+ \overline{\bm{\omega}}^{(k,2)}_c- \cdots -\overline{\bm{\omega}}^{(k,L^{(k)}-1)}_c + \overline{\bm{\omega}}^{(k,L^{(k)}-1)}_c -\bm{\omega}_{n}^{(k,L^{(k)}-1),e^{(k)}_{n}})\big/\eta_{_k}
\\ & \overset{(a)}{=} \left[\sum_{\ell=1}^{L^{(k)}-1} \frac{\overline{\nabla F}_{n'}^{(k,{\ell})}}{D_c^{(k)}} \right] + \widetilde{\nabla F}_{n}^{(k,{L^{(k)}})},~~ n'=r_{c}^{\mathsf{L}, (k)}
\\ & \overset{(b)}{=}  \left[\sum_{\ell=1}^{L^{(k)}-1} \sum_{n' \in \mathcal{N}^{(k)}_c}\frac{D_{n'}^{(k)}}{D_c^{(k)}}\widetilde{\nabla F}_{n'}^{(k,{\ell})} \right] + \widetilde{\nabla F}_{n}^{(k,{L^{(k)}})},
% =\sum_{\ell=1}^{L^{(k)}} \frac{{D}^{(k)}_{n}}{{D}^{(k)}_{c}  }\widetilde{\nabla F}_{n}^{(k,{\ell})}
\end{align}
where in $\overset{(a)}{=}$ in the derivations above, we have used the fact that based on~\eqref{localagggregationformula}, we have (note that $\overline{\bm{\omega}}^{(k,0)}_{c}=\overline{\bm{\omega}}^{(k)}$):
\begin{align}
& {\overline{\nabla F}_{n}^{(k,\ell)}}\big/{D^{(k)}_c} = (\overline{\bm{\omega}}^{(k,\ell)}_{c} - \overline{\bm{\omega}}^{(k,\ell+1)}_{c})\big/\eta_{_k} ,~n=r_{c}^{\mathsf{L}, (k)}.
\end{align}
Also, $\overset{(b)}{=}$ in the derivations above is  obtained based on~\eqref{clusteraggregation} and the fact that the tree formed via $\bm{\Gamma}^{\mathsf{LA}, (k,\ell)}$ covers all the satellites in each VC and satellites have a unique cycle-free path to the root node (i.e., the model of each satellite is only forwarded through one path to the root node), where we have: 
\begin{align}
     &\overline{\nabla F}_{n}^{(k,\ell)} = {{D}^{(k)}_n}\widetilde{\nabla F}_{n}^{(k,\ell)}+ \sum_{n'\in \mathcal{N}\setminus\{n\}}\hspace{-3mm} \bm{\Gamma}^{\mathsf{LA}, (k,\ell)}_{n,n'} \overline{\nabla F}_{n'}^{(k,\ell)}, \forall \ell \in L^{(k)}-1,~n{=}r^{\mathsf{L}, (k)}
     \\& \Rightarrow \overline{\nabla F}_{n}^{(k,\ell)} = {{D}^{(k)}_n}\widetilde{\nabla F}_{n}^{(k,\ell)} + \sum_{n' \in \mathcal{N}^{(k)}_c} {{D}^{(k)}_{n'}}\widetilde{\nabla F}_{n'}^{(k,\ell)}, \forall \ell \in L^{(k)}-1,~n{=}r^{\mathsf{L}, (k)}
    \\&\Rightarrow \overline{\nabla F}_{n}^{(k,\ell)} = \sum_{n' \in \mathcal{N}^{(k)}_c} {{D}^{(k)}_{n'}}\widetilde{\nabla F}_{n'}^{(k,\ell)}, \forall \ell \in L^{(k)}-1,~n{=}r^{\mathsf{L}, (k)}.
\end{align}

Inspecting the result in~\eqref{eq:proofAgg56}, we have the following fundamental relationship between the two global models obtained in {Fed-Span}, which we will use in our subsequent proofs:
\begin{equation}\label{eq:proofAgg60}
 \bm{\omega}^{(k+1)}-\bm{\omega}^{(k)}{=}{-}\eta_{_k} \sum_{c'\in\mathcal{C}^{(k)}}\frac{{D}^{(k)}_{c'} L^{(k)} e^{(k)}_{c'}}{{D}^{(k)}}\sum_{c\in \mathcal{C}^{(k)}} \frac{{D}^{(k)}_{c}}{{D}^{(k)} e^{(k)}_{c} L^{(k)} }\sum_{\ell=1}^{L^{(k)}} \sum_{n \in \mathcal{N}^{(k)}_c}\frac{D_{n}^{(k)}}{D_c^{(k)}}\widetilde{\nabla F}_{n}^{(k,{\ell})}.
\end{equation}

\subsection{Main Convergence Analysis}
From the $\beta$-smoothness of the global loss function (Assumption~\ref{Assup:lossFun}), we have:
\begin{equation}
F^{(k)}(\bm{\omega}^{(k+1)}) \leq F^{(k)}(\bm{\omega}^{(k)}) + \nabla F^{(k)}(\bm{\omega}^{(k)})^\top \left(\bm{\omega}^{(k+1)} - \bm{\omega}^{(k)}\right) + \frac{\beta}{2}\left\Vert \bm{\omega}^{(k+1)} - \bm{\omega}^{(k)} \right\Vert^2.
\end{equation}
Replacing the updating rule for $\bm{\omega}^{(k+1)}$ and taking the conditional expectation from both hand sides yields to:
\begin{equation}\label{betaupdaterule}
\begin{aligned}
 &\mathbb{E}_k \left[F^{(k)}(\bm{\omega}^{(k+1)})\right] \leq F^{(k)}(\bm{\omega}^{(k)}) - \nabla F^{(k)}(\bm{\omega}^{(k)})^\top \mathbb{E}_k\Bigg[ \eta_{_k} \sum_{c'\in \mathcal{C}^{(k)}}\frac{{D}^{(k)}_{c'}L^{(k)}e^{(k)}_{c'}}{D^{(k)}} \sum_{c\in \mathcal{C}^{(k)}}\frac{{D}^{(k)}_{c}}{D^{(k)}L^{(k)}e^{(k)}_{c}}  \sum_{\ell=1}^{L^{(k)}} \sum_{n\in \mathcal{N}^{(k)}_c}\frac{{D}^{(k)}_{n}}{D^{(k)}_c} \widetilde{\nabla F}_{n}^{(k, \ell)}\Bigg] \\&  + \frac{\beta}{2} \mathbb{E}_k\Bigg[\Bigg\Vert \eta_{_k} \sum_{c'\in \mathcal{C}^{(k)}}\frac{{D}^{(k)}_{c'}L^{(k)}e^{(k)}_{c'}}{D^{(k)}} \sum_{c\in \mathcal{C}^{(k)}}\frac{{D}^{(k)}_{c}}{D^{(k)}L^{(k)}e^{(k)}_{c}}  \sum_{\ell=1}^{L^{(k)}} \sum_{n\in \mathcal{N}^{(k)}_c}\frac{{D}^{(k)}_{n}}{D^{(k)}_c} \widetilde{\nabla F}_{n}^{(k, \ell)}\Bigg\Vert^2\Bigg].
    \end{aligned}
\end{equation}
Since $\widetilde{\nabla F}_{n}^{(k, \ell)} = -\left( \bm{\omega}_{n}^{(k,\ell-1), e_{c}^{(k)}} - \overline{\bm{\omega}}_{c}^{(k,\ell-1)}\right)\Big/ {\eta_{_k}}$, we have:
\begin{equation}\label{collectiveSGD}
    \widetilde{\nabla F}_{n}^{(k, \ell)} = \frac{1}{B^{(k, \ell)}_{n}} \sum_{e=1}^{e_{c}^{(k)}} \sum_{d\in \mathcal{B}^{(k,\ell),e}_{n}}\nabla f(\bm{\omega}^{(k, \ell-1),e-1}_{n},d).
\end{equation}
Since the mini-batch size is fixed during local SGD iterations at each G-round and SGD is unbiased, taking the expectation from both hand sides of \eqref{collectiveSGD} leads to:
\begin{equation}\label{expcollectiveSGD}
   \mathbb{E}_k \left[\widetilde{\nabla F}_{n}^{(k, \ell)}\right] = \sum_{e=1}^{e_{c}^{(k)}} \nabla F_{n}^{(k, \ell)}(\bm{\omega}^{(k, \ell-1),e-1}_{n}).
\end{equation}
Replacing \eqref{expcollectiveSGD} in \eqref{betaupdaterule}, we get:
\begin{equation}
\begin{aligned}
 &\mathbb{E}_k \left[F^{(k)}(\bm{\omega}^{(k+1)})\right] \leq F^{(k)}(\bm{\omega}^{(k)})
 \\& - \nabla F^{(k)}(\bm{\omega}^{(k)})^\top \mathbb{E}_k\Bigg[ \eta_{_k} \sum_{c'\in \mathcal{C}^{(k)}}\frac{{D}^{(k)}_{c'}L^{(k)}e^{(k)}_{c'}}{D^{(k)}} \sum_{c\in \mathcal{C}^{(k)}}\frac{{D}^{(k)}_{c}}{D^{(k)}L^{(k)}e^{(k)}_{c}}   \sum_{n\in \mathcal{N}^{(k)}_c} \frac{{D}^{(k)}_{n}}{D^{(k)}_c} \sum_{\ell=1}^{L^{(k)}}\sum_{e=1}^{e_{c}^{(k)}} \nabla F_{n}^{(k, \ell)}(\bm{\omega}^{(k, \ell-1),e-1}_{n}) \Bigg] \\&  + \frac{\beta}{2} \mathbb{E}_k\Bigg[\Bigg\Vert \eta_{_k} \sum_{c'\in \mathcal{C}^{(k)}}\frac{{D}^{(k)}_{c'}L^{(k)}e^{(k)}_{c'}}{D^{(k)}} \sum_{c\in \mathcal{C}^{(k)}}\frac{{D}^{(k)}_{c}}{D^{(k)}L^{(k)}e^{(k)}_{c}}  \sum_{\ell=1}^{L^{(k)}}  \sum_{n\in \mathcal{N}^{(k)}_c} \frac{{D}^{(k)}_{n}}{D^{(k)}_c} \widetilde{\nabla F}_{n}^{(k, \ell)}\Bigg\Vert^2\Bigg].
    \end{aligned}
\end{equation}
By using the rule $2\bm{a}^\top\bm{b}=\Vert \bm{a}\Vert^2+\Vert \bm{b}\Vert^2-\Vert \bm{a}-\bm{b}\Vert^2$ for any two real valued vectors $\bm{a}$ and $\bm{b}$ with the same length, we get:
\begin{equation}\label{aftervector}
\begin{aligned}
 &\mathbb{E}_k \left[F^{(k)}(\bm{\omega}^{(k+1)})\right] \leq F^{(k)}(\bm{\omega}^{(k)}) - \frac{\eta_k}{2} \hspace{-2mm}\sum_{c'\in \mathcal{C}^{(k)}}\hspace{-2mm}\frac{{D}^{(k)}_{c'}L^{(k)}e^{(k)}_{c'}}{D^{(k)}} \mathbb{E}_k\Bigg[ \left \Vert \nabla F^{(k)}(\bm{\omega}^{(k)})\right\Vert^2
 \\& + \left \Vert \sum_{c\in \mathcal{C}^{(k)}}\frac{{D}^{(k)}_{c}}{D^{(k)}L^{(k)}e^{(k)}_{c}}  \sum_{\ell=1}^{L^{(k)}} \hspace{-1mm} \sum_{n\in \mathcal{N}^{(k)}_c} \frac{{D}^{(k)}_{n}}{D^{(k)}_c} \sum_{e=1}^{e_{c}^{(k)}}  \nabla F_{n}^{(k, \ell)}(\bm{\omega}^{(k, \ell-1),e-1}_{n}) \right \Vert^2 
  \\& - \left \Vert \nabla F^{(k)}(\bm{\omega}^{(k)}) - \sum_{c\in \mathcal{C}^{(k)}}\frac{{D}^{(k)}_{c}}{D^{(k)}L^{(k)}e^{(k)}_{c}}   \sum_{n\in \mathcal{N}^{(k)}_c} \frac{{D}^{(k)}_{n}}{D^{(k)}_c} \sum_{\ell=1}^{L^{(k)}} \sum_{e=1}^{e_{c}^{(k)}}  \nabla F_{n}^{(k, \ell)}(\bm{\omega}^{(k, \ell-1),e-1}_{n}) \right \Vert^2 \Bigg]
 \\& + \frac{\beta}{2} \mathbb{E}_k\Bigg[\Bigg\Vert \eta_{_k} \sum_{c'\in \mathcal{C}^{(k)}}\frac{{D}^{(k)}_{c'}L^{(k)}e^{(k)}_{c'}}{D^{(k)}} \sum_{c\in \mathcal{C}^{(k)}}\frac{{D}^{(k)}_{c}}{D^{(k)}L^{(k)}e^{(k)}_{c}}  \sum_{\ell=1}^{L^{(k)}} \sum_{n\in \mathcal{N}^{(k)}_c} \frac{{D}^{(k)}_{n}}{D^{(k)}_c} \widetilde{\nabla F}_{n}^{(k, \ell)}\Bigg\Vert^2\Bigg].
    \end{aligned}
\end{equation}
We bound the last term on the right hand side of the above inequality as follows:
\begin{equation}
\hspace{-2.5mm}
\resizebox{.99\linewidth}{!}{$
    \begin{aligned}
    & \frac{\eta_{k}^2 \beta }{2} \mathbb{E}_k\left[\left\Vert \sum_{c'\in \mathcal{C}^{(k)}}\frac{{D}^{(k)}_{c'}L^{(k)}e^{(k)}_{c'}}{D^{(k)}} \sum_{c\in \mathcal{C}^{(k)}}\frac{{D}^{(k)}_{c}}{D^{(k)}L^{(k)}e^{(k)}_{c}}  \sum_{\ell=1}^{L^{(k)}}  \sum_{n\in \mathcal{N}^{(k)}_c} \frac{{D}^{(k)}_{n}}{D^{(k)}_c} \widetilde{\nabla F}_{n}^{(k, \ell)}\right\Vert^2\right]
    \\& =\frac{ \eta_{k}^2 \beta }{2} \mathbb{E}_k\left[\left\Vert \sum_{c'\in \mathcal{C}^{(k)}}\frac{{D}^{(k)}_{c'}L^{(k)}e^{(k)}_{c'}}{D^{(k)}} \sum_{c\in \mathcal{C}^{(k)}}\frac{{D}^{(k)}_{c}}{D^{(k)}L^{(k)}e^{(k)}_{c}}   \sum_{n\in \mathcal{N}^{(k)}_c} \frac{{D}^{(k)}_{n}}{D^{(k)}_c} \sum_{\ell=1}^{L^{(k)}}\frac{1}{B^{(k, \ell)}_{n}} \sum_{e=1}^{e_{c}^{(k)}} \sum_{d\in \mathcal{B}^{(k,\ell),e}_{n}}\nabla f(\bm{\omega}^{(k, \ell-1),e-1}_{n},d)\right\Vert^2\right]
    \\& =\frac{\eta_{k}^2 \beta }{2} \mathbb{E}_k\Bigg[\Bigg\Vert \sum_{c'\in \mathcal{C}^{(k)}}\frac{{D}^{(k)}_{c'}L^{(k)}e^{(k)}_{c'}}{D^{(k)}} \sum_{c\in \mathcal{C}^{(k)}}\frac{{D}^{(k)}_{c}}{D^{(k)}L^{(k)}e^{(k)}_{c}}   \sum_{n\in \mathcal{N}^{(k)}_c} \frac{{D}^{(k)}_{n}}{D^{(k)}_c} \sum_{\ell=1}^{L^{(k)}}\frac{1}{B^{(k, \ell)}_{n}} \sum_{e=1}^{e_{c}^{(k)}} \sum_{d\in \mathcal{B}^{(k,\ell),e}_{n}}\nabla f(\bm{\omega}^{(k, \ell-1),e-1}_{n},d)
    \\& + \sum_{c'\in \mathcal{C}^{(k)}}\frac{{D}^{(k)}_{c'}L^{(k)}e^{(k)}_{c'}}{D^{(k)}} \sum_{c\in \mathcal{C}^{(k)}}\frac{{D}^{(k)}_{c}}{D^{(k)}L^{(k)}e^{(k)}_{c}}   \sum_{n\in \mathcal{N}^{(k)}_c} \frac{{D}^{(k)}_{n}}{D^{(k)}_c} \sum_{\ell=1}^{L^{(k)}}\sum_{e=1}^{e_{c}^{(k)}} \left(\nabla F_{n}^{(k, \ell)}(\bm{\omega}^{(k, \ell-1),e-1}_{n}) - \nabla F_{n}^{(k, \ell)}(\bm{\omega}^{(k, \ell-1),e-1}_{n}) \right)\Bigg\Vert^2\Bigg]
    \\& \overset{(i)}{\leq} \eta_k^2 \beta \hspace{-2mm}\sum_{c'\in \mathcal{C}^{(k)}} \hspace{-1mm} \left(\frac{{D}^{(k)}_{c'}L^{(k)}e^{(k)}_{c'}}{D^{(k)}}\right)^2 \hspace{-1mm}\underbrace{ \mathbb{E}_k\left[\left\Vert \sum_{c\in \mathcal{C}^{(k)}}\hspace{-1mm}\frac{{D}^{(k)}_{c}}{D^{(k)}L^{(k)}e^{(k)}_{c}}\sum_{n\in \mathcal{N}^{(k)}_c} \hspace{-1mm}\frac{{D}^{(k)}_{n}}{D^{(k)}_c} \sum_{\ell=1}^{L^{(k)}} \sum_{e=1}^{e^{(k)}_{c}} \left( \sum_{d\in \mathcal{B}^{(k,\ell),e}_{n}} \hspace{-3mm} \frac{\nabla f(\bm{\omega}^{(k, \ell-1),e-1}_{n},d)}{B^{(k,\ell)}_{n}} - \nabla F_{n}^{(k, \ell)}(\bm{\omega}^{(k, \ell-1),e-1}_{n})\right)\right\Vert^2\right]}_{(a)}
    \\& + \eta_k^2 \beta \sum_{c'\in \mathcal{C}^{(k)}} \hspace{-1mm} \left(\frac{{D}^{(k)}_{c'}L^{(k)}e^{(k)}_{c'}}{D^{(k)}}\right)^2 \mathbb{E}_k\left[\left\Vert  \sum_{c\in \mathcal{C}^{(k)}}\hspace{-1mm}\frac{{D}^{(k)}_{c}}{D^{(k)}L^{(k)}e^{(k)}_{c}}\sum_{n\in \mathcal{N}^{(k)}_c} \hspace{-1mm}\frac{{D}^{(k)}_{n}}{D^{(k)}_c} \sum_{\ell=1}^{L^{(k)}} \sum_{e=1}^{e^{(k)}_{c}}\nabla F_{n}^{(k, \ell)}(\bm{\omega}^{(k, \ell-1),e-1}_{n})\right\Vert^2\right],
    \end{aligned}
    $}
    \hspace{-5.5mm}
\end{equation}
where in inequality $(i)$ we have used the Cauchy-Schwarz inequality $\Vert \mathbf{a}+\mathbf{b} \Vert^2\leq 2 \Vert \mathbf{a} \Vert^2+2\Vert \mathbf{b} \Vert^2$. We bound term $(a)$ by using the fact that each local gradient estimation is unbiased, together with the assumption that the noise of gradient estimation is independent across the nodes. Therefore, we get:
\begin{equation}
    \begin{aligned}
     & (a) \hspace{-1mm} \overset{(i)} = \sum_{c\in \mathcal{C}^{(k)}} \hspace{-2mm} \left(\frac{{D}^{(k)}_{c}}{D^{(k)}L^{(k)}e^{(k)}_{c}} \right)^2 \hspace{-3mm} \sum_{n\in \mathcal{N}^{(k)}_c} \hspace{-2mm} \left(\frac{{D}^{(k)}_{n}}{D^{(k)}_c} \right)^2 \hspace{-1mm} \mathbb{E}_k\hspace{-1mm}\left[\left\Vert \sum_{\ell=1}^{L^{(k)}} \sum_{e=1}^{e^{(k)}_{c}} \hspace{-0.5mm} \underbrace{\left( \sum_{d\in \mathcal{B}^{(k,\ell),e}_{n}} \hspace{-3.5mm} \frac{\nabla f(\bm{\omega}^{(k, \ell-1),e-1}_{n},d)}{B^{(k,\ell)}_{n}} \hspace{-0.5mm} - \hspace{-0.5mm}\nabla F_{n}^{(k, \ell)}(\bm{\omega}^{(k, \ell-1),e-1}_{n})\right)}_{b}\right\Vert^2\right]
     \\&  \overset{(ii)} {=} \sum_{c\in \mathcal{C}^{(k)}} \hspace{-2mm} \left(\frac{{D}^{(k)}_{c}}{D^{(k)}L^{(k)}e^{(k)}_{c}} \right)^2 \hspace{-3mm} \sum_{n\in \mathcal{N}^{(k)}_c} \hspace{-2mm} \left(\frac{{D}^{(k)}_{n}}{D^{(k)}_c} \right)^2 \sum_{\ell=1}^{L^{(k)}} \sum_{e=1}^{e^{(k)}_{c}} \mathbb{E}_k\left[\left\Vert \sum_{d\in \mathcal{B}^{(k,\ell),e}_{n}} \hspace{-3mm} \frac{\nabla f(\bm{\omega}^{(k, \ell-1),e-1}_{n},d)}{B^{(k,\ell)}_{n}} \hspace{-0.5mm} - \hspace{-0.5mm} \nabla F_{n}^{(k, \ell)}(\bm{\omega}^{(k, \ell-1),e-1}_{n})\right\Vert^2\right]
     \\& \overset{(iii)}  = \sum_{c\in \mathcal{C}^{(k)}} \hspace{-1mm} \left(\frac{{D}^{(k)}_{c}}{D^{(k)}L^{(k)}e^{(k)}_{c}} \right)^2 \sum_{\ell=1}^{L^{(k)}} \sum_{n\in \mathcal{N}^{(k)}_c} \hspace{-1mm} \left(\frac{{D}^{(k)}_{n}}{D^{(k)}_c} \right)^2 \sum_{e=1}^{e^{(k)}_{c}} 2 \left(1- \varsigma^{(k,\ell)}_n\right)\frac{\left(\sigma^{(k)}_n\right)^2}{\varsigma^{(k,\ell)}_n D^{(k)}_n} \Theta^2,
    \end{aligned}
\end{equation}
where to obtain $(i)$, we expanded term $(a)$ and used the fact that the noise of gradient estimation across the mini-batches is independent and zero mean, and to obtain $(ii)$, we used the fact that each individual term in term $(b)$ is zero mean. To further bound this term, Lemma \ref{Lemma:SGD} (Mini-batch SGD noise characterization) is exploited to obtain $(iii)$. Replacing the above result back in \eqref{aftervector} results in:
\begin{equation}
\begin{aligned}
 & \mathbb{E}_k \left[F^{(k)}(\bm{\omega}^{(k+1)})\right] \leq F^{(k)}(\bm{\omega}^{(k)}) - \frac{\eta_k}{2} \hspace{-2mm}\sum_{c'\in \mathcal{C}^{(k)}}\hspace{-2mm}\frac{{D}^{(k)}_{c'}L^{(k)}e^{(k)}_{c'}}{D^{(k)}} \mathbb{E}_k\Bigg[ \left \Vert\nabla F^{(k)}(\bm{\omega}^{(k)})\right\Vert^2 
 \\& + \left \Vert \sum_{c\in \mathcal{C}^{(k)}}\frac{{D}^{(k)}_{c}}{D^{(k)}L^{(k)}e^{(k)}_{c}}  \sum_{\ell=1}^{L^{(k)}} \hspace{-1mm} \sum_{n\in \mathcal{N}^{(k)}_c} \frac{{D}^{(k)}_{n}}{D^{(k)}_c} \sum_{e=1}^{e_{c}^{(k)}}  \nabla F_{n}^{(k, \ell)}(\bm{\omega}^{(k, \ell-1),e-1}_{n}) \right \Vert^2 
 \\& - \left \Vert \nabla F^{(k)}(\bm{\omega}^{(k)}) - \sum_{c\in \mathcal{C}^{(k)}}\frac{{D}^{(k)}_{c}}{D^{(k)}L^{(k)}e^{(k)}_{c}}   \sum_{n\in \mathcal{N}^{(k)}_c} \frac{{D}^{(k)}_{n}}{D^{(k)}_c} \sum_{\ell=1}^{L^{(k)}} \sum_{e=1}^{e_{c}^{(k)}}  \nabla F_{n}^{(k, \ell)}(\bm{\omega}^{(k, \ell-1),e-1}_{n}) \right \Vert^2 \Bigg]
 \\& + 2 \eta_k^2 \beta \Theta^2 \sum_{c'\in \mathcal{C}^{(k)}} \left(\frac{{D}^{(k)}_{c'}L^{(k)}e^{(k)}_{c'}}{D^{(k)}}\right)^2 \sum_{c\in \mathcal{C}^{(k)}} \left(\frac{{D}^{(k)}_{c}}{D^{(k)}L^{(k)}e^{(k)}_{c}} \right)^2 \sum_{\ell=1}^{L^{(k)}} \sum_{n\in \mathcal{N}^{(k)}_c} \left(\frac{{D}^{(k)}_{n}}{D^{(k)}_c} \right)^2 \sum_{e=1}^{e^{(k)}_{c}} \left(1- \varsigma^{(k,\ell)}_n\right)\frac{\left(\sigma^{(k)}_n\right)^2}{\varsigma^{(k,\ell)}_n D^{(k)}_n}
 \\& + \eta_k^2 \beta \sum_{c'\in \mathcal{C}^{(k)}} \hspace{-1mm} \left(\frac{{D}^{(k)}_{c'}L^{(k)}e^{(k)}_{c'}}{D^{(k)}}\right)^2 \mathbb{E}_k\left[\left\Vert  \sum_{c\in \mathcal{C}^{(k)}}\hspace{-1mm}\frac{{D}^{(k)}_{c}}{D^{(k)}L^{(k)}e^{(k)}_{c}}\sum_{\ell=1}^{L^{(k)}}\sum_{n\in \mathcal{N}^{(k)}_c} \hspace{-1mm}\frac{{D}^{(k)}_{n}}{D^{(k)}_c} \sum_{e=1}^{e^{(k)}_{c}}\nabla F_{n}^{(k, \ell)}(\bm{\omega}^{(k, \ell-1),e-1}_{n})\right\Vert^2\right].
    \end{aligned}
\end{equation}
Conducting algebraic manipulation on the above inequality and gathering terms results in:
\begin{equation}\label{mainbeforediss}
\hspace{-3mm}
\resizebox{.99\linewidth}{!}{$
\begin{aligned}
 &\mathbb{E}_k \left[F^{(k)}(\bm{\omega}^{(k+1)})\right] \leq F^{(k)}(\bm{\omega}^{(k)}) - \frac{\eta_k}{2} \sum_{c'\in \mathcal{C}^{(k)}}\frac{{D}^{(k)}_{c'}L^{(k)}e^{(k)}_{c'}}{D^{(k)}}  \left \Vert\nabla F^{(k)}(\bm{\omega}^{(k)})\right\Vert^2 
 \\& + \underbrace{\left(\eta_k^2 \beta \sum_{c'\in \mathcal{C}^{(k)}} \left(\frac{{D}^{(k)}_{c'}L^{(k)}e^{(k)}_{c'}}{D^{(k)}}\right)^2 - \frac{\eta_k}{2} \sum_{c'\in \mathcal{C}^{(k)}}\frac{{D}^{(k)}_{c'}L^{(k)}e^{(k)}_{c'}}{D^{(k)}} \right) \mathbb{E}_k \left \Vert \sum_{c\in \mathcal{C}^{(k)}}\hspace{-1mm}\frac{{D}^{(k)}_{c}}{D^{(k)}L^{(k)}e^{(k)}_{c}}\sum_{\ell=1}^{L^{(k)}}\sum_{n\in \mathcal{N}^{(k)}_c} \hspace{-1mm}\frac{{D}^{(k)}_{n}}{D^{(k)}_c} \sum_{e=1}^{e^{(k)}_{c}}\nabla F_{n}^{(k, \ell)}(\bm{\omega}^{(k, \ell-1),e-1}_{n})\right \Vert^2}_{(c)}
 \\& + \frac{\eta_k}{2} \sum_{c'\in \mathcal{C}^{(k)}}\frac{{D}^{(k)}_{c'}L^{(k)}e^{(k)}_{c'}}{D^{(k)}}  \underbrace{\mathbb{E}_k \left \Vert \nabla F^{(k)}(\bm{\omega}^{(k)}) - \sum_{c\in \mathcal{C}^{(k)}}\frac{{D}^{(k)}_{c}}{D^{(k)}L^{(k)}e^{(k)}_{c}}   \sum_{n\in \mathcal{N}^{(k)}_c} \frac{{D}^{(k)}_{n}}{D^{(k)}_c} \sum_{\ell=1}^{L^{(k)}} \sum_{e=1}^{e_{c}^{(k)}}  \nabla F_{n}^{(k, \ell)}(\bm{\omega}^{(k, \ell-1),e-1}_{n}) \right \Vert^2}_{(d)}
 \\& + 2 \eta_k^2 \beta \Theta^2 \sum_{c'\in \mathcal{C}^{(k)}} \left(\frac{{D}^{(k)}_{c'}L^{(k)}e^{(k)}_{c'}}{D^{(k)}}\right)^2 \sum_{c\in \mathcal{C}^{(k)}} \left(\frac{{D}^{(k)}_{c}}{D^{(k)}L^{(k)}e^{(k)}_{c}} \right)^2 \sum_{\ell=1}^{L^{(k)}} \sum_{n\in \mathcal{N}^{(k)}_c} \left(\frac{{D}^{(k)}_{n}}{D^{(k)}_c} \right)^2 \sum_{e=1}^{e^{(k)}_{c}} \left(1- \varsigma^{(k,\ell)}_n\right)\frac{\left(\sigma^{(k)}_n\right)^2}{\varsigma^{(k,\ell)}_n D^{(k)}_n}.
    \end{aligned}
    $}\hspace{-7mm}
\end{equation}
Assuming $\eta_{t} \leq \left(2\beta \sum_{c'\in \mathcal{C}^{(k)}}\frac{{D}^{(k)}_{c'}L^{(k)}e^{(k)}_{c'}}{D^{(k)}}\right)^{-1}$, term $(c)$ would be negative and can be removed from the upper bound. Then, we bound the $(d)$ as follows:
\begin{equation} \label{weightdiff}
\begin{aligned}
    (d) & \overset{(i)}{\leq} \sum_{c\in \mathcal{C}^{(k)}}\frac{{D}^{(k)}_{c}}{D^{(k)}}\mathbb{E}_k \left \Vert \sum_{n\in \mathcal{N}^{(k)}_c} \frac{{D}^{(k)}_{n}}{D^{(k)}_c} \nabla F^{(k)}_{n}(\bm{\omega}^{(k)}) - \frac{1}{L^{(k)} e^{(k)}_{c}} \sum_{n\in \mathcal{N}^{(k)}_c} \frac{{D}^{(k)}_{n}}{D^{(k)}_c} \sum_{\ell=1}^{L^{(k)}} \sum_{e=1}^{e_{c}^{(k)}}  \nabla F_{n}^{(k, \ell)}(\bm{\omega}^{(k, \ell-1),e-1}_{n}) \right \Vert^2
    \\&
    \overset{(ii)}{\leq} \sum_{c\in \mathcal{C}^{(k)}} \frac{{D}^{(k)}_{c}}{D^{(k)}L^{(k)}e^{(k)}_{c}}   \sum_{n\in \mathcal{N}^{(k)}_c} \frac{{D}^{(k)}_{n}}{D^{(k)}_c} \sum_{\ell=1}^{L^{(k)}} \sum_{e=1}^{e_{c}^{(k)}}\mathbb{E}_k \left[\left \Vert \nabla F^{(k)}_{n}(\bm{\omega}^{(k)}) -  \nabla F_{n}^{(k, \ell)}(\bm{\omega}^{(k, \ell-1),e-1}_{n}) \right \Vert^2 \right] 
    \\& \leq  \beta^2\sum_{c\in \mathcal{C}^{(k)}} \frac{{D}^{(k)}_{c}}{D^{(k)}L^{(k)}e^{(k)}_{c}}   \sum_{n\in \mathcal{N}^{(k)}_c} \frac{{D}^{(k)}_{n}}{D^{(k)}_c} \sum_{\ell=1}^{L^{(k)}} \sum_{e=1}^{e_{c}^{(k)}} \mathbb{E}_k \left[\left \Vert  \bm{\omega}^{(k)} - \bm{\omega}^{(k, \ell-1),e-1}_{n} \right \Vert^2 \right],
\end{aligned}
\end{equation}
where for acquiring inequalities $(i)$ and $(ii)$, we used Jensen's inequality and $\beta$-smoothness of the loss function. To bound $\mathbb{E}_k \left[\left \Vert \bm{\omega}^{(k)} - \bm{\omega}^{(k, \ell-1),e-1}_{n} \right \Vert^2 \right]$, we take the following steps:
\begin{equation}
\hspace{-10mm}
\resizebox{.97\linewidth}{!}{$
  \begin{aligned}
    &\mathbb{E}_k \left[\left \Vert \bm{\omega}^{(k)} - \bm{\omega}^{(k, \ell-1),e-1}_{n} \right \Vert^2 \right] = \mathbb{E}_k \left[\left \Vert \bm{\omega}^{(k)} - \overline{\bm{\omega}}_{c}^{(k,1)} + \overline{\bm{\omega}}_{c}^{(k,1)} - ...- \overline{\bm{\omega}}_{c}^{(k,\ell-1)} + \overline{\bm{\omega}}_{c}^{(k,\ell-1)} - \bm{\omega}^{(k, \ell-1),e-1}_{n} \right \Vert^2 \right]
    \\& {=}\eta_k^2  \mathbb{E}_k \left[\left \Vert \sum_{n'\in \mathcal{N}^{(k)}_c} \frac{{D}^{(k)}_{n'}}{D^{(k)}_c} \sum_{\ell'=1}^{\ell-1} \sum_{e'=1}^{e} \sum_{d\in \mathcal{B}^{(k,\ell'),e'}_{n'}} \frac{\nabla f(\bm{\omega}^{(k, \ell'-1),e'-1}_{n'},d)}{B^{(k,\ell')}_{n'}} +  \sum_{e'=1}^{e-1} \sum_{d\in \mathcal{B}^{(k, \ell),e'}_{n}} \frac{\nabla f(\bm{\omega}^{(k, \ell-1),e'-1}_{n},d)}{B^{(k,\ell)}_{n}} \right \Vert^2 \right]
    \\& \overset{(i)} \leq 2 \eta_k^2  \mathbb{E}_k \left[\left \Vert \sum_{n'\in \mathcal{N}^{(k)}_c} \frac{{D}^{(k)}_{n'}}{D^{(k)}_c} \sum_{\ell'=1}^{\ell-1} \sum_{e'=1}^{e} \sum_{d\in \mathcal{B}^{(k, \ell'),e'}_{n'}} \frac{\nabla f(\bm{\omega}^{(k, \ell'-1),e'-1}_{n'},d)}{B^{(k,\ell')}_{n'}} \right \Vert^2 \right] + 2 \eta_k^2  \mathbb{E}_k \left[\left \Vert \sum_{e'=1}^{e-1} \sum_{d\in \mathcal{B}^{(k, \ell),e'}_{n}} \frac{\nabla f(\bm{\omega}^{(k, \ell-1),e'-1}_{n},d)}{B^{(k,\ell)}_{n}} \right \Vert^2 \right]
    \\& \overset{(ii)} \leq \underbrace{2 \eta_k^2 \sum_{n'\in \mathcal{N}^{(k)}_c} \frac{{D}^{(k)}_{n'}}{D^{(k)}_c} \mathbb{E}_k \left[\left \Vert \sum_{\ell'=1}^{\ell-1} \sum_{e'=1}^{e} \sum_{d\in \mathcal{B}^{(k, \ell'),e'}_{n'}} \hspace{-3mm} \frac{\nabla f(\bm{\omega}^{(k, \ell'-1),e'-1}_{n'},d)}{B^{(k,\ell')}_{n'}} \right \Vert^2 \right]}_{(e)} \hspace{-1mm} + \underbrace{2 \eta_k^2  \mathbb{E}_k \left[\left \Vert \sum_{e'=1}^{e-1} \sum_{d\in \mathcal{B}^{(k, \ell),e'}_{n}} \frac{\nabla f(\bm{\omega}^{(k, \ell-1),e'-1}_{n},d)}{B^{(k,\ell)}_{n}} \right \Vert^2 \right]}_{(f)},
    \end{aligned}
    $}
    \hspace{-6mm}
    \end{equation}
where we used Cauchy-Schwarz inequality for inequality $(i)$ and Jensen's inequality for inequality $(ii)$. Next, we bound term $(e)$ as follows:
\begin{equation}
\hspace{-12mm}
\resizebox{.97\linewidth}{!}{$
  \begin{aligned}
    \\& (e) = 2 \eta_k^2 \sum_{n'\in \mathcal{N}^{(k)}_c} \frac{{D}^{(k)}_{n'}}{D^{(k)}_c} \mathbb{E}_k\left[\left\Vert
   \sum_{\ell'=1}^{\ell-1} \sum_{e'=1}^{e} \left(\sum_{d\in \mathcal{B}^{(k, \ell'),e'}_{n'}} \hspace{-3mm} \frac{\nabla f(\bm{\omega}^{(k, \ell'-1),e'-1}_{n},d)}{B^{(k,\ell')}_{n'}} - \nabla F_{n'}^{(k)}(\bm{\omega}^{(k, \ell'-1),e'-1}_{n'}) + \nabla F_{n'}^{(k)}(\bm{\omega}^{(k, \ell'-1),e'-1}_{n'})
    \right)\right\Vert^2 \right]
    \\& \overset{(i)} \leq 4 \eta_k^2 \sum_{n'\in \mathcal{N}^{(k)}_c} \frac{{D}^{(k)}_{n'}}{D^{(k)}_c}  \mathbb{E}_k\left[\left\Vert
   \sum_{\ell'=1}^{\ell-1} \sum_{e'=1}^{e} \left(\sum_{d\in \mathcal{B}^{(k, \ell'),e'}_{n'}} \hspace{-3mm} \frac{\nabla f(\bm{\omega}^{(k, \ell'-1),e'-1}_{n'},d)}{B^{(k,\ell')}_{n'}} - \nabla F_{n'}^{(k)}(\bm{\omega}^{(k, \ell'-1),e'-1}_{n'})
    \right)\right\Vert^2 \right] 
    \\& + 4 \eta_k^2 \sum_{n'\in \mathcal{N}^{(k)}_c} \frac{{D}^{(k)}_{n'}}{D^{(k)}_c}  \mathbb{E}_k\left[\left\Vert \sum_{\ell'=1}^{\ell-1}  \sum_{e'=1}^{e} \nabla F_{n'}^{(k)}(\bm{\omega}^{(k, \ell'-1),e'-1}_{n'})\right\Vert^2 \right]
    \\& \overset{(ii)} = 4 \eta_k^2 \sum_{n'\in \mathcal{N}^{(k)}_c} \frac{{D}^{(k)}_{n'}}{D^{(k)}_c}  \sum_{\ell'=1}^{\ell-1} \sum_{e'=1}^{e}  \mathbb{E}_k\left[\left\Vert \sum_{d\in \mathcal{B}^{(k, \ell'),e'}_{n'}} \hspace{-3mm} \frac{\nabla f(\bm{\omega}^{(k, \ell'-1),e'-1}_{n'},d)}{B^{(k,\ell')}_{n'}} - \nabla F_{n'}^{(k)}(\bm{\omega}^{(k, \ell'-1),e'-1}_{n'})
    \right\Vert^2 \right] 
    \\& + 4 \eta_k^2 \sum_{n'\in \mathcal{N}^{(k)}_c} \frac{{D}^{(k)}_{n'}}{D^{(k)}_c}  \mathbb{E}_k\left[\left\Vert \sum_{\ell'=1}^{\ell-1}  \sum_{e'=1}^{e} \nabla F_{n'}^{(k)}(\bm{\omega}^{(k, \ell'-1),e'-1}_{n'})\right\Vert^2 \right]
    \\& \overset{(iii)} = 8 \eta_k^2 \sum_{n'\in \mathcal{N}^{(k)}_c} \frac{{D}^{(k)}_{n'}}{D^{(k)}_c}  \sum_{\ell'=1}^{\ell-1} \sum_{e'=1}^{e} \left(1- \varsigma^{(k,\ell'')}_{n'}\right) \frac{\left(\sigma^{(k)}_{n'}\right)^2}{\varsigma^{(k,\ell'')}_{n'} D^{(k)}_{n'}} \Theta^2 + 4 \eta_k^2 \sum_{n'\in \mathcal{N}^{(k)}_c} \frac{{D}^{(k)}_{n'}}{D^{(k)}_c}  \mathbb{E}_k\left[\left\Vert \sum_{\ell'=1}^{\ell-1}  \sum_{e'=1}^{e} \nabla F_{n'}^{(k)}(\bm{\omega}^{(k, \ell'-1),e'-1}_{n'})\right\Vert^2 \right],
    \end{aligned}
    $}
        \hspace{-19mm}
\end{equation}
 where we used Cauchy–Schwarz inequality for the inequality $(i)$, the fact that each local gradient estimation is unbiased for the inequality $(ii)$, and the Lemma \ref{Lemma:SGD} for the inequality $(iii)$. Likewise, term $(f)$ can be bounded as follows:
\begin{equation}
\hspace{-5mm}
\resizebox{.97\linewidth}{!}{$
  \begin{aligned}
    \\& (f) = 2 \eta_k^2  \mathbb{E}_k\left[\left\Vert
   \sum_{e'=1}^{e-1} \left(\sum_{d\in \mathcal{B}^{(k, \ell),e'}_{n}} \frac{\nabla f(\bm{\omega}^{(k, \ell-1),e'-1}_{n},d)}{B^{(k,\ell)}_{n}} - \nabla F_{n}^{(k)}(\bm{\omega}^{(k, \ell-1),e'-1}_{n}) + \nabla F_{n}^{(k)}(\bm{\omega}^{(k, \ell-1),e'-1}_{n})
    \right)\right\Vert^2 \right]
    \\& \overset{(i)} \leq 4 \eta_k^2  \mathbb{E}_k\left[\left\Vert
   \sum_{e'=1}^{e-1} \left(\sum_{d\in \mathcal{B}^{(k, \ell),e'}_{n}} \frac{\nabla f(\bm{\omega}^{(k, \ell-1),e'-1}_{n},d)}{B^{(k,\ell)}_{n}} - \nabla F_{n}^{(k)}(\bm{\omega}^{(k, \ell-1),e'-1}_{n})
    \right)\right\Vert^2 \right] + 4 \eta_t^2  \mathbb{E}_k\left[\left\Vert \sum_{e'=1}^{e-1} \nabla F_{n}^{(k)}(\bm{\omega}^{(k, \ell-1),e'-1}_{n})\right\Vert^2 \right]
    \\&\overset{(ii)} = 4 \eta_k^2 \sum_{e'=1}^{e-1} \mathbb{E}_k\left[\Bigg\Vert
   \sum_{d\in \mathcal{B}^{(k, \ell),e'}_{n}} \frac{\nabla f(\bm{\omega}^{(k, \ell-1),e'-1}_{n},d)}{B^{(k,\ell)}_{n}} - \nabla F_{n}^{(k)}(\bm{\omega}^{(k, \ell-1),e'-1}_{n})\Bigg\Vert^2\right]+4 \eta_t^2  \mathbb{E}_k\left[\left\Vert \sum_{e'=1}^{e-1} \nabla F_{n}^{(k)}(\bm{\omega}^{(k, \ell-1),e'-1}_{n})\right\Vert^2 \right]
    \\& \overset{(iii)} \leq 8 \eta_k^2 \sum_{e'=1}^{e-1} \left(1- \varsigma^{(k,\ell)}_n\right)\frac{\left(\sigma^{(k)}_n\right)^2}{\varsigma^{(k,\ell)}_n D^{(k)}_n} \Theta^2 + 4 \eta_t^2  \mathbb{E}_k\left[\left\Vert \sum_{e'=1}^{e-1} \nabla F_{n}^{(k)}(\bm{\omega}^{(k, \ell-1),e'-1}_{n})\right\Vert^2 \right],
    \end{aligned}
    $}
    \hspace{-5mm}
    \end{equation}
 where we used Cauchy–Schwarz inequality $(i)$, the fact that each local gradient estimation is unbiased, and the Lemma \ref{Lemma:SGD} for the inequalities $(i)$, $(ii)$, and $(iii)$ respectively. Thus, we have:
\begin{equation}
\hspace{-4mm}
\resizebox{.97\linewidth}{!}{$
  \begin{aligned}
    & \mathbb{E}_k \left[\left \Vert \bm{\omega}^{(k)} - \bm{\omega}^{(k, \ell-1),e-1}_{n} \right \Vert^2 \right] \leq  8 \eta_k^2 \sum_{e'=1}^{e-1} \left(1- \varsigma^{(k,\ell)}_n\right)\frac{\left(\sigma^{(k)}_n\right)^2}{\varsigma^{(k,\ell)}_n D^{(k)}_n} \Theta^2 + 8 \eta_k^2 \sum_{n'\in \mathcal{N}^{(k)}_c} \frac{{D}^{(k)}_{n'}}{D^{(k)}_c}  \sum_{\ell'=1}^{\ell-1} \sum_{e'=1}^{e} \left(1- \varsigma^{(k,\ell'')}_{n'}\right) \frac{\left(\sigma^{(k)}_{n'}\right)^2}{\varsigma^{(k,\ell'')}_{n'} D^{(k)}_{n'}} \Theta^2
    \\& + \underbrace{4 \eta_t^2  \mathbb{E}_k\left[\left\Vert \sum_{e'=1}^{e-1} \nabla F_{n}^{(k)}(\bm{\omega}^{(k, \ell-1),e'-1}_{n})\right\Vert^2 \right] + 4 \eta_k^2 \sum_{n'\in \mathcal{N}^{(k)}_c} \frac{{D}^{(k)}_{n'}}{D^{(k)}_c}  \mathbb{E}_k\left[\left\Vert \sum_{\ell'=1}^{\ell-1}  \sum_{e'=1}^{e} \nabla F_{n'}^{(k)}(\bm{\omega}^{(k, \ell'-1),e'-1}_{n'})\right\Vert^2 \right]}_{(g)}.
    \end{aligned}
    $}
    \hspace{-4mm}
    \end{equation}
Also, for term $(g)$, we have:
\begin{equation}
  \begin{aligned}
& (g) \overset{(i)} \leq 4 \eta_k^2 \left(e-1\right) \sum_{e'=1}^{e-1}\mathbb{E}_k\left[\left\Vert \nabla F_{n}^{(k)}(\bm{\omega}^{(k, \ell-1),e'-1}_{n}) - \nabla F_{n}^{(k)}(\bm{\omega}^{(k)}) + \nabla F_{n}^{(k)}(\bm{\omega}^{(k)})\right\Vert^2 \right]
\\& + 4 \eta_k^2 \sum_{n'\in \mathcal{N}^{(k)}_c} \frac{{D}^{(k)}_{n'}}{D^{(k)}_c} \left(\ell-1\right) \sum_{\ell'=1}^{\ell-1} e \sum_{e'=1}^{e}\mathbb{E}_k\left[\left\Vert \nabla F_{n'}^{(k)}(\bm{\omega}^{(k, \ell'-1),e'-1}_{n'}) - \nabla F_{n'}^{(k)}(\bm{\omega}^{(k)}) + \nabla F_{n'}^{(k)}(\bm{\omega}^{(k)})\right\Vert^2 \right]
 \\& \overset{(ii)}\leq 8 \eta_k^2 \left(e-1\right)  \sum_{e'=1}^{e-1}\mathbb{E}_k\left[\left\Vert \nabla F_{n}^{(k)}(\bm{\omega}^{(k, \ell-1),e'-1}_{n}) - \nabla F_{n}^{(k)}(\bm{\omega}^{(k)}) \right\Vert^2 \right] + 8 \eta_k^2 \left(e-1\right)  \sum_{e'=1}^{e-1}\left\Vert\nabla F_{n}^{(k)}(\bm{\omega}^{(k)})\right\Vert^2
 \\& + 8 \eta_k^2 \sum_{n'\in \mathcal{N}^{(k)}_c} \frac{{D}^{(k)}_{n'}}{D^{(k)}_c} \left(\ell-1\right) \sum_{\ell'=1}^{\ell-1} e \sum_{e'=1}^{e}\mathbb{E}_k\left[\left\Vert \nabla F_{n'}^{(k)}(\bm{\omega}^{(k, \ell'-1),e'-1}_{n'}) - \nabla F_{n'}^{(k)}(\bm{\omega}^{(k)})\right\Vert^2 \right]
  \\& + 8 \eta_k^2 \sum_{n'\in \mathcal{N}^{(k)}_c} \frac{{D}^{(k)}_{n'}}{D^{(k)}_c} \left(\ell-1\right) \sum_{\ell'=1}^{\ell-1} e \sum_{e'=1}^{e} \left\Vert \nabla F_{n'}^{(k)}(\bm{\omega}^{(k)})\right\Vert^2
 \\& \leq  8 \eta_k^2\beta^2 \left(e-1\right) \sum_{e'=1}^{e-1}\mathbb{E}_k\left[\left\Vert \bm{\omega}^{(k, \ell-1),e'-1}_{n} - \bm{\omega}^{(k)} \right\Vert^2 \right] + 8 \eta_k^2 \left(e-1\right) \sum_{e'=1}^{e-1}\left\Vert\nabla F_{n}^{(k)}(\bm{\omega}^{(k)})\right\Vert^2
 \\& + 8 \eta_k^2 \beta^2 \sum_{n'\in \mathcal{N}^{(k)}_c} \frac{{D}^{(k)}_{n'}}{D^{(k)}_c} \left(\ell-1\right) \sum_{\ell'=1}^{\ell-1} e \sum_{e'=1}^{e}\mathbb{E}_k\left[\left\Vert \bm{\omega}^{(k, \ell'-1),e'-1}_{n'} - \bm{\omega}^{(k)}\right\Vert^2 \right]
 \\& + 8 \eta_k^2 \sum_{n'\in \mathcal{N}^{(k)}_c} \frac{{D}^{(k)}_{n'}}{D^{(k)}_c} \left(\ell-1\right) \sum_{\ell'=1}^{\ell-1} e \sum_{e'=1}^{e} \left\Vert \nabla F_{n'}^{(k)} (\bm{\omega}^{(k)})\right\Vert^2,
\end{aligned}
\end{equation}
which implies:
\begin{equation}
\hspace{-4mm}
\resizebox{.97\linewidth}{!}{$
\begin{aligned}
 & \sum_{n\in \mathcal{N}^{(k)}_c} \frac{{D}^{(k)}_{n}}{D^{(k)}_c} \sum_{\ell=1}^{L^{(k)}}\sum_{e=1}^{e_{c}^{(k)}} \mathbb{E}_k\left[\left\Vert \bm{\omega}^{(k)} - \bm{\omega}^{(k, \ell-1),e-1}_{n}  \right\Vert^2 \right]\leq 8 \eta_k^2 \sum_{n\in \mathcal{N}^{(k)}_c} \frac{{D}^{(k)}_{n}}{D^{(k)}_c} \sum_{\ell=1}^{L^{(k)}} \sum_{e=1}^{e_{c}^{(k)}} \sum_{e'=1}^{e-1} \left(1- \varsigma^{(k,\ell)}_n\right)\frac{\left(\sigma^{(k)}_n\right)^2}{\varsigma^{(k,\ell)}_n D^{(k)}_n} \Theta^2 
 \\& + 8 \eta_k^2 \sum_{n\in \mathcal{N}^{(k)}_c} \frac{{D}^{(k)}_{n}}{D^{(k)}_c} \sum_{n'\in \mathcal{N}^{(k)}_c} \frac{{D}^{(k)}_{n'}}{D^{(k)}_c}  \sum_{\ell=1}^{L^{(k)}} \sum_{\ell'=1}^{\ell-1} \sum_{e=1}^{e_{c}^{(k)}} \sum_{e'=1}^{e} \left(1- \varsigma^{(k,\ell'')}_{n'}\right) \frac{\left(\sigma^{(k)}_{n'}\right)^2}{\varsigma^{(k,\ell'')}_{n'} D^{(k)}_{n'}} \Theta^2
  \\& + 8 \eta_k^2\beta^2 \sum_{n\in \mathcal{N}^{(k)}_c} \frac{{D}^{(k)}_{n}}{D^{(k)}_c} \sum_{\ell=1}^{L^{(k)}} \sum_{e=1}^{e_{c}^{(k)}} \left(e-1\right) \sum_{e'=1}^{e-1}\mathbb{E}_k\left[\left\Vert \bm{\omega}^{(k, \ell-1),e'-1}_{n} - \bm{\omega}^{(k)} \right\Vert^2 \right] + 8 \eta_k^2 \sum_{n\in \mathcal{N}^{(k)}_c} \frac{{D}^{(k)}_{n}}{D^{(k)}_c} \sum_{\ell=1}^{L^{(k)}} \sum_{e=1}^{e_{c}^{(k)}} \left(e-1\right) \sum_{e'=1}^{e-1}\left\Vert\nabla F_{n}^{(k)}(\bm{\omega}^{(k)})\right\Vert^2
 \\& + 8 \eta_k^2 \beta^2 \sum_{n\in \mathcal{N}^{(k)}_c} \frac{{D}^{(k)}_{n}}{D^{(k)}_c} \sum_{n'\in \mathcal{N}^{(k)}_c} \frac{{D}^{(k)}_{n'}}{D^{(k)}_c}\sum_{\ell=1}^{L^{(k)}} \left(\ell-1\right) \sum_{\ell'=1}^{\ell-1} \sum_{e=1}^{e_{c}^{(k)}} e \sum_{e'=1}^{e}\mathbb{E}_k\left[\left\Vert \bm{\omega}^{(k, \ell'-1),e'-1}_{n'} - \bm{\omega}^{(k)}\right\Vert^2 \right]
 \\& + 8 \eta_k^2 \sum_{n\in \mathcal{N}^{(k)}_c} \frac{{D}^{(k)}_{n}}{D^{(k)}_c} \sum_{n'\in \mathcal{N}^{(k)}_c} \frac{{D}^{(k)}_{n'}}{D^{(k)}_c} \sum_{\ell=1}^{L^{(k)}} \left(\ell-1\right) \sum_{\ell'=1}^{\ell-1} \sum_{e=1}^{e_{c}^{(k)}} e \sum_{e'=1}^{e} \left\Vert \nabla F_{n'}^{(k)}(\bm{\omega}^{(k)})\right\Vert^2
 \\& \leq 8 \eta_k^2 \left(e_{c}^{(k)}\right) \left(e_{c}^{(k)}-1\right) \hspace{-2mm}\sum_{n\in \mathcal{N}^{(k)}_c} \hspace{-1mm}\frac{{D}^{(k)}_{n}}{D^{(k)}_c} \sum_{\ell=1}^{L^{(k)}}  \hspace{-1mm} \left(1- \varsigma^{(k,\ell)}_n\right)\frac{\left(\sigma^{(k)}_n\right)^2}{\varsigma^{(k,\ell)}_n D^{(k)}_n} \Theta^2 + 8 \eta_k^2 \left(L^{(k)}-1\right)\left(e_{c}^{(k)}\right)^{2} \hspace{-2mm} \sum_{n\in \mathcal{N}^{(k)}_c}\hspace{-1mm} \frac{{D}^{(k)}_{n}}{D^{(k)}_c} \sum_{\ell=1}^{L^{(k)}} \hspace{-1mm} \left(1- \varsigma^{(k,\ell)}_n\right)\frac{\left(\sigma^{(k)}_n\right)^2}{\varsigma^{(k,\ell)}_n D^{(k)}_n} \Theta^2
  \\& + 8 \eta_k^2\beta^2 \left(e_{c}^{(k)}\right) \left(e_{c}^{(k)}-1\right)\hspace{-2mm} \sum_{n\in \mathcal{N}^{(k)}_c} \hspace{-1mm} \frac{{D}^{(k)}_{n}}{D^{(k)}_c}   \sum_{\ell=1}^{L^{(k)}}\sum_{e=1}^{e_{c}^{(k)}} \mathbb{E}_k\left[\left\Vert \bm{\omega}^{(k, \ell-1),e-1}_{n} - \bm{\omega}^{(k)} \right\Vert^2 \right] \hspace{-1mm} + 8 \eta_k^2 \left(e_{c}^{(k)}\right) \left(e_{c}^{(k)}-1\right)\hspace{-2mm} \sum_{n\in \mathcal{N}^{(k)}_c} \hspace{-1mm} \frac{{D}^{(k)}_{n}}{D^{(k)}_c} \sum_{\ell=1}^{L^{(k)}} \sum_{e=1}^{e_{c}^{(k)}} \left\Vert\nabla F_{n}^{(k)}(\bm{\omega}^{(k)})\right\Vert^2
 \\& + 8 \eta_k^2 \beta^2 \left(L^{(k)}\right)\left(L^{(k)}-1\right) \left(e_{c}^{(k)}\right)^{2} \sum_{n\in \mathcal{N}^{(k)}_c} \frac{{D}^{(k)}_{n}}{D^{(k)}_c} \sum_{\ell=1}^{L^{(k)}} \sum_{e=1}^{e_{c}^{(k)}} \mathbb{E}_k\left[\left\Vert \bm{\omega}^{(k, \ell-1),e-1}_{n} - \bm{\omega}^{(k)}\right\Vert^2 \right] 
 \\& + 8 \eta_k^2 \left(L^{(k)}\right)\left(L^{(k)}-1\right) \left(e_{c}^{(k)}\right)^{2} \sum_{n\in \mathcal{N}^{(k)}_c} \frac{{D}^{(k)}_{n}}{D^{(k)}_c} \sum_{\ell=1}^{L^{(k)}} \sum_{e=1}^{e_{c}^{(k)}}\left\Vert \nabla F_{n}^{(k)}(\bm{\omega}^{(k)})\right\Vert^2.
\end{aligned}
$}
\hspace{-4mm}
\end{equation}
By assuming $\eta_k \leq \left(\beta \sqrt{8 \left(\left(L^{(k)}\right)\left(L^{(k)}-1\right) \left(e_{c}^{(k)}\right)^{2} + \left(e_{c}^{(k)}\right) \left(e_{c}^{(k)}-1\right)\right)}\right)^{-1},\forall c$, we get:
\begin{equation}
\begin{aligned}
&  \sum_{n\in \mathcal{N}^{(k)}_c} \frac{{D}^{(k)}_{n}}{D^{(k)}_c} \sum_{\ell=1}^{L^{(k)}}\sum_{e=1}^{e_{c}^{(k)}} \mathbb{E}_k\left[\left\Vert \bm{\omega}^{(k)} -  \bm{\omega}^{(k, \ell-1),e-1}_{n} \right\Vert^2 \right]
\\& \leq
\frac{8 \eta_k^2 \left(\left(L^{(k)}-1\right) \left(e_{c}^{(k)}\right)^{2} + \left(e_{c}^{(k)}\right) \left(e_{c}^{(k)}-1\right) \right) \sum_{n\in \mathcal{N}^{(k)}_c} \frac{{D}^{(k)}_{n}}{D^{(k)}_c} \sum_{\ell=1}^{L^{(k)}}  \left(1- \varsigma^{(k,\ell)}_n\right)\frac{\left(\sigma^{(k)}_n\right)^2}{\varsigma^{(k,\ell)}_n D^{(k)}_n} \Theta^2}{1 - 8 \eta_k^2 \beta^2 \left(\left(L^{(k)}\right)\left(L^{(k)}-1\right) \left(e_{c}^{(k)}\right)^{2} + \left(e_{c}^{(k)}\right) \left(e_{c}^{(k)}-1\right) \right)}
\\& + \frac{8 \eta_k^2 \left(\left(L^{(k)}\right)^{2}  \left(L^{(k)}-1\right) \left(e_{c}^{(k)}\right)^{3} + \left(L^{(k)}\right) \left(e_{c}^{(k)}\right)^{2}\left(e_{c}^{(k)}-1\right)\right) \sum_{n\in \mathcal{N}^{(k)}_c} \frac{{D}^{(k)}_{n}}{D^{(k)}_c} \left\Vert\nabla F_{n}^{(k)}(\bm{\omega}^{(k)})\right\Vert^2}{1 - 8 \eta_k^2 \beta^2 \left(\left(L^{(k)}\right)\left(L^{(k)}-1\right) \left(e_{c}^{(k)}\right)^{2} + \left(e_{c}^{(k)}\right) \left(e_{c}^{(k)}-1\right) \right)}.
\end{aligned}
\end{equation}
Subsequently, by replacing the above bound in \eqref{weightdiff}, we get:
\begin{equation}
\hspace{-4mm}
\resizebox{.97\linewidth}{!}{$
\begin{aligned}
& (d) \leq \sum_{c\in \mathcal{C}^{(k)}} \frac{{D}^{(k)}_{c}}{D^{(k)}L^{(k)}e^{(k)}_{c}}
\frac{8 \eta_k^2 \beta^2 \left(\left(L^{(k)}-1\right) \left(e_{c}^{(k)}\right)^{2} + \left(e_{c}^{(k)}\right) \left(e_{c}^{(k)}-1\right) \right) \sum_{n\in \mathcal{N}^{(k)}_c} \frac{{D}^{(k)}_{n}}{D^{(k)}_c} \sum_{\ell=1}^{L^{(k)}}  \left(1- \varsigma^{(k,\ell)}_n\right)\frac{\left(\sigma^{(k)}_n\right)^2}{\varsigma^{(k,\ell)}_n D^{(k)}_n} \Theta^2}{1 - 8 \eta_k^2 \beta^2 \left(\left(L^{(k)}\right)\left(L^{(k)}-1\right) \left(e_{c}^{(k)}\right)^{2} + \left(e_{c}^{(k)}\right) \left(e_{c}^{(k)}-1\right) \right)}
\\& + \sum_{c\in \mathcal{C}^{(k)}} \frac{{D}^{(k)}_{c}}{D^{(k)}L^{(k)}e^{(k)}_{c}} \frac{8 \eta_k^2 \beta^2 \left(\left(L^{(k)}\right)^{2}  \left(L^{(k)}-1\right) \left(e_{c}^{(k)}\right)^{3} + \left(L^{(k)}\right) \left(e_{c}^{(k)}\right)^{2}\left(e_{c}^{(k)}-1\right)\right) \sum_{n\in \mathcal{N}^{(k)}_c} \frac{{D}^{(k)}_{n}}{D^{(k)}_c} \left\Vert\nabla F_{n}^{(k)}(\bm{\omega}^{(k)})\right\Vert^2}{1 - 8 \eta_k^2 \beta^2 \left(\left(L^{(k)}\right)\left(L^{(k)}-1\right) \left(e_{c}^{(k)}\right)^{2} + \left(e_{c}^{(k)}\right) \left(e_{c}^{(k)}-1\right) \right)}.
\end{aligned}
$}\hspace{-4mm}
\end{equation}
We exploit the intra-cluster and inter-cluster bounded dissimilarity assumptions (Assumptions \ref{Assup:IntraClusterDissimilarity} and \ref{Assup:InterClusterDissimilarity}) for the second term as:
\begin{equation}
\hspace{-4mm}
\resizebox{.97\linewidth}{!}{$
\begin{aligned}
    & \sum_{c\in \mathcal{C}^{(k)}} \frac{{D}^{(k)}_{c}}{D^{(k)}L^{(k)}e^{(k)}_{c}} \frac{8 \eta_k^2 \beta^2 \left(\left(L^{(k)}\right)^{2}  \left(L^{(k)}-1\right) \left(e_{c}^{(k)}\right)^{3} + \left(L^{(k)}\right) \left(e_{c}^{(k)}\right)^{2}\left(e_{c}^{(k)}-1\right)\right) \sum_{n\in \mathcal{N}^{(k)}_c} \frac{{D}^{(k)}_{n}}{D^{(k)}_c} \left\Vert\nabla F_{n}^{(k)}(\bm{\omega}^{(k)})\right\Vert^2}{1 - 8 \eta_k^2 \beta^2 \left(\left(L^{(k)}\right)\left(L^{(k)}-1\right) \left(e_{c}^{(k)}\right)^{2} + \left(e_{c}^{(k)}\right) \left(e_{c}^{(k)}-1\right) \right)}
    \\& \leq \sum_{c\in \mathcal{C}^{(k)}} \frac{{D}^{(k)}_{c}}{D^{(k)}L^{(k)}} \frac{8 \eta_k^2 \beta^2\left(\left(L^{(k)}\right)^{2}  \left(L^{(k)}-1\right) \left(e_{c}^{(k)}\right)^{2} + \left(L^{(k)}\right) \left(e_{c}^{(k)}\right)\left(e_{c}^{(k)}-1\right)\right)}{1 - 8 \eta_k^2 \beta^2 \left(\left(L^{(k)}\right)\left(L^{(k)}-1\right) \left(e_{c}^{(k)}\right)^{2} + \left(e_{c}^{(k)}\right) \left(e_{c}^{(k)}-1\right) \right)} \left( \zeta^{\mathsf{Loc}}_{c,1} \left\Vert \sum_{n\in \mathcal{N}^{(k)}_c}\frac{{D}^{(k)}_{n}}{D^{(k)}_c} \nabla F_{n}^{(k)}(\bm{\omega}^{(k)})\right\Vert^2 +  \zeta^{\mathsf{Loc}}_{c,2} \right).
    \end{aligned}
    $}
    \hspace{-4mm}
\end{equation}
Following Proposition~\ref{th:clus}, we exploit the degrees of freedom in the data heterogeneity parameters by setting $\zeta^{\mathsf{Loc}}_{c,2} = 0$, which yields the minimal admissible value of $\zeta^{\mathsf{Loc}}_{c,1}$ that Assumption \ref{Assup:IntraClusterDissimilarity} holds, denoted as $\zeta^{\mathsf{Loc},\min}_{c,1}$. This bounds the last term on the right hand side of the above bound as:
\begin{equation}
    \leq \sum_{c\in \mathcal{C}^{(k)}} \frac{{D}^{(k)}_{c}}{D^{(k)}L^{(k)}} \frac{8 \eta_k^2 \beta^2
    \left(\left(L^{(k)}\right)^{2}  \left(L^{(k)}-1\right) \left(e_{c}^{(k)}\right)^{2} + \left(L^{(k)}\right) \left(e_{c}^{(k)}\right)\left(e_{c}^{(k)}-1\right)\right)}{1 - 8 \eta_k^2 \beta^2 \left(\left(L^{(k)}\right)\left(L^{(k)}-1\right) \left(e_{c}^{(k)}\right)^{2} + \left(e_{c}^{(k)}\right) \left(e_{c}^{(k)}-1\right) \right)} \zeta^{\mathsf{Loc},\min}_{c,1} \left\Vert \nabla F_{c}^{(k)}(\bm{\omega}^{(k)})\right\Vert^2.
\end{equation}
We further define
$\hat{\zeta}^{\mathsf{Loc}}_{1} \triangleq \max_{c \in \mathcal{C}^{(k)}} \big\{\zeta^{\mathsf{Loc},\min}_{c,1}\big\}$ and assume $\max_{c \in \mathcal{C}^{(k)}} \{e^{(k)}_{c}\} \leq e^{(k)}_{\mathsf{max}}$. Thus, we obtain the following bound on the above expression:
\begin{equation}
\hspace{-4mm}
\resizebox{.97\linewidth}{!}{$
\begin{aligned}
    & \leq \frac{8 \eta_k^2 \beta^2
    \left(\left(L^{(k)}\right)  \left(L^{(k)}-1\right) \left(e^{(k)}_{\mathsf{max}}\right)^{2} +  \left(e^{(k)}_{\mathsf{max}}\right)\left(e^{(k)}_{\mathsf{max}}-1\right)\right)}{1 - \left(8 \eta_k^2 \beta^2 \left(\left(L^{(k)}\right)\left(L^{(k)}-1\right) \left(e^{(k)}_{\mathsf{max}}\right)^{2} + \left(e^{(k)}_{\mathsf{max}}\right) \left(e^{(k)}_{\mathsf{max}}-1\right)\right) \right)} \left(\zeta^{\mathsf{Glob}}_{1}\hat{\zeta}^{\mathsf{Loc}}_{1}  \left\Vert \sum_{c\in \mathcal{C}^{(k)}} \frac{{D}^{(k)}_{c}}{D^{(k)}} \nabla F_{c}^{(k)}(\bm{\omega}^{(k)})\right\Vert^2 + \zeta^{\mathsf{Glob}}_{2}\hat{\zeta}^{\mathsf{Loc}}_{1} \right),
    \\& = \frac{8 \eta_k^2 \beta^2\left(\left(L^{(k)}\right)  \left(L^{(k)}-1\right) \left(e^{(k)}_{\mathsf{max}}\right)^{2} +  \left(e^{(k)}_{\mathsf{max}}\right)\left(e^{(k)}_{\mathsf{max}}-1\right)\right)}{1 - \left(8 \eta_k^2 \beta^2 \left(\left(L^{(k)}\right)\left(L^{(k)}-1\right) \left(e^{(k)}_{\mathsf{max}}\right)^{2} + \left(e^{(k)}_{\mathsf{max}}\right) \left(e^{(k)}_{\mathsf{max}}-1\right)\right) \right)} \left(\zeta^{\mathsf{Glob}}_{1}\hat{\zeta}^{\mathsf{Loc}}_{1} \left\Vert \nabla F^{(k)}(\bm{\omega}^{(k)})\right\Vert^2 + \zeta^{\mathsf{Glob}}_{2}\hat{\zeta}^{\mathsf{Loc}}_{1}\right). 
    \end{aligned}
    $}
    \hspace{-4mm}
\end{equation} 
 Replacing the above result back in \eqref{mainbeforediss}, and by gathering terms, we have:
\begin{equation}\label{eq:convmid}
\hspace{-4mm}
\resizebox{.98\linewidth}{!}{$
\begin{aligned}
    &\mathbb{E}_k \left[F^{(k)}(\bm{\omega}^{(k+1)})\right] \leq F^{(k)}(\bm{\omega}^{(k)}) + \frac{\eta_k}{2} \hspace{-2mm}\sum_{c'\in \mathcal{C}^{(k)}}\hspace{-1mm}\frac{{D}^{(k)}_{c'}L^{(k)}e^{(k)}_{c'}}{D^{(k)}} \underbrace{\left(\frac{8 \eta_k^2 \beta^2 \left(\left(L^{(k)}\right)  \left(L^{(k)}-1\right) \left(e^{(k)}_{\mathsf{max}}\right)^{2} +  \left(e^{(k)}_{\mathsf{max}}\right)\left(e^{(k)}_{\mathsf{max}}-1\right)\right)}{1 - 8 \eta_k^2 \beta^2 \left(\left(L^{(k)}\right)\left(L^{(k)}-1\right) \left(e^{(k)}_{\mathsf{max}}\right)^{2} + \left(e^{(k)}_{\mathsf{max}}\right) \left(e^{(k)}_{\mathsf{max}}-1\right) \right)} \zeta^{\mathsf{Glob}}_1\hat{\zeta}^{\mathsf{Loc}}_{1} - 1\right)}_{(h)} \left \Vert\nabla F^{(k)}(\bm{\omega}^{(k)})\right\Vert^2
    \\& + \frac{\eta_k}{2}  \vast(\sum_{c'\in \mathcal{C}^{(k)}}\frac{{D}^{(k)}_{c'}L^{(k)}e^{(k)}_{c'}}{D^{(k)}} \sum_{c\in \mathcal{C}^{(k)}} \frac{{D}^{(k)}_{c}}{D^{(k)}L^{(k)}e^{(k)}_{c}} \frac{8 \eta_k^2 \beta^2 \Theta^2 \left(  \left(L^{(k)}-1\right) \left(e_{c}^{(k)}\right)^{2} + \left(e_{c}^{(k)}\right)\left(e_{c}^{(k)}-1\right)\right)}{1 - 8 \eta_k^2 \beta^2 \left(\left(L^{(k)}\right)\left(L^{(k)}-1\right) \left(e^{(k)}_{\mathsf{max}}\right)^{2} + \left(e^{(k)}_{\mathsf{max}}\right) \left(e^{(k)}_{\mathsf{max}}-1\right) \right)} \sum_{n\in \mathcal{N}^{(k)}_c} \frac{{D}^{(k)}_{n}}{D^{(k)}_c} \sum_{\ell=1}^{L^{(k)}} \left(1- \varsigma^{(k,\ell)}_n\right)\frac{\left(\sigma^{(k)}_n\right)^2}{\varsigma^{(k,\ell)}_n D^{(k)}_n}\vast)
    \\& + \frac{\eta_k}{2} \left(\sum_{c'\in \mathcal{C}^{(k)}}\frac{{D}^{(k)}_{c'}L^{(k)}e^{(k)}_{c'}}{D^{(k)}}\frac{8 \eta_k^2 \beta^2\left(\left(L^{(k)}\right)  \left(L^{(k)}-1\right) \left(e^{(k)}_{\mathsf{max}}\right)^{2} +  \left(e^{(k)}_{\mathsf{max}}\right)\left(e^{(k)}_{\mathsf{max}}-1\right)\right)}{1 - 8 \eta_k^2 \beta^2 \left(\left(L^{(k)}\right)\left(L^{(k)}-1\right) \left(e^{(k)}_{\mathsf{max}}\right)^{2} + \left(e^{(k)}_{\mathsf{max}}\right) \left(e^{(k)}_{\mathsf{max}}-1\right)\right)}   \zeta^{\mathsf{Glob}}_{2}\hat{\zeta}^{\mathsf{Loc}}_{1} \right)
    \\&  + 2 \eta_k^2 \beta \Theta^2 \sum_{c'\in \mathcal{C}^{(k)}} \left(\frac{{D}^{(k)}_{c'}L^{(k)}e^{(k)}_{c'}}{D^{(k)}}\right)^2 \sum_{c\in \mathcal{C}^{(k)}} \left(\frac{{D}^{(k)}_{c}}{D^{(k)}L^{(k)}\sqrt{e^{(k)}_{c}}} \right)^2 \sum_{n\in \mathcal{N}^{(k)}_c} \left(\frac{{D}^{(k)}_{n}}{D^{(k)}_c} \right)^2 \sum_{\ell=1}^{L^{(k)}}\left(1- \varsigma^{(k,\ell)}_n\right)\frac{\left(\sigma^{(k)}_n\right)^2}{\varsigma^{(k,\ell)}_n D^{(k)}_n}.
    \end{aligned}
    $}
    \hspace{-4mm}
\end{equation}
We aim to make term $(h)$ negative with a proper choice of step size, for which we first impose a similar condition to our prior one (i.e. $\eta_k \leq \left(\beta \sqrt{8 \left(\left(L^{(k)}\right)\left(L^{(k)}-1\right) \left(e_{c}^{(k)}\right)^{2} + \left(e_{c}^{(k)}\right) \left(e_{c}^{(k)}-1\right)\right)}\right)^{-1},\forall c$) on the denominator of the fraction in $(h)$:
 \begin{align}
     & 1 - 8 \eta_k^2 \beta^2 \left(\left(L^{(k)}\right)\left(L^{(k)}-1\right) \left(e^{(k)}_{\mathsf{max}}\right)^{2} + \left(e^{(k)}_{\mathsf{max}}\right) \left(e^{(k)}_{\mathsf{max}}-1\right)\right) \geq 0
     \\& \Rightarrow \eta_k \leq \frac{1}{\beta \sqrt{8\left(\left(L^{(k)}\right)\left(L^{(k)}-1\right) \left(e^{(k)}_{\mathsf{max}}\right)^{2} + \left(e^{(k)}_{\mathsf{max}}\right) \left(e^{(k)}_{\mathsf{max}}-1\right)\right)}}.
 \end{align}
We further assume that there exist a set of constants $\Lambda^{(k)}$, $\forall k$, where:
\begin{equation}
\begin{aligned}
&\frac{8 \eta_k^2 \beta^2 \left(\left(L^{(k)}\right)  \left(L^{(k)}-1\right) \left(e^{(k)}_{\mathsf{max}}\right)^{2} +  \left(e^{(k)}_{\mathsf{max}}\right)\left(e^{(k)}_{\mathsf{max}}-1\right)\right)}{1 - 8 \eta_k^2 \beta^2 \left(\left(L^{(k)}\right)\left(L^{(k)}-1\right) \left(e^{(k)}_{\mathsf{max}}\right)^{2} + \left(e^{(k)}_{\mathsf{max}}\right) \left(e^{(k)}_{\mathsf{max}}-1\right)\right) }\zeta^{\mathsf{Glob}}_1\hat{\zeta}^{\mathsf{Loc}}_{1} \leq  \Lambda^{(k)}< 1
\\& \frac{1}{1 - 8 \eta_k^2 \beta^2 \left(\left(L^{(k)}\right)\left(L^{(k)}-1\right) \left(e^{(k)}_{\mathsf{max}}\right)^{2} + \left(e^{(k)}_{\mathsf{max}}\right) \left(e^{(k)}_{\mathsf{max}}-1\right)\right)}\leq \frac{ \zeta^{\mathsf{Glob}}_1\hat{\zeta}^{\mathsf{Loc}}_{1}+\Lambda^{(k)}}{ \zeta^{\mathsf{Glob}}_1\hat{\zeta}^{\mathsf{Loc}}_{1}},
\end{aligned}
\end{equation}
which can be obtained with the following condition on the step size:
\begin{align}
    \eta_k \leq \frac{1}{\beta} \sqrt{ \frac{\Lambda^{(k)}}{8\left(\zeta^{\mathsf{Glob}}_1\hat{\zeta}^{\mathsf{Loc}}_{1}+\Lambda^{(k)}\right)\left(\left(L^{(k)}\right)  \left(L^{(k)}-1\right) \left(e^{(k)}_{\mathsf{max}}\right)^{2} +  \left(e^{(k)}_{\mathsf{max}}\right)\left(e^{(k)}_{\mathsf{max}}-1\right)\right)}},
\end{align}
which is a tighter condition on the step size. By defining $\Phi^{(k)} = \dfrac{\eta_k}{2} \sum_{c'\in \mathcal{C}^{(k)}}\frac{{D}^{(k)}_{c'}L^{(k)}e^{(k)}_{c'}}{D^{(k)}}$, we can simplify \eqref{eq:convmid} as follows:
\begin{equation}
\hspace{-4mm}
\resizebox{.98\linewidth}{!}{$
 \begin{aligned}
    &\left \Vert\nabla F^{(k)}(\bm{\omega}^{(k)})\right\Vert^2 \leq \frac{F^{(k)}(\bm{\omega}^{(k)}) - \mathbb E_{k} \left[F^{(k)} (\bm{\omega}^{(k+1)})\right]}{\Phi^{(k)}\left(1-\Lambda^{(k)}\right)}
    \\& + \frac{8 \eta_k^2 \beta^2 \Theta^2}{\left(1-\Lambda^{(k)}\right)} \hspace{-1mm} \left( \sum_{c\in \mathcal{C}^{(k)}} \hspace{-1mm} \frac{{D}^{(k)}_{c}}{D^{(k)}L^{(k)}} \left(  \left(L^{(k)}-1\right) \left(e_{c}^{(k)}\right) + \left(e_{c}^{(k)}-1\right)\right) \hspace{-1mm} \left(\frac{ \zeta^{\mathsf{Glob}}_1\hat{\zeta}^{\mathsf{Loc}}_{1}+\Lambda^{(k)}}{ \zeta^{\mathsf{Glob}}_1\hat{\zeta}^{\mathsf{Loc}}_{1}}\right) \hspace{-2mm}\sum_{n\in \mathcal{N}^{(k)}_c} \hspace{-1mm} \frac{{D}^{(k)}_{n}}{D^{(k)}_c} \sum_{\ell=1}^{L^{(k)}} \left(1- \varsigma^{(k,\ell)}_n\right)\frac{\left(\sigma^{(k)}_n\right)^2}{\varsigma^{(k,\ell)}_n D^{(k)}_n}\right)
     \\& + \frac{8 \eta_k^2 \beta^2}{\left(1-\Lambda^{(k)}\right)} \hspace{-1mm}\left(\left(\left(L^{(k)}\right)  \left(L^{(k)}-1\right) \left(e^{(k)}_{\mathsf{max}}\right)^{2} +  \left(e^{(k)}_{\mathsf{max}}\right)\left(e^{(k)}_{\mathsf{max}}-1\right)\right) \left(\frac{ \zeta^{\mathsf{Glob}}_1\hat{\zeta}^{\mathsf{Loc}}_{1}+\Lambda^{(k)}}{ \zeta^{\mathsf{Glob}}_1\hat{\zeta}^{\mathsf{Loc}}_{1}}\right) \zeta^{\mathsf{Glob}}_{2}\hat{\zeta}^{\mathsf{Loc}}_{1} \right)
    \\& + \frac{8 \Theta^2 \beta \Phi^{(k)}}{\left(1-\Lambda^{(k)}\right)} \sum_{c\in \mathcal{C}^{(k)}} \left(\frac{{D}^{(k)}_{c}}{D^{(k)}L^{(k)}} \right)^2 \frac{1}{e^{(k)}_{c}} \sum_{n\in \mathcal{N}^{(k)}_c} \left(\frac{{D}^{(k)}_{n}}{D^{(k)}_c} \right)^2 \sum_{\ell=1}^{L^{(k)}}\left(1- \varsigma^{(k,\ell)}_n\right)\frac{\left(\sigma^{(k)}_n\right)^2}{\varsigma^{(k,\ell)}_n D^{(k)}_n}.
 \end{aligned}
 $}
 \hspace{-4mm}
 \end{equation}
Taking total expectation and averaging across global aggregations, and knowing that $\frac{\zeta^{\mathsf{Glob}}_1\hat{\zeta}^{\mathsf{Loc}}_{1}+\Lambda^{(k)}}{\zeta^{\mathsf{Glob}}_1\hat{\zeta}^{\mathsf{Loc}}_{1}} \leq 2$, we have:
\begin{equation}\label{befoermodeldrift}
 \begin{aligned}
    &\frac{1}{K} \sum_{k=0}^{K-1}\mathbb E\left \Vert\nabla F^{(k)}(\bm{\omega}^{(k)})\right\Vert^2 \leq \frac{1}{K} \sum_{k=0}^{K-1}\left[\frac{ \mathbb E_{k} \left[F^{(k)}(\bm{\omega}^{(k)})\right] - \mathbb E_{k} \left[F^{(k)} (\bm{\omega}^{(k+1)})\right]}{\Phi^{(k)}\left(1-\Lambda^{(k)}\right)} \right]
    \\& + \frac{1}{K} \sum_{k=0}^{K-1} \left[\frac{16 \eta_k^2 \beta^2 \Theta^2}{\left(1-\Lambda^{(k)}\right)} \sum_{c\in \mathcal{C}^{(k)}} \frac{{D}^{(k)}_{c}}{D^{(k)}L^{(k)}} \left(\left(L^{(k)}-1\right) \left(e_{c}^{(k)}\right) + \left(e_{c}^{(k)}-1\right)\right) \sum_{n\in \mathcal{N}^{(k)}_c} \frac{{D}^{(k)}_{n}}{D^{(k)}_c} \sum_{\ell=1}^{L^{(k)}} \left(1- \varsigma^{(k,\ell)}_n\right)\frac{\left(\sigma^{(k)}_n\right)^2}{\varsigma^{(k,\ell)}_n D^{(k)}_n}\right]
    \\& +\frac{1}{K} \sum_{k=0}^{K-1} \left[\frac{16 \eta_k^2 \beta^2}{\left(1-\Lambda^{(k)}\right)} \left(\left(L^{(k)}\right)  \left(L^{(k)}-1\right) \left(e^{(k)}_{\mathsf{max}}\right)^{2} +  \left(e^{(k)}_{\mathsf{max}}\right)\left(e^{(k)}_{\mathsf{max}}-1\right)\right) \zeta^{\mathsf{Glob}}_{2}\hat{\zeta}^{\mathsf{Loc}}_{1} \right]
    \\& + \frac{1}{K} \sum_{k=0}^{K-1} \left[\frac{8 \Theta^2 \beta \Phi^{(k)}}{\left(1-\Lambda^{(k)}\right)} \sum_{c\in \mathcal{C}^{(k)}} \left(\frac{{D}^{(k)}_{c}}{D^{(k)}L^{(k)}} \right)^2 \frac{1}{e^{(k)}_{c}} \sum_{n\in \mathcal{N}^{(k)}_c} \left(\frac{{D}^{(k)}_{n}}{D^{(k)}_c} \right)^2 \sum_{\ell=1}^{L^{(k)}}\left(1- \varsigma^{(k,\ell)}_n\right)\frac{\left(\sigma^{(k)}_n\right)^2}{\varsigma^{(k,\ell)}_n D^{(k)}_n}\right].
 \end{aligned}
 \end{equation}
Using the definition of model drift in Definition \ref{def:cons}, we assume that the time instances $T^{\mathsf{Init}, (k)}$ and $T^{\mathsf{Init}, (k+1)}$ denote the wall-clock time of the commencement and end of the global aggregation round $k$, and $t^{\mathsf{LT},{(k,\ell)}}$ and $t^{\mathsf{LT},{(k,\ell)}} + \tau^{\mathsf{LT},{\mathsf{max}}}$ denote commencement and ending of the local aggregation round $\ell$, respectively.
We let $ \tau^{\mathsf{LT},{(k,\ell)}}$ to denote the interval that satellites are engaged with model training at each local round $\ell$. Therefore, the idle time between round $\ell$ and $\ell+1$ where satellites are not conducting model training is given by $\Omega^{(k, \ell)}{=}\tau^{\mathsf{LT},{\mathsf{max}}}-\tau^{\mathsf{LT},{(k,\ell)}}$.  Noting that $F^{(k)}(\bm{\omega}^{(k)})$ is the loss function under which global aggregation round $k + 1$ starts from, we bound it in terms of $F^{(k-1)}(\bm{\omega}^{(k)})$, i.e., the loss under which global aggregation round $k$ concludes, as follows:
\begin{equation}
\begin{aligned}
    &  F^{(k)}(\bm{\omega}^{(k)}) = \sum_{n\in \mathcal{N}} \Bigg[\frac{|\mathcal{D}_n(T^{\mathsf{Init}, (k+1)})|}{|\mathcal{D}(T^{\mathsf{Init}, (k+1)})|} F_n(\bm{\omega}^{(k)} | \mathcal{\mathcal{D}}(T^{\mathsf{Init}, (k+1)})) - \frac{|\mathcal{D}_n(T^{\mathsf{Init}, (k+1)}-1)|}{|\mathcal{D}(T^{\mathsf{Init}, (k+1)}-1)|} F_n(\bm{\omega}^{(k)} | \mathcal{\mathcal{D}}(T^{\mathsf{Init}, (k+1)}-1)) 
    \\& + \frac{|\mathcal{D}_n(T^{\mathsf{Init}, (k+1)}-2)|}{|\mathcal{D}(T^{\mathsf{Init}, (k+1)}-2)|} F_n(\bm{\omega}^{(k)} | \mathcal{\mathcal{D}}(T^{\mathsf{Init}, (k+1)}-2)) - ... + \frac{|\mathcal{D}_n(T^{\mathsf{Init}, (k)})|}{|\mathcal{D}(T^{\mathsf{Init}, (k)})|} F_n(\bm{\omega}^{(k)} | \mathcal{\mathcal{D}}(T^{\mathsf{Init}, (k)})) \Bigg]  
    \\& \leq\sum_{\ell \in\mathcal{L}^{(k)}} \left(\tau^{\mathsf{LT},{\mathsf{max}}} - \tau^{\mathsf{LT},{(k,\ell)}}\right)  \sum_{n\in \mathcal{N}} \Delta^{(k,\ell)}_{n} + \sum_{n\in \mathcal{N}} \frac{|\mathcal{D}_n(T^{\mathsf{Init}, (k)})|}{|\mathcal{D}(T^{\mathsf{Init}, (k)})|} F_n(\bm{\omega}^{(k)} | \mathcal{\mathcal{D}}(T^{\mathsf{Init}, (k)}))
    \\& = \sum_{\ell \in\mathcal{L}^{(k)}} \Omega^{(k, \ell)}  \sum_{n\in \mathcal{N}} \Delta^{(k,\ell)}_{n} + \sum_{n\in \mathcal{N}} F_n(\bm{\omega}^{(k)} | \mathcal{D}_n^{(k-1)}) = \Omega^{(k)} \Delta^{(k)} + F^{(k-1)}(\bm{\omega}^{(k)}),  \forall k\geq 1,
   \end{aligned}
\end{equation}
where for $k=0$, we have $F^{(-1)}(\bm{\omega}^{(0)})$ denoting the loss of the initial global model, and $\Delta^{(k,\ell)}_{n} = \max_{t \in (t^{\mathsf{LT},{(k,\ell)}}+\tau^{\mathsf{LT},{(k,\ell)}},t^{\mathsf{LT},{(k,\ell)}}+\tau^{\mathsf{LT},{\mathsf{max}}}]}\Delta^{(t)}_{n}$, and $\Delta^{(k)} = \sum_{\ell \in\mathcal{L}^{(k)}} \sum_{n\in \mathcal{N}} \Delta^{(k,\ell)}_{n}$. By including model drift and idle-times effects in \eqref{befoermodeldrift}, we have:
\begin{equation}\label{app1result}
\hspace{-4mm}
\resizebox{.99\linewidth}{!}{$
 \begin{aligned}
    &\frac{1}{K} \sum_{k=0}^{K-1}\mathbb E\left \Vert\nabla F^{(k)}(\bm{\omega}^{(k)})\right\Vert^2 \leq \frac{1}{K} \left[\frac{ \mathbb E_{0} \left[F^{(0)}(\bm{\omega}^{(0)})\right] - \mathbb E_{0} \left[F^{(0)} (\bm{\omega}^{(1)})\right]}{\Phi^{(0)}\left(1-\Lambda^{(0)}\right)}+\sum_{k=1}^{K-1}\left[\frac{ \mathbb E_{k} \left[F^{(k-1)}(\bm{\omega}^{(k)})\right] - \mathbb E_{k} \left[F^{(k)} (\bm{\omega}^{(k+1)})\right]}{\Phi^{(k)}\left(1-\Lambda^{(k)}\right)} \right]  \right] + \frac{1}{K} \sum_{k=0}^{K-1} \left[\frac{\Omega^{(k)} \Delta^{(k)}}{\Phi^{(k)}\left(1-\Lambda^{(k)}\right)}\right]
    \\& + \frac{1}{K} \sum_{k=0}^{K-1} \left[\frac{16 \eta_k^2 \beta^2 \Theta^2}{\left(1-\Lambda^{(k)}\right)} \sum_{c\in \mathcal{C}^{(k)}} \frac{{D}^{(k)}_{c}}{D^{(k)}L^{(k)}} \left(\left(L^{(k)}-1\right) \left(e_{c}^{(k)}\right) + \left(e_{c}^{(k)}-1\right)\right) \sum_{n\in \mathcal{N}^{(k)}_c} \frac{{D}^{(k)}_{n}}{D^{(k)}_c} \sum_{\ell=1}^{L^{(k)}} \left(1- \varsigma^{(k,\ell)}_n\right)\frac{\left(\sigma^{(k)}_n\right)^2}{\varsigma^{(k,\ell)}_n D^{(k)}_n}\right]
    \\& +\frac{1}{K} \sum_{k=0}^{K-1} \left[\frac{16 \eta_k^2 \beta^2}{\left(1-\Lambda^{(k)}\right)} \left(\left(L^{(k)}\right)  \left(L^{(k)}-1\right) \left(e^{(k)}_{\mathsf{max}}\right)^{2} +  \left(e^{(k)}_{\mathsf{max}}\right)\left(e^{(k)}_{\mathsf{max}}-1\right)\right) \zeta^{\mathsf{Glob}}_{2}\hat{\zeta}^{\mathsf{Loc}}_{1} \right]
    \\& + \frac{1}{K} \sum_{k=0}^{K-1} \left[\frac{8 \Theta^2 \beta \Phi^{(k)}}{\left(1-\Lambda^{(k)}\right)} \sum_{c\in \mathcal{C}^{(k)}} \left(\frac{{D}^{(k)}_{c}}{D^{(k)}L^{(k)}} \right)^2 \frac{1}{e^{(k)}_{c}} \sum_{n\in \mathcal{N}^{(k)}_c} \left(\frac{{D}^{(k)}_{n}}{D^{(k)}_c} \right)^2 \sum_{\ell=1}^{L^{(k)}}\left(1- \varsigma^{(k,\ell)}_n\right)\frac{\left(\sigma^{(k)}_n\right)^2}{\varsigma^{(k,\ell)}_n D^{(k)}_n}\right].
 \end{aligned}
 $}
 \hspace{-4mm}
 \end{equation}
 Note that since the choice of initial model global aggregation is deterministic, we have $\mathbb E_{0} \left[F^{(0)}(\bm{\omega}^{(0)})\right]=F^{(0)}(\bm{\omega}^{(0)}) $. Also, at the minimizer, we have $F^{{(K)}^*}\leq F^{{(K)}}(\bm{\omega}^{(K)})$.
Ultimately, assuming $\max_{(k)} \left\{\Lambda^{(k)}\right\} \leq \Lambda_{\mathsf{max}}< 1$ and $\Phi_{\mathsf{min}} \leq \Phi_{(k)}\leq \Phi_{\mathsf{max}}$ for finite constants $\Phi_{\mathsf{min}}$ and $\Phi_{\mathsf{max}}$, and knowing $\sum_{n\in \mathcal{N}^{(k)}_c} f(n) = \sum_{n\in \mathcal{N}}\gamma^{(k)}_{c,n} f(n)$ for any function $f(n)$, by telescopic expansion of the summation in the first line of the above bound, we obtain the last result of the proof, which is:
\begin{equation}\label{app1result2}
 \begin{aligned}
    &\frac{1}{K} \sum_{k=0}^{K-1}\mathbb E\left \Vert\nabla F^{(k)}(\bm{\omega}^{(k)})\right\Vert^2 \leq  \frac{F^{(0)}(\bm{\omega}^{(0)}) - F^{(K)^{\ast}}}{K \Phi_{\mathsf{min}}\left(1-\Lambda_{\mathsf{max}}\right)}  + \frac{1}{K} \sum_{k=0}^{K-1} \left[\frac{\Omega^{(k)} \Delta^{(k)}}{\Phi_{\mathsf{min}}\left(1-\Lambda_{\mathsf{max}}\right)}\right]
    \\& + \frac{1}{K} \sum_{k=0}^{K-1} \left[\frac{16 \eta_k^2 \beta^2 \Theta^2}{\left(1-\Lambda_{\mathsf{max}}\right)} \sum_{c\in \mathcal{C}^{(k)}} \frac{{D}^{(k)}_{c}}{D^{(k)}L^{(k)}} \left(\left(L^{(k)}-1\right) \left(e_{c}^{(k)}\right) + \left(e_{c}^{(k)}-1\right)\right) \sum_{n\in \mathcal{N}} \frac{{D}^{(k)}_{n}}{D^{(k)}_c} \sum_{\ell=1}^{L^{(k)}} \left(1- \varsigma^{(k,\ell)}_n\right)\frac{\gamma^{(k)}_{c,n}\left(\sigma^{(k)}_n\right)^2}{\varsigma^{(k,\ell)}_n D^{(k)}_n}\right]
    \\& +\frac{1}{K} \sum_{k=0}^{K-1} \left[\frac{16 \eta_k^2 \beta^2}{\left(1-\Lambda_{\mathsf{max}}\right)} \left(\left(L^{(k)}\right)  \left(L^{(k)}-1\right) \left(e^{(k)}_{\mathsf{max}}\right)^{2} +  \left(e^{(k)}_{\mathsf{max}}\right)\left(e^{(k)}_{\mathsf{max}}-1\right)\right) \zeta^{\mathsf{Glob}}_{2}\hat{\zeta}^{\mathsf{Loc}}_{1} \right]
    \\& + \frac{1}{K} \sum_{k=0}^{K-1} \left[\frac{8 \Theta^2 \beta \Phi_{\mathsf{max}}}{\left(1-\Lambda_{\mathsf{max}}\right)} \sum_{c\in \mathcal{C}^{(k)}} \left(\frac{{D}^{(k)}_{c}}{D^{(k)}L^{(k)}} \right)^2 \frac{1}{e^{(k)}_{c}} \sum_{n\in \mathcal{N}} \left(\frac{{D}^{(k)}_{n}}{D^{(k)}_c} \right)^2 \sum_{\ell=1}^{L^{(k)}}\left(1- \varsigma^{(k,\ell)}_n\right)\frac{\gamma^{(k)}_{c,n}\left(\sigma^{(k)}_n\right)^2}{\varsigma^{(k,\ell)}_n D^{(k)}_n}\right].
 \end{aligned}
 \end{equation}
 \newpage
 \section{Proof of Corollary~\ref{cor:1}}\label{app:cor:1}
Considering \eqref{app1result2}, we can replace $\Phi_{\mathsf{min}} = \eta \ell_{\mathsf{min}} \overline{e}_{\mathsf{min}} / 2$ and $\Phi_{\mathsf{max}} = \eta \ell_{\mathsf{max}} \overline{e}_{\mathsf{max}} / 2$, where $\ell_{\mathsf{min}} \leq L^{(k)} \leq \ell_{\mathsf{max}}$ for finite positive constants $\ell_{\mathsf{min}}$ and $\ell_{\mathsf{max}}$ and $(\overline{e}_{\mathsf{max}})^{-1}\leq (e_{\mathsf{avg}}^{(k)})^{-1} \leq (\overline{e}_{\mathsf{min}})^{-1}$ for finite positive constants $\overline{e}_{\mathsf{min}}$ and $\overline{e}_{\mathsf{max}}$, in which $e_{\mathsf{avg}}^{(k)} = \sum_{c\in \mathcal{C}^{(k)}}\frac{{D}^{(k)}_{c} e^{(k)}_{c}}{D^{(k)}}$, as follows:
\begin{equation}
 \begin{aligned}
    &\frac{1}{K} \sum_{k=0}^{K-1}\mathbb E\left \Vert\nabla F^{(k)}(\bm{\omega}^{(k)})\right\Vert^2 \leq  \frac{F^{(0)}(\bm{\omega}^{(0)}) - F^{(K)^{\ast}}}{\left(K  \eta \ell_{\mathsf{min}} \overline{e}_{\mathsf{min}} / 2 \right)\left(1-\Lambda_{\mathsf{max}}\right)}  + \frac{1}{K} \sum_{k=0}^{K-1} \left[\frac{\Omega^{(k)} \Delta^{(k)}}{\left( \eta \ell_{\mathsf{min}} \overline{e}_{\mathsf{min}} / 2 \right)\left(1-\Lambda_{\mathsf{max}}\right)}\right]
    \\& + \frac{1}{K} \sum_{k=0}^{K-1} \left[\frac{16 \eta^2  \beta^2 \Theta^2}{\left(1-\Lambda_{\mathsf{max}}\right)} \sum_{c\in \mathcal{C}^{(k)}} \frac{{D}^{(k)}_{c}}{D^{(k)}L^{(k)}} \left(\left(L^{(k)}-1\right) \left(e_{c}^{(k)}\right) + \left(e_{c}^{(k)}-1\right)\right) \sum_{n\in \mathcal{N}} \frac{{D}^{(k)}_{n}}{D^{(k)}_c} \sum_{\ell=1}^{L^{(k)}} \left(1- \varsigma^{(k,\ell)}_n\right)\frac{\gamma^{(k)}_{c,n}\left(\sigma^{(k)}_n\right)^2}{\varsigma^{(k,\ell)}_n D^{(k)}_n}\right]
    \\& +\frac{1}{K} \sum_{k=0}^{K-1} \left[\frac{16 \eta^2 \beta^2}{\left(1-\Lambda_{\mathsf{max}}\right)} \left(\left(L^{(k)}\right)  \left(L^{(k)}-1\right) \left(e^{(k)}_{\mathsf{max}}\right)^{2} +  \left(e^{(k)}_{\mathsf{max}}\right)\left(e^{(k)}_{\mathsf{max}}-1\right)\right) \zeta^{\mathsf{Glob}}_{2}\hat{\zeta}^{\mathsf{Loc}}_{1} \right]
    \\& + \frac{1}{K} \sum_{k=0}^{K-1} \left[\frac{8 \Theta^2 \beta \left( \eta \ell_{\mathsf{max}} \overline{e}_{\mathsf{max}} / 2 \right)}{\left(1-\Lambda_{\mathsf{max}}\right)} \sum_{c\in \mathcal{C}^{(k)}} \left(\frac{{D}^{(k)}_{c}}{D^{(k)}L^{(k)}} \right)^2 \frac{1}{e^{(k)}_{c}} \sum_{n\in \mathcal{N}} \left(\frac{{D}^{(k)}_{n}}{D^{(k)}_c} \right)^2 \sum_{\ell=1}^{L^{(k)}}\left(1- \varsigma^{(k,\ell)}_n\right)\frac{\gamma^{(k)}_{c,n}\left(\sigma^{(k)}_n\right)^2}{\varsigma^{(k,\ell)}_n D^{(k)}_n}\right].
 \end{aligned}
 \end{equation}
Assuming the choice of step size as $\eta = \alpha \big /{\sqrt{ \ell_{\mathsf{max}} \widehat{e}_{\mathsf{max}} K \big /N}}$ with a finite positive constant $\alpha$, where $\widehat{e}_{\mathsf{min}} \leq e^{(k)}_{\mathsf{sum}}\leq  \widehat{e}_{\mathsf{max}}$ for finite positive constants $\widehat{e}_{\mathsf{min}}$ and $\widehat{e}_{\mathsf{max}}$, in which $e^{(k)}_{\mathsf{sum}}=\sum_{c\in \mathcal{C}^{(k)}} e^{(k)}_{c} N^{(k)}_{c}$ is  the total number of local iterations across the participating VC's, we get:
\begin{equation}\label{corr1result}
\resizebox{.99\linewidth}{!}{$
 \begin{aligned}
    &\frac{1}{K} \sum_{k=0}^{K-1}\mathbb E\left \Vert\nabla F^{(k)}(\bm{\omega}^{(k)})\right\Vert^2 \leq 2 \sqrt{ \ell_{\mathsf{max}} \widehat{e}_{\mathsf{max}}} \frac{F^{(0)}(\bm{\omega}^{(0)}) - F^{(K)^{\ast}}}{\ell_{\mathsf{min}} \overline{e}_{\mathsf{min}} \alpha \sqrt{N K} \left(1- \Lambda_{\mathsf{max}}\right)} + \frac{2 \sqrt{ \ell_{\mathsf{max}} \widehat{e}_{\mathsf{max}}}}{\ell_{\mathsf{min}} \overline{e}_{\mathsf{min}} \alpha \sqrt{N K}} \sum_{k=0}^{K-1} \left[\frac{\Omega^{(k)} \Delta^{(k)}}{1- \Lambda_{\mathsf{max}}}\right]
    \\& + \frac{1}{K} \sum_{k=0}^{K-1} \left[\frac{16 \alpha^2 \beta^2 \Theta^2 N}{ \ell_{\mathsf{min}} \widehat{e}_{\mathsf{min}} K \left(1- \Lambda_{\mathsf{max}}\right)} \hspace{-1mm} \sum_{c\in \mathcal{C}^{(k)}} \hspace{-1mm} \frac{{D}^{(k)}_{c}}{D^{(k)}L^{(k)}} \left(\left(L^{(k)}-1\right) \left(e_{c}^{(k)}\right) + \left(e_{c}^{(k)}-1\right)\right) \sum_{n\in \mathcal{N}} \frac{{D}^{(k)}_{n}}{D^{(k)}_c} \sum_{\ell=1}^{L^{(k)}} \left(1- \varsigma^{(k,\ell)}_n\right)\frac{\gamma^{(k)}_{c,n}\left(\sigma^{(k)}_n\right)^2}{\varsigma^{(k,\ell)}_n D^{(k)}_n}\right]
    \\& +\frac{1}{K} \sum_{k=0}^{K-1} \left[\frac{16 \alpha^2 \beta^2 N}{\ell_{\mathsf{min}} \widehat{e}_{\mathsf{min}} K \left(1- \Lambda_{\mathsf{max}} \right)}\left(\left(L^{(k)}\right)  \left(L^{(k)}-1\right) \left(e^{(k)}_{\mathsf{max}}\right)^{2} +  \left(e^{(k)}_{\mathsf{max}}\right)\left(e^{(k)}_{\mathsf{max}}-1\right)\right) \zeta^{\mathsf{Glob}}_{2}\hat{\zeta}^{\mathsf{Loc}}_{1} \right]
    \\& + \frac{1}{K} \sum_{k=0}^{K-1} \left[\frac{4 \ell_{\mathsf{max}}\overline{e}_{\mathsf{max}} \Theta^2 \alpha \beta \sqrt{N}}{ \sqrt{ \ell_{\mathsf{min}} \widehat{e}_{\mathsf{min}}K}\left(1- \Lambda_{\mathsf{max}} \right)} \sum_{c\in \mathcal{C}^{(k)}} \left(\frac{{D}^{(k)}_{c}}{D^{(k)}L^{(k)}} \right)^2 \frac{1}{e^{(k)}_{c}} \sum_{n\in \mathcal{N}} \left(\frac{{D}^{(k)}_{n}}{D^{(k)}_c} \right)^2 \sum_{\ell=1}^{L^{(k)}}\left(1- \varsigma^{(k,\ell)}_n\right)\frac{\gamma^{(k)}_{c,n}\left(\sigma^{(k)}_n\right)^2}{\varsigma^{(k,\ell)}_n D^{(k)}_n}\right].
 \end{aligned}
 $}
 \end{equation}
  Considering \eqref{corr1result}, assuming a bounded sampling noise $\max\limits_{k, \ell, n} \left\{\left(1- \varsigma^{(k,\ell)}_n\right)\frac{\left(\sigma^{(k)}_n\right)^2}{\varsigma^{(k,\ell)}_n D^{(k)}_n} \right\} \leq \sigma_{\mathsf{max}}$ and bounded local aggregations $\max_{k} \{e_{\mathsf{max}}^{(k)}\} \leq e_{\mathsf{max}}$, $\Delta^{(k)} \leq \left[\frac{\chi}{K\Omega^{(k)}}\right]^{+}$ (for a finite non-negative constant $\chi$), we get:
  \begin{equation}
 \begin{aligned}
    &\frac{1}{K} \sum_{k=0}^{K-1}\mathbb E\left \Vert\nabla F^{(k)}(\bm{\omega}^{(k)})\right\Vert^2 \leq 2 \sqrt{\ell_{\mathsf{max}} \widehat{e}_{\mathsf{max}}} \frac{F^{(0)}(\bm{\omega}^{(0)}) - F^{(K)^{\ast}}}{\ell_{\mathsf{min}}\overline{e}_{\mathsf{min}} \alpha \sqrt{N K} \left(1- \Lambda_{\mathsf{max}}\right)} + \frac{2 \sqrt{\ell_{\mathsf{max}} \widehat{e}_{\mathsf{max}}}}{\ell_{\mathsf{min}}\overline{e}_{\mathsf{min}} \alpha \sqrt{N K}} \sum_{k=0}^{K-1} \left[\frac{\chi}{K \left(1- \Lambda_{\mathsf{max}}\right)}\right]
    \\& + \frac{1}{K} \sum_{k=0}^{K-1} \Bigg[\frac{16 \alpha^2 \beta^2 N}{\ell_{\mathsf{min}} \widehat{e}_{\mathsf{min}} K \left(1- \Lambda_{\mathsf{max}} \right)}\left(\left(L^{(k)}\right)  \left(L^{(k)}-1\right) \left(e^{(k)}_{\mathsf{max}}\right)^{2} +  \left(e^{(k)}_{\mathsf{max}}\right)\left(e^{(k)}_{\mathsf{max}}-1\right)\right) \zeta^{\mathsf{Glob}}_{2}\hat{\zeta}^{\mathsf{Loc}}_{1}
    \\& + \frac{16 \alpha^2 \beta^2 \Theta^2 N}{\ell_{\mathsf{min}} \widehat{e}_{\mathsf{min}} K \left(1- \Lambda_{\mathsf{max}}\right)} \left(\left(L^{(k)}-1\right) \left(e^{(k)}_{\mathsf{max}}\right) + \left(e^{(k)}_{\mathsf{max}}-1\right)\right) \sigma_{\mathsf{max}} + \frac{4 \ell_{\mathsf{max}}\overline{e}_{\mathsf{max}} \Theta^2 \alpha \beta \sqrt{N}}{ \sqrt{ \ell_{\mathsf{min}} \widehat{e}_{\mathsf{min}}K}\left(1- \Lambda_{\mathsf{max}} \right)} \sigma_{\mathsf{max}} \Bigg].
 \end{aligned}
 \end{equation}
Ultimately, we obtain the final result of the proof as follows:
  \begin{equation}
 \begin{aligned}
    &\frac{1}{K} \sum_{k=0}^{K-1}\mathbb E\left \Vert\nabla F^{(k)}(\bm{\omega}^{(k)})\right\Vert^2 \leq \frac{1}{\sqrt{K} \left(1-\Lambda_{\mathsf{max}}\right)} \left[2 \sqrt{\ell_{\mathsf{max}} \widehat{e}_{\mathsf{max}}}\frac{F^{(0)}(\bm{\omega}^{(0)}) - F^{(K)^{\ast}}}{\ell_{\mathsf{min}}\overline{e}_{\mathsf{min}} \alpha \sqrt{N}} + \frac{2 \sqrt{\ell_{\mathsf{max}} \widehat{e}_{\mathsf{max}}} \chi}{\ell_{\mathsf{min}}\overline{e}_{\mathsf{min}} \alpha \sqrt{N}}\right]
    \\& + \frac{1}{K \left(1-\Lambda_{\mathsf{max}}\right)} \hspace{-0.5mm} \Bigg[\frac{16 \alpha^2 \beta^2 N}{\ell_{\mathsf{min}} \widehat{e}_{\mathsf{min}}} \left(\left(\ell_{\mathsf{max}}\right)  \left(\ell_{\mathsf{max}}-1\right) \left(e_{\mathsf{max}}\right)^{2} +  \left(e_{\mathsf{max}}\right)\left(e_{\mathsf{max}}-1\right)\right) \zeta^{\mathsf{Glob}}_{2}\hat{\zeta}^{\mathsf{Loc}}_{1} 
    \\& +\frac{16 \alpha^2 \beta^2 \Theta^2 N}{\ell_{\mathsf{min}} \widehat{e}_{\mathsf{min}}} \left(\left(\ell_{\mathsf{max}}-1\right) \left(e_{\mathsf{max}}\right) + \left(e_{\mathsf{max}}-1\right)\right) \sigma_{\mathsf{max}} + \frac{4 \ell_{\mathsf{max}}\overline{e}_{\mathsf{max}} \Theta^2 \alpha \beta \sqrt{N}}{ \sqrt{ \ell_{\mathsf{min}} \widehat{e}_{\mathsf{min}}}} \sigma_{\mathsf{max}} \Bigg].
 \end{aligned}
 \end{equation}

\newpage

\section{Proof of Proposition ~\ref{th:clus}}~\label{app:th:clus}
We previously introduced the intra- and inter-VC loss functions dissimilarity for satellites, expressed respectively as:
\begin{equation}\label{zet1}
    \sum_{n \in \mathcal{N}^{(k)}_c} a_n \left\Vert \nabla F^{(k)}_n (\bm{\omega}) \right\Vert^2 
    \leq 
    \zeta^{\mathsf{Loc}}_{c,1} \left\Vert \sum_{n \in \mathcal{N}^{(k)}_c} a_n \nabla F^{(k)}_n (\bm{\omega}) \right\Vert^2 
    + \zeta^{\mathsf{Loc}}_{c,2}, ~~~\forall c \in \mathcal{C}^{(k)}
\end{equation}
\begin{equation}\label{zet2}
    \sum_{c \in \mathcal{C}^{(k)}} b_c \left\Vert \nabla F^{(k)}_c (\bm{\omega}) \right\Vert^2 
    \leq 
    \zeta^{\mathsf{Glob}}_1 \left\Vert \sum_{c \in \mathcal{C}^{(k)}} b_c \nabla F^{(k)}_c (\bm{\omega}) \right\Vert^2 
    + \zeta^{\mathsf{Glob}}_2,
\end{equation}
subject to $\sum_{c \in \mathcal{C}^{(k)}} b_c = 1$ and $\sum_{n \in \mathcal{N}^{(k)}_c} a_n = 1$.
Without loss of generality, for inequality~\eqref{zet1}, we leverage the degrees of freedom in the constants 
$\zeta^{\mathsf{Loc}}_{c,1}$ and $\zeta^{\mathsf{Loc}}_{c,2}$ by first minimizing 
$\zeta^{\mathsf{Loc}}_{c,2}$. Since its minimum admissible value is zero, we set 
$\zeta^{\mathsf{Loc}}_{c,2}=0$. Under this choice, inequality~\eqref{zet1} reduces 
to a form in which the minimal admissible value of $\zeta^{\mathsf{Loc}}_{c,1}$, 
denoted by $\zeta^{\mathsf{Loc},\min}_{c,1}$, can be identified. To obtain 
a uniform bound across all VCs, we introduce  $\hat{\zeta}^{\mathsf{Loc}}_{1} \triangleq \max_{c \in \mathcal{C}^{(k)}} \{\zeta^{\mathsf{Loc},\min}_{c,1}\}$ which serves as a global constant valid for the entire collection of VCs.
In contrast, for inequality~\eqref{zet2}, we invert the argument: 
we first minimize $\zeta^{\mathsf{Glob}}_1$, whose smallest admissible value 
is $\zeta^{\mathsf{Glob}}_1=1$. This choice reduces inequality~\eqref{zet2}
to a form in which the minimum admissible value of $\zeta^{\mathsf{Glob}}_2$, 
denoted by $\zeta^{\mathsf{Glob}, \min}_2$, can be identified.
This leads to refined inequalities for intra- and inter-VC loss function dissimilarities, which can be shown as follows:
\begin{equation}\label{zet11}
    \sum_{n \in \mathcal{N}^{(k)}_c} a_n \left\Vert \nabla F^{(k)}_n (\bm{\omega}) \right\Vert^2 
    \leq 
    \hat{\zeta}^{\mathsf{Loc}}_{1} \left\Vert \sum_{n \in \mathcal{N}^{(k)}_c} a_n \nabla F^{(k)}_n (\bm{\omega}) \right\Vert^2,~~ \forall c \in \mathcal{C}^{(k)},
\end{equation}
\begin{equation}\label{zet22}
    \sum_{c \in \mathcal{C}^{(k)}} b_c \left\Vert \nabla F^{(k)}_c (\bm{\omega}) \right\Vert^2 
    \leq \left\Vert \sum_{c \in \mathcal{C}^{(k)}} b_c \nabla F^{(k)}_c (\bm{\omega}) \right\Vert^2 
    + \zeta^{\mathsf{Glob}, \min}_2.
\end{equation}
The inequality \eqref{zet22}, following the VC gradient definition, leads to:
\begin{equation}
    \sum_{c \in \mathcal{C}^{(k)}} b_c \left\Vert \sum_{n \in \mathcal{N}^{(k)}_c} a_n \nabla F^{(k)}_n (\bm{\omega}) \right\Vert^2 
    \leq 
     \left\Vert \sum_{c \in \mathcal{C}^{(k)}} b_c \sum_{n \in \mathcal{N}^{(k)}_c} a_n \nabla F^{(k)}_n (\bm{\omega}) \right\Vert^2 
    + \zeta^{\mathsf{Glob}, \min}_2,
\end{equation}
which, by utilizing \eqref{zet11} on its left-hand side, can be rewritten as follows:
\begin{equation}
    \frac{1}{\hat{\zeta}^{\mathsf{Loc}}_{1}}\sum_{c \in \mathcal{C}^{(k)}} b_c \sum_{n \in \mathcal{N}^{(k)}_c} a_n \left\Vert \nabla F^{(k)}_n (\bm{\omega}) \right\Vert^2 
    \leq 
     \left\Vert \sum_{c \in \mathcal{C}^{(k)}} b_c \sum_{n \in \mathcal{N}^{(k)}_c} a_n \nabla F^{(k)}_n (\bm{\omega}) \right\Vert^2 
    + \zeta^{\mathsf{Glob}, \min}_2,
\end{equation}
\begin{equation}
 \Longrightarrow \sum_{c \in \mathcal{C}^{(k)}} b_c \sum_{n \in \mathcal{N}^{(k)}_c} a_n \left\Vert \nabla F^{(k)}_n (\bm{\omega}) \right\Vert^2 
    - \hat{\zeta}^{\mathsf{Loc}}_{1} \left\Vert \sum_{c \in \mathcal{C}^{(k)}} b_c \sum_{n \in \mathcal{N}^{(k)}_c} a_n \nabla F^{(k)}_n (\bm{\omega}) \right\Vert^2 
    \leq 
      \hat{\zeta}^{\mathsf{Loc}}_{1} \zeta^{\mathsf{Glob}, \min}_2,
\end{equation}

\begin{equation} \label{deriveddis}
 \Longrightarrow \frac{ \sum_{c \in \mathcal{C}^{(k)}} b_c \sum_{n \in \mathcal{N}^{(k)}_c} a_n \left\Vert \nabla F^{(k)}_n (\bm{\omega}) \right\Vert^2 
    - \hat{\zeta}^{\mathsf{Loc}}_{1} \left\Vert \sum_{c \in \mathcal{C}^{(k)}} b_c \sum_{n \in \mathcal{N}^{(k)}_c} a_n \nabla F^{(k)}_n (\bm{\omega}) \right\Vert^2}{\hat{\zeta}^{\mathsf{Loc}}_{1}} 
    \leq 
       \zeta^{\mathsf{Glob}, \min}_2,
\end{equation}
\begin{equation} \label{deriveddis1}
 \Longrightarrow \frac{ \sum_{c \in \mathcal{C}^{(k)}} b_c \sum_{n \in \mathcal{N}^{(k)}_c} a_n \left\Vert \nabla F^{(k)}_n (\bm{\omega}) \right\Vert^2 
    - \hat{\zeta}^{\mathsf{Loc}}_{1} \left\Vert \sum_{c \in \mathcal{C}^{(k)}} b_c \sum_{n \in \mathcal{N}^{(k)}_c} a_n \nabla F^{(k)}_n (\bm{\omega}) \right\Vert^2}{\hat{\zeta}^{\mathsf{Loc}}_{1}} 
    \leq 
       \zeta^{\mathsf{Glob}}_2.
\end{equation}

\newpage
\label{gpstart}
 \section{Transforming the Problem into Geometric Programming (GP)}\label{app:optTransform}

\subsection{Useful Lemma}
 
\begin{lemma}[\textbf{Arithmetic-geometric mean inequality}~\cite{duffin1972reversed}]\label{Lemma:ArethmaticGeometric}
         Consider a posynomial function $g(\bm{y})=\sum_{i=1}^{i'} u_i(\bm{y})$, where $u_i(\bm{y})$ is a monomial, $\forall i$. The following inequality holds:
         \vspace{-2.5mm}
         \begin{equation}\label{eq:approxPosMonMain}
             g(\bm{y})\geq \hat{g}(\bm{y})\triangleq \prod_{i=1}^{i'}\left( {u_i(\bm{y})}/{\alpha_i(\bm{z})}\right)^{\alpha_i(\bm{z})},
             \vspace{-1.5mm}
         \end{equation}
         where $\alpha_i(\bm{z})=u_i(\bm{z})/g(\bm{z})$, $\forall i$, and $\bm{z}>0$ is a fixed point.
\end{lemma}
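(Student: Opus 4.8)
The plan is to recognize \eqref{eq:approxPosMonMain} as a direct instance of the weighted arithmetic--geometric mean inequality, with the weights taken to be the $\alpha_i(\bm z)$. Throughout, the ambient domain is the positive orthant, so that every monomial is strictly positive; I would make this explicit at the outset ($\bm y>0$, and $\bm z>0$ as assumed). First I would verify that the $\alpha_i(\bm z)$ form an admissible weight vector: each $u_i$ is a monomial with positive coefficient, hence $u_i(\bm z)>0$, so $\alpha_i(\bm z)=u_i(\bm z)/g(\bm z)>0$; and $\sum_{i=1}^{i'}\alpha_i(\bm z)=\big(\sum_{i=1}^{i'}u_i(\bm z)\big)/g(\bm z)=g(\bm z)/g(\bm z)=1$. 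Thus $(\alpha_i(\bm z))_{i=1}^{i'}$ is a probability vector.

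Next I would invoke the weighted AM--GM inequality in the form: for positive reals $t_1,\dots,t_{i'}$ and weights $\alpha_i>0$ with $\sum_{i}\alpha_i=1$, one has $\sum_{i=1}^{i'}\alpha_i t_i \ge \prod_{i=1}^{i'} t_i^{\alpha_i}$. (If a self-contained argument is wanted, this follows from Jensen's inequality applied to the concave map $x\mapsto \log x$, equivalently from convexity of $x\mapsto-\log x$.) I would then make the substitution $t_i \triangleq u_i(\bm y)/\alpha_i(\bm z)$, which is positive since $u_i(\bm y)>0$ and $\alpha_i(\bm z)>0$. The left-hand side becomes $\sum_{i=1}^{i'}\alpha_i(\bm z)\,t_i = \sum_{i=1}^{i'}u_i(\bm y) = g(\bm y)$, while the right-hand side becomes $\prod_{i=1}^{i'} t_i^{\alpha_i(\bm z)} = \prod_{i=1}^{i'}\big(u_i(\bm y)/\alpha_i(\bm z)\big)^{\alpha_i(\bm z)} = \hat g(\bm y)$, which is exactly \eqref{eq:approxPosMonMain}.

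Finally I would record two auxiliary facts that are used when this lemma is applied to the SP-to-GP reduction. First, $\hat g$ is itself a monomial: each factor $\big(u_i(\bm y)/\alpha_i(\bm z)\big)^{\alpha_i(\bm z)}$ is a positive constant times a product of (real) powers of the decision variables, and a finite product of monomials is again a monomial. Second, the bound is tight at the linearization point, i.e.\ $\hat g(\bm z)=g(\bm z)$: evaluating at $\bm y=\bm z$ gives $t_i=u_i(\bm z)/\alpha_i(\bm z)=g(\bm z)$ for every $i$, so all the $t_i$ coincide, which is precisely the equality case of AM--GM. There is essentially no obstacle in this argument; the only point requiring care is keeping the positivity hypotheses ($\bm y,\bm z>0$, positive monomial coefficients) in view so that all monomials are positive and the logarithms underlying AM--GM are well defined.
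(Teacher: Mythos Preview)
Your proof is correct and is the standard derivation of this inequality via weighted AM--GM with the substitution $t_i=u_i(\bm y)/\alpha_i(\bm z)$. The paper does not supply its own proof of this lemma; it simply states the result with a citation to~\cite{duffin1972reversed}, so there is nothing to compare against beyond noting that your argument is exactly the classical one underlying that reference.
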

% \begin{lemma}[$n$-th Power of Matrix $\bm{A}$]\label{Lemma:MatrixPower} Let $\bm{A}$ denote an $N\times N$ square matrix and $\bm{A}^n$ denote its $n^{th}$ power. The element at the $i^{th}$ row and $j^{th}$ column of  $\bm{A}^n$ denoted by
% $(\bm{A}^n)_{i,j}$ is given as:
% \begin{equation}
%     (\bm{A}^{n})_{i,j} = \underbrace{\sum_{k_{n-1}=1}^{N}\cdots\sum_{k_1=1}^{N}}_{n-1} A_{i,k_{n-1}} \left(\prod_{x=1}^{n-2}A_{k_{x{+}1},k_{x}}\right) A_{k_{1},j} ,~\forall n \ge 2
% \end{equation}
% \end{lemma}

 \subsection{Problem Transformation Steps}
A standard GP is defined as minimizing a posynomial objective function subject to a set of equality constraints on monomials and inequality constraints on posynomials or monomials~\cite{chiang2005geometric}. This non-convex optimization problem can be characterized as:
\begin{equation}\label{eq:GPformat}
    \begin{aligned}
    &\min_{\bm{y}} f_0 (\bm{y})\nonumber\\
    &\textrm{s.t.} ~~~ f_i(\bm{y})\leq 1, \;\; i=1,\cdots,I,\nonumber\\
   &~~~~~~ h_l(\bm{y})=1, \;\; l=1,\cdots,L,
    \end{aligned}
\end{equation}
where  $f_i(\bm{y})=\sum_{m=1}^{M_i} d_{i,m} y_1^{\alpha^{(1)}_{i,m}} y_2^{\alpha^{(2)}_{i,m}} \cdots y_n ^{\alpha^{(n)}_{i,m}}$, $\forall i$, and $h_l(\bm{y})= d_l y_1^{\alpha^{(1)}_l} y_2^{\alpha^{(2)}_l} \cdots y_n ^{\alpha^{(n)}_l}$, $\forall l$. Due to the fact that the log-sum-exp function $f(\bm{y}) = \log \sum_{j=1}^n e^{y_j}$ is convex ($\log$ denotes the natural logarithm) with the following change of variables $z_i=\log(y_i)$, $b_{i,k}=\log(d_{i,k})$, $b_l=\log (d_l)$  the GP can be converted into the following convex format, which can be solved via standard convex solvers such as CVXPY: 
  \begin{equation}~\label{GPtoConvex}
    \begin{aligned}
    &\min_{\bm{z}} \;\log \sum_{m=1}^{M_0} e^{\left(\bm{\alpha}^{\top}_{0,m}\bm{z}+ b_{0,m}\right)}\nonumber\\
    &\textrm{s.t.} ~~~ \log \sum_{m=1}^{M_i} e^{\left(\bm{\alpha}^{\top}_{i,m}\bm{z}+ b_{i,m}\right)}\leq 0 \;\; i=1,\cdots,I,\nonumber\\
   &~~~~~~~ \bm{\alpha}_l^\top \bm{z}+b_l =0\;\; l=1,\cdots,L,
    \end{aligned}
\end{equation}
where $\bm{z}=[z_1,\cdots,z_n]^\top$, $\bm{\alpha}_{i,k}=\left[\alpha_{i,k}^{(1)},\alpha_{i,k}^{(2)}\cdots, \alpha_{i,k}^{(n)}\right]^\top$, $\forall i,k$, and $\bm{\alpha}_{l}=\left[\alpha_{l}^{(1)},a_{l}^{(2)}\cdots, \alpha_{l}^{(n)}\right]^\top$\hspace{-2mm}, $\forall l$.

% The arithmetic-geometric mean equality, discussed in Lemma \ref{Lemma:ArethmaticGeometric}, is an approach employed in order to transform a monomial $u_i(\bm{y})$ into a posynomial function $g(\bm{y})=\sum_{i=1}^{i'} u_i(\bm{y})$ with the following inequality $\forall i$:
% \begin{equation}
%              g(\bm{y})\geq \hat{g}(\bm{y})\triangleq \prod_{i=1}^{i'}\left( {u_i(\bm{y})}/{\alpha_i(\bm{z})}\right)^{\alpha_i(\bm{z})},
%              \vspace{-1.5mm}
% \end{equation}
% where $\alpha_i(\bm{z})=u_i(\bm{z})/g(\bm{z})$, $\forall i$, and $\bm{z}>0$ is a fixed point.
Our overall objective is to transform the problem $\bm{{\mathcal{P}}}$ into GP format. This requires transforming its objective function and constraints in the aforementioned format, which we will address in the following. Our approach will involve conducting a set of corrections and approximations on the left-hand side (L.H.S) and right-hand side (R.H.S) of the constraints of the optimization problem.

\subsection{Binary Variable Approximation}\label{subsec:RelaxBinary}
Variables that are originally defined as binary can be relaxed into continuous variables by introducing quadratic (non-convex) constraints that enforce binarity. Specifically, for a binary variable $A$, we first relax it to a continuous variable  $A\in[0,1]$ and then impose the constraint $A(A-1) \leq 0,$ which ensures that $A$ can only take values in $\{0,1\}$. This relaxation allows binary decision variables to be incorporated within a continuous optimization framework. For instance, the variable $\psi_{m,m'}(t)$ is treated as a continuous variable $\psi_{m,m'}(t)\in[0,1]$, while its binarity is enforced through the following constraint:

\begin{tcolorbox}[ams align]
\psi_{m,m'}(t) (1-\psi_{m,m'}(t)) \leq 0.
     % & \frac{(A_{0})^{-1} (\psi_{m,m'}(t)^{2} + 1)}{\psi_{m,m'}(t) + 1} \leq 1,
     % \\& \frac{\psi_{m,m'}(t) + 1}{\psi_{m,m'}(t)^{2} + 1} \leq 1,
     % \\& (A_{0})^{-1} \geq 1,
     % \\& \psi_{m,m'}(t)\leq 1.
\end{tcolorbox}

For brevity, we avoid presenting equivalent constraints for all binary variables as they follow the above form.

\subsection{GP Transformations}
\textbullet \hspace{2mm} \textbf{Constraint \eqref{eq:hollow}}: The constraint $\sum_{n \in \mathcal{N}} \sum_{m \in \mathcal{M}_n}\sum_{m' \in \mathcal{M}_n} \psi_{m,m'}(t) = 0$ is in the form of equality on a posynomial which is not admitted in GP. Meanwhile, the R.H.S of the constraint can only be equal to 1.\footnote{GP only admits inequality on posynomials in the form of $f(x)\leq 1$, where the posynomial is forced to be less than or equal to $1$.} Therefore, we rewrite this constrain, first with summing both side with 1, and then transforming it into two inequalities via introducing an auxiliary variable which transforms it into GP admitted format\footnote{We have used the fact that forcing an equality constraint in the form of $f(x)=1$ is equivalent to simultaneously satisfying $f(x)\leq1$ and $\frac{1}{f(x)}\leq 1$. Further, to ensure that the optimization solver has a reasonably large feasible region to find a solution, we can further transform joint satisfaction of $f(x)\leq1$ and $\frac{1}{f(x)}\leq1$ to $f(x)\leq 1$ and $\frac{A^{-1}}{f(x)}\leq 1$, where $A\geq 1$ is a coefficient, the value of which will be forced as $A \downarrow 1$ through penalizing the objective function when $A>1$.}:
\begin{equation}
\begin{aligned}
     &1 + \sum_{n \in \mathcal{N}} \sum_{m \in \mathcal{M}_n}\sum_{m' \in \mathcal{M}_n} \psi_{m,m'}(t) \leq 1,
     \\& \frac{(A_{1})^{-1}}{1 + \sum_{n \in \mathcal{N}} \sum_{m \in \mathcal{M}_n}\sum_{m' \in \mathcal{M}_n} \psi_{m,m'}(t)} \leq 1,
     \\& (A_{1})^{-1} \geq 1.
\end{aligned}
\end{equation}
Note that the fraction on the L.H.S of the above inequality represents the division of a monomial by a posynomial, which would not admit the GP format. Thus, we use Lemma \ref{Lemma:ArethmaticGeometric} to condense its denominator to a monomial as follows:
\begin{equation}
\begin{aligned}
    & H_1(\mathbf{x}) \triangleq 1 + \sum_{n \in \mathcal{N}} \sum_{m \in \mathcal{M}_n}\sum_{m' \in \mathcal{M}_n} \psi_{m,m'}(t) 
     \\& \Rightarrow H_1(\mathbf{x}) \geq \widehat{H}_1(\mathbf{x};l) \triangleq \left(H_{1}[\mathbf{x}]^{l-1}\right)^{\frac{1}{H_{1}[\mathbf{x}]^{l-1}}} \prod_{n \in \mathcal{N}} \prod_{m \in \mathcal{M}_n} \prod_{m' \in \mathcal{M}_n} \left(\frac{\psi_{m,m'}(t) H_1[\mathbf{x}]^{l-1}}{[\psi_{m,m'}(t)]^{l-1}}\right)^{\frac{[\psi_{m,m'}(t)]^{l-1}}{H_1[\mathbf{x}]^{l-1}}}.
\end{aligned}
\end{equation}
Therefore, the constraint \eqref{eq:hollow} can be represented in the GP admitted form as follows:
\begin{tcolorbox}[ams align]
    & 1 + \sum_{n \in \mathcal{N}} \sum_{m \in \mathcal{M}_n}\sum_{m' \in \mathcal{M}_n} \psi_{m,m'}(t) \leq 1,,
    \\& \frac{(A_{1})^{-1}}{\widehat{H}_1(\mathbf{x};l)} \leq 1,
    \\& (A_{1})^{-1} \geq 1.
\end{tcolorbox}

\textbullet \hspace{2mm} \textbf{Constraint \eqref{rate2}}: The constraint $\sum_{n'\in\mathcal{N}} \sum_{m'\in\mathcal{M}_{n'}}\psi_{m,m'}(t) + \psi_{m',m}(t) {\le}1, ~~\forall m\in\mathcal{M}_n, n\in\mathcal{N}$ is a standard GP admissible posynomial inequality.

\textbullet \hspace{2mm} \textbf{Constraint \eqref{rate3}}: The constraint $\sum_{m\in\mathcal{M}_n} \sum_{m'\in\mathcal{M}_{n'}} \psi_{m,m'}(t) + \psi_{m',m}(t) {\le}1, ~~\forall n,n' \in \mathcal{N}$ is a standard GP admissible posynomial inequality.

\textbullet \hspace{2mm} \textbf{Constraint \eqref{eq:min_data_rate}}: The constraint $\big(\overline{\mathfrak{R}}_{m,m'}(t) - \mathfrak{R}^{\mathsf{min}}\big) \psi_{m,m'}(t) {\geq} 0, \forall m\in\mathcal{M}_n, \forall m'\in\mathcal{M}_{n'}$ includes a subtraction that is not admitted in GP and can be rewritten as an inequality of posynomial as follows:
\begin{equation}
    \frac{\mathfrak{R}^{\mathsf{min}} \psi_{m,m'}(t)}{\overline{\mathfrak{R}}_{m,m'}(t)  \psi_{m,m'}(t)} {\leq} 1.
\end{equation}
This constraint is in GP admitted format; however, to avoid zero denominators, we sum both the nominator and denominator with a constant as follows:
\begin{equation}
    \frac{\mathfrak{R}^{\mathsf{min}} \psi_{m,m'}(t)+1}{\overline{\mathfrak{R}}_{m,m'}(t)  \psi_{m,m'}(t)+1} {\leq} 1.
\end{equation}
This transformation requires the denominator to be condensed; we do that by further utilizing the Lemma \ref{Lemma:ArethmaticGeometric} as follows:
\begin{equation}
 H_2(\mathbf{x}) \triangleq \overline{\mathfrak{R}}_{m,m'}(t)  \psi_{m,m'}(t)+1 \Rightarrow H_2(\mathbf{x}) \geq \widehat{H}_2(\mathbf{x};l) \triangleq  \left(\frac{\overline{\mathfrak{R}}_{m,m'}(t)  \psi_{m,m'}(t) H_2[\mathbf{x}]^{l-1}}{[\overline{\mathfrak{R}}_{m,m'}(t)  \psi_{m,m'}(t)]^{l-1}}\right)^{\frac{[\overline{\mathfrak{R}}_{m,m'}(t)  \psi_{m,m'}(t)]^{l-1}}{H_2[\mathbf{x}]^{l-1}}} \left(H_{2}[\mathbf{x}]^{l-1}\right)^{\frac{1}{H_{2}[\mathbf{x}]^{l-1}}}.
\end{equation}
Thus, the constraint \eqref{eq:min_data_rate} can be presented in the GP admitted format as follows:
\begin{tcolorbox}[ams align]
    \frac{\mathfrak{R}^{\mathsf{min}} \psi_{m,m'}(t) + 1}{\widehat{H}_2(\mathbf{x};l)} {\leq} 1.
\end{tcolorbox}

\textbullet \hspace{2mm} \textbf{Constraint \eqref{eq:finalRate}}: The constraint $\mathfrak{R}_{n,n'}\big(\bm{\Psi}(\mathcal{N}, t)\big) {=} \sum_{m\in\mathcal{M}_{n}} \sum_{m'\in\mathcal{M}_{n'}}\psi_{m,m'}(t)\overline{\mathfrak{R}}_{m,m'}(t)$ can first be rewritten as follows:
\begin{equation}
     \frac{\sum_{m\in\mathcal{M}_{n}} \sum_{m'\in\mathcal{M}_{n'}}  \psi_{m,m'}(t)\overline{\mathfrak{R}}_{m,m'}(t)}{\mathfrak{R}_{n,n'}\big(\bm{\Psi}(\mathcal{N}, t)\big)} = 1,
\end{equation}
which is in the form of equality on a posynomial, which is not admitted in GP format, and therefore, can be transformed into GP admitted format by introducing an auxiliary variable as follows:
\begin{equation}
\begin{aligned}
     & \frac{\sum_{m\in\mathcal{M}_{n}} \sum_{m'\in\mathcal{M}_{n'}}  \psi_{m,m'}(t)\overline{\mathfrak{R}}_{m,m'}(t) + 1}{\mathfrak{R}_{n,n'}\big(\bm{\Psi}(\mathcal{N}, t)\big) + 1} \leq 1,
     \\& \frac{(A_{2})^{-1} \left(\mathfrak{R}_{n,n'}\big(\bm{\Psi}(\mathcal{N}, t)\big) + 1\right)}{\sum_{m\in\mathcal{M}_{n}} \sum_{m'\in\mathcal{M}_{n'}}  \psi_{m,m'}(t)\overline{\mathfrak{R}}_{m,m'}(t) + 1}  \leq 1,
     \\& (A_{2})^{-1} \geq 1.
\end{aligned}
\end{equation}
Note that additions with a constant are included to avoid zero denominators. We further use Lemma \ref{Lemma:ArethmaticGeometric} to condense both equations' denominators as follows:

\begin{equation}
\begin{aligned}
& H_3(\mathbf{x}) \triangleq \sum_{m\in\mathcal{M}_{n}} \sum_{m'\in\mathcal{M}_{n'}}  \psi_{m,m'}(t)\overline{\mathfrak{R}}_{m,m'}(t) + 1
\\& \Rightarrow H_3(\mathbf{x}) \geq \widehat{H}_3(\mathbf{x};l) \triangleq \prod_{m\in\mathcal{M}_{n}} \prod_{m'\in\mathcal{M}_{n'}} \left(\frac{\psi_{m,m'}(t)\overline{\mathfrak{R}}_{m,m'}(t) H_3[\mathbf{x}]^{l-1}}{[\psi_{m,m'}(t)\overline{\mathfrak{R}}_{m,m'}(t)]^{l-1}}\right)^{\frac{[\psi_{m,m'}(t)\overline{\mathfrak{R}}_{m,m'}(t)]^{l-1}}{H_3[\mathbf{x}]^{l-1}}} \left(H_{3}[\mathbf{x}]^{l-1}\right)^{\frac{1}{H_{3}[\mathbf{x}]^{l-1}}},
\end{aligned}
\end{equation}
\begin{equation}
\begin{aligned}
\\& H_4(\mathbf{x}) \triangleq \mathfrak{R}_{n,n'}\big(\bm{\Psi}(\mathcal{N}, t)\big) + 1 \Rightarrow H_4(\mathbf{x}) \geq \widehat{H}_4(\mathbf{x};l) \triangleq \left(\frac{\mathfrak{R}_{n,n'}\big(\bm{\Psi}(\mathcal{N}, t)\big) H_4[\mathbf{x}]^{l-1}}{[\mathfrak{R}_{n,n'}\big(\bm{\Psi}(\mathcal{N}, t)\big)]^{l-1}}\right)^{\frac{[\mathfrak{R}_{n,n'}(\bm{\Psi}(\mathcal{N}, t))]^{l-1}}{H_4[\mathbf{x}]^{l-1}}} \left(H_{4}[\mathbf{x}]^{l-1}\right)^{\frac{1}{H_{4}[\mathbf{x}]^{l-1}}}.
\end{aligned}
\end{equation}
Therefore, the constraint \eqref{eq:finalRate} can be represented in the GP admitted form as follows:
\begin{tcolorbox}[ams align]
    & \frac{\sum_{m\in\mathcal{M}_{n}} \sum_{m'\in\mathcal{M}_{n'}}  \psi_{m,m'}(t)\overline{\mathfrak{R}}_{m,m'}(t) + 1}{\widehat{H}_3(\mathbf{x};l)} \leq 1,
     \\& \frac{(A_{2})^{-1}  \mathfrak{R}_{n,n'}\big(\bm{\Psi}(\mathcal{N}, t)\big)}{\widehat{H}_4(\mathbf{x};l)}  \leq 1,
     \\& (A_{2})^{-1} \geq 1.
\end{tcolorbox}

\textbullet \hspace{2mm} \textbf{Constraint \eqref{cons:CSC_4}}: The constraint $\sum_{c{\in}\mathcal{C}^{(k)}}\sum_{n\in\mathcal{N}} \gamma^{(k)}_{c,n}= N, \forall k \in \mathcal{K}$ is also in the form of equality on a posynomial and can be treated similarly to the last constraint as follows:
\begin{equation}
\begin{aligned}
     & \frac{\sum_{c \in \mathcal{C}^{(k)}}\sum_{n\in\mathcal{N}} \gamma^{(k)}_{c,n}}{N}\leq 1,
     \\& \frac{(A_{3})^{-1} N}{\sum_{c \in \mathcal{C}^{(k)}}\sum_{n\in\mathcal{N}}\gamma^{(k)}_{c,n}}  \leq 1,
     \\& (A_{3})^{-1} \geq 1.
\end{aligned}
\end{equation}
The denominator can be condensed via Lemma \ref{Lemma:ArethmaticGeometric} as follows:
\begin{equation}
H_5(\mathbf{x}) \triangleq \sum_{c \in \mathcal{C}^{(k)}}\sum_{n\in\mathcal{N}} \gamma^{(k)}_{c,n} \Rightarrow H_5(\mathbf{x}) \geq \widehat{H}_5(\mathbf{x};l) \triangleq \prod_{c \in \mathcal{C}^{(k)}} \prod_{n \in \mathcal{N}} \left(\frac{\gamma^{(k)}_{c,n} H_5[\mathbf{x}]^{l-1}}{[\gamma^{(k)}_{c,n}]^{l-1}}\right)^{\frac{[\gamma^{(k)}_{c,n}]^{l-1}}{H_5[\mathbf{x}]^{l-1}}}.
\end{equation}
This leads to the GP admitted form of constraint \eqref{cons:CSC_4} as follows:
\begin{tcolorbox}[ams align]
    & (N)^{-1}\sum_{c \in \mathcal{C}^{(k)}}\sum_{n\in\mathcal{N}} \gamma^{(k)}_{c,n}\leq 1,
    \\& \frac{(A_{3})^{-1}}{(N)^{-1} \widehat{H}_5(\mathbf{x};l)}\leq 1,
    \\& (A_{3})^{-1} \geq 1.
\end{tcolorbox}

\textbullet \hspace{2mm} \textbf{Constraint \eqref{cons:CSC_2}}: The constraint $\sum_{c{\in}\mathcal{C}^{(k)}} \gamma^{(k)}_{c,n}=1, \forall n\in\mathcal{N},~\forall k \in \mathcal{K}$ is in the form of equality on a posynomial which is not admitted in GP and we rewrite it via introducing an auxiliary variable as two inequalities to transform it into GP format\footnote{We have used the fact that forcing an equality constraint in the form of $f(x)=1$ is equivalent to simultaneously satisfying $f(x)\leq1$ and $\frac{1}{f(x)}\leq 1$. Further, to ensure that the optimization solver has a reasonably large feasible region to find a solution, we can further transform joint satisfaction of $f(x)\leq1$ and $\frac{1}{f(x)}\leq1$ to $f(x)\leq 1$ and $\frac{A^{-1}}{f(x)}\leq 1$, where $A\geq 1$ is a coefficient, the value of which will be forced as $A \downarrow 1$ through penalizing the objective function when $A>1$.}:
\begin{equation}
\begin{aligned}
     &\sum_{c \in \mathcal{C}^{(k)}} \gamma^{(k)}_{c,n}\leq 1,
     \\& \frac{(A_{4})^{-1}}{\sum_{c \in \mathcal{C}^{(k)}} \gamma^{(k)}_{c,n}} \leq 1,
     \\& (A_{4})^{-1} \geq 1.
\end{aligned}
\end{equation}
We further exploit Lemma \ref{Lemma:ArethmaticGeometric} to condense the denominator as:
\begin{equation}
H_6(\mathbf{x}) \triangleq \sum_{c \in \mathcal{C}^{(k)}} \gamma^{(k)}_{c,n} \Rightarrow H_6(\mathbf{x}) \geq \widehat{H}_6(\mathbf{x};l) \triangleq \prod_{c \in \mathcal{C}^{(k)}} \left(\frac{\gamma^{(k)}_{c,n} H_6[\mathbf{x}]^{l-1}}{[\gamma^{(k)}_{c,n}]^{l-1}}\right)^{\frac{[\gamma^{(k)}_{c,n}]^{l-1}}{H_6[\mathbf{x}]^{l-1}}}.
\end{equation}
Therefore, the constraint \eqref{cons:CSC_2} can be represented in the GP admitted form as follows:
\begin{tcolorbox}[ams align]
    & \sum_{c \in \mathcal{C}^{(k)}} \gamma^{(k)}_{c,n}\leq 1,
    \\& \frac{(A_{4})^{-1}}{\widehat{H}_6(\mathbf{x};l)} \leq 1,
    \\& (A_{4})^{-1} \geq 1.
\end{tcolorbox}

\textbullet \hspace{2mm} \textbf{Constraint \eqref{cons:CSC_1}}: The constraint $\sum_{n\in\mathcal{N}} \gamma^{(k)}_{c,n}\ge 2, \forall c\in\mathcal{C}^{(k)}$ is in the form of an inequality on a posynomial. However, the inequality is forcing the posynomial to be greater than or equal to $2$, which is not allowed in GP. We thus first transform the constraint as follows:
\begin{equation}
    \frac{2}{\sum_{n\in\mathcal{N}} \gamma^{(k)}_{c,n}}\leq 1.
\end{equation}
Note that the fraction on the L.H.S of the above inequality represents the division of a monomial on a posynomial, which would not admit the GP format. Thus, we use Lemma \ref{Lemma:ArethmaticGeometric} to condense its denominator to a monomial as follows:
\begin{equation}
     H_7(\mathbf{x}) \triangleq \sum_{n\in\mathcal{N}} \gamma^{(k)}_{c,n} \Rightarrow H_7(\mathbf{x}) \geq \widehat{H}_7(\mathbf{x};l) \triangleq \prod_{n \in \mathcal{N}} \left(\frac{\gamma^{(k)}_{c,n} H_7[\mathbf{x}]^{l-1}}{[\gamma^{(k)}_{c,n}]^{l-1}}\right)^{\frac{[\gamma^{(k)}_{c,n}]^{l-1}}{H_7[\mathbf{x}]^{l-1}}}.
\end{equation}
Therefore, the constraint \eqref{cons:CSC_1} can be represented in the GP admitted form as follows:
\begin{tcolorbox}[ams align]
\frac{2}{\widehat{H}_7(\mathbf{x};l)}\leq 1.
\end{tcolorbox}

\textbullet \hspace{2mm} \textbf{Constraint \eqref{eq:rootDef}}: The constraint
$r^{\mathsf{G}, (k)}=\sum_{n\in\mathcal{N}}  \pi_{n}^{\mathsf{G}, (k)} n$ can fisrt be rewritten as:
\begin{equation}
\frac{\sum_{n\in\mathcal{N}}  \pi_{n}^{\mathsf{G}, (k)} n}{r^{\mathsf{G}, (k)}}=1,
\end{equation}
which results in an equality on a posynomial that is a GP not admitted format. Using a similar approach to the previous constraints, we transform it as follows:
\begin{equation}
\begin{aligned}
     &\frac{\sum_{n\in\mathcal{N}}  \pi_{n}^{\mathsf{G}, (k)} n}{r^{\mathsf{G}, (k)}} \leq 1,
     \\& \frac{(A_{9})^{-1} r^{\mathsf{G}, (k)}}{\sum_{n\in\mathcal{N}}  \pi_{n}^{\mathsf{G}, (k)} n} \leq 1,
     \\& (A_{9})^{-1} \geq 1.
\end{aligned}
\end{equation}
Similar to previous constants, we use Lemma \ref{Lemma:ArethmaticGeometric} to condense the denominator as:
\begin{equation}
H_{14}(\mathbf{x}) \triangleq \sum_{n\in\mathcal{N}}  \pi_{n}^{\mathsf{G}, (k)}n  \Rightarrow H_{14}(\mathbf{x}) \geq \widehat{H}_{14}(\mathbf{x};l) \triangleq \prod_{n \in \mathcal{N}} \left(\frac{\pi_{n}^{\mathsf{G}, (k)} n H_{14}[\mathbf{x}]^{l-1}}{[\pi_{n}^{\mathsf{G}, (k)}n]^{l-1}}\right)^{\frac{[\pi_{n}^{\mathsf{G}, (k)}n]^{l-1}}{H_{14}[\mathbf{x}]^{l-1}}}.
\end{equation}
Thus, the constraint \eqref{eq:rootDef} can be shown in the GP admitted form as follows:
\begin{tcolorbox}[ams align]
    & \frac{\sum_{n\in\mathcal{N}}  \pi_{n}^{\mathsf{G}, (k)} n}{r^{\mathsf{G}, (k)}} \leq 1,
    \\& \frac{(A_{9})^{-1} r^{\mathsf{G}, (k)}}{\widehat{H}_{14}(\mathbf{x};l)} \leq 1,
    \\& (A_{9})^{-1} \geq 1.
\end{tcolorbox}

\textbullet \hspace{2mm} \textbf{Constraint \eqref{cons:SGD_1}}: The constraint $\sum_{n\in\mathcal{N}}  \pi_{n}^{\mathsf{G}, (k)} =1$ is also in the form of equality on a posynomial and to achieve a GP admitted format a similar approach can be exploited:
\begin{equation}
\begin{aligned}
     &\sum_{n\in\mathcal{N}} \pi_{n}^{\mathsf{G}, (k)}\leq 1,
     \\& \frac{(A_{10})^{-1}}{\sum_{n\in\mathcal{N}} \pi_{n}^{\mathsf{G}, (k)}} \leq 1,
     \\& (A_{10})^{-1} \geq 1.
\end{aligned}
\end{equation}
We further exploit Lemma \ref{Lemma:ArethmaticGeometric} to condense the denominator as:
\begin{equation}
H_{15}(\mathbf{x}) \triangleq \sum_{n\in\mathcal{N}} \pi_{n}^{\mathsf{G}, (k)} \Rightarrow H_{15}(\mathbf{x}) \geq \widehat{H}_{15}(\mathbf{x};l) \triangleq \prod_{n \in \mathcal{N}} \left(\frac{\pi_{n}^{\mathsf{G}, (k)} H_{15}[\mathbf{x}]^{l-1}}{[\pi_{n}^{\mathsf{G}, (k)}]^{l-1}}\right)^{\frac{[\pi_{n}^{\mathsf{G}, (k)}]^{l-1}}{H_{15}[\mathbf{x}]^{l-1}}}.
\end{equation}
Therefore, the constraint \eqref{cons:SGD_1} can be represented in the GP admitted form as follows:
\begin{tcolorbox}[ams align]
    &\sum_{n\in\mathcal{N}} \pi_{n}^{(k)}\leq 1,
    \\& \frac{(A_{10})^{-1}}{\widehat{H}_{15}(\mathbf{x};l)} \leq 1,
    \\& (A_{10})^{-1} \geq 1.
\end{tcolorbox}

\textbullet \hspace{2mm} \textbf{Constraint \eqref{eq:rootDef2}}: The constraint $r_{c}^{\mathsf{L}, (k)}=\sum_{n{\in}\mathcal{N}} \pi_{n}^{\mathsf{L}, (k)} \gamma^{(k)}_{c,n} \cdot n, \forall c \in \mathcal{C}^{(k)}$ can fisrt be rewritten as:
\begin{equation}
\frac{\sum_{n{\in}\mathcal{N}} \pi_{n}^{\mathsf{L}, (k)}  \gamma^{(k)}_{c,n}  n}{r_{c}^{\mathsf{L}, (k)}}=1,
\end{equation}
which is an equality on a posynomial, which is not admitted in GP. We transform it as below:
\begin{equation}
\begin{aligned}
     &\frac{\sum_{n{\in}\mathcal{N}} \pi_{n}^{\mathsf{L}, (k)}  \gamma^{(k)}_{c,n}  n}{r_{c}^{\mathsf{L}, (k)}} \leq 1,
     \\& \frac{(A_{11})^{-1} r_{c}^{\mathsf{L}, (k)}}{\sum_{n{\in}\mathcal{N}} \pi_{n}^{\mathsf{L}, (k)}  \gamma^{(k)}_{c,n}  n} \leq 1,
     \\& (A_{11})^{-1} \geq 1.
\end{aligned}
\end{equation}
We utilize Lemma \ref{Lemma:ArethmaticGeometric} in order to condense the denominator as:
\begin{equation}
H_{16}(\mathbf{x}) \triangleq \sum_{n{\in}\mathcal{N}} \pi_{n}^{\mathsf{L}, (k)}  \gamma^{(k)}_{c,n}  n  \Rightarrow H_{16}(\mathbf{x}) \geq \widehat{H}_{16}(\mathbf{x};l) \triangleq \prod_{n \in \mathcal{N}} \left(\frac{\pi_{n}^{\mathsf{L}, (k)}  \gamma^{(k)}_{c,n}  n H_{16}[\mathbf{x}]^{l-1}}{[\pi_{n}^{\mathsf{L}, (k)}  \gamma^{(k)}_{c,n}  n]^{l-1}}\right)^{\frac{[\pi_{n}^{\mathsf{L}, (k)}  \gamma^{(k)}_{c,n}  n]^{l-1}}{H_{16}[\mathbf{x}]^{l-1}}}.
\end{equation}
Thus, the constraint \eqref{eq:rootDef} can be shown in the GP admitted form as follows:
\begin{tcolorbox}[ams align]
    & \frac{\sum_{n{\in}\mathcal{N}} \pi_{n}^{\mathsf{L}, (k)}  \gamma^{(k)}_{c,n}  n}{r_{c}^{\mathsf{L}, (k)}} \leq 1,
    \\& \frac{(A_{11})^{-1} r_{c}^{\mathsf{L}, (k)}}{\widehat{H}_{16}(\mathbf{x};l)} \leq 1,
    \\& (A_{11})^{-1} \geq 1.
\end{tcolorbox}

\textbullet \hspace{2mm} \textbf{Constraint \eqref{cons:SCA_1}}: The constraint $\sum_{n{\in}\mathcal{N}^{(k)}_{c}} \pi_{c,n}^{(k,\ell)} =1$ is also alike to the previous constraint and a similar approach can be driven to achieve a GP admitted format as follows:
\begin{equation}
\begin{aligned}
     &\sum_{n{\in}\mathcal{N}^{(k)}_{c}} \pi_{c,n}^{(k,\ell)}\leq 1,
     \\& \frac{(A_{12})^{-1}}{\sum_{n{\in}\mathcal{N}^{(k)}_{c}} \pi_{c,n}^{(k,\ell)}} \leq 1,
     \\& (A_{12})^{-1} \geq 1.
\end{aligned}
\end{equation}
We further exploit Lemma \ref{Lemma:ArethmaticGeometric} to condense the denominator as:
\begin{equation}
H_{17}(\mathbf{x}) \triangleq \sum_{n{\in}\mathcal{N}^{(k)}_{c}} \pi_{c,n}^{(k,\ell)} \Rightarrow H_{17}(\mathbf{x}) \geq \widehat{H}_{17}(\mathbf{x};l) \triangleq \prod_{n \in \mathcal{N}^{(k)}_{(c)}} \left(\frac{\pi_{c,n}^{(k,\ell)} H_{17}[\mathbf{x}]^{l-1}}{[\pi_{c,n}^{(k,\ell)}]^{l-1}}\right)^{\frac{[\pi_{c,n}^{(k,\ell)}]^{l-1}}{H_{17}[\mathbf{x}]^{l-1}}}.
\end{equation}
Therefore, the constraint \eqref{cons:SCA_1} can be represented in the GP admitted form as follows:
\begin{tcolorbox}[ams align]
    &\sum_{n{\in}\mathcal{N}^{(k)}_{c}} \pi_{c,n}^{(k,\ell)}\leq 1,
    \\& \frac{(A_{12})^{-1}}{\widehat{H}_{17}(\mathbf{x};l)} \leq 1,
    \\& (A_{12})^{-1} \geq 1.
\end{tcolorbox}

\textbullet \hspace{2mm} \textbf{Constraint \eqref{cons:SGD_2}}: The constraint
$\sum_{n\in\mathcal{N}}\bm{\Gamma}^{\mathsf{GD}, (k)}_{n,n'} + \pi_{n'}^{\mathsf{G}, (k)}{=}1,~\forall n'\in\mathcal{N}$ is in the form of equality on posynomia;. Thus, it can be transformed to admit the GP format via the following steps:
\begin{equation}
\begin{aligned}
    &\sum_{n\in\mathcal{N}}\bm{\Gamma}^{\mathsf{GD}, (k)}_{n,n'} + \pi_{n'}^{\mathsf{G}, (k)} \leq 1,
    \\& \frac{(A_{13})^{-1}}{\sum_{n\in\mathcal{N}}\bm{\Gamma}^{\mathsf{GD}, (k)}_{n,n'} + \pi_{n'}^{\mathsf{G}, (k)}} \leq 1,
    \\& (A_{13})^{-1} \geq 1.
\end{aligned}
\end{equation}
The Lemma \ref{Lemma:ArethmaticGeometric} is used for the denominator condensation as follows:
\begin{equation}
H_{18}(\mathbf{x}) \triangleq \sum_{n\in\mathcal{N}}\bm{\Gamma}^{\mathsf{GD}, (k)}_{n,n'} + \pi_{n'}^{\mathsf{G}, (k)} \Rightarrow H_{18}(\mathbf{x}) \geq \widehat{H}_{18}(\mathbf{x};l) \triangleq \prod_{n\in\mathcal{N}} \left(\frac{\bm{\Gamma}^{\mathsf{GD}, (k)}_{n,n'} H_{18}[\mathbf{x}]^{l-1}}{[\bm{\Gamma}^{\mathsf{GD}, (k)}_{n,n'}]^{l-1}}\right)^{\frac{[\bm{\Gamma}^{\mathsf{GD}, (k)}_{n,n'}]^{l-1}}{H_{18}[\mathbf{x}]^{l-1}}} \left(\frac{\pi_{n'}^{\mathsf{G}, (k)} H_{18}[\mathbf{x}]^{l-1}}{[\pi_{n'}^{\mathsf{G}, (k)}]^{l-1}}\right)^{\frac{[\pi_{n'}^{\mathsf{G}, (k)}]^{l-1}}{H_{18}[\mathbf{x}]^{l-1}}}.
\end{equation}
Therefore, the constraint \eqref{cons:SGD_2} can be presented as GP format as follows:
\begin{tcolorbox}[ams align]
    &\sum_{n\in\mathcal{N}}\bm{\Gamma}^{\mathsf{GD}, (k)}_{n,n'} + \pi_{n'}^{\mathsf{G}, (k)} \leq 1,
    \\& \frac{(A_{13})^{-1}}{\widehat{H}_{18}(\mathbf{x};l)} \leq 1,
    \\& (A_{13})^{-1} \geq 1.
\end{tcolorbox}

\textbullet \hspace{2mm} \textbf{Constraints \eqref{eq:GM_dispatching_transmission_latency},\eqref{eq:GM_dispatching_latency},\eqref{cons:GM_dispatch_latency_1}}: For the constraint \eqref{cons:GM_dispatch_latency_1} as $\tau^{\mathsf{GD},(k)}\leq \tau^{\mathsf{GD},\mathsf{max}},\forall k \in\mathcal{K}$, via knowing that $\tau^{\mathsf{GD},(k)}=\max_{n'\in\mathcal{N}} \big\{ \tau_{n'}^{\mathsf{GD}, (k)}\big\}$, we utilize the constraint \eqref{eq:GM_dispatching_latency} as $\tau_{n'}^{\mathsf{GD}, (k)}{=}\sum_{n\in\mathcal{N}\setminus\{n'\}} \bm{\Gamma}^{\mathsf{GD}, (k)}_{n, n'} \tau_{n}^{\mathsf{GD}, (k)} +\tau_{n, n'}^{\mathsf{GD}, (k)}$, and also include \eqref{eq:GM_dispatching_transmission_latency}, in which $ \tau_{n,n'}^{\mathsf{GD}, (k)} = \bm{\Gamma}^{\mathsf{GD}, (k)}_{n,n'}\alpha^{\mathsf{Bit}} M^{\mathsf{Dim}}\big/{\mathfrak{R}^{\mathsf{GD}, (k)}_{n,n'}}$, which leads to the following formulation:
\begin{equation}
\begin{aligned}
    & \max_{n'\in\mathcal{N}} 
      \Big\{ \tau_{n'}^{\mathsf{GD}, (k)} \Big\} \leq \tau^{\mathsf{GD},\mathsf{max}},
    \\& \tau_{n'}^{\mathsf{GD}, (k)} = 
      \sum_{n\in\mathcal{N}\setminus\{n'\}}
       \bm{\Gamma}^{\mathsf{GD}, (k)}_{n, n'} 
      \tau_{n}^{\mathsf{GD}, (k)} 
      + \tau_{n,n'}^{\mathsf{GD}, (k)},
    \\& \tau_{n,n'}^{\mathsf{GD}, (k)} = \bm{\Gamma}^{\mathsf{GD}, (k)}_{n,n'}\alpha^{\mathsf{Bit}} M^{\mathsf{Dim}}\big/{\mathfrak{R}^{\mathsf{GD}, (k)}_{n,n'}}
\end{aligned}
\end{equation}
The first expression is constructed with a maximum function, which is not permitted in the GP format. Therefore, we use the following transformation, $\max\{A, B\}\approx (A^{p}+B^{p})^{\frac{1}{p}}$, where $p$ is a large positive constant. This transformation allows us to replace the maximum term with a GP-compatible form. Accordingly, the first expression can be reformulated as follows:
\begin{equation}
    \left(\sum_{n'\in\mathcal{N}} 
      \left( \tau_{n'}^{\mathsf{GD}, (k)} \right)^{p}\right)^{1/p}\leq \tau^{\mathsf{GD},\mathsf{max}},
\end{equation}
This expression is still not in the acceptable form of GP. We take a nested approach to condense this expression by introducing $Q^{(k)}_{(1)}(\mathbf{x})$ whcih is assumed as follows:
\begin{equation}
    Q^{(k)}_{(1)}(\mathbf{x}) = \sum_{n'\in\mathcal{N}} 
      \left( \tau_{n'}^{\mathsf{GD}, (k)} \right)^{p},
\end{equation}
which can be presented as follows to admit the GP format:
\begin{equation}
\begin{aligned}
    &\frac{Q^{(k)}_{(1)}(\mathbf{x})}{\sum_{n'\in\mathcal{N}} 
      \left( \tau_{n}^{\mathsf{GD}, (k)} \right)^{p}} \leq 1,
    \\& \frac{(A_{14})^{-1} \sum_{n'\in\mathcal{N}} 
      \left( \tau_{n}^{\mathsf{GD}, (k)} \right)^{p}}{Q^{(k)}_{(1)}(\mathbf{x})} \leq 1,
    \\& (A_{14})^{-1} \geq 1.
\end{aligned}
\end{equation}
This requires the Lemma \ref{Lemma:ArethmaticGeometric} to condense the denominator and transform it into GP admitted format as follows:
\begin{equation}
H_{19}(\mathbf{x}) \triangleq \sum_{n'\in\mathcal{N}} 
      \left( \tau_{n'}^{\mathsf{GD}, (k)} \right)^{p} \Rightarrow H_{19}(\mathbf{x}) \geq \widehat{H}_{19}(\mathbf{x};l) \triangleq \prod_{n'\in\mathcal{N}} \left(\frac{\left( \tau_{n'}^{\mathsf{GD}, (k)} \right)^{p} H_{19}[\mathbf{x}]^{l-1}}{\Bigg[\left( \tau_{n'}^{\mathsf{GD}, (k)} \right)^{p}\Bigg]^{l-1}}\right)^{\frac{\Bigg[\left( \tau_{n'}^{\mathsf{GD}, (k)} \right)^{p}\Bigg]^{l-1}}{H_{19}[\mathbf{x}]^{l-1}}}.
\end{equation}
Therefore, the first expression can be transformed into the following format:
\begin{equation}
\frac{\left(Q^{(k)}_{(1)}(\mathbf{x})\right)^{1/p}}{\tau^{\mathsf{GD},\mathsf{max}}} \leq 1.
\end{equation}
We also transform the second expression, which is in the form of an equality on a posynomial, as follows:
\begin{equation}
\begin{aligned}
& \frac{\sum_{n\in\mathcal{N}\setminus\{n'\}}\bm{\Gamma}^{\mathsf{GD}, (k)}_{n, n'} \tau_{n}^{\mathsf{GD}, (k)} + \tau_{n,n'}^{\mathsf{GD}, (k)}}{\tau_{n'}^{\mathsf{GD}, (k)}}\leq 1,
\\& \frac{(A_{15})^{-1} \tau_{n'}^{\mathsf{GD}, (k)}}{\sum_{n\in\mathcal{N}\setminus\{n'\}}\bm{\Gamma}^{\mathsf{GD}, (k)}_{n, n'}\tau_{n}^{\mathsf{GD}, (k)}  + \tau_{n,n'}^{\mathsf{GD}, (k)}}\leq 1,
\\& (A_{15})^{-1} \geq 1.
\end{aligned}
\end{equation}
Similarly, Lemma \ref{Lemma:ArethmaticGeometric} is used for denominator as follows:
\begin{equation}
\begin{aligned}
    & H_{20}(\mathbf{x}) \triangleq \sum_{n\in\mathcal{N}\setminus\{n'\}}\bm{\Gamma}^{\mathsf{GD}, (k)}_{n, n'} \tau_{n}^{\mathsf{GD}, (k)}  + \tau_{n,n'}^{\mathsf{GD}, (k)} 
\\& \Rightarrow H_{20}(\mathbf{x}) \geq \widehat{H}_{20}(\mathbf{x};l) \triangleq  \prod_{n\in\mathcal{N}\setminus\{n'\}}\left(\frac{\bm{\Gamma}^{\mathsf{GD}, (k)}_{n, n'}\tau_{n}^{\mathsf{GD}, (k)} H_{20}[\mathbf{x}]^{l-1}}{[\bm{\Gamma}^{\mathsf{GD}, (k)}_{n, n'}\tau_{n}^{\mathsf{GD}, (k)}]^{l-1}}\right)^{\frac{[\bm{\Gamma}^{\mathsf{GD}, (k)}_{n, n'}\tau_{n}^{\mathsf{GD}, (k)}]^{l-1}}{H_{20}[\mathbf{x}]^{l-1}}} \left(\frac{\tau_{n,n'}^{\mathsf{GD}, (k)} H_{20}[\mathbf{x}]^{l-1}}{[\tau_{n,n'}^{\mathsf{GD}, (k)}]^{l-1}}\right)^{\frac{[\tau_{n,n'}^{\mathsf{GD}, (k)}]^{l-1}}{H_{20}[\mathbf{x}]^{l-1}}}.
\end{aligned}
\end{equation}
The third expression is also in the form of an equality on a monomial, which is accepted in GP format; however, the nominator and denominator are summed with a constant to avoid a zero denominator, which leads to the following expression:
\begin{equation}
\begin{aligned}
        & \frac{\mathfrak{R}^{\mathsf{GD}, (k)}_{n,n'}\tau_{n,n'}^{\mathsf{GD}, (k)} + 1}{\bm{\Gamma}^{\mathsf{GD}, (k)}_{n,n'}\alpha^{\mathsf{Bit}} M^{\mathsf{Dim}} + 1} \leq 1,
        \\& \frac{(A_{16})^{-1} (\bm{\Gamma}^{\mathsf{GD}, (k)}_{n,n'}\alpha^{\mathsf{Bit}} M^{\mathsf{Dim}} + 1)}{\mathfrak{R}^{\mathsf{GD}, (k)}_{n,n'}\tau_{n,n'}^{\mathsf{GD}, (k)} + 1} \leq 1,
        \\& (A_{16})^{-1} \geq 1.
\end{aligned}
\end{equation}
The Lemma \ref{Lemma:ArethmaticGeometric} is used for denominators as follows:
\begin{equation}
 H_{21}(\mathbf{x}) \triangleq \bm{\Gamma}^{\mathsf{GD}, (k)}_{n,n'}\alpha^{\mathsf{Bit}} M^{\mathsf{Dim}} + 1 \Rightarrow H_{21}(\mathbf{x}) \geq \widehat{H}_{21}(\mathbf{x};l) \triangleq  \left(\frac{\bm{\Gamma}^{\mathsf{GD}, (k)}_{n,n'} H_{21}[\mathbf{x}]^{l-1}}{[\bm{\Gamma}^{\mathsf{GD}, (k)}_{n,n'}]^{l-1}}\right)^{\frac{[\bm{\Gamma}^{\mathsf{GD}, (k)}_{n,n'}\alpha^{\mathsf{Bit}} M^{\mathsf{Dim}}]^{l-1}}{H_{21}[\mathbf{x}]^{l-1}}} \left(H_{21}[\mathbf{x}]^{l-1}\right)^{\frac{1}{H_{21}[\mathbf{x}]^{l-1}}},
\end{equation}
\begin{equation}
 H_{22}(\mathbf{x}) \triangleq \mathfrak{R}^{\mathsf{GD}, (k)}_{n,n'}\tau_{n,n'}^{\mathsf{GD}, (k)} + 1 \Rightarrow H_{22}(\mathbf{x}) \geq \widehat{H}_{22}(\mathbf{x};l) \triangleq  \left(\frac{\mathfrak{R}^{\mathsf{GD}, (k)}_{n,n'}\tau_{n,n'}^{\mathsf{GD}, (k)} H_{22}[\mathbf{x}]^{l-1}}{[\mathfrak{R}^{\mathsf{GD}, (k)}_{n,n'}\tau_{n,n'}^{\mathsf{GD}, (k)}]^{l-1}}\right)^{\frac{[\mathfrak{R}^{\mathsf{GD}, (k)}_{n,n'}\tau_{n,n'}^{\mathsf{GD}, (k)}]^{l-1}}{H_{22}[\mathbf{x}]^{l-1}}} \left(H_{22}[\mathbf{x}]^{l-1}\right)^{\frac{1}{H_{22}[\mathbf{x}]^{l-1}}}.
\end{equation}
Therefore, constraints \eqref{eq:GM_dispatching_transmission_latency},\eqref{eq:GM_dispatching_latency},\eqref{cons:GM_dispatch_latency_1} can be presented in GP format as follows:
\begin{tcolorbox}[ams align]
    &\frac{\left(Q^{(k)}_{(1)}(\mathbf{x})\right)^{1/p}}{\tau^{\mathsf{GD},\mathsf{max}}} \leq 1
    \\& \frac{Q^{(k)}_{(1)}(\mathbf{x})}{\widehat{H}_{19}(\mathbf{x};l)} \leq 1,
    \\& \frac{(A_{14})^{-1} \sum_{n'\in\mathcal{N}} 
      \left( \tau_{n}^{\mathsf{GD}, (k)} \right)^{p}}{Q^{(k)}_{(1)}(\mathbf{x})} \leq 1,
    \\& \frac{\sum_{n\in\mathcal{N}\setminus\{n'\}}\bm{\Gamma}^{\mathsf{GD}, (k)}_{n, n'} \tau_{n}^{\mathsf{GD}, (k)} + \tau_{n,n'}^{\mathsf{GD}, (k)}}{\tau_{n'}^{\mathsf{GD}, (k)}}\leq 1,
    \\& \frac{(A_{15})^{-1} \tau_{n'}^{\mathsf{GD}, (k)}}{\widehat{H}_{20}(\mathbf{x};l)}\leq 1,
    \\& \frac{\mathfrak{R}^{\mathsf{GD}, (k)}_{n,n'}\tau_{n,n'}^{\mathsf{GD}, (k)} + 1}{\widehat{H}_{21}(\mathbf{x};l)} \leq 1,
    \\& \frac{(A_{16})^{-1} (\bm{\Gamma}^{\mathsf{GD}, (k)}_{n,n'}\alpha^{\mathsf{Bit}} M^{\mathsf{Dim}} + 1)}{\widehat{H}_{22}(\mathbf{x};l)} \leq 1,
    \\& (A_{14})^{-1} \geq 1,
    \\& (A_{15})^{-1} \geq 1,
    \\& (A_{16})^{-1} \geq 1.
\end{tcolorbox}

\textbullet \hspace{2mm} \textbf{Constraint \eqref{cons:SGD_4}}:
The constraint
$\sum_{t=1}^{\tau^{\mathsf{GD},(k)}} \Big(\bm{\Gamma}^{\mathsf{GD}, (k)} - \bm{\Gamma}\big(\bm{\Psi}(\mathcal{N},t^{\mathsf{GD},(k)}{+}t), r^{\mathsf{G}, (k)}\big)\Big) = \mathbf{0}$
is an equality on a posynomial with a negative sign. We first eliminate the negative sign as follows:
\begin{equation}
\sum_{t=1}^{\tau^{\mathsf{GD}, (k)}}
\bm{\Gamma}\big(\bm{\Psi}(\mathcal{N},t^{\mathsf{GD},(k)}{+}t), r^{\mathsf{G}, (k)}\big)
= \sum_{t=1}^{\tau^{\mathsf{GD}, (k)}} \bm{\Gamma}^{\mathsf{GD}, (k)},
\end{equation}
which is an equality on matrices, which is not applicable in GP format. We therefore make use of the Frobenius norm \cite{bottcher2008frobenius} defined as $|\mathbf{A}|_{F} = \sqrt{\sum_{i=1}^{m}\sum_{j=1}^{n} |a_{ij}|^{2}}$,
for any matrix $\mathbf{A} \in \mathbb{R}^{m \times n}$ with entries $a_{ij}$. Considering two matrices $\mathbf{A}, \mathbf{B} \in \mathbb{R}^{m \times n}$, the equality $\mathbf{A} = \mathbf{B}$ can equivalently be expressed as $ |\mathbf{A} - \mathbf{B}|_{F} = \sqrt{\sum_{i=1}^{m}\sum_{j=1}^{n} |a_{ij}-b_{ij}|^{2}} = 0$. We also know that when $\mathbf{A}, \mathbf{B}$ are real matrices we have $|a_{ij} - b_{i,j}|^{2} = (a_{ij} - b_{ij})^{2}$, thus, we can formulate real matrix equality constraints as $\sum_{i=1}^{m}\sum_{j=1}^{n} a_{ij}^{2} + b_{ij}^{2} = \sum_{i=1}^{m}\sum_{j=1}^{n} 2a_{ij}b_{ij}$. Applying the same principle to our problem, the matrix equality constraint can be written as follows:
\begin{equation}
\begin{aligned}
& \sum_{i=1}^{m}\sum_{j=1}^{n} \left[
\left( \tau^{\mathsf{GD}, (k)} \Gamma^{\mathsf{GD}, (k)}_{i,j}\right)^{2}
+ \left(\sum_{t=1}^{\tau^{\mathsf{GD}, (k)}} \Gamma \big(\bm{\Psi}(\mathcal{N},t^{\mathsf{GD},(k)}{+}t), r^{\mathsf{G}, (k)}\big)_{i,j}\right)^{2} \right]
\\ & = 2 \tau^{\mathsf{GD}, (k)} \sum_{i=1}^{m}\sum_{j=1}^{n} \Gamma^{\mathsf{GD}, (k)}_{i,j}  \sum_{t=1}^{\tau^{\mathsf{GD}, (k)}} \Gamma \big(\bm{\Psi}(\mathcal{N},t^{\mathsf{GD},(k)}{+}t), r^{\mathsf{G}, (k)}\big)_{i,j}.
\end{aligned}
\end{equation}
This reformulation eliminates the direct matrix equality and expresses the constraint as a valid equality over scalar sums. However, it requires a nested approach to comply with the GP format as follows:
\begin{equation}
  P^{\mathsf{GD},(k)}_{i,j}(\mathbf{x}) = \sum_{t=1}^{\tau^{\mathsf{GD}, (k)}} \Gamma \big(\bm{\Psi}(\mathcal{N},t^{\mathsf{GD},(k)}{+}t), r^{\mathsf{G}, (k)}\big)_{i,j},
\end{equation}
which then can be presented in the GP format as follows:
\begin{equation}
\begin{aligned}
  & \frac{\sum_{t=1}^{\tau^{\mathsf{GD}, (k)}} \Gamma \big(\bm{\Psi}(\mathcal{N},t^{\mathsf{GD},(k)}{+}t), r^{\mathsf{G}, (k)}\big)_{i,j}}{ P^{\mathsf{GD},(k)}_{i,j}(\mathbf{x})} \leq 1,
  \\& \frac{(A_{17})^{-1}  P^{\mathsf{GD},(k)}_{i,j}(\mathbf{x})}{\sum_{t=1}^{\tau^{\mathsf{GD}, (k)}} \Gamma \big(\bm{\Psi}(\mathcal{N},t^{\mathsf{GD},(k)}{+}t), r^{\mathsf{G}, (k)}\big)_{i,j}} \leq 1,
  \\& (A_{17})^{-1} \geq 1,
\end{aligned}
\end{equation}
where the denominator can be condensed as via Lemma~\ref{Lemma:ArethmaticGeometric} as follows:
\begin{equation}
\begin{aligned}
& H_{23}(\mathbf{x}) \triangleq
\sum_{t=1}^{\tau^{\mathsf{GD}, (k)}} \Gamma \big(\bm{\Psi}(\mathcal{N},t^{\mathsf{GD},(k)}{+}t), r^{\mathsf{G}, (k)}\big)_{i,j} 
\\& \Rightarrow H_{23}(\mathbf{x}) \geq \widehat{H}_{23}(\mathbf{x};l) \triangleq
\prod_{t=1}^{\tau^{\mathsf{GD}, (k)}}
\left(\frac{ \Gamma \big(\bm{\Psi}(\mathcal{N},t^{\mathsf{GD},(k)}{+}t), r^{\mathsf{G}, (k)}\big)_{i,j} H_{23}[\mathbf{x}]^{l-1}}
{\big[ \Gamma \big(\bm{\Psi}(\mathcal{N},t^{\mathsf{GD},(k)}{+}t), r^{\mathsf{G}, (k)}\big)_{i,j}\big]^{l-1}}\right)^{\frac{\big[ \Gamma \big(\bm{\Psi}(\mathcal{N},t^{\mathsf{GD},(k)}{+}t), r^{\mathsf{G}, (k)}\big)_{i,j}\big]^{l-1}}{H_{23}[\mathbf{x}]^{l-1}}}.
\end{aligned}
\end{equation}
This transformation leads to an equality on posynomials, which requires further steps to conform to the GP format as follows:
\begin{equation}
\begin{aligned}
& \frac{\sum_{i=1}^{m}\sum_{j=1}^{n} 
\left( \tau^{\mathsf{GD}, (k)} \Gamma^{\mathsf{GD}, (k)}_{i,j}\right)^{2}
+ \left( P^{\mathsf{GD},(k)}_{i,j}(\mathbf{x})\right)^{2}}{2 \tau^{\mathsf{GD}, (k)} \sum_{i=1}^{m}\sum_{j=1}^{n}
\Gamma^{\mathsf{GD}, (k)}_{i,j}
 P^{\mathsf{GD},(k)}_{i,j}(\mathbf{x})} \leq 1
\\& \frac{(A_{18})^{-1} 2 \tau^{\mathsf{GD}, (k)} \sum_{i=1}^{m}\sum_{j=1}^{n}
\Gamma^{\mathsf{GD}, (k)}_{i,j}
 P^{\mathsf{GD},(k)}_{i,j}(\mathbf{x})}{\sum_{i=1}^{m}\sum_{j=1}^{n} 
\left( \tau^{\mathsf{GD}, (k)} \Gamma^{\mathsf{GD}, (k)}_{i,j}\right)^{2}
+ \left( P^{\mathsf{GD},(k)}_{i,j}(\mathbf{x})\right)^{2}} \leq 1,
\\& (A_{18})^{-1} \geq 1.
\end{aligned}
\end{equation}
Applying Lemma~\ref{Lemma:ArethmaticGeometric}, both denominators can be condensed as follows:
\begin{equation}
\begin{aligned}
& H_{24}(\mathbf{x}) \triangleq
2 \tau^{\mathsf{GD}, (k)} \sum_{i=1}^{m}\sum_{j=1}^{n}
\Gamma^{\mathsf{GD}, (k)}_{i,j}
 P^{\mathsf{GD},(k)}_{i,j}(\mathbf{x}) \Rightarrow H_{24}(\mathbf{x}) \geq \widehat{H}_{24}(\mathbf{x};l) \\& \triangleq
 \prod_{i=1}^{m} \prod_{j=1}^{n}
\left(\frac{ 2 \tau^{\mathsf{GD}, (k)} \Gamma^{\mathsf{GD}, (k)}_{i,j}
 P^{\mathsf{GD},(k)}_{i,j}(\mathbf{x}) H_{24}[\mathbf{x}]^{l-1}}
{\big[ 2 \tau^{\mathsf{GD}, (k)} \Gamma^{\mathsf{GD}, (k)}_{i,j}
 P^{\mathsf{GD},(k)}_{i,j}(\mathbf{x})\big]^{l-1}}\right)^{\frac{\big[ 2 \tau^{\mathsf{GD}, (k)} \Gamma^{\mathsf{GD}, (k)}_{i,j}
 P^{\mathsf{GD},(k)}_{i,j}(\mathbf{x})\big]^{l-1}}{H_{24}[\mathbf{x}]^{l-1}}}.
\end{aligned}
\end{equation}
\begin{equation}
\begin{aligned}
& H_{25}(\mathbf{x}) \triangleq
\sum_{i=1}^{m}\sum_{j=1}^{n} 
\left( \tau^{\mathsf{GD}, (k)} \Gamma^{\mathsf{GD}, (k)}_{i,j}\right)^{2}
+ \left( P^{\mathsf{GD},(k)}_{i,j}(\mathbf{x})\right)^{2} \Rightarrow H_{25}(\mathbf{x}) \geq \widehat{H}_{25}(\mathbf{x};l) \\& \triangleq 
\prod_{i=1}^{m} \prod_{j=1}^{n}
\left(\frac{\left( \tau^{\mathsf{GD}, (k)} \Gamma^{\mathsf{GD}, (k)}_{i,j}\right)^{2} H_{25}[\mathbf{x}]^{l-1}}
{\left[ \left( \tau^{\mathsf{GD}, (k)} \Gamma^{\mathsf{GD}, (k)}_{i,j}\right)^{2}\right]^{l-1}}\right)^{\frac{\left[ \left( \tau^{\mathsf{GD}, (k)} \Gamma^{\mathsf{GD}, (k)}_{i,j}\right)^{2}\right]^{l-1}}{H_{25}[\mathbf{x}]^{l-1}}} \left(\frac{\left( P^{\mathsf{GD},(k)}_{i,j}(\mathbf{x})\right)^{2} H_{25}[\mathbf{x}]^{l-1}}
{\left[ \left( P^{\mathsf{GD},(k)}_{i,j}(\mathbf{x})\right)^{2}\right]^{l-1}}\right)^{\frac{\left[ \left( P^{\mathsf{GD},(k)}_{i,j}(\mathbf{x})\right)^{2}\right]^{l-1}}{H_{25}[\mathbf{x}]^{l-1}}}.
\end{aligned}
\end{equation}
Thus, the GP accepted form of \eqref{cons:SGD_4} can be expressed as follows:
\begin{tcolorbox}[ams align]
  & \frac{\sum_{t=1}^{\tau^{\mathsf{GD}, (k)}} \Gamma \big(\bm{\Psi}(\mathcal{N},t^{\mathsf{GD},(k)}{+}t), r^{\mathsf{G}, (k)}\big)_{i,j}}{ P^{\mathsf{GD},(k)}_{i,j}(\mathbf{x})} \leq 1,
  \\& \frac{(A_{17})^{-1}  P^{\mathsf{GD},(k)}_{i,j}(\mathbf{x})}{\widehat{H}_{23}(\mathbf{x};l)} \leq 1,
  \\& \frac{\sum_{i=1}^{m}\sum_{j=1}^{n} 
\left( \tau^{\mathsf{GD}, (k)} \Gamma^{\mathsf{GD}, (k)}_{i,j}\right)^{2}
+ \left( P^{\mathsf{GD},(k)}_{i,j}(\mathbf{x})\right)^{2}}{\widehat{H}_{24}(\mathbf{x};l)} \leq 1
\\& \frac{(A_{18})^{-1} 2 \tau^{\mathsf{GD}, (k)} \sum_{i=1}^{m}\sum_{j=1}^{n}
\Gamma^{\mathsf{GD}, (k)}_{i,j}
 P^{\mathsf{GD},(k)}_{i,j}(\mathbf{x})}{\widehat{H}_{25}(\mathbf{x};l)} \leq 1,
\\& (A_{17})^{-1} \geq 1,
\\& (A_{18})^{-1} \geq 1.
\end{tcolorbox}

\textbullet \hspace{2mm} \textbf{Constraint \eqref{GDenergy}}: The constraint $E^{\mathsf{GD}, (k)} = \sum_{n\in\mathcal{N}} \sum_{n'\in\mathcal{N}} \tau_{n,n'}^{\mathsf{GD}, (k)} P_{n}$ is an equality on a posynomial and requires the following transformation to adhere to the GP format:
\begin{equation}
\begin{aligned}
    & \frac{\sum_{n\in\mathcal{N}} \sum_{n'\in\mathcal{N}} \tau_{n,n'}^{\mathsf{GD}, (k)} P_{n}}{E^{\mathsf{GD}, (k)}} \leq 1,
    \\& \frac{(A_{19})^{-1} E^{\mathsf{GD}, (k)}}{\sum_{n\in\mathcal{N}} \sum_{n'\in\mathcal{N}} \tau_{n,n'}^{\mathsf{GD}, (k)} P_{n}} \leq 1,
    \\& (A_{19})^{-1} \geq 1,
\end{aligned}
\end{equation}
where the denominator is condensed using the Lemma \ref{Lemma:ArethmaticGeometric} as follows:
\begin{equation}
    H_{26}(\mathbf{x}) \triangleq \sum_{n\in\mathcal{N}} \sum_{n'\in\mathcal{N}} \tau_{n,n'}^{\mathsf{GD}, (k)} P_{n} 
    \Rightarrow H_{26}(\mathbf{x}) \geq \widehat{H}_{26}(\mathbf{x};l) \triangleq \prod_{n\in\mathcal{N}} \prod_{n'\in\mathcal{N}}\left(\frac{\tau_{n,n'}^{\mathsf{GD}, (k)} P_{n} H_{26}[\mathbf{x}]^{l-1}}{[\tau_{n,n'}^{\mathsf{GD}, (k)} P_{n}]^{l-1}}\right)^{\frac{[\tau_{n,n'}^{\mathsf{GD}, (k)} P_{n}]^{l-1}}{H_{26}[\mathbf{x}]^{l-1}}}.
\end{equation}
Ultimately, the constraint \eqref{GDenergy} can be transformed into the GP allowed format as follows:
\begin{tcolorbox}[ams align]
    & \frac{\sum_{n\in\mathcal{N}} \sum_{n'\in\mathcal{N}} \tau_{n,n'}^{\mathsf{GD}, (k)} P_{n}}{E^{\mathsf{GD}, (k)}} \leq 1,
    \\& \frac{(A_{19})^{-1} E^{\mathsf{GD}, (k)}}{\widehat{H}_{26}(\mathbf{x};l)} \leq 1,
    \\& (A_{19})^{-1} \geq 1,
\end{tcolorbox}

\textbullet \hspace{2mm} \textbf{Constraint \eqref{l-ch1}}: The constraint $\sum_{x\in\mathcal{X}}\lambda^{(k,\ell)}_{x} = 1, \forall \ell{\in}\mathcal{L}^{(k)}\setminus\{L^{(k)}\}, \forall k\in\mathcal{K}$ is an equality on a posynomial and can be transformed into GP admitted format as follows:
\begin{equation}
\begin{aligned}
    & \sum_{x\in\mathcal{X}}\lambda^{(k,\ell)}_{x} \leq 1,
    \\& \frac{(A_{20})^{-1}}{\sum_{x\in\mathcal{X}}\lambda^{(k,\ell)}_{x}} \leq 1,
    \\& (A_{20})^{-1} \geq 1,
\end{aligned}
\end{equation}
which the Lemma \ref{Lemma:ArethmaticGeometric} can condense its denominator as follows:
\begin{equation}
    H_{27}(\mathbf{x}) \triangleq \sum_{x\in\mathcal{X}}\lambda^{(k,\ell)}_{x} 
    \Rightarrow H_{27}(\mathbf{x}) \geq \widehat{H}_{27}(\mathbf{x};l) \triangleq \prod_{x\in\mathcal{X}} \left(\frac{\lambda^{(k,\ell)}_{x} H_{27}[\mathbf{x}]^{l-1}}{[\lambda^{(k,\ell)}_{x}]^{l-1}}\right)^{\frac{[\lambda^{(k,\ell)}_{x}]^{l-1}}{H_{27}[\mathbf{x}]^{l-1}}}.
\end{equation}
The constraint \eqref{l-ch1} can be transformed into the GP allowed format as follows:
\begin{tcolorbox}[ams align]
    & \sum_{x\in\mathcal{X}}\lambda^{(k,\ell)}_{x} \leq 1,
    \\& \frac{(A_{20})^{-1}}{\widehat{H}_{27}(\mathbf{x};l)} \leq 1,
    \\& (A_{20})^{-1} \geq 1.
\end{tcolorbox}

\textbullet \hspace{2mm} \textbf{Constraint \eqref{l-ch2}}: The constraint $\sum_{\ell \in\mathcal{L}^{(k)}\setminus\{L^{(k)}\}} \sum_{x\in\mathcal{X}}\lambda^{(k,\ell)}_{x} = L^{(k)}-1, \forall k\in\mathcal{K}$ can first be transformed into the following expression to eliminate the negative sign:
\begin{equation}
\frac{\sum_{\ell \in\mathcal{L}^{(k)}\setminus\{L^{(k)}\}} \sum_{x\in\mathcal{X}}\lambda^{(k,\ell)}_{x} + 1}{L^{(k)}} = 1,
\end{equation}
which leads to an equality on a posynomial that can be transformed into GP-admitted format as follows:
\begin{equation}
\begin{aligned}
    & \frac{\sum_{\ell \in\mathcal{L}^{(k)}\setminus\{L^{(k)}\}} \sum_{x\in\mathcal{X}}\lambda^{(k,\ell)}_{x} + 1}{L^{(k)}} \leq 1,
    \\& \frac{(A_{21})^{-1} L^{(k)}}{\sum_{\ell \in\mathcal{L}^{(k)}\setminus\{L^{(k)}\}} \sum_{x\in\mathcal{X}}\lambda^{(k,\ell)}_{x} + 1} \leq 1,
    \\& (A_{21})^{-1} \geq 1,
\end{aligned}
\end{equation}
where the denominator requires condensation through Lemma \ref{Lemma:ArethmaticGeometric} as follows:
\begin{equation}
\begin{aligned}
    & H_{28}(\mathbf{x}) \triangleq \sum_{\ell \in\mathcal{L}^{(k)}\setminus\{L^{(k)}\}} \sum_{x\in\mathcal{X}}\lambda^{(k,\ell)}_{x} + 1 
    \\& \Rightarrow H_{28}(\mathbf{x}) \geq \widehat{H}_{28}(\mathbf{x};l) \triangleq \prod_{\ell \in\mathcal{L}^{(k)}\setminus\{L^{(k)}\}} \prod_{x\in\mathcal{X}}\left(\frac{\lambda^{(k,\ell)}_{x} H_{28}[\mathbf{x}]^{l-1}}{[\lambda^{(k,\ell)}_{x}]^{l-1}}\right)^{\frac{[\lambda^{(k,\ell)}_{x}]^{l-1}}{H_{28}[\mathbf{x}]^{l-1}}} \left(H_{28}[\mathbf{x}]^{l-1}\right)^{\frac{1}{H_{28}[\mathbf{x}]^{l-1}}}.
\end{aligned}
\end{equation}
Consequently, the constraint \eqref{l-ch2} can be transformed to allow the GP format as follows:
\begin{tcolorbox}[ams align]
    & \frac{\sum_{\ell \in\mathcal{L}^{(k)}\setminus\{L^{(k)}\}} \sum_{x\in\mathcal{X}}\lambda^{(k,\ell)}_{x} + 1}{L^{(k)}} \leq 1,
    \\& \frac{(A_{21})^{-1} L^{(k)}}{\widehat{H}_{28}(\mathbf{x};l)} \leq 1,
    \\& (A_{21})^{-1} \geq 1.
\end{tcolorbox}

\textbullet \hspace{2mm} \textbf{Constraint \eqref{eq:timeselection}}: 
The constraint $t^{\mathsf{LT},(k,\ell)}=\sum_{x\in\mathcal{X}^{(k)}}\lambda^{(k,\ell)}_{x} t_{x}^{(k)},~\forall \ell{\in}\mathcal{L}^{(k)}\setminus\{L^{(k)}\},~\forall k\in\mathcal{K}$ 
is an equality on a posynomial and can be transformed into GP-admitted format as follows:
\begin{equation}
\begin{aligned}
    & \frac{\sum_{x\in\mathcal{X}^{(k)}}\lambda^{(k,\ell)}_{x} t_{x}^{(k)}}{t^{\mathsf{LT},(k,\ell)}} \leq 1,
    \\& \frac{(A_{22})^{-1} t^{\mathsf{LT},(k,\ell)}}{\sum_{x\in\mathcal{X}^{(k)}}\lambda^{(k,\ell)}_{x} t_{x}^{(k)}} \leq 1,
    \\& (A_{22})^{-1} \geq 1,
\end{aligned}
\end{equation}
where the denominator requires condensation through Lemma \ref{Lemma:ArethmaticGeometric} as follows:
\begin{equation}
    H_{29}(\mathbf{x}) \triangleq \sum_{x\in\mathcal{X}^{(k)}}\lambda^{(k,\ell)}_{x} t_{x}^{(k)}
    \Rightarrow H_{29}(\mathbf{x}) \geq \widehat{H}_{29}(\mathbf{x};l) \triangleq \prod_{x\in\mathcal{X}^{(k)}}\left(\frac{\lambda^{(k,\ell)}_{x} t_{x}^{(k)} H_{29}[\mathbf{x}]^{l-1}}{\big(\lambda^{(k,\ell)}_{x} t_{x}^{(k)}\big)^{l-1}}\right)^{\frac{\big(\lambda^{(k,\ell)}_{x} t_{x}^{(k)}\big)^{l-1}}{H_{29}[\mathbf{x}]^{l-1}}}.
\end{equation}
Consequently, the constraint \eqref{eq:timeselection} can be transformed to allow the GP format as follows:
\begin{tcolorbox}[ams align]
    & \frac{\sum_{x\in\mathcal{X}^{(k)}}\lambda^{(k,\ell)}_{x} t_{x}^{(k)}}{t^{\mathsf{LT},(k,\ell)}} \leq 1,
    \\& \frac{(A_{22})^{-1} t^{\mathsf{LT},(k,\ell)}}{\widehat{H}_{29}(\mathbf{x};l)} \leq 1,
    \\& (A_{22})^{-1} \geq 1.
\end{tcolorbox}

\textbullet \hspace{2mm} \textbf{Constraint \eqref{eq:training_latency}}: 
The constraint $\tau_{n}^{\mathsf{LT},{(k,\ell)}}= e^{(k)}_{n} a_{n} {B}^{(k,\ell)}_{n} {/}f^{(k,\ell)}_{n}$ is an equality on a monomial and is already in the GP admitted form. It can be expressed in the normalized form as:
\begin{tcolorbox}[ams align]
    \tau_{n}^{\mathsf{LT},{(k,\ell)}} f^{(k,\ell)}_{n} 
    (e^{(k)}_{n})^{-1} (a_{n})^{-1} ({B}^{(k,\ell)}_{n})^{-1} = 1.
\end{tcolorbox}

\textbullet \hspace{2mm} \textbf{Constraint \eqref{eq:EN_LC}}: 
The constraint $E_{n}^{\mathsf{LT},{(k,\ell)}}= \tfrac{\alpha^{\mathsf{chip}}_{n}}{2}(f^{(k)}_{n})^{3}\tau_{n}^{\mathsf{LT},{(k,\ell)}}$ 
is an equality on a monomial and also in the GP-admitted form. Similarly to the previous constraint, it can be expressed in the normalized form as:
\begin{tcolorbox}[ams align]
    & E_{n}^{\mathsf{LT},{(k,\ell)}} 
    \left(\tfrac{\alpha^{\mathsf{chip}}_{n}}{2}\right)^{-1} 
    (f^{(k)}_{n})^{-3} 
    (\tau_{n}^{\mathsf{LT},{(k,\ell)}})^{-1} = 1.
\end{tcolorbox}

\textbullet \hspace{2mm} \textbf{Constraint \eqref{eq:EN_LC_sum}}: 
The constraint $E^{\mathsf{LT},(k,\ell)}= \sum_{n \in \mathcal{N}} E_{n}^{\mathsf{LT},{(k,\ell)}}$ 
is an equality on a posynomial and can be transformed into GP-admitted format as follows:
\begin{equation}
\begin{aligned}
    & \frac{\sum_{n \in \mathcal{N}} E_{n}^{\mathsf{LT},{(k,\ell)}}}{E^{\mathsf{LT},(k,\ell)}} \leq 1,
    \\& \frac{(A_{23})^{-1} E^{\mathsf{LT},(k,\ell)}}{\sum_{n \in \mathcal{N}} E_{n}^{\mathsf{LT},{(k,\ell)}}} \leq 1,
    \\& (A_{23})^{-1} \geq 1,
\end{aligned}
\end{equation}
where the denominator requires condensation through Lemma \ref{Lemma:ArethmaticGeometric} as follows:
\begin{equation}
    H_{30}(\mathbf{x}) \triangleq \sum_{n \in \mathcal{N}} E_{n}^{\mathsf{LT},{(k,\ell)}}
    \Rightarrow H_{30}(\mathbf{x}) \geq \widehat{H}_{30}(\mathbf{x};l) \triangleq 
    \prod_{n \in \mathcal{N}} \left(\frac{E_{n}^{\mathsf{LT},{(k,\ell)}} H_{30}[\mathbf{x}]^{l-1}}
    {(E_{n}^{\mathsf{LT},{(k,\ell)}})^{l-1}}\right)^{\frac{(E_{n}^{\mathsf{LT},{(k,\ell)}})^{l-1}}{H_{30}[\mathbf{x}]^{l-1}}}.
\end{equation}
Consequently, the constraint \eqref{eq:EN_LC_sum} can be transformed to allow the GP format as follows:
\begin{tcolorbox}[ams align]
    & \frac{\sum_{n \in \mathcal{N}} E_{n}^{\mathsf{LT},{(k,\ell)}}}{E^{\mathsf{LT},(k,\ell)}} \leq 1,
    \\& \frac{(A_{23})^{-1} E^{\mathsf{LT},(k,\ell)}}{\widehat{H}_{30}(\mathbf{x};l)} \leq 1,
    \\& (A_{23})^{-1} \geq 1.
\end{tcolorbox}

\textbullet \hspace{2mm} \textbf{Constraint \eqref{cons:SCA_2}}: 
The constraint $\sum_{n'\in\mathcal{N}}\bm{\Gamma}^{\mathsf{LA}, (k,\ell)}_{n,n'} + \pi_{n}^{\mathsf{L}, (k)} = 1,~\forall n\in\mathcal{N}$ 
is an equality on a posynomial and can be transformed into GP-admitted format as follows:
\begin{equation}
\begin{aligned}
    & \sum_{n'\in\mathcal{N}}\bm{\Gamma}^{\mathsf{LA}, (k,\ell)}_{n,n'} + \pi_{n}^{\mathsf{L}, (k)} \leq 1,
    \\& \frac{(A_{24})^{-1}}{\sum_{n'\in\mathcal{N}}\bm{\Gamma}^{\mathsf{LA}, (k,\ell)}_{n,n'} + \pi_{n}^{\mathsf{L}, (k)}} \leq 1,
    \\& (A_{24})^{-1} \geq 1,
\end{aligned}
\end{equation}
where the denominator requires condensation through Lemma \ref{Lemma:ArethmaticGeometric} as follows:
\begin{equation}
    H_{31}(\mathbf{x}) \triangleq \sum_{n'\in\mathcal{N}}\bm{\Gamma}^{\mathsf{LA}, (k,\ell)}_{n,n'} + \pi_{n}^{\mathsf{L}, (k)}
    \Rightarrow H_{31}(\mathbf{x}) \geq \widehat{H}_{31}(\mathbf{x};l) \triangleq 
    \prod_{n'\in\mathcal{N}} \left(\frac{\bm{\Gamma}^{\mathsf{LA}, (k,\ell)}_{n,n'} H_{31}[\mathbf{x}]^{l-1}}
    {(\bm{\Gamma}^{\mathsf{LA}, (k,\ell)}_{n,n'})^{l-1}}\right)^{\frac{(\bm{\Gamma}^{\mathsf{LA}, (k,\ell)}_{n,n'})^{l-1}}{H_{31}[\mathbf{x}]^{l-1}}}
    \left(\frac{\pi_{n}^{\mathsf{L}, (k)} H_{31}[\mathbf{x}]^{l-1}}
    {(\pi_{n}^{\mathsf{L}, (k)})^{l-1}}\right)^{\frac{(\pi_{n}^{\mathsf{L}, (k)})^{l-1}}{H_{31}[\mathbf{x}]^{l-1}}}.
\end{equation}
Consequently, the constraint \eqref{cons:SCA_2} can be transformed to allow the GP format as follows:
\begin{tcolorbox}[ams align]
    & \sum_{n'\in\mathcal{N}}\bm{\Gamma}^{\mathsf{LA}, (k,\ell)}_{n,n'} + \pi_{n}^{\mathsf{L}, (k)} \leq 1,
    \\& \frac{(A_{24})^{-1}}{\widehat{H}_{31}(\mathbf{x};l)} \leq 1,
    \\& (A_{24})^{-1} \geq 1,
\end{tcolorbox}

\textbullet \hspace{2mm} \textbf{Constraints \eqref{eq:CM_aggregation_transmission_latency},\eqref{eq:CM_aggregation_latency},\eqref{cons:CM_aggregation_latency_1}}:
For the constraint \eqref{cons:CM_aggregation_latency_1} as 
$\tau^{\mathsf{LA}, (k,\ell)} \leq \tau^{\mathsf{LA},\mathsf{max}}, ~\forall \ell{\in}\mathcal{L}^{(k)}\setminus\{L^{(k)}\},~k\in\mathcal{K}$,
given that $\tau^{\mathsf{LA}, (k,\ell)}=\max_{n'\in\{r_{c}^{\mathsf{L}, (k)}\}_{c\in\mathcal{C}^{(k)}}} \{\tau_{n'}^{\mathsf{LA}, (k,\ell)}\}$,
we utilize \eqref{eq:CM_aggregation_latency} as 
$ \tau_{n'}^{\mathsf{LA}, (k,\ell)} = \max_{n\in\mathcal{N}\setminus\{n'\}}\Big\{ \bm{\Gamma}^{\mathsf{LA}, (k,\ell)}_{n, n'}\tau_{n}^{\mathsf{LA}, (k,\ell)} + \tau_{n,n'}^{\mathsf{LA}, (k,\ell)} \Big\}$
and include \eqref{eq:CM_aggregation_transmission_latency} as
$\tau_{n,n'}^{\mathsf{LA}, (k,\ell)} 
= \bm{\Gamma}^{\mathsf{LA}, (k,\ell)}_{n,n'}\alpha^{\mathsf{Bit}} M^{\mathsf{Dim}}\big/{\mathfrak{R}^{\mathsf{LA}, (k,\ell)}_{n,n'}}$, yielding the following formulation:
\begin{equation}
\begin{aligned}
    & \max_{n'\in\{r_{c}^{\mathsf{L}, (k)}\}_{c\in\mathcal{C}^{(k)}}} 
      \Big\{ \tau_{n'}^{\mathsf{LA}, (k,\ell)} \Big\} \leq \tau^{\mathsf{LA},\mathsf{max}},
    \\& \tau_{n'}^{\mathsf{LA}, (k,\ell)} = 
      \max_{n\in\mathcal{N}\setminus\{n'\}}\Big\{ \bm{\Gamma}^{\mathsf{LA}, (k,\ell)}_{n, n'}\tau_{n}^{\mathsf{LA}, (k,\ell)} + \tau_{n,n'}^{\mathsf{LA}, (k,\ell)} \Big\},
    \\& \tau_{n,n'}^{\mathsf{LA}, (k,\ell)} 
      = \bm{\Gamma}^{\mathsf{LA}, (k,\ell)}_{n,n'}\alpha^{\mathsf{Bit}} M^{\mathsf{Dim}}\big/{\mathfrak{R}^{\mathsf{LA}, (k,\ell)}_{n,n'}}.
\end{aligned}
\end{equation}
The first expression contains a maximum function, which is not admitted in GP. We therefore use the approximation $\max\{A,B\}\approx (A^{p}+B^{p})^{\frac{1}{p}}$ with a large $p$ to replace the maximum. Accordingly, the first expression is transformed as follows:
\begin{equation}
    \left(\sum_{n'\in\{r_{c}^{\mathsf{L}, (k)}\}_{c\in\mathcal{C}^{(k)}}} 
      \left( \tau_{n'}^{\mathsf{LA}, (k,\ell)} \right)^{p}\right)^{1/p}\leq \tau^{\mathsf{LA},\mathsf{max}},
\end{equation}
This expression is still not in the acceptable form of GP; thus, we take a nested approach by introducing $P^{(k,\ell)}_{(1)}(\mathbf{x})$, assumed as follows:
\begin{equation}
    P^{(k,\ell)}_{(1)}(\mathbf{x}) = \sum_{n'\in\{r_{c}^{\mathsf{L}, (k)}\}_{c\in\mathcal{C}^{(k)}}} 
      \left( \tau_{n'}^{\mathsf{LA}, (k,\ell)} \right)^{p}.
\end{equation}
We rewrite it in GP admitted format as follows:
\begin{equation}
\begin{aligned}
    & \frac{\sum_{n'\in\{r_{c}^{\mathsf{L}, (k)}\}_{c\in\mathcal{C}^{(k)}}} 
      \left( \tau_{n'}^{\mathsf{LA}, (k,\ell)} \right)^{p}}{P^{(k,\ell)}_{(1)}(\mathbf{x})} \leq 1,
    \\&\frac{(A_{25})^{-1} P^{(k,\ell)}_{(1)}(\mathbf{x})}{\sum_{n'\in\{r_{c}^{\mathsf{L}, (k)}\}_{c\in\mathcal{C}^{(k)}}} 
      \left( \tau_{n'}^{\mathsf{LA}, (k,\ell)} \right)^{p}} \leq 1,
    \\& (A_{25})^{-1} \geq 1.
\end{aligned}
\end{equation}
The denominator is condensed using Lemma \ref{Lemma:ArethmaticGeometric} as follows:
\begin{equation}
H_{32}(\mathbf{x}) \triangleq \sum_{n'\in\{r_{c}^{\mathsf{L}, (k)}\}_{c\in\mathcal{C}^{(k)}}} 
      \left( \tau_{n'}^{\mathsf{LA}, (k,\ell)} \right)^{p} \Rightarrow 
H_{32}(\mathbf{x}) \geq \widehat{H}_{32}(\mathbf{x};l) \triangleq 
\prod_{n'\in\{r_{c}^{\mathsf{L}, (k)}\}_{c\in\mathcal{C}^{(k)}}} \left(\frac{\left( \tau_{n'}^{\mathsf{LA}, (k,\ell)} \right)^{p} H_{32}[\mathbf{x}]^{l-1}}
{\Big[\left( \tau_{n'}^{\mathsf{LA}, (k,\ell)} \right)^{p}\Big]^{l-1}}\right)^{\frac{\Big[\left( \tau_{n'}^{\mathsf{LA}, (k,\ell)} \right)^{p}\Big]^{l-1}}{H_{32}[\mathbf{x}]^{l-1}}}.
\end{equation}
Hence, the first expression can be driven as follows:
\begin{equation}
\frac{\left(P^{(k,\ell)}_{(1)}(\mathbf{x})\right)^{1/p}}{\tau^{\mathsf{LA},\mathsf{max}}} \leq 1,
\end{equation}
The second expression also includes a maximum function, and we treat it similarly as follows:
\begin{equation}
    \tau_{n'}^{\mathsf{LA}, (k,\ell)} = 
      \left( \sum_{n\in\mathcal{N}\setminus\{n'\}} \left( \bm{\Gamma}^{\mathsf{LA}, (k,\ell)}_{n, n'}\tau_{n}^{\mathsf{LA}, (k,\ell)} + \tau_{n,n'}^{\mathsf{LA}, (k,\ell)} \right)^{p} \right)^{1/p},
\end{equation}
which also requires a nested approach by introducing $P^{(k,\ell)}_{(2)}(\mathbf{x})$ and $P^{(k,\ell)}_{(3),n}(\mathbf{x})$, assumed as follows:
\begin{equation}
\begin{aligned}
    &   \tau_{n'}^{\mathsf{LA}, (k,\ell)} = \left( P^{(k,\ell)}_{(2)}(\mathbf{x}) \right)^{1/p}
    \\& P^{(k,\ell)}_{(2)}(\mathbf{x}) =  \sum_{n\in\mathcal{N}\setminus\{n'\}} \left( P^{(k,\ell)}_{(3),n}(\mathbf{x}) \right)^{p},
    \\& P^{(k,\ell)}_{(3),n}(\mathbf{x}) =  \bm{\Gamma}^{\mathsf{LA}, (k,\ell)}_{n, n'}\tau_{n}^{\mathsf{LA}, (k,\ell)} + \tau_{n,n'}^{\mathsf{LA}, (k,\ell)}.
\end{aligned}
\end{equation}
These three expressions can be transformed into the GP allowed format as follows:
\begin{equation}
\begin{aligned}
    & \tau_{n'}^{\mathsf{LA}, (k,\ell)} \left( P^{(k,\ell)}_{(2)}(\mathbf{x}) \right)^{-1/p} = 1
  \\& \frac{\sum_{n\in\mathcal{N}\setminus\{n'\}} \left( P^{(k,\ell)}_{(3),n}(\mathbf{x}) \right)^{p}}{P^{(k,\ell)}_{(2)}(\mathbf{x})} \leq 1,
    \\& \frac{(A_{26})^{-1} P^{(k,\ell)}_{(2)}(\mathbf{x})}{\sum_{n\in\mathcal{N}\setminus\{n'\}} \left( P^{(k,\ell)}_{(3),n}(\mathbf{x}) \right)^{p}} \leq 1,
    \\& \frac{\bm{\Gamma}^{\mathsf{LA}, (k,\ell)}_{n, n'}\tau_{n}^{\mathsf{LA}, (k,\ell)} + \tau_{n,n'}^{\mathsf{LA}, (k,\ell)}}{P^{(k,\ell)}_{(3),n}(\mathbf{x})} \leq 1,
    \\& \frac{(A_{27})^{-1}  P^{(k,\ell)}_{(3),n}(\mathbf{x})}{\bm{\Gamma}^{\mathsf{LA}, (k,\ell)}_{n, n'}\tau_{n}^{\mathsf{LA}, (k,\ell)} + \tau_{n,n'}^{\mathsf{LA}, (k,\ell)}} \leq 1,
    \\& (A_{26})^{-1} \geq 1,
    \\& (A_{27})^{-1} \geq 1.
\end{aligned}
\end{equation}
For denominators, we can apply the Lemma \ref{Lemma:ArethmaticGeometric} for their condensation as follows:
\begin{equation}
    H_{33}(\mathbf{x}) \triangleq \sum_{n\in\mathcal{N}\setminus\{n'\}} \left( P^{(k,\ell)}_{(3),n}(\mathbf{x}) \right)^{p} \Rightarrow H_{33}(\mathbf{x}) \geq \widehat{H}_{33}(\mathbf{x};l) \triangleq 
\prod_{n\in\mathcal{N}\setminus\{n'\}}
\left(\frac{\left( P^{(k,\ell)}_{(3),n}(\mathbf{x}) \right)^{p} H_{33}[\mathbf{x}]^{l-1}}
{\left[\left( P^{(k,\ell)}_{(3),n}(\mathbf{x}) \right)^{p}\right]^{l-1}}\right)^{\frac{\left[\left( P^{(k,\ell)}_{(3),n}(\mathbf{x}) \right)^{p}\right]^{l-1}}{H_{33}[\mathbf{x}]^{l-1}}},
\end{equation}
\begin{equation}
\begin{aligned}
& H_{34}(\mathbf{x}) \triangleq \bm{\Gamma}^{\mathsf{LA}, (k,\ell)}_{n, n'}\tau_{n}^{\mathsf{LA}, (k,\ell)} + \tau_{n,n'}^{\mathsf{LA}, (k,\ell)} 
\\& \Rightarrow H_{34}(\mathbf{x}) \geq \widehat{H}_{34}(\mathbf{x};l) \triangleq 
\left(\frac{\bm{\Gamma}^{\mathsf{LA}, (k,\ell)}_{n, n'}\tau_{n}^{\mathsf{LA}, (k,\ell)} H_{34}[\mathbf{x}]^{l-1}}
{\left[ \bm{\Gamma}^{\mathsf{LA}, (k,\ell)}_{n, n'}\tau_{n}^{\mathsf{LA}, (k,\ell)} \right]^{l-1}}\right)^{\frac{\left[\bm{\Gamma}^{\mathsf{LA}, (k,\ell)}_{n, n'}\tau_{n}^{\mathsf{LA}, (k,\ell)}\right]^{l-1}}{H_{34}[\mathbf{x}]^{l-1}}} \left(\frac{\tau_{n,n'}^{\mathsf{LA}, (k,\ell)} H_{34}[\mathbf{x}]^{l-1}}
{\left[ \tau_{n,n'}^{\mathsf{LA}, (k,\ell)} \right]^{l-1}}\right)^{\frac{\left[\tau_{n,n'}^{\mathsf{LA}, (k,\ell)}\right]^{l-1}}{H_{34}[\mathbf{x}]^{l-1}}}.
\end{aligned}
\end{equation}
Finally, the third expression is also in the form of an equality on a monomial, which can be acceptable in GP format when the non-zero denominator is guaranteed as follows:
\begin{equation}
\begin{aligned}
    & \frac{\mathfrak{R}^{\mathsf{LA}, (k,\ell)}_{n,n'} \tau_{n,n'}^{\mathsf{LA}, (k,\ell)} + 1}{\bm{\Gamma}^{\mathsf{LA}, (k,\ell)}_{n,n'}\alpha^{\mathsf{Bit}} M^{\mathsf{Dim}} + 1} \leq 1,
    \\& \frac{(A_{28})^{-1}(\bm{\Gamma}^{\mathsf{LA}, (k,\ell)}_{n,n'}\alpha^{\mathsf{Bit}} M^{\mathsf{Dim}} + 1)}{\mathfrak{R}^{\mathsf{LA}, (k,\ell)}_{n,n'} \tau_{n,n'}^{\mathsf{LA}, (k,\ell)} + 1 } \leq 1,
    \\& (A_{28})^{-1} \geq 1,
\end{aligned}
\end{equation}
with denominators condensed by Lemma \ref{Lemma:ArethmaticGeometric} as follows:
\begin{equation}
 H_{35}(\mathbf{x}) \triangleq \bm{\Gamma}^{\mathsf{LA}, (k,\ell)}_{n,n'}\alpha^{\mathsf{Bit}} M^{\mathsf{Dim}} + 1 \Rightarrow 
 H_{35}(\mathbf{x}) \geq \widehat{H}_{35}(\mathbf{x};l) \triangleq  
 \left(\frac{\bm{\Gamma}^{\mathsf{LA}, (k,\ell)}_{n,n'}  H_{35}[\mathbf{x}]^{l-1}}{[\bm{\Gamma}^{\mathsf{LA}, (k,\ell)}_{n,n'}]^{l-1}}\right)^{\frac{[\bm{\Gamma}^{\mathsf{LA}, (k,\ell)}_{n,n'}\alpha^{\mathsf{Bit}} M^{\mathsf{Dim}}]^{l-1}}{H_{35}[\mathbf{x}]^{l-1}}}
 \left(H_{35}[\mathbf{x}]^{l-1}\right)^{\frac{1}{H_{35}[\mathbf{x}]^{l-1}}},
\end{equation}
\begin{equation}
\begin{aligned}
& H_{36}(\mathbf{x}) \triangleq \mathfrak{R}^{\mathsf{LA}, (k,\ell)}_{n,n'} \tau_{n,n'}^{\mathsf{LA}, (k,\ell)} + 1 
\\& \Rightarrow 
 H_{36}(\mathbf{x}) \geq \widehat{H}_{36}(\mathbf{x};l) \triangleq  
 \left(\frac{\mathfrak{R}^{\mathsf{LA}, (k,\ell)}_{n,n'} \tau_{n,n'}^{\mathsf{LA}, (k,\ell)}  H_{36}[\mathbf{x}]^{l-1}}{[\mathfrak{R}^{\mathsf{LA}, (k,\ell)}_{n,n'} \tau_{n,n'}^{\mathsf{LA}, (k,\ell)}]^{l-1}}\right)^{\frac{[\mathfrak{R}^{\mathsf{LA}, (k,\ell)}_{n,n'} \tau_{n,n'}^{\mathsf{LA}, (k,\ell)}]^{l-1}}{H_{36}[\mathbf{x}]^{l-1}}}
 \left(H_{36}[\mathbf{x}]^{l-1}\right)^{\frac{1}{H_{36}[\mathbf{x}]^{l-1}}}.
\end{aligned}
\end{equation}
Therefore, constraints \eqref{eq:CM_aggregation_transmission_latency},\eqref{eq:CM_aggregation_latency},\eqref{cons:CM_aggregation_latency_1} admit the GP format as:
\begin{tcolorbox}[ams align]
    & \frac{\left(P^{(k,\ell)}_{(1)}(\mathbf{x})\right)^{1/p}}{\tau^{\mathsf{LA},\mathsf{max}}} \leq 1,
    \\& \frac{\sum_{n'\in\{r_{c}^{\mathsf{L}, (k)}\}_{c\in\mathcal{C}^{(k)}}} 
      \left( \tau_{n'}^{\mathsf{LA}, (k,\ell)} \right)^{p}}{P^{(k,\ell)}_{(1)}(\mathbf{x})} \leq 1,
    \\&\frac{(A_{25})^{-1} P^{(k,\ell)}_{(1)}(\mathbf{x})}{\widehat{H}_{32}(\mathbf{x};l)} \leq 1,
    \\& \tau_{n'}^{\mathsf{LA}, (k,\ell)} \left( P^{(k,\ell)}_{(2)}(\mathbf{x}) \right)^{-1/p} = 1
  \\& \frac{\sum_{n\in\mathcal{N}\setminus\{n'\}} \left( P^{(k,\ell)}_{(3),n}(\mathbf{x}) \right)^{p}}{P^{(k,\ell)}_{(2)}(\mathbf{x})} \leq 1,
    \\& \frac{(A_{26})^{-1} P^{(k,\ell)}_{(2)}(\mathbf{x})}{\widehat{H}_{33}(\mathbf{x};l)} \leq 1,
    \\& \frac{\bm{\Gamma}^{\mathsf{LA}, (k,\ell)}_{n, n'}\tau_{n}^{\mathsf{LA}, (k,\ell)} + \tau_{n,n'}^{\mathsf{LA}, (k,\ell)}}{P^{(k,\ell)}_{(3),n}(\mathbf{x})} \leq 1,
    \\& \frac{(A_{27})^{-1}  P^{(k,\ell)}_{(3),n}(\mathbf{x})}{\widehat{H}_{34}(\mathbf{x};l)} \leq 1,
    \\& \frac{\mathfrak{R}^{\mathsf{LA}, (k,\ell)}_{n,n'} \tau_{n,n'}^{\mathsf{LA}, (k,\ell)} + 1}{\widehat{H}_{35}(\mathbf{x};l)} \leq 1,
    \\& \frac{(A_{28})^{-1}(\bm{\Gamma}^{\mathsf{LA}, (k,\ell)}_{n,n'}\alpha^{\mathsf{Bit}} M^{\mathsf{Dim}} + 1)}{\widehat{H}_{36}(\mathbf{x};l)} \leq 1,
    \\& (A_{25})^{-1} \geq 1,
    \\& (A_{26})^{-1} \geq 1,
    \\& (A_{27})^{-1} \geq 1,
    \\& (A_{28})^{-1} \geq 1.
\end{tcolorbox}

\textbullet \hspace{2mm} \textbf{Constraint \eqref{cons:SCA_4}}: 
The constraint 
$\sum_{t=1}^{\tau^{\mathsf{LA}, (k,\ell)}} \Big(\bm{\Gamma}^{\mathsf{LA}, (k,\ell)} - \sum_{c \in \mathcal{C}^{(k)}} \bm{\Gamma}\big(\bm{\Psi}(\mathcal{N}^{(k)}_{c},t^{\mathsf{LA},(k,\ell)}{+}t), r_{c}^{\mathsf{L}, (k)}\big)\Big) = \mathbf{0}$ is an equality on a posynomial with a negative sign. We first eliminate the negative sign as follows:
\begin{equation}
    \sum_{t=1}^{\tau^{\mathsf{LA}, (k,\ell)}} \sum_{c \in \mathcal{C}^{(k)}} \bm{\Gamma}\big(\bm{\Psi}(\mathcal{N}^{(k)}_{c},t^{\mathsf{LA},(k,\ell)}{+}t), r_{c}^{\mathsf{L}, (k)}\big) = \sum_{t=1}^{\tau^{\mathsf{LA}, (k,\ell)}} \bm{\Gamma}^{\mathsf{LA}, (k,\ell)},
\end{equation}
which is an equality on matrices, which is not applicable in GP format. We further utilized the matrix equality transformation through the Frobenius norm as follows:
\begin{equation}
\begin{aligned}
    & \sum_{i=1}^{m}\sum_{j=1}^{n} \left[
    \left( \tau^{\mathsf{LA}, (k,\ell)} \Gamma^{\mathsf{LA}, (k,\ell)}_{i,j}\right)^{2} 
    + \left( \sum_{t=1}^{\tau^{\mathsf{LA}, (k,\ell)}} \sum_{c \in \mathcal{C}^{(k)}} \Gamma \big(\bm{\Psi}(\mathcal{N}^{(k)}_{c},t^{\mathsf{LA},(k,\ell)}{+}t), r_{c}^{\mathsf{L}, (k)}\big)_{i,j}\right)^{2} \right] 
    \\ & 
    =  2 \tau^{\mathsf{LA}, (k,\ell)} \sum_{i=1}^{m}\sum_{j=1}^{n}
      \Gamma^{\mathsf{LA}, (k,\ell)}_{i,j}
     \sum_{t=1}^{\tau^{\mathsf{LA}, (k,\ell)}} \sum_{c \in \mathcal{C}^{(k)}} \Gamma \big(\bm{\Psi}(\mathcal{N}^{(k)}_{c},t^{\mathsf{LA},(k,\ell)}{+}t), r_{c}^{\mathsf{L}, (k)}\big)_{i,j}.
\end{aligned}
\end{equation}
Furthermore, this expression requires a nested approach to comply with the GP format as follows:
\begin{equation}
   P^{\mathsf{LA},(k,\ell)}_{i,j}(\mathbf{x}) = \sum_{t=1}^{\tau^{\mathsf{LA}, (k,\ell)}} \sum_{c \in \mathcal{C}^{(k)}} \Gamma \big(\bm{\Psi}(\mathcal{N}^{(k)}_{c},t^{\mathsf{LA},(k,\ell)}{+}t), r_{c}^{\mathsf{L}, (k)}\big)_{i,j},
\end{equation}
which then can be presented in the GP format as follows:
\begin{equation}
\begin{aligned}
  & \frac{\sum_{t=1}^{\tau^{\mathsf{LA}, (k,\ell)}} \sum_{c \in \mathcal{C}^{(k)}} \Gamma \big(\bm{\Psi}(\mathcal{N}^{(k)}_{c},t^{\mathsf{LA},(k,\ell)}{+}t), r_{c}^{\mathsf{L}, (k)}\big)_{i,j}}{ P^{\mathsf{LA},(k,\ell)}_{i,j}(\mathbf{x})} \leq 1,
  \\& \frac{(A_{29})^{-1}  P^{\mathsf{LA},(k,\ell)}_{i,j}(\mathbf{x})}{\sum_{t=1}^{\tau^{\mathsf{LA}, (k,\ell)}} \sum_{c \in \mathcal{C}^{(k)}} \Gamma \big(\bm{\Psi}(\mathcal{N}^{(k)}_{c},t^{\mathsf{LA},(k,\ell)}{+}t), r_{c}^{\mathsf{L}, (k)}\big)_{i,j}} \leq 1,
  \\& (A_{29})^{-1} \geq 1.
\end{aligned}
\end{equation}
The denominator can be condensed as via Lemma~\ref{Lemma:ArethmaticGeometric} as follows:
\begin{equation}
\begin{aligned}
& H_{37}(\mathbf{x}) \triangleq
\sum_{t=1}^{\tau^{\mathsf{LA}, (k,\ell)}} \sum_{c \in \mathcal{C}^{(k)}} \Gamma \big(\bm{\Psi}(\mathcal{N}^{(k)}_{c},t^{\mathsf{LA},(k,\ell)}{+}t), r_{c}^{\mathsf{L}, (k)}\big)_{i,j}
\\& \Rightarrow H_{37}(\mathbf{x}) \geq \widehat{H}_{37}(\mathbf{x};l) \triangleq
\prod_{t=1}^{\tau^{\mathsf{LA}, (k,\ell)}} \prod_{c \in \mathcal{C}^{(k)}}
\left(\frac{ \Gamma \big(\bm{\Psi}(\mathcal{N}^{(k)}_{c},t^{\mathsf{LA},(k,\ell)}{+}t), r_{c}^{\mathsf{L}, (k)}\big)_{i,j} H_{37}[\mathbf{x}]^{l-1}}
{\big[ \Gamma \big(\bm{\Psi}(\mathcal{N}^{(k)}_{c},t^{\mathsf{LA},(k,\ell)}{+}t), r_{c}^{\mathsf{L}, (k)}\big)_{i,j}\big]^{l-1}}\right)^{\frac{\big[ \Gamma \big(\bm{\Psi}(\mathcal{N}^{(k)}_{c},t^{\mathsf{LA},(k,\ell)}{+}t), r_{c}^{\mathsf{L}, (k)}\big)_{i,j}\big]^{l-1}}{H_{37}[\mathbf{x}]^{l-1}}}.
\end{aligned}
\end{equation}
This transformation leads to an equality on posynomials, which requires further steps to conform to the GP format as follows:
\begin{equation}
\begin{aligned}
    & \frac{\sum_{i=1}^{m}\sum_{j=1}^{n}
    \left( \tau^{\mathsf{LA}, (k,\ell)} \Gamma^{\mathsf{LA}, (k,\ell)}_{i,j}\right)^{2} 
    + \left(  P^{\mathsf{LA},(k,\ell)}_{i,j}(\mathbf{x})\right)^{2} }{2 \tau^{\mathsf{LA}, (k,\ell)} \sum_{i=1}^{m}\sum_{j=1}^{n}
      \Gamma^{\mathsf{LA}, (k,\ell)}_{i,j}  P^{\mathsf{LA},(k,\ell)}_{i,j}(\mathbf{x})} \leq 1
    \\& \frac{(A_{30})^{-1} 2 \tau^{\mathsf{LA}, (k,\ell)} \sum_{i=1}^{m}\sum_{j=1}^{n}
      \Gamma^{\mathsf{LA}, (k,\ell)}_{i,j}  P^{\mathsf{LA},(k,\ell)}_{i,j}(\mathbf{x})}{\sum_{i=1}^{m}\sum_{j=1}^{n} 
    \left( \tau^{\mathsf{LA}, (k,\ell)} \Gamma^{\mathsf{LA}, (k,\ell)}_{i,j}\right)^{2} 
    + \left(  P^{\mathsf{LA},(k,\ell)}_{i,j}(\mathbf{x})\right)^{2} } \leq 1,
    \\& (A_{30})^{-1} \geq 1.
\end{aligned}
\end{equation}
% Applying the lemma for condensation
Applying Lemma~\ref{Lemma:ArethmaticGeometric}, both denominators can be condensed as follows:
\begin{equation}
\begin{aligned}
    & H_{38}(\mathbf{x}) \triangleq 
    2 \tau^{\mathsf{LA}, (k,\ell)} \sum_{i=1}^{m}\sum_{j=1}^{n}
      \Gamma^{\mathsf{LA}, (k,\ell)}_{i,j}  P^{\mathsf{LA},(k,\ell)}_{i,j}(\mathbf{x}) \Rightarrow H_{38}(\mathbf{x}) \geq \widehat{H}_{38}(\mathbf{x};l) \\&  \triangleq
      \prod_{i=1}^{m} \prod_{j=1}^{n} 
    \left(\frac{2 \tau^{\mathsf{LA}, (k,\ell)} \Gamma^{\mathsf{LA}, (k,\ell)}_{i,j}  P^{\mathsf{LA},(k,\ell)}_{i,j}(\mathbf{x}) H_{38}[\mathbf{x}]^{l-1}}
    {\big[2 \tau^{\mathsf{LA}, (k,\ell)} \Gamma^{\mathsf{LA}, (k,\ell)}_{i,j}  P^{\mathsf{LA},(k,\ell)}_{i,j}(\mathbf{x})\big]^{l-1}}\right)^{\frac{\big[2 \tau^{\mathsf{LA}, (k,\ell)} \Gamma^{\mathsf{LA}, (k,\ell)}_{i,j}  P^{\mathsf{LA},(k,\ell)}_{i,j}(\mathbf{x})\big]^{l-1}}{H_{38}[\mathbf{x}]^{l-1}}},
\end{aligned}
\end{equation}
\begin{equation}
\begin{aligned}
    & H_{39}(\mathbf{x}) \triangleq 
     \sum_{i=1}^{m}\sum_{j=1}^{n} 
    \left( \tau^{\mathsf{LA}, (k,\ell)} \Gamma^{\mathsf{LA}, (k,\ell)}_{i,j}\right)^{2} 
    + \left(  P^{\mathsf{LA},(k,\ell)}_{i,j}(\mathbf{x})\right)^{2} \Rightarrow H_{39}(\mathbf{x}) \geq \widehat{H}_{39}(\mathbf{x};l) 
    \\&  \triangleq
    \prod_{i=1}^{m} \prod_{j=1}^{n}
    \left(\frac{\left( \tau^{\mathsf{LA}, (k,\ell)} \Gamma^{\mathsf{LA}, (k,\ell)}_{i,j}\right)^{2} H_{39}[\mathbf{x}]^{l-1}}
    {\left[ \left( \tau^{\mathsf{LA}, (k,\ell)} \Gamma^{\mathsf{LA}, (k,\ell)}_{i,j}\right)^{2}\right]^{l-1}}\right)^{\frac{\left[ \left( \tau^{\mathsf{LA}, (k,\ell)} \Gamma^{\mathsf{LA}, (k,\ell)}_{i,j}\right)^{2}\right]^{l-1}}{H_{39}[\mathbf{x}]^{l-1}}} \left(\frac{\left(  P^{\mathsf{LA},(k,\ell)}_{i,j}(\mathbf{x})\right)^{2} H_{39}[\mathbf{x}]^{l-1}}
    {\left[\left(  P^{\mathsf{LA},(k,\ell)}_{i,j}(\mathbf{x})\right)^{2}\right]^{l-1}}\right)^{\frac{\left[\left(  P^{\mathsf{LA},(k,\ell)}_{i,j}(\mathbf{x})\right)^{2}\right]^{l-1}}{H_{39}[\mathbf{x}]^{l-1}}}.
\end{aligned}
\end{equation}
Therefore, the GP admissible form of \eqref{cons:SCA_4} is presentable as follows:
\begin{tcolorbox}[ams align]
  & \frac{\sum_{t=1}^{\tau^{\mathsf{LA}, (k,\ell)}} \sum_{c \in \mathcal{C}^{(k)}} \Gamma \big(\bm{\Psi}(\mathcal{N}^{(k)}_{c},t^{\mathsf{LA},(k,\ell)}{+}t), r_{c}^{\mathsf{L}, (k)}\big)_{i,j}}{ P^{\mathsf{LA},(k,\ell)}_{i,j}(\mathbf{x})} \leq 1,
  \\& \frac{(A_{29})^{-1}  P^{\mathsf{LA},(k,\ell)}_{i,j}(\mathbf{x})}{\widehat{H}_{37}(\mathbf{x};l) } \leq 1,
      \\& \frac{\sum_{i=1}^{m}\sum_{j=1}^{n}
    \left( \tau^{\mathsf{LA}, (k,\ell)} \Gamma^{\mathsf{LA}, (k,\ell)}_{i,j}\right)^{2} 
    + \left(  P^{\mathsf{LA},(k,\ell)}_{i,j}(\mathbf{x})\right)^{2} }{\widehat{H}_{38}(\mathbf{x};l) } \leq 1
    \\& \frac{(A_{30})^{-1} 2 \tau^{\mathsf{LA}, (k,\ell)} \sum_{i=1}^{m}\sum_{j=1}^{n}
      \Gamma^{\mathsf{LA}, (k,\ell)}_{i,j}  P^{\mathsf{LA},(k,\ell)}_{i,j}(\mathbf{x})}{\widehat{H}_{39}(\mathbf{x};l) } \leq 1,
    \\& (A_{29})^{-1} \geq 1,
    \\& (A_{30})^{-1} \geq 1.
\end{tcolorbox}

\textbullet \hspace{2mm} \textbf{Constraint \eqref{forestenergyAG}}: 
The constraint $E^{\mathsf{LA},(k,\ell)} = \sum_{n\in\mathcal{N}} \sum_{n'\in\mathcal{N}} \tau_{n,n'}^{\mathsf{LA},(k,\ell)} P_{n}$ is an equality on a posynomial and requires the following transformation:
\begin{equation}
\begin{aligned}
    & \frac{\sum_{n\in\mathcal{N}} \sum_{n'\in\mathcal{N}}
    \tau_{n,n'}^{\mathsf{LA},(k,\ell)} P_{n}}{E^{\mathsf{LA},(k,\ell)}} \leq 1, \\
    & \frac{(A_{31})^{-1} E^{\mathsf{LA},(k,\ell)}}{\sum_{n\in\mathcal{N}} \sum_{n'\in\mathcal{N}}
    \tau_{n,n'}^{\mathsf{LA},(k,\ell)} P_{n}} \leq 1, \\
    & (A_{31})^{-1} \geq 1.
\end{aligned}
\end{equation}
Using the lemma~\ref{Lemma:ArethmaticGeometric}, the denominator is condensed as follows:
\begin{equation}
     H_{40}(\mathbf{x}) \triangleq 
    \sum_{n\in\mathcal{N}} \sum_{n'\in\mathcal{N}} 
    \tau_{n,n'}^{\mathsf{LA},(k,\ell)} P_{n} \Rightarrow H_{40}(\mathbf{x}) \geq \widehat{H}_{40}(\mathbf{x};l) \triangleq 
    \prod_{n\in\mathcal{N}} \prod_{n'\in\mathcal{N}}
    \left(\frac{\tau_{n,n'}^{\mathsf{LA},(k,\ell)} P_{n}  H_{40}[\mathbf{x}]^{l-1}}
    {[\tau_{n,n'}^{\mathsf{LA},(k,\ell)} P_{n}]^{l-1}}\right)^{\frac{[\tau_{n,n'}^{\mathsf{LA},(k,\ell)} P_{n}]^{l-1}}{H_{40}[\mathbf{x}]^{l-1}}}.
\end{equation}
Therefore, the GP admissible version of \eqref{forestenergyAG} is shown as follows:
\begin{tcolorbox}[ams align]
    & \frac{\sum_{n\in\mathcal{N}} \sum_{n'\in\mathcal{N}}
    \tau_{n,n'}^{\mathsf{LA},(k,\ell)} P_{n}}{E^{\mathsf{LA},(k,\ell)}} \leq 1, \\
    & \frac{(A_{31})^{-1} E^{\mathsf{LA},(k,\ell)}}{\widehat{H}_{40}(\mathbf{x};l)} \leq 1, \\
    & (A_{31})^{-1} \geq 1.
\end{tcolorbox}

\textbullet \hspace{2mm} \textbf{Constraint \eqref{cons:SCD_1}}: The constraint
$\sum_{n\in\mathcal{N}}\bm{\Gamma}^{\mathsf{LD}, (k,\ell)}_{n,n'} + \pi_{n'}^{\mathsf{L}, (k)}{=}1,~\forall n'\in\mathcal{N}$ is also a similar version of previous constraints and the following transformations leads to GP admitted format:
\begin{equation}
\begin{aligned}
    &\sum_{n\in\mathcal{N}}\bm{\Gamma}^{\mathsf{LD}, (k,\ell)}_{n,n'} + \pi_{n'}^{\mathsf{L}, (k)} \leq 1,
    \\& \frac{(A_{32})^{-1}}{\sum_{n\in\mathcal{N}}\bm{\Gamma}^{\mathsf{LD}, (k,\ell)}_{n,n'} + \pi_{n'}^{\mathsf{L}, (k)}} \leq 1,
    \\& (A_{32})^{-1} \geq 1.
\end{aligned}
\end{equation}
Thus, exploiting Lemma \ref{Lemma:ArethmaticGeometric} for both denominators would result in the following condensations:
\begin{equation}
H_{41}(\mathbf{x}) \triangleq \sum_{n\in\mathcal{N}}\bm{\Gamma}^{\mathsf{LD}, (k,\ell)}_{n,n'} + \pi_{n'}^{\mathsf{L}, (k)} \Rightarrow H_{41}(\mathbf{x}) \geq \widehat{H}_{41}(\mathbf{x};l) \triangleq \prod_{n\in\mathcal{N}} \left(\frac{\bm{\Gamma}^{\mathsf{LD}, (k,\ell)}_{n,n'} H_{41}[\mathbf{x}]^{l-1}}{[\bm{\Gamma}^{\mathsf{LD}, (k,\ell)}_{n,n'}]^{l-1}}\right)^{\frac{[\bm{\Gamma}^{\mathsf{LD}, (k,\ell)}_{n,n'}]^{l-1}}{H_{41}[\mathbf{x}]^{l-1}}}.
\end{equation}
Therefore, the constraint \eqref{cons:SCD_1} can be presented as GP format as follows:
\begin{tcolorbox}[ams align]
    &\sum_{n\in\mathcal{N}}\bm{\Gamma}^{\mathsf{LD}, (k,\ell)}_{n,n'} + \pi_{n'}^{\mathsf{L}, (k)} \leq 1,
    \\& \frac{(A_{32})^{-1}}{\widehat{H}_{41}(\mathbf{x};l)} \leq 1,
    \\& (A_{32})^{-1} \geq 1.
\end{tcolorbox}

\textbullet \hspace{2mm} \textbf{Constraints \eqref{eq:CM_dispatching_transmission_latency}, \eqref{eq:CM_dispatching_latency}, \eqref{cons:CM_dispatching_latency_1}}: 
For the constraint \eqref{cons:CM_dispatching_latency_1} as 
$\tau^{\mathsf{LD},(k,\ell)} \leq \tau^{\mathsf{LD},\mathsf{max}},~\forall \ell{\in}\mathcal{L}^{(k)}\setminus\{L^{(k)}\},~k\in\mathcal{K}$, 
via knowing that $\tau^{\mathsf{LD},(k,\ell)}=\max_{n'\in\mathcal{N}} \big\{ \tau_{n'}^{\mathsf{LD}, (k,\ell)}\big\}$, 
we utilize \eqref{eq:CM_dispatching_latency} as $\tau_{n'}^{\mathsf{LD}, (k,\ell)} 
= \sum_{n\in\mathcal{N}\setminus\{n'\}} 
\bm{\Gamma}^{\mathsf{LD}, (k,\ell)}_{n, n'} \tau_{n}^{\mathsf{LD}, (k,\ell)} 
+ \tau_{n, n'}^{\mathsf{LD}, (k,\ell)}$ and also include \eqref{eq:CM_dispatching_transmission_latency} defined as 
$\tau_{n,n'}^{\mathsf{LD}, (k,\ell)} 
= \bm{\Gamma}^{\mathsf{LD}, (k,\ell)}_{n,n'}\alpha^{\mathsf{Bit}} M^{\mathsf{Dim}}\big/{\mathfrak{R}^{\mathsf{LD}, (k,\ell)}_{n,n'}}$
.
This leads to the following formulation:
\begin{equation}
\begin{aligned}
    & \max_{n'\in\mathcal{N}} 
      \Big\{ \tau_{n'}^{\mathsf{LD}, (k,\ell)} \Big\} \leq \tau^{\mathsf{LD},\mathsf{max}},
    \\& \tau_{n'}^{\mathsf{LD}, (k,\ell)} = 
      \sum_{n\in\mathcal{N}\setminus\{n'\}}
       \bm{\Gamma}^{\mathsf{LD}, (k,\ell)}_{n, n'} 
      \tau_{n}^{\mathsf{LD}, (k,\ell)} 
      + \tau_{n,n'}^{\mathsf{LD}, (k,\ell)} ,
    \\& \tau_{n,n'}^{\mathsf{LD}, (k,\ell)}  
      = \bm{\Gamma}^{\mathsf{LD}, (k,\ell)}_{n,n'}\alpha^{\mathsf{Bit}} M^{\mathsf{Dim}}\big/{\mathfrak{R}^{\mathsf{LD}, (k,\ell)}_{n,n'}}.
\end{aligned}
\end{equation}
The first expression is constructed with a maximum function, which is not permitted in the GP format. Similarly to previous constraints, we reformulated it as follows:
\begin{equation}
    \left(\sum_{n'\in\mathcal{N}} 
      \left( \tau_{n'}^{\mathsf{LD}, (k,\ell)} \right)^{p}\right)^{1/p}
      \leq \tau^{\mathsf{LD},\mathsf{max}},
\end{equation}
This expression is still not in the acceptable form of GP, and we take a similar nested approach by introducing $S^{(k,\ell)}_{(1)}(\mathbf{x})$ as follows:
\begin{equation}
    S^{(k,\ell)}_{(1)}(\mathbf{x}) = \sum_{n'\in\mathcal{N}} 
      \left( \tau_{n'}^{\mathsf{LD}, (k,\ell)} \right)^{p},
\end{equation}
which can be presented as follows to admit the GP format:
\begin{equation}
\begin{aligned}
    & \frac{\sum_{n'\in\mathcal{N}} 
      \left( \tau_{n'}^{\mathsf{LD}, (k,\ell)} \right)^{p}}
      {S^{(k,\ell)}_{(1)}(\mathbf{x})} \leq 1,
    \\& \frac{(A_{33})^{-1}  S^{(k,\ell)}_{(1)}(\mathbf{x})}{\sum_{n'\in\mathcal{N}} 
      \left( \tau_{n'}^{\mathsf{LD}, (k,\ell)} \right)^{p}} \leq 1,
    \\& (A_{33})^{-1} \geq 1.
\end{aligned}
\end{equation}
This also requires Lemma~\ref{Lemma:ArethmaticGeometric} to condense the denominator for a GP-admitted format, as follows:
\begin{equation}
H_{42}(\mathbf{x}) \triangleq \sum_{n'\in\mathcal{N}} 
      \left( \tau_{n'}^{\mathsf{LD}, (k,\ell)} \right)^{p} \Rightarrow 
      H_{42}(\mathbf{x}) \geq \widehat{H}_{42}(\mathbf{x};l) \triangleq 
      \prod_{n'\in\mathcal{N}} \left(\frac{\left( \tau_{n'}^{\mathsf{LD}, (k,\ell)} \right)^{p} H_{42}[\mathbf{x}]^{l-1}}
      {\left[\left( \tau_{n'}^{\mathsf{LD}, (k,\ell)} \right)^{p}\right]^{l-1}}\right)^{\frac{\big[\left( \tau_{n'}^{\mathsf{LD}, (k,\ell)} \right)^{p}\big]^{l-1}}{H_{42}[\mathbf{x}]^{l-1}}}.
\end{equation}
Therefore, the first expression can be transformed into the following format:
\begin{equation}
\frac{\left(S^{(k,\ell)}_{(1)}(\mathbf{x})\right)^{1/p}}{\tau^{\mathsf{LD},\mathsf{max}}} \leq 1.
\end{equation}
We also transform the second expression, which is in the form of an equality on a posynomial, as follows:
\begin{equation}
\begin{aligned}
& \frac{\sum_{n\in\mathcal{N}\setminus\{n'\}}\bm{\Gamma}^{\mathsf{LD}, (k,\ell)}_{n, n'} \tau_{n}^{\mathsf{LD}, (k,\ell)} + \tau_{n,n'}^{\mathsf{LD}, (k,\ell)} }{\tau_{n'}^{\mathsf{LD}, (k,\ell)}}\leq 1,
\\& \frac{(A_{34})^{-1} \tau_{n'}^{\mathsf{LD}, (k,\ell)}}{\sum_{n\in\mathcal{N}\setminus\{n'\}}\bm{\Gamma}^{\mathsf{LD}, (k,\ell)}_{n, n'}\tau_{n}^{\mathsf{LD}, (k,\ell)}  + \tau_{n,n'}^{\mathsf{LD}, (k,\ell)} }\leq 1,
\\& (A_{34})^{-1} \geq 1.
\end{aligned}
\end{equation}
The Lemma~\ref{Lemma:ArethmaticGeometric} is also required for the denominator as follows:
\begin{equation}
\begin{aligned}
    & H_{43}(\mathbf{x}) \triangleq \sum_{n\in\mathcal{N}\setminus\{n'\}}\bm{\Gamma}^{\mathsf{LD}, (k,\ell)}_{n, n'} \tau_{n}^{\mathsf{LD}, (k,\ell)}  + \tau_{n,n'}^{\mathsf{LD}, (k,\ell)}  
\\& \Rightarrow H_{43}(\mathbf{x}) \geq \widehat{H}_{43}(\mathbf{x};l) \triangleq  
\prod_{n\in\mathcal{N}\setminus\{n'\}}\left(\frac{\bm{\Gamma}^{\mathsf{LD}, (k,\ell)}_{n, n'}\tau_{n}^{\mathsf{LD}, (k,\ell)} H_{43}[\mathbf{x}]^{l-1}}{\left[\bm{\Gamma}^{\mathsf{LD}, (k,\ell)}_{n, n'}\tau_{n}^{\mathsf{LD}, (k,\ell)}\right]^{l-1}}\right)^{\frac{\left[\bm{\Gamma}^{\mathsf{LD}, (k,\ell)}_{n, n'}\tau_{n}^{\mathsf{LD}, (k,\ell)}\right]^{l-1}}{H_{43}[\mathbf{x}]^{l-1}}} 
\left(\frac{\tau_{n,n'}^{\mathsf{LD}, (k,\ell)}  H_{43}[\mathbf{x}]^{l-1}}{\left[\tau_{n,n'}^{\mathsf{LD}, (k,\ell)} \right]^{l-1}}\right)^{\frac{\left[\tau_{n,n'}^{\mathsf{LD}, (k,\ell)} \right]^{l-1}}{H_{43}[\mathbf{x}]^{l-1}}}.
\end{aligned}
\end{equation}
The third expression is also in the form of an equality on a monomial and with ensuring a non-zero denominator, leading to the following expression:
\begin{equation}
\begin{aligned}
    &\frac{\mathfrak{R}^{\mathsf{LD}, (k,\ell)}_{n,n'}\tau_{n,n'}^{\mathsf{LD}, (k,\ell)}  + 1}{\bm{\Gamma}^{\mathsf{LD}, (k,\ell)}_{n,n'}\alpha^{\mathsf{Bit}} M^{\mathsf{Dim}} + 1} \leq 1,
    \\& \frac{(A_{35})^{-1}(\bm{\Gamma}^{\mathsf{LD}, (k,\ell)}_{n,n'}\alpha^{\mathsf{Bit}} M^{\mathsf{Dim}} + 1)}{\mathfrak{R}^{\mathsf{LD}, (k,\ell)}_{n,n'}\tau_{n,n'}^{\mathsf{LD}, (k,\ell)} + 1} \leq 1,
    \\& (A_{35})^{-1} \geq 1,
\end{aligned}
\end{equation}
where the Lemma \ref{Lemma:ArethmaticGeometric} can be utilized for denominators as follows:
\begin{equation}
 H_{44}(\mathbf{x}) \triangleq \bm{\Gamma}^{\mathsf{LD}, (k,\ell)}_{n,n'}\alpha^{\mathsf{Bit}} M^{\mathsf{Dim}} + 1 \Rightarrow H_{44}(\mathbf{x}) \geq \widehat{H}_{44}(\mathbf{x};l) \triangleq  
 \left(\frac{\bm{\Gamma}^{\mathsf{LD}, (k,\ell)}_{n,n'} H_{44}[\mathbf{x}]^{l-1}}{[\bm{\Gamma}^{\mathsf{LD}, (k,\ell)}_{n,n'}]^{l-1}}\right)^{\frac{[\bm{\Gamma}^{\mathsf{LD}, (k,\ell)}_{n,n'}\alpha^{\mathsf{Bit}} M^{\mathsf{Dim}}]^{l-1}}{H_{44}[\mathbf{x}]^{l-1}}} 
 \left(H_{44}[\mathbf{x}]^{l-1}\right)^{\frac{1}{H_{44}[\mathbf{x}]^{l-1}}},
\end{equation}
\begin{equation}
\begin{aligned}
    & H_{45}(\mathbf{x}) \triangleq \mathfrak{R}^{\mathsf{LD}, (k,\ell)}_{n,n'}\tau_{n,n'}^{\mathsf{LD}, (k,\ell)} + 1  
  \\& \Rightarrow H_{45}(\mathbf{x}) \geq \widehat{H}_{45}(\mathbf{x};l) \triangleq  
 \left(\frac{\mathfrak{R}^{\mathsf{LD}, (k,\ell)}_{n,n'}\tau_{n,n'}^{\mathsf{LD}, (k,\ell)} H_{45}[\mathbf{x}]^{l-1}}{[\mathfrak{R}^{\mathsf{LD}, (k,\ell)}_{n,n'}\tau_{n,n'}^{\mathsf{LD}, (k,\ell)}]^{l-1}}\right)^{\frac{[\mathfrak{R}^{\mathsf{LD}, (k,\ell)}_{n,n'}\tau_{n,n'}^{\mathsf{LD}, (k,\ell)}]^{l-1}}{H_{45}[\mathbf{x}]^{l-1}}} 
 \left(H_{45}[\mathbf{x}]^{l-1}\right)^{\frac{1}{H_{45}[\mathbf{x}]^{l-1}}}.
\end{aligned}
\end{equation}
Therefore, constraints 
\eqref{eq:CM_dispatching_transmission_latency}, 
\eqref{eq:CM_dispatching_latency}, 
\eqref{cons:CM_dispatching_latency_1} 
can be presented in GP format as follows:
\begin{tcolorbox}[ams align]
    &\frac{\left(S^{(k,\ell)}_{(1)}(\mathbf{x})\right)^{1/p}}{\tau^{\mathsf{LD},\mathsf{max}}} \leq 1,
    \\& \frac{\sum_{n'\in\mathcal{N}} 
      \left( \tau_{n'}^{\mathsf{LD}, (k,\ell)} \right)^{p}}
      {S^{(k,\ell)}_{(1)}(\mathbf{x})} \leq 1,
    \\& \frac{(A_{33})^{-1}  S^{(k,\ell)}_{(1)}(\mathbf{x})}{\widehat{H}_{42}(\mathbf{x};l)} \leq 1, 
    \\& \frac{\sum_{n\in\mathcal{N}\setminus\{n'\}}\bm{\Gamma}^{\mathsf{LD}, (k,\ell)}_{n, n'} \tau_{n}^{\mathsf{LD}, (k,\ell)} + \tau_{n,n'}^{\mathsf{LD}, (k,\ell)} }{\tau_{n'}^{\mathsf{LD}, (k,\ell)}}\leq 1,
    \\& \frac{(A_{34})^{-1} \tau_{n'}^{\mathsf{LD}, (k,\ell)}}{\widehat{H}_{43}(\mathbf{x};l)}\leq 1,
    \\&\frac{\mathfrak{R}^{\mathsf{LD}, (k,\ell)}_{n,n'}\tau_{n,n'}^{\mathsf{LD}, (k,\ell)}  + 1}{\widehat{H}_{44}(\mathbf{x};l)} \leq 1,
    \\& \frac{(A_{35})^{-1}(\bm{\Gamma}^{\mathsf{LD}, (k,\ell)}_{n,n'}\alpha^{\mathsf{Bit}} M^{\mathsf{Dim}} + 1)}{\widehat{H}_{45}(\mathbf{x};l)} \leq 1,
    \\& (A_{33})^{-1} \geq 1,
    \\& (A_{34})^{-1} \geq 1,
    \\& (A_{35})^{-1} \geq 1.
\end{tcolorbox}

\textbullet \hspace{2mm} \textbf{Constraint \eqref{cons:SCD_3}}:  
The constraint 
$\sum_{t=1}^{\tau^{\mathsf{LD}, (k,\ell)}} \Big(\bm{\Gamma}^{\mathsf{LD}, (k,\ell)} - \sum_{c \in \mathcal{C}^{(k)}} \bm{\Gamma}\big(\bm{\Psi}(\mathcal{N}^{(k)}_{c},t^{\mathsf{LD},(k,\ell)}{+}t), r_{c}^{\mathsf{L}, (k)}\big)\Big) = \mathbf{0}$ is an equality on a posynomial including a negative sign. We eliminate the negative sign as follows:
\begin{equation}
    \sum_{t=1}^{\tau^{\mathsf{LD}, (k,\ell)}} \sum_{c \in \mathcal{C}^{(k)}} \bm{\Gamma}\big(\bm{\Psi}(\mathcal{N}^{(k)}_{c},t^{\mathsf{LD},(k,\ell)}{+}t), r_{c}^{\mathsf{L}, (k)}\big) = \sum_{t=1}^{\tau^{\mathsf{LD}, (k,\ell)}} \bm{\Gamma}^{\mathsf{LD}, (k,\ell)},
\end{equation}
which is an equality on matrices and not admitted in GP format. We further utilized the matrix equality transformation through the Frobenius norm as follows:
\begin{equation}
\begin{aligned}
    & \sum_{i=1}^{m}\sum_{j=1}^{n} \left[
    \left( \tau^{\mathsf{LD}, (k,\ell)} \Gamma^{\mathsf{LD}, (k,\ell)}_{i,j}\right)^{2} 
    + \left( \sum_{t=1}^{\tau^{\mathsf{LD}, (k,\ell)}} \sum_{c \in \mathcal{C}^{(k)}} \Gamma \big(\bm{\Psi}(\mathcal{N}^{(k)}_{c},t^{\mathsf{LD},(k,\ell)}{+}t), r_{c}^{\mathsf{L}, (k)}\big)_{i,j}\right)^{2} \right] 
    \\ & 
    =  2 \tau^{\mathsf{LD}, (k,\ell)} \sum_{i=1}^{m}\sum_{j=1}^{n}
      \Gamma^{\mathsf{LD}, (k,\ell)}_{i,j}
     \sum_{t=1}^{\tau^{\mathsf{LD}, (k,\ell)}} \sum_{c \in \mathcal{C}^{(k)}} \Gamma \big(\bm{\Psi}(\mathcal{N}^{(k)}_{c},t^{\mathsf{LD},(k,\ell)}{+}t), r_{c}^{\mathsf{L}, (k)}\big)_{i,j}.
\end{aligned}
\end{equation}
Furthermore, this expression requires a nested approach to comply with the GP format as follows:
\begin{equation}
   P^{\mathsf{LD},(k,\ell)}_{i,j}(\mathbf{x}) = \sum_{t=1}^{\tau^{\mathsf{LD}, (k,\ell)}} \sum_{c \in \mathcal{C}^{(k)}} \Gamma \big(\bm{\Psi}(\mathcal{N}^{(k)}_{c},t^{\mathsf{LD},(k,\ell)}{+}t), r_{c}^{\mathsf{L}, (k)}\big)_{i,j},
\end{equation}
which then can be presented in the GP format as follows:
\begin{equation}
\begin{aligned}
  & \frac{\sum_{t=1}^{\tau^{\mathsf{LD}, (k,\ell)}} \sum_{c \in \mathcal{C}^{(k)}} \Gamma \big(\bm{\Psi}(\mathcal{N}^{(k)}_{c},t^{\mathsf{LD},(k,\ell)}{+}t), r_{c}^{\mathsf{L}, (k)}\big)_{i,j}}{P^{\mathsf{LD},(k,\ell)}_{i,j}(\mathbf{x})} \leq 1,
  \\& \frac{(A_{36})^{-1} P^{\mathsf{LD},(k,\ell)}_{i,j}(\mathbf{x})}{\sum_{t=1}^{\tau^{\mathsf{LD}, (k,\ell)}} \sum_{c \in \mathcal{C}^{(k)}} \Gamma \big(\bm{\Psi}(\mathcal{N}^{(k)}_{c},t^{\mathsf{LD},(k,\ell)}{+}t), r_{c}^{\mathsf{L}, (k)}\big)_{i,j}} \leq 1,
  \\& (A_{36})^{-1} \geq 1,
\end{aligned}
\end{equation}
and the denominator can be condensed as via Lemma~\ref{Lemma:ArethmaticGeometric} as follows:
\begin{equation}
\begin{aligned}
& H_{46}(\mathbf{x}) \triangleq
\sum_{t=1}^{\tau^{\mathsf{LD}, (k,\ell)}} \sum_{c \in \mathcal{C}^{(k)}} \Gamma \big(\bm{\Psi}(\mathcal{N}^{(k)}_{c},t^{\mathsf{LD},(k,\ell)}{+}t), r_{c}^{\mathsf{L}, (k)}\big)_{i,j}
\\& \Rightarrow H_{46}(\mathbf{x}) \geq \widehat{H}_{46}(\mathbf{x};l) \triangleq
\prod_{t=1}^{\tau^{\mathsf{LD}, (k,\ell)}} \prod_{c \in \mathcal{C}^{(k)}}
\left(\frac{ \Gamma \big(\bm{\Psi}(\mathcal{N}^{(k)}_{c},t^{\mathsf{LD},(k,\ell)}{+}t), r_{c}^{\mathsf{L}, (k)}\big)_{i,j} H_{46}[\mathbf{x}]^{l-1}}
{\big[ \Gamma \big(\bm{\Psi}(\mathcal{N}^{(k)}_{c},t^{\mathsf{LD},(k,\ell)}{+}t), r_{c}^{\mathsf{L}, (k)}\big)_{i,j}\big]^{l-1}}\right)^{\frac{\big[ \Gamma \big(\bm{\Psi}(\mathcal{N}^{(k)}_{c},t^{\mathsf{LD},(k,\ell)}{+}t), r_{c}^{\mathsf{L}, (k)}\big)_{i,j}\big]^{l-1}}{H_{46}[\mathbf{x}]^{l-1}}}.
\end{aligned}
\end{equation}
This transformation leads to an equality on posynomials, which requires further steps to conform to the GP format as follows:
\begin{equation}
\begin{aligned}
    & \frac{\sum_{i=1}^{m}\sum_{j=1}^{n}
    \left( \tau^{\mathsf{LD}, (k,\ell)} \Gamma^{\mathsf{LD}, (k,\ell)}_{i,j}\right)^{2} 
    + \left( P^{\mathsf{LD},(k,\ell)}_{i,j}(\mathbf{x})\right)^{2} }{2 \tau^{\mathsf{LD}, (k,\ell)} \sum_{i=1}^{m}\sum_{j=1}^{n}
      \Gamma^{\mathsf{LD}, (k,\ell)}_{i,j} P^{\mathsf{LD},(k,\ell)}_{i,j}(\mathbf{x})} \leq 1
    \\& \frac{(A_{37})^{-1} 2 \tau^{\mathsf{LD}, (k,\ell)} \sum_{i=1}^{m}\sum_{j=1}^{n}
      \Gamma^{\mathsf{LD}, (k,\ell)}_{i,j} P^{\mathsf{LD},(k,\ell)}_{i,j}(\mathbf{x})}{\sum_{i=1}^{m}\sum_{j=1}^{n} 
    \left( \tau^{\mathsf{LD}, (k,\ell)} \Gamma^{\mathsf{LD}, (k,\ell)}_{i,j}\right)^{2} 
    + \left( P^{\mathsf{LD},(k,\ell)}_{i,j}(\mathbf{x})\right)^{2} } \leq 1,
    \\& (A_{37})^{-1} \geq 1
\end{aligned}
\end{equation}
% Applying the lemma for condensation
Applying Lemma~\ref{Lemma:ArethmaticGeometric}, both denominators can be condensed as follows:
\begin{equation}
\begin{aligned}
    & H_{47}(\mathbf{x}) \triangleq 
    2 \tau^{\mathsf{LD}, (k,\ell)} \sum_{i=1}^{m}\sum_{j=1}^{n}
      \Gamma^{\mathsf{LD}, (k,\ell)}_{i,j} P^{\mathsf{LD},(k,\ell)}_{i,j}(\mathbf{x}) \Rightarrow H_{47}(\mathbf{x}) \geq \widehat{H}_{47}(\mathbf{x};l) \\&  \triangleq
      \prod_{i=1}^{m} \prod_{j=1}^{n} 
    \left(\frac{2 \tau^{\mathsf{LD}, (k,\ell)} \Gamma^{\mathsf{LD}, (k,\ell)}_{i,j} P^{\mathsf{LD},(k,\ell)}_{i,j}(\mathbf{x}) H_{47}[\mathbf{x}]^{l-1}}
    {\big[2 \tau^{\mathsf{LD}, (k,\ell)} \Gamma^{\mathsf{LD}, (k,\ell)}_{i,j} P^{\mathsf{LD},(k,\ell)}_{i,j}(\mathbf{x})\big]^{l-1}}\right)^{\frac{\big[2 \tau^{\mathsf{LD}, (k,\ell)} \Gamma^{\mathsf{LD}, (k,\ell)}_{i,j} P^{\mathsf{LD},(k,\ell)}_{i,j}(\mathbf{x})\big]^{l-1}}{H_{47}[\mathbf{x}]^{l-1}}}.
\end{aligned}
\end{equation}
\begin{equation}
\begin{aligned}
    & H_{48}(\mathbf{x}) \triangleq 
     \sum_{i=1}^{m}\sum_{j=1}^{n} 
    \left( \tau^{\mathsf{LD}, (k,\ell)} \Gamma^{\mathsf{LD}, (k,\ell)}_{i,j}\right)^{2} 
    + \left( P^{\mathsf{LD},(k,\ell)}_{i,j}(\mathbf{x})\right)^{2} \Rightarrow H_{48}(\mathbf{x}) \geq \widehat{H}_{48}(\mathbf{x};l) 
    \\&  \triangleq
    \prod_{i=1}^{m} \prod_{j=1}^{n}
    \left(\frac{\left( \tau^{\mathsf{LD}, (k,\ell)} \Gamma^{\mathsf{LD}, (k,\ell)}_{i,j}\right)^{2} H_{48}[\mathbf{x}]^{l-1}}
    {\left[ \left( \tau^{\mathsf{LD}, (k,\ell)} \Gamma^{\mathsf{LD}, (k,\ell)}_{i,j}\right)^{2}\right]^{l-1}}\right)^{\frac{\left[ \left( \tau^{\mathsf{LD}, (k,\ell)} \Gamma^{\mathsf{LD}, (k,\ell)}_{i,j}\right)^{2}\right]^{l-1}}{H_{48}[\mathbf{x}]^{l-1}}} \left(\frac{\left( P^{\mathsf{LD},(k,\ell)}_{i,j}(\mathbf{x})\right)^{2} H_{48}[\mathbf{x}]^{l-1}}
    {\left[\left( P^{\mathsf{LD},(k,\ell)}_{i,j}(\mathbf{x})\right)^{2}\right]^{l-1}}\right)^{\frac{\left[\left( P^{\mathsf{LD},(k,\ell)}_{i,j}(\mathbf{x})\right)^{2}\right]^{l-1}}{H_{48}[\mathbf{x}]^{l-1}}}.
\end{aligned}
\end{equation}
Therefore, the GP admissible form of \eqref{cons:SCD_3} is presentable as follows:
\begin{tcolorbox}[ams align]
  & \frac{\sum_{t=1}^{\tau^{\mathsf{LD}, (k,\ell)}} \sum_{c \in \mathcal{C}^{(k)}} \Gamma \big(\bm{\Psi}(\mathcal{N}^{(k)}_{c},t^{\mathsf{LD},(k,\ell)}{+}t), r_{c}^{\mathsf{L}, (k)}\big)_{i,j}}{P^{\mathsf{LD},(k,\ell)}_{i,j}(\mathbf{x})} \leq 1,
  \\& \frac{(A_{36})^{-1} P^{\mathsf{LD},(k,\ell)}_{i,j}(\mathbf{x})}{\widehat{H}_{46}(\mathbf{x};l)} \leq 1,
    \\& \frac{\sum_{i=1}^{m}\sum_{j=1}^{n}
    \left( \tau^{\mathsf{LD}, (k,\ell)} \Gamma^{\mathsf{LD}, (k,\ell)}_{i,j}\right)^{2} 
    + \left( P^{\mathsf{LD},(k,\ell)}_{i,j}(\mathbf{x})\right)^{2} }{\widehat{H}_{47}(\mathbf{x};l)} \leq 1
    \\& \frac{(A_{37})^{-1} 2 \tau^{\mathsf{LD}, (k,\ell)} \sum_{i=1}^{m}\sum_{j=1}^{n}
      \Gamma^{\mathsf{LD}, (k,\ell)}_{i,j} P^{\mathsf{LD},(k,\ell)}_{i,j}(\mathbf{x})}{\widehat{H}_{48}(\mathbf{x};l) } \leq 1,
    \\& (A_{36})^{-1} \geq 1,
    \\& (A_{37})^{-1} \geq 1
\end{tcolorbox}

\textbullet \hspace{2mm} \textbf{Constraint \eqref{LDenergy}}:  
The constraint $E^{\mathsf{LD},(k,\ell)} = \sum_{n\in\mathcal{N}} \sum_{n'\in\mathcal{N}} \tau_{n,n'}^{\mathsf{LD},(k,\ell)} P_{n}$ is an equality on a posynomial and requires the following transformation:
\begin{equation}
\begin{aligned}
    & \frac{\sum_{n\in\mathcal{N}} \sum_{n'\in\mathcal{N}}
    \tau_{n,n'}^{\mathsf{LD},(k,\ell)} P_{n}}{E^{\mathsf{LD},(k,\ell)}} \leq 1, \\
    & \frac{(A_{38})^{-1} E^{\mathsf{LD},(k,\ell)}}{\sum_{n\in\mathcal{N}} \sum_{n'\in\mathcal{N}}
    \tau_{n,n'}^{\mathsf{LD},(k,\ell)} P_{n}} \leq 1, \\
    & (A_{38})^{-1} \geq 1.
\end{aligned}
\end{equation}
Using Lemma~\ref{Lemma:ArethmaticGeometric}, the denominator is condensed as follows:
\begin{equation}
\begin{aligned}
     H_{49}(\mathbf{x}) \triangleq 
    \sum_{n\in\mathcal{N}} \sum_{n'\in\mathcal{N}} 
    \tau_{n,n'}^{\mathsf{LD},(k,\ell)} P_{n}
    \;\Rightarrow\;
    H_{49}(\mathbf{x}) \geq \widehat{H}_{49}(\mathbf{x};l) \triangleq 
    \prod_{n\in\mathcal{N}} \prod_{n'\in\mathcal{N}}
    \left(\frac{\tau_{n,n'}^{\mathsf{LD},(k,\ell)} P_{n} H_{49}[\mathbf{x}]^{l-1}}
    {[\tau_{n,n'}^{\mathsf{LD},(k,\ell)} P_{n}]^{l-1}}\right)^{\frac{[\tau_{n,n'}^{\mathsf{LD},(k,\ell)} P_{n}]^{l-1}}{H_{49}[\mathbf{x}]^{l-1}}}.
\end{aligned}
\end{equation}
Therefore, the GP admissible version of \eqref{LDenergy} is shown as follows:
\begin{tcolorbox}[ams align]
    & \frac{\sum_{n\in\mathcal{N}} \sum_{n'\in\mathcal{N}}
    \tau_{n,n'}^{\mathsf{LD},(k,\ell)} P_{n}}{E^{\mathsf{LD},(k,\ell)}} \leq 1, \\
    & \frac{(A_{38})^{-1} E^{\mathsf{LD},(k,\ell)}}{\widehat{H}_{49}(\mathbf{x};l)} \leq 1, \\
    & (A_{38})^{-1} \geq 1.
\end{tcolorbox}

\textbullet \hspace{2mm} \textbf{Constraint \eqref{cons:SGA_1}}: The constraint $\sum_{n'\in\mathcal{N}}\bm{\Gamma}^{\mathsf{GA}, (k)}_{n,n'} + \pi_{n}^{\mathsf{A}, (k)}{=}1,~\forall n\in\mathcal{N}$ is another version of the previous constraints and a similar approach can be taken to admit the GP admitted format:
\begin{equation}
\begin{aligned}
    &\sum_{n'\in\mathcal{N}}\bm{\Gamma}^{\mathsf{GA}, (k)}_{n,n'} + \pi_{n}^{\mathsf{A}, (k)}\leq 1,
    \\& \frac{(A_{39})^{-1}}{\sum_{n'\in\mathcal{N}}\bm{\Gamma}^{\mathsf{GA}, (k)}_{n,n'} + \pi_{n}^{\mathsf{A}, (k)}} \leq 1,
    \\& (A_{39})^{-1} \geq 1.
\end{aligned}
\end{equation}
The Lemma \ref{Lemma:ArethmaticGeometric} is used for denominator condensation as follows:
\begin{equation}
\begin{aligned}
&H_{50}(\mathbf{x}) \triangleq \sum_{n'\in\mathcal{N}}\bm{\Gamma}^{\mathsf{GA}, (k)}_{n,n'} + \pi_{n}^{\mathsf{A}, (k)} \Rightarrow H_{50}(\mathbf{x}) \geq \widehat{H}_{50}(\mathbf{x};l) \triangleq \prod_{n'\in\mathcal{N}} \left(\frac{\bm{\Gamma}^{\mathsf{GA}, (k)}_{n,n'} H_{50}[\mathbf{x}]^{l-1}}{[\bm{\Gamma}^{\mathsf{GA}, (k)}_{n,n'}]^{l-1}}\right)^{\frac{[\bm{\Gamma}^{\mathsf{GA}, (k)}_{n,n'}]^{l-1}}{H_{50}[\mathbf{x}]^{l-1}}} \left(\frac{\pi_{n}^{\mathsf{A}, (k)} H_{50}[\mathbf{x}]^{l-1}}{[\pi_{n}^{\mathsf{A}, (k)}]^{l-1}}\right)^{\frac{[\pi_{n}^{\mathsf{A}, (k)}]^{l-1}}{H_{50}[\mathbf{x}]^{l-1}}}.
\end{aligned}
\end{equation}
Therefore, the constraint \eqref{cons:SGA_1} can be presented in the GP format as follows:
\begin{tcolorbox}[ams align]
    &\sum_{n'\in\mathcal{N}}\bm{\Gamma}^{\mathsf{GA}, (k)}_{n,n'} + \pi_{n}^{\mathsf{A}, (k)}\leq 1,
    \\& \frac{(A_{39})^{-1}}{\widehat{H}_{50}(\mathbf{x};l)} \leq 1,
    \\& (A_{39})^{-1} \geq 1.
\end{tcolorbox}

\textbullet \hspace{2mm} \textbf{Constraints \eqref{eq:GM_aggregation_transmission_latency},\eqref{eq:GM_aggregation_latency},\eqref{cons:GM_aggregation_latency_1}}:  
For the constraint \eqref{cons:GM_aggregation_latency_1} as $\tau^{\mathsf{GA},(k)} \leq \tau^{\mathsf{GA},\mathsf{max}},\forall k \in\mathcal{K}$, via knowing that $\tau^{\mathsf{GA},(k)}=\tau_{n'}^{\mathsf{GA}, (k)}$ where $n' = r^{\mathsf{A}, (k)}$ and using \eqref{eq:GM_aggregation_latency} as
$ \tau_{n'}^{\mathsf{GA}, (k)} 
  = \max_{n\in\mathcal{N}\setminus\{n'\}} \Big\{ \bm{\Gamma}^{\mathsf{GA}, (k)}_{n, n'} \big(\tau_{n,n'}^{\mathsf{GA}, (k)} + \tau_{n}^{\mathsf{GA}, (k)}\big)\Big\}$
and by using \eqref{eq:GM_aggregation_transmission_latency} we also have $\tau_{n,n'}^{\mathsf{GA}, (k)} = \bm{\Gamma}^{\mathsf{GA}, (k)}_{n,n'}\alpha^{\mathsf{Bit}} M^{\mathsf{Dim}}\big/{\mathfrak{R}^{\mathsf{GA}, (k)}_{n,n'}}$. This leads to the following formulation:
\begin{equation}
\begin{aligned}
    & \tau_{n'}^{\mathsf{GA}, (k)} \leq \tau^{\mathsf{GA},\mathsf{max}},
    \\& \tau_{n'}^{\mathsf{GA}, (k)} 
    = \max_{n\in\mathcal{N}\setminus\{n'\}}
      \Big\{\bm{\Gamma}^{\mathsf{GA}, (k)}_{n, n'} \tau_{n}^{\mathsf{GA}, (k)}
      + \tau_{n,n'}^{\mathsf{GA}, (k)} \Big\},
      \\& \tau_{n,n'}^{\mathsf{GA}, (k)}  = \bm{\Gamma}^{\mathsf{GA}, (k)}_{n,n'}\alpha^{\mathsf{Bit}} M^{\mathsf{Dim}}\big/{\mathfrak{R}^{\mathsf{GA}, (k)}_{n,n'}}.
\end{aligned}
\end{equation}
The first expression can be transformed accordingly:
\begin{equation}
     \frac{\tau_{n'}^{\mathsf{GA}, (k)}}{\tau^{\mathsf{GA},\mathsf{max}}} \leq 1,
\end{equation}
For the second expression, since it contains a maximum function, we further utilize the corresponding approximation as follows:
\begin{equation}
 \tau_{n'}^{\mathsf{GA}, (k)} 
    = \left(\sum_{n\in\mathcal{N}\setminus\{n'\}}
      \left(\bm{\Gamma}^{\mathsf{GA}, (k)}_{n, n'} \tau_{n}^{\mathsf{GA}, (k)}
      + \tau_{n,n'}^{\mathsf{GA}, (k)} \right)^{p}\right)^{1/p},
\end{equation}
where $p$ is a large value. In order to adhere to the GP format, we further utilize the nested approach and introduce $K^{(k)}_{(1)}(\mathbf{x})$ and $K^{(k)}_{(2),n}(\mathbf{x})$, which are defined as follows:
\begin{equation}
\begin{aligned}
& \tau_{n'}^{\mathsf{GA}, (k)} = \left(K^{(k)}_{(1)}(\mathbf{x})\right)^{1/p}
 \\& K^{(k)}_{(1)}(\mathbf{x}) 
    = \sum_{n\in\mathcal{N}\setminus\{n'\}}
      \left(K^{(k)}_{(2),n}(\mathbf{x})\right)^{p},
\\& K^{(k)}_{(2),n}(\mathbf{x}) 
    = \bm{\Gamma}^{\mathsf{GA}, (k)}_{n, n'} \tau_{n}^{\mathsf{GA}, (k)}
      + \tau_{n,n'}^{\mathsf{GA}, (k)}.
\end{aligned}
\end{equation}
These constraints can be transformed into GP allowed format with the following transformations:
\begin{equation}
\begin{aligned}
& \tau_{n'}^{\mathsf{GA}, (k)} \left(K^{(k)}_{(1)}(\mathbf{x})\right)^{-1/p} =1
\\& \frac{\sum_{n\in\mathcal{N}\setminus\{n'\}}
      \left(K^{(k)}_{(2),n}(\mathbf{x})\right)^{p}}{K^{(k)}_{(1)}(\mathbf{x})} \leq 1,
 \\& \frac{(A_{40})^{-1} K^{(k)}_{(1)}(\mathbf{x})}{\sum_{n\in\mathcal{N}\setminus\{n'\}}
      \left(K^{(k)}_{(2),n}(\mathbf{x})\right)^{p}} \leq 1
\\& \frac{\bm{\Gamma}^{\mathsf{GA}, (k)}_{n, n'} \tau_{n}^{\mathsf{GA}, (k)}
      + \tau_{n,n'}^{\mathsf{GA}, (k)} }{K^{(k)}_{(2),n}(\mathbf{x})} \leq 1,
\\& \frac{ (A_{41})^{-1}  K^{(k)}_{(2),n}(\mathbf{x})}{\bm{\Gamma}^{\mathsf{GA}, (k)}_{n, n'} \tau_{n}^{\mathsf{GA}, (k)}
      + \tau_{n,n'}^{\mathsf{GA}, (k)} } \leq 1,
\\& (A_{40})^{-1} \geq 1,
\\& (A_{41})^{-1} \geq 1,
\end{aligned}
\end{equation}
where denominators must be condensed via Lemma  \ref{Lemma:ArethmaticGeometric} as follows:
\begin{equation}
    H_{51}(\mathbf{x}) \triangleq \sum_{n\in\mathcal{N}\setminus\{n'\}}
      \left(K^{(k)}_{(2),n}(\mathbf{x})\right)^{p} \Rightarrow H_{51}(\mathbf{x}) \geq \widehat{H}_{51}(\mathbf{x};l) \triangleq  
 \prod_{n\in\mathcal{N}\setminus\{n'\}} \left(\frac{\left(K^{(k)}_{(2),n}(\mathbf{x})\right)^{p} H_{51}[\mathbf{x}]^{l-1}}{\left[\left(K^{(k)}_{(2),n}(\mathbf{x})\right)^{p}\right]^{l-1}}\right)^{\frac{[\left(K^{(k)}_{(2),n}(\mathbf{x})\right)^{p}]^{l-1}}{H_{51}[\mathbf{x}]^{l-1}}},
\end{equation}
\begin{equation}
\begin{aligned}
& H_{52}(\mathbf{x}) \triangleq \bm{\Gamma}^{\mathsf{GA}, (k)}_{n, n'} \tau_{n}^{\mathsf{GA}, (k)}
      + \tau_{n,n'}^{\mathsf{GA}, (k)}  
\\& \Rightarrow H_{52}(\mathbf{x}) \geq \widehat{H}_{52}(\mathbf{x};l) \triangleq  
 \left(\frac{\bm{\Gamma}^{\mathsf{GA}, (k)}_{n, n'} \tau_{n}^{\mathsf{GA}, (k)} H_{52}[\mathbf{x}]^{l-1}}{[\bm{\Gamma}^{\mathsf{GA}, (k)}_{n, n'} \tau_{n}^{\mathsf{GA}, (k)}]^{l-1}}\right)^{\frac{[\bm{\Gamma}^{\mathsf{GA}, (k)}_{n, n'} \tau_{n}^{\mathsf{GA}, (k)}\alpha^{\mathsf{Bit}} M^{\mathsf{Dim}}]^{l-1}}{H_{52}[\mathbf{x}]^{l-1}}} 
 \left( \frac{\tau_{n,n'}^{\mathsf{GA}, (k)}  H_{52}[\mathbf{x}]^{l-1}}{[\tau_{n,n'}^{\mathsf{GA}, (k)} ]^{l-1}}\right)^{\frac{[\tau_{n,n'}^{\mathsf{GA}, (k)}]^{l-1}}{H_{52}[\mathbf{x}]^{l-1}}}.
 \end{aligned}
 \end{equation}
Finally, the third expression is an equality on a monomial, and the GP format is admissible after adding a constant to avoid zero denominators:
\begin{equation}
\begin{aligned}
    &\frac{\mathfrak{R}^{\mathsf{GA}, (k)}_{n,n'} \tau_{n,n'}^{\mathsf{GA}, (k)}  + 1}{\bm{\Gamma}^{\mathsf{GA}, (k)}_{n,n'} \alpha^{\mathsf{Bit}} M^{\mathsf{Dim}} + 1} \leq 1,
    \\& \frac{(A_{42})^{-1}(\bm{\Gamma}^{\mathsf{GA}, (k)}_{n,n'} \alpha^{\mathsf{Bit}} M^{\mathsf{Dim}} + 1)}{\mathfrak{R}^{\mathsf{GA}, (k)}_{n,n'} \tau_{n,n'}^{\mathsf{GA}, (k)} + 1} \leq 1,
    \\& (A_{42})^{-1} \geq 1.
\end{aligned}
\end{equation}
The denominators can be condensed with Lemma \ref{Lemma:ArethmaticGeometric} as follows:
\begin{equation}
 H_{53}(\mathbf{x}) \triangleq \bm{\Gamma}^{\mathsf{GA}, (k)}_{n,n'} \alpha^{\mathsf{Bit}} M^{\mathsf{Dim}} + 1 \Rightarrow H_{53}(\mathbf{x}) \geq \widehat{H}_{53}(\mathbf{x};l) \triangleq  
 \left(\frac{\bm{\Gamma}^{\mathsf{GA}, (k)}_{n,n'} H_{53}[\mathbf{x}]^{l-1}}{[\bm{\Gamma}^{\mathsf{GA}, (k)}_{n,n'}]^{l-1}}\right)^{\frac{[\bm{\Gamma}^{\mathsf{GA}, (k)}_{n,n'}\alpha^{\mathsf{Bit}} M^{\mathsf{Dim}}]^{l-1}}{H_{53}[\mathbf{x}]^{l-1}}} 
 \left(H_{53}[\mathbf{x}]^{l-1}\right)^{\frac{1}{H_{53}[\mathbf{x}]^{l-1}}},
\end{equation}
\begin{equation}
 H_{54}(\mathbf{x}) \triangleq \mathfrak{R}^{\mathsf{GA}, (k)}_{n,n'} \tau_{n,n'}^{\mathsf{GA}, (k)} + 1 \Rightarrow H_{54}(\mathbf{x}) \geq \widehat{H}_{54}(\mathbf{x};l) \triangleq  
 \left(\frac{\mathfrak{R}^{\mathsf{GA}, (k)}_{n,n'} \tau_{n,n'}^{\mathsf{GA}, (k)} H_{54}[\mathbf{x}]^{l-1}}{[\mathfrak{R}^{\mathsf{GA}, (k)}_{n,n'} \tau_{n,n'}^{\mathsf{GA}, (k)}]^{l-1}}\right)^{\frac{[\mathfrak{R}^{\mathsf{GA}, (k)}_{n,n'} \tau_{n,n'}^{\mathsf{GA}, (k)}]^{l-1}}{H_{54}[\mathbf{x}]^{l-1}}} 
 \left(H_{54}[\mathbf{x}]^{l-1}\right)^{\frac{1}{H_{54}[\mathbf{x}]^{l-1}}}.
\end{equation}
Therefore, constraints \eqref{eq:GM_aggregation_transmission_latency},\eqref{eq:GM_aggregation_latency},\eqref{cons:GM_aggregation_latency_1} can be transformed into GP allowed format as follows:
\begin{tcolorbox}[ams align]
     & \frac{\tau_{n'}^{\mathsf{GA}, (k)}}{\tau^{\mathsf{GA},\mathsf{max}}} \leq 1,
\\& \tau_{n'}^{\mathsf{GA}, (k)} \left(K^{(k)}_{(1)}(\mathbf{x})\right)^{-1/p} =1
\\& \frac{\sum_{n\in\mathcal{N}\setminus\{n'\}}
      \left(K^{(k)}_{(2),n}(\mathbf{x})\right)^{p}}{K^{(k)}_{(1)}(\mathbf{x})} \leq 1,
 \\& \frac{(A_{40})^{-1} K^{(k)}_{(1)}(\mathbf{x})}{\widehat{H}_{51}(\mathbf{x};l)} \leq 1
\\& \frac{\bm{\Gamma}^{\mathsf{GA}, (k)}_{n, n'} \tau_{n}^{\mathsf{GA}, (k)}
      + \tau_{n,n'}^{\mathsf{GA}, (k)} }{K^{(k)}_{(2),n}(\mathbf{x})} \leq 1,
\\& \frac{(A_{41})^{-1}  K^{(k)}_{(2),n}(\mathbf{x})}{\widehat{H}_{52}(\mathbf{x};l)} \leq 1,
\\&\frac{\mathfrak{R}^{\mathsf{GA}, (k)}_{n,n'} \tau_{n,n'}^{\mathsf{GA}, (k)}  + 1}{\widehat{H}_{53}(\mathbf{x};l)} \leq 1,
    \\& \frac{(A_{42})^{-1}(\bm{\Gamma}^{\mathsf{GA}, (k)}_{n,n'} \alpha^{\mathsf{Bit}} M^{\mathsf{Dim}} + 1)}{\widehat{H}_{54}(\mathbf{x};l)} \leq 1,
    \\& (A_{40})^{-1} \geq 1,
    \\& (A_{41})^{-1} \geq 1,
    \\& (A_{42})^{-1} \geq 1.
\end{tcolorbox}

\textbullet \hspace{2mm} \textbf{Constraint \eqref{cons:SGA_3}}:  
The constraint 
$\sum_{t=1}^{\tau^{\mathsf{GA}, (k)}} \Big(\bm{\Gamma}^{\mathsf{GA}, (k)} - \bm{\Gamma}\big(\bm{\Psi}(\mathcal{N},t^{\mathsf{GA},(k)}{+}t), r^{\mathsf{A}, (k)}\big)\Big) = \mathbf{0}$ is an equality on a posynomial with a negative sign. We first eliminate the negative sign as follows:
\begin{equation}
\sum_{t=1}^{\tau^{\mathsf{GA}, (k)}}
\bm{\Gamma}\big(\bm{\Psi}(\mathcal{N},t^{\mathsf{GA},(k)}{+}t), r^{\mathsf{A}, (k)}\big)
= \sum_{t=1}^{\tau^{\mathsf{GA}, (k)}} \bm{\Gamma}^{\mathsf{GA}, (k)},
\end{equation}
which is an equality on matrices, which is not applicable in GP format. We further utilized the matrix equality transformation through the Frobenius norm as follows:
\begin{equation}
\begin{aligned}
& \sum_{i=1}^{m}\sum_{j=1}^{n} \left[
\left( \tau^{\mathsf{GA}, (k)} \Gamma^{\mathsf{GA}, (k)}_{i,j}\right)^{2}
+ \left(\sum_{t=1}^{\tau^{\mathsf{GA}, (k)}} \Gamma \big(\bm{\Psi}(\mathcal{N},t^{\mathsf{GA},(k)}{+}t), r^{\mathsf{A}, (k)}\big)_{i,j}\right)^{2} \right]
\\ & = 2 \tau^{\mathsf{GA}, (k)} \sum_{i=1}^{m}\sum_{j=1}^{n} \Gamma^{\mathsf{GA}, (k)}_{i,j}  \sum_{t=1}^{\tau^{\mathsf{GA}, (k)}} \Gamma \big(\bm{\Psi}(\mathcal{N},t^{\mathsf{GA},(k)}{+}t), r^{\mathsf{A}, (k)}\big)_{i,j}.
\end{aligned}
\end{equation}
This reformulation eliminates the direct matrix equality and expresses the constraint as a valid equality over scalar sums. However, it requires a nested approach to comply with the GP format as follows:
\begin{equation}
  P^{\mathsf{GA},(k)}_{i,j}(\mathbf{x}) = \sum_{t=1}^{\tau^{\mathsf{GA}, (k)}} \Gamma \big(\bm{\Psi}(\mathcal{N},t^{\mathsf{GA},(k)}{+}t), r^{\mathsf{A}, (k)}\big)_{i,j},
\end{equation}
which then can be presented in the GP format as follows:
\begin{equation}
\begin{aligned}
  & \frac{\sum_{t=1}^{\tau^{\mathsf{GA}, (k)}} \Gamma \big(\bm{\Psi}(\mathcal{N},t^{\mathsf{GA},(k)}{+}t), r^{\mathsf{A}, (k)}\big)_{i,j}}{ P^{\mathsf{GA},(k)}_{i,j}(\mathbf{x})} \leq 1,
  \\& \frac{(A_{43})^{-1}  P^{\mathsf{GA},(k)}_{i,j}(\mathbf{x})}{\sum_{t=1}^{\tau^{\mathsf{GA}, (k)}} \Gamma \big(\bm{\Psi}(\mathcal{N},t^{\mathsf{GA},(k)}{+}t), r^{\mathsf{A}, (k)}\big)_{i,j}} \leq 1,
  \\& (A_{43})^{-1} \geq 1,
\end{aligned}
\end{equation}
where the denominator can be condensed as via Lemma~\ref{Lemma:ArethmaticGeometric} as follows:
\begin{equation}
\begin{aligned}
& H_{55}(\mathbf{x}) \triangleq
\sum_{t=1}^{\tau^{\mathsf{GA}, (k)}} \Gamma \big(\bm{\Psi}(\mathcal{N},t^{\mathsf{GA},(k)}{+}t), r^{\mathsf{A}, (k)}\big)_{i,j} 
\\& \Rightarrow H_{55}(\mathbf{x}) \geq \widehat{H}_{55}(\mathbf{x};l) \triangleq
\prod_{t=1}^{\tau^{\mathsf{GA}, (k)}}
\left(\frac{ \Gamma \big(\bm{\Psi}(\mathcal{N},t^{\mathsf{GA},(k)}{+}t), r^{\mathsf{A}, (k)}\big)_{i,j} H_{55}[\mathbf{x}]^{l-1}}
{\big[ \Gamma \big(\bm{\Psi}(\mathcal{N},t^{\mathsf{GA},(k)}{+}t), r^{\mathsf{A}, (k)}\big)_{i,j}\big]^{l-1}}\right)^{\frac{\big[ \Gamma \big(\bm{\Psi}(\mathcal{N},t^{\mathsf{GA},(k)}{+}t), r^{\mathsf{A}, (k)}\big)_{i,j}\big]^{l-1}}{H_{55}[\mathbf{x}]^{l-1}}}.
\end{aligned}
\end{equation}
This transformation leads to an equality on posynomials, which requires further steps to conform to the GP format as follows:
\begin{equation}
\begin{aligned}
& \frac{\sum_{i=1}^{m}\sum_{j=1}^{n} 
\left( \tau^{\mathsf{GA}, (k)} \Gamma^{\mathsf{GA}, (k)}_{i,j}\right)^{2}
+ \left( P^{\mathsf{GA},(k)}_{i,j}(\mathbf{x})\right)^{2}}{2 \tau^{\mathsf{GA}, (k)} \sum_{i=1}^{m}\sum_{j=1}^{n}
\Gamma^{\mathsf{GA}, (k)}_{i,j}
 P^{\mathsf{GA},(k)}_{i,j}(\mathbf{x})} \leq 1,
\\& \frac{(A_{44})^{-1} 2 \tau^{\mathsf{GA}, (k)} \sum_{i=1}^{m}\sum_{j=1}^{n}
\Gamma^{\mathsf{GA}, (k)}_{i,j}
 P^{\mathsf{GA},(k)}_{i,j}(\mathbf{x})}{\sum_{i=1}^{m}\sum_{j=1}^{n} 
\left( \tau^{\mathsf{GA}, (k)} \Gamma^{\mathsf{GA}, (k)}_{i,j}\right)^{2}
+ \left( P^{\mathsf{GA},(k)}_{i,j}(\mathbf{x})\right)^{2}} \leq 1,
\\& (A_{44})^{-1} \geq 1.
\end{aligned}
\end{equation}
Applying Lemma~\ref{Lemma:ArethmaticGeometric}, both denominators can be condensed as follows:
\begin{equation}
\begin{aligned}
& H_{56}(\mathbf{x}) \triangleq
2 \tau^{\mathsf{GA}, (k)} \sum_{i=1}^{m}\sum_{j=1}^{n}
\Gamma^{\mathsf{GA}, (k)}_{i,j}
 P^{\mathsf{GA},(k)}_{i,j}(\mathbf{x}) \Rightarrow H_{56}(\mathbf{x}) \geq \widehat{H}_{56}(\mathbf{x};l) \\& \triangleq
 \prod_{i=1}^{m} \prod_{j=1}^{n}
\left(\frac{ 2 \tau^{\mathsf{GA}, (k)} \Gamma^{\mathsf{GA}, (k)}_{i,j}
 P^{\mathsf{GA},(k)}_{i,j}(\mathbf{x}) H_{56}[\mathbf{x}]^{l-1}}
{\big[ 2 \tau^{\mathsf{GA}, (k)} \Gamma^{\mathsf{GA}, (k)}_{i,j}
 P^{\mathsf{GA},(k)}_{i,j}(\mathbf{x})\big]^{l-1}}\right)^{\frac{\big[ 2 \tau^{\mathsf{GA}, (k)} \Gamma^{\mathsf{GA}, (k)}_{i,j}
 P^{\mathsf{GA},(k)}_{i,j}(\mathbf{x})\big]^{l-1}}{H_{56}[\mathbf{x}]^{l-1}}}.
\end{aligned}
\end{equation}
\begin{equation}
\begin{aligned}
& H_{57}(\mathbf{x}) \triangleq
\sum_{i=1}^{m}\sum_{j=1}^{n} 
\left( \tau^{\mathsf{GA}, (k)} \Gamma^{\mathsf{GA}, (k)}_{i,j}\right)^{2}
+ \left( P^{\mathsf{GA},(k)}_{i,j}(\mathbf{x})\right)^{2} \Rightarrow H_{57}(\mathbf{x}) \geq \widehat{H}_{57}(\mathbf{x};l) \\& \triangleq 
\prod_{i=1}^{m} \prod_{j=1}^{n}
\left(\frac{\left( \tau^{\mathsf{GA}, (k)} \Gamma^{\mathsf{GA}, (k)}_{i,j}\right)^{2} H_{57}[\mathbf{x}]^{l-1}}
{\left[ \left( \tau^{\mathsf{GA}, (k)} \Gamma^{\mathsf{GA}, (k)}_{i,j}\right)^{2}\right]^{l-1}}\right)^{\frac{\left[ \left( \tau^{\mathsf{GA}, (k)} \Gamma^{\mathsf{GA}, (k)}_{i,j}\right)^{2}\right]^{l-1}}{H_{57}[\mathbf{x}]^{l-1}}} \left(\frac{\left( P^{\mathsf{GA},(k)}_{i,j}(\mathbf{x})\right)^{2} H_{57}[\mathbf{x}]^{l-1}}
{\left[ \left( P^{\mathsf{GA},(k)}_{i,j}(\mathbf{x})\right)^{2}\right]^{l-1}}\right)^{\frac{\left[ \left( P^{\mathsf{GA},(k)}_{i,j}(\mathbf{x})\right)^{2}\right]^{l-1}}{H_{57}[\mathbf{x}]^{l-1}}}.
\end{aligned}
\end{equation}
Thus, the GP accepted form of \eqref{cons:SGA_3} can be expressed as follows:
\begin{tcolorbox}[ams align]
  & \frac{\sum_{t=1}^{\tau^{\mathsf{GA}, (k)}} \Gamma \big(\bm{\Psi}(\mathcal{N},t^{\mathsf{GA},(k)}{+}t), r^{\mathsf{A}, (k)}\big)_{i,j}}{ P^{\mathsf{GA},(k)}_{i,j}(\mathbf{x})} \leq 1,
  \\& \frac{(A_{43})^{-1}  P^{\mathsf{GA},(k)}_{i,j}(\mathbf{x})}{\widehat{H}_{55}(\mathbf{x};l)} \leq 1,
  \\& \frac{\sum_{i=1}^{m}\sum_{j=1}^{n} 
\left( \tau^{\mathsf{GA}, (k)} \Gamma^{\mathsf{GA}, (k)}_{i,j}\right)^{2}
+ \left( P^{\mathsf{GA},(k)}_{i,j}(\mathbf{x})\right)^{2}}{\widehat{H}_{56}(\mathbf{x};l)} \leq 1,
\\& \frac{(A_{44})^{-1} 2 \tau^{\mathsf{GA}, (k)} \sum_{i=1}^{m}\sum_{j=1}^{n}
\Gamma^{\mathsf{GA}, (k)}_{i,j}
 P^{\mathsf{GA},(k)}_{i,j}(\mathbf{x})}{\widehat{H}_{57}(\mathbf{x};l)} \leq 1,
\\& (A_{43})^{-1} \geq 1,
\\& (A_{44})^{-1} \geq 1.
\end{tcolorbox}

\textbullet \hspace{2mm} \textbf{Constraint \eqref{GAenergy}}:  
The constraint $E^{\mathsf{GA}, (k)} = \sum_{n\in\mathcal{N}} \sum_{n'\in\mathcal{N}} \tau_{n,n'}^{\mathsf{GA}, (k)} P_{n}$ is an equality on a posynomial and requires the following transformation:
\begin{equation}
\begin{aligned}
    & \frac{\sum_{n\in\mathcal{N}} \sum_{n'\in\mathcal{N}}
    \tau_{n,n'}^{\mathsf{GA}, (k)} P_{n}}{E^{\mathsf{GA}, (k)}} \leq 1, \\
    & \frac{(A_{45})^{-1} E^{\mathsf{GA}, (k)}}{\sum_{n\in\mathcal{N}} \sum_{n'\in\mathcal{N}}
    \tau_{n,n'}^{\mathsf{GA}, (k)} P_{n}} \leq 1, \\
    & (A_{45})^{-1} \geq 1.
\end{aligned}
\end{equation}
Further utilizing Lemma~\ref{Lemma:ArethmaticGeometric}, the denominator can be condensed as follows:
\begin{equation}
\begin{aligned}
     H_{58}(\mathbf{x}) \triangleq 
    \sum_{n\in\mathcal{N}} \sum_{n'\in\mathcal{N}} 
    \tau_{n,n'}^{\mathsf{GA}, (k)} P_{n}
    \;\Rightarrow\;
    H_{58}(\mathbf{x}) \geq \widehat{H}_{58}(\mathbf{x};l) \triangleq 
    \prod_{n\in\mathcal{N}} \prod_{n'\in\mathcal{N}}
    \left(\frac{\tau_{n,n'}^{\mathsf{GA}, (k)} P_{n} H_{58}[\mathbf{x}]^{l-1}}
    {[\tau_{n,n'}^{\mathsf{GA}, (k)} P_{n}]^{l-1}}\right)^{\frac{[\tau_{n,n'}^{\mathsf{GA}, (k)} P_{n}]^{l-1}}{H_{58}[\mathbf{x}]^{l-1}}}.
\end{aligned}
\end{equation}
Therefore, the GP admissible form of \eqref{GAenergy} is shown as:
\begin{tcolorbox}[ams align]
    & \frac{\sum_{n\in\mathcal{N}} \sum_{n'\in\mathcal{N}}
    \tau_{n,n'}^{\mathsf{GA}, (k)} P_{n}}{E^{\mathsf{GA}, (k)}} \leq 1, \\
    & \frac{(A_{45})^{-1} E^{\mathsf{GA}, (k)}}{\widehat{H}_{58}(\mathbf{x};l)} \leq 1, \\
    & (A_{45})^{-1} \geq 1.
\end{tcolorbox}

\textbullet \hspace{2mm} \textbf{Constraint \eqref{eq:idletime}}: The constraint $\Omega^{(k,\ell)} {=} \tau^{\mathsf{LT}, \mathsf{max}} - \tau^{\mathsf{LT},{(k,\ell)}},~\ell{\in}\mathcal{L}^{(k)}, k \in\mathcal{K}$
can be rewritten with knowing that $\Omega^{(k)} = \sum_{\ell \in \mathcal{L}^{(k)}} \Omega^{(k,\ell)}$ and to avoid negative sign as follows:
\begin{equation}
\sum_{\ell \in \mathcal{L}^{(k)}} \Omega^{(k,\ell)} + \tau^{\mathsf{LT},{(k,\ell)}} = \tau^{\mathsf{LT}, \mathsf{max}},
\end{equation}
which leads to an equality of posynomial and is not permitted by the GP format. Therefore, we take the following transformation steps:
\begin{equation}
\begin{aligned}
    & \frac{\sum_{\ell \in \mathcal{L}^{(k)}} \Omega^{(k,\ell)} + \tau^{\mathsf{LT},{(k,\ell)}}}{\tau^{\mathsf{LT}, \mathsf{max}}} \leq 1, 
    \\ & \frac{(A_{46})^{-1} \tau^{\mathsf{LT}, \mathsf{max}}}{\sum_{\ell \in \mathcal{L}^{(k)}} \Omega^{(k,\ell)} + \tau^{\mathsf{LT},{(k,\ell)}}} \leq 1, 
    \\ & (A_{46})^{-1} \geq 1.
\end{aligned}
\end{equation}
We require the denominator to be condensed with Lemma~\ref{Lemma:ArethmaticGeometric} as follows:
\begin{equation}
\begin{aligned}
     & H_{59}(\mathbf{x}) \triangleq 
    \sum_{\ell \in \mathcal{L}^{(k)}} \Omega^{(k,\ell)} + \tau^{\mathsf{LT},{(k,\ell)}}
    \\& \Rightarrow
    H_{59}(\mathbf{x}) \geq \widehat{H}_{59}(\mathbf{x};l) \triangleq 
    \prod_{\ell \in \mathcal{L}^{(k)}}
    \left(\frac{\Omega^{(k,\ell)} H_{59}[\mathbf{x}]^{l-1}}
    {[\Omega^{(k,\ell)}]^{l-1}}\right)^{\frac{[\Omega^{(k,\ell)}]^{l-1}}{H_{59}[\mathbf{x}]^{l-1}}} \left(\frac{\tau^{\mathsf{LT},{(k,\ell)}} H_{59}[\mathbf{x}]^{l-1}}
    {[\tau^{\mathsf{LT},{(k,\ell)}}]^{l-1}}\right)^{\frac{[\tau^{\mathsf{LT},{(k,\ell)}}]^{l-1}}{H_{59}[\mathbf{x}]^{l-1}}}.
\end{aligned}
\end{equation}
Therefore, constraint \eqref{eq:idletime} can be driven in the GP allowed format as follows:
\begin{tcolorbox}[ams align]
    & \frac{\sum_{\ell \in \mathcal{L}^{(k)}} \Omega^{(k,\ell)} + \tau^{\mathsf{LT},{(k,\ell)}}}{\tau^{\mathsf{LT}, \mathsf{max}}} \leq 1, 
    \\ & \frac{(A_{46})^{-1} \tau^{\mathsf{LT}, \mathsf{max}}}{\widehat{H}_{59}(\mathbf{x};l)} \leq 1, 
    \\ & (A_{46})^{-1} \geq 1.
\end{tcolorbox}
\label{gpend}

\newpage
\subsection{Pseudo Code of Our GP Optimization Solver}\label{app:cons:sudo}

\begin{algorithm}[H]
\caption{Signomial programming and GP-based optimization solver proposed for problem~$\bm{\mathcal{P}}$}
\label{alg:cent}
{\footnotesize
\begin{algorithmic}[1]
 \STATE \textbf{Input:} Predefined convergence criterion.
 \STATE Initialize the iteration index {\scriptsize$l=0$}.
 \STATE Select an initial feasible point $\mathbf{x}^{[0]}$ as the beginning iteration.
 \STATE Extract the approximations (highlighted in green boxes, pages \pageref{gpstart}–\pageref{gpend}) based on the present solution {\scriptsize$\mathbf{x}^{[l]}$}.
 \STATE Replace the original constraints in Problem~$\bm{\mathcal{P}}$ by these approximations to obtain a GP approximation.
 \STATE Apply logarithmic variable transformations to translate the resulting GP into the convex problem (see the beginning of Appendix~\ref{app:optTransform}).
\STATE Increment iteration index {\scriptsize$l \leftarrow l+1$}.
 \STATE Solving the GP problem via a state-of-the-art convex optimization solver (e.g., CVXPY~\cite{diamond2016cvxpy}) and obtaining the solutions  {\scriptsize$\mathbf{x}^{[l]}$}.
 \IF{successive solutions {\scriptsize$\mathbf{x}^{[l-1]}$} and {\scriptsize$\mathbf{x}^{[l]}$} fail to satisfy the convergence criterion}
   \STATE Go to Step 4 and re-iterate via the most recent solution {\scriptsize$\mathbf{x}^{[l]}$}.
 \ELSE
   \STATE Set the most recent solution as the final solution {\scriptsize$\mathbf{x}^{\star} = \mathbf{x}^{[l]}$}.
 \ENDIF
\end{algorithmic}
}
\vspace{-.1mm}
\end{algorithm}

 \begin{proposition}[Convergence of the Optimization Solver]\label{propo:KKT}
 % Under sufficiently tightness of (a-ii) approximations, 
 The resulting solution (i.e., $\mathbf{x}^\star$) of Algorithm \ref{alg:cent} converges to the Karush–Kuhn–Tucker (KKT) conditions of problem $\bm{\mathcal{P}}$.
\end{proposition}
\begin{proof}
 Algorithm \ref{alg:cent} generates a sequence of solutions for $\bm{\mathcal{P}}$ by employing two approximations: (a-i) arithmetic-geometric mean inequality \eqref{eq:approxPosMonMain} and (a-ii) sum-power approximation $\max\{A, B\}\approx (A^{p}+B^{p})^{-\frac{1}{p}}$.
It is straightforward to verify that Algorithm \ref{alg:cent} yields an \textit{inner approximation} \cite{marks1978general} of~$\bm{\mathcal{P}}$ under the tight approximation of (a-ii). To establish convergence, it suffices to verify the three characteristics outlined in~\cite{marks1978general} for our solver, under (a-i), which can be shown using the techniques of~\cite{xu2014global} and are omitted here for brevity.
\end{proof}
% \newpage
\subsection{Complexity Analysis} \label{subsec:Complexity}
As demonstrated in this appendix, our optimization problem $\bm{\mathcal{P}}$ can be transformed to a GP optimization problem, which can then be further reformulated as a convex optimization problem via a logarithmic transformation ~\cite{nesterov1994interior, chiang2005geometric}.  CVXPY iteratively updates the solution of the convex reformulation of our GP problem by leveraging interior-point methods (IPMs) and the KKT conditions (see Proposition \ref{propo:KKT}). This approach entails iteratively solving a Newton system derived through KKT conditions, which requires solving a linear system with an approximate complexity of \(O(n^2)\), where \(n\) is the size of the problem, i.e., the sum of the number of constraints and variables (for details refer to \cite{nesterov1994interior}).
As IPMs in convex optimization require \(O\left(\sqrt{n} \log\left(\frac{1}{\epsilon}\right)\right)\) iterations for achieving \(\epsilon > 0\) desired solution accuracy \cite{nesterov1994interior}, the GP optimization problem using CVXPY has a solution complexity of \(O\left(n^{2.5} \log\left(\frac{1}{\epsilon}\right)\right)\), which is polynomial with respect to $n$ (ensuring computational tractability and making it a suitable choice for large-scale problem instances). Furthermore, it should be noted that the implementation of {Fed-Span} does not require executing all of the intermediate mathematical derivations outlined in Appendix~\ref{app:optTransform}. Instead, it leverages only the closed-form highlighted expressions in green boxes throughout the appendix, thereby ensuring the mathematical tractability of {Fed-Span}.

\end{document}